\documentclass[11pt,a4paper]{article}
\usepackage[american]{babel}
\usepackage[utf8]{inputenc}
\usepackage{amsmath}
\numberwithin{equation}{section}
\usepackage{mathtools}
\usepackage{amsthm}
\usepackage{amssymb}
\usepackage{mathabx} 
\usepackage{bbm}
\usepackage[pdfborder={0 0 0}]{hyperref}
\usepackage[toc]{appendix}
\usepackage{mathrsfs}
\usepackage[nameinlink,capitalise]{cleveref}
\usepackage{aliascnt}
\allowdisplaybreaks 

\newcommand{\cA}{{\mathcal A}}\newcommand{\cC}{{\mathcal C}}
\newcommand{\cF}{{\mathcal F}}

\newcommand{\cJ}{{\mathcal J}}\newcommand{\cK}{{\mathcal K}}\newcommand{\cL}{{\mathcal L}}
\newcommand{\cN}{{\mathcal N}}\newcommand{\cO}{{\mathcal O}}

\newcommand{\cS}{{\mathcal S}}\newcommand{\cT}{{\mathcal T}}
\newcommand{\cV}{{\mathcal V}}
\newcommand{\cZ}{{\mathcal Z}}

\newcommand{\BC}{{\mathbb C}}

\newcommand{\BN}{{\mathbb N}}
\newcommand{\BR}{{\mathbb R}}

\usepackage{dsfont} 

\usepackage{mathrsfs}

\newcommand{\sH}{{\mathscr H}}
\newcommand{\sK}{{\mathscr K}}

\newcommand{\sS}{{\mathscr S}}

\newcommand{\rmd}{{\mathrm d}}\newcommand{\rme}{{\mathrm e}}
\newcommand{\rmi}{{\mathrm i}}

\def\nn{\nonumber}

\newcommand\pscal[1]{{\ensuremath{\left\langle #1 \right\rangle}}}

\newcommand{\tphi}{\tilde{\varphi}}
\newcommand{\tl}{\Theta_\Lambda}

\theoremstyle{definition}

\newtheorem{definition}{Definition}[section]

\newaliascnt{condition}{definition}
\newtheorem{condition}[condition]{Condition}
\aliascntresetthe{condition}

\newaliascnt{assumption}{definition}
\newtheorem{assumption}[assumption]{Assumption}
\aliascntresetthe{assumption}

\newaliascnt{notation}{definition}
\newtheorem{notation}[notation]{Notation}
\aliascntresetthe{notation}

\theoremstyle{remark}

\newaliascnt{rem}{definition}
\newtheorem{rem}[rem]{Remark}
\aliascntresetthe{rem}

\theoremstyle{plain}

\newaliascnt{theorem}{definition}
\newtheorem{theorem}[theorem]{Theorem}
\aliascntresetthe{theorem}

\newaliascnt{lemma}{definition}
\newtheorem{lemma}[lemma]{Lemma}
\aliascntresetthe{lemma}

\newaliascnt{prop}{definition}
\newtheorem{prop}[prop]{Proposition}
\aliascntresetthe{prop}

\newaliascnt{cor}{definition}
\newtheorem{cor}[cor]{Corollary}
\aliascntresetthe{cor}

\begin{document}

\title{Effective Dynamics for the Bose Polaron in the Large-Volume Mean-Field Limit}
\date{}
\author{
Jonas Lampart\thanks{Universit\'e Bourgogne Europe, CNRS, Laboratoire Interdisciplinaire Carnot
de Bourgogne ICB UMR 6303, 21000 Dijon, France. \texttt{jonas.lampart@u-bourgogne.fr}}, Peter Pickl\thanks{Fachbereich Mathematik, University of Tübingen, Auf der Morgenstelle 10, 72076 Tübingen, Germany.
\texttt{p.pickl@uni-tuebingen.de}}, Siegfried Spruck\thanks{Institute for Analysis, Karlsruhe Institute of Technology, Englerstraße 2, 76131 Karlsruhe, Germany. \texttt{siegfried.spruck@kit.edu}}}

\maketitle\textbf{}

\section*{Abstract}
We consider the dynamics of the Bose polaron system, a dense quantum gas consisting of $N$ bosons evolving in $\mathbb{R}^3$ in the presence of an impurity particle. The system is studied in the mean-field scaling with initially high density $\rho$ and large volume $\Lambda$ of the gas. In the initial state, almost all bosons are in the Bose-Einstein condensate, with a few excitations. We derive from the microscopic dynamics, in the joint limit of large densities and volumes, with the constraint $\Lambda^3 \ll \rho$, the effective description by the translation-invariant Bogoliubov-Fröhlich Hamiltonian, which couples the quantum field of excitations linearly to the impurity particle.

\tableofcontents

\section{Introduction}

Understanding the emergence of effective theories in quantum many-body systems from their microscopic counterparts is a central challenge in mathematical physics.
In this work, we study an interacting system consisting of a Bose gas and a single impurity particle.
This impurity, often referred to as a tracer particle, is commonly used in experiments to probe various properties of the gas, such as its spatial density distribution \cite{SHD10} and superfluid behavior \cite{GMDA24}. Beyond these, it has found numerous additional experimental applications \cite{ZPSK10}.

 In this work, we assume that the bosonic part of the system exhibits Bose-Einstein condensation, where a macroscopic number of bosons occupy the same quantum state, called the condensate. Particles outside the condensate are referred to as excitations, and their number fluctuates due to the interacting nature of the system. These excitations were first modeled by Bogoliubov \cite{Bog47} using a quantum field theory generated by a quadratic Hamiltonian, now known as the Bogoliubov Hamiltonian. The validity of the Bogoliubov approximation has been studied extensively in the mathematical physics literature. Rigorous results for the dynamics can be found in \cite{GMM10, GMM11, LNS15,BOS15, NaNa17,MPP19,PPS20,COS24,BCS17} and for the static case in  \cite{Sei11,GrSe13,LNSS15, YaYi09, DeNa14, BBCS19, BCS21, NaTr23, BSS22, FoSo20, FoSo23, BPS21, BPPS22, Bro25}.

When an impurity is introduced into the system, it interacts with the surrounding bosonic gas. Such impurity–boson interactions can be effectively described by the Bogoliubov–Fröhlich Hamiltonian, which linearly couples the impurity—via creation and annihilation operators—to a Bogoliubov-type field of excitations \cite{GrDe16, GMDA24}.
The rigorous derivation of the Bogoliubov-Fröhlich Hamiltonian has been established on a unit volume with periodic boundary conditions. 
In the static case, Myśliwy and Seiringer derived the effective Hamiltonian in the mean-field regime \cite{MySe20}, which Lampart and Triay extended to the dilute regime (the Gross-Pitaevskii scaling) \cite{LaTr25}. These results also imply convergence of the dynamics in an appropriate sense, as shown in \cite[Corollary~1.6]{LaTr25}. 
A direct result on the mean-field dynamics with weaker assumptions on the potentials and an explicit convergence rate was obtained by Lampart and Pickl in \cite{LaPi22}.   
In a different scaling regime involving a heavy impurity in a Bose gas, other results yield effective dynamics modeling the tracer as a classical particle \cite{FrGa13, DFPP14}.

In this article, we consider, as in \cite{LaPi22}, a dense regime, meaning that the interaction range of a typical particle overlaps with many others. 
The main novelty lies in removing the periodic boundary conditions, which results in a non-constant condensate evolving in $\BR^3$.
We assume that the condensate initially varies on the large length scale $\Lambda^{1/3}\gg1$, which defines the volume $\Lambda$ of the gas, while both, the impurity–boson and boson–boson interaction ranges, are of order one. Consequently, the condensate appears nearly constant on all interaction length scales.
  These changes introduce significant new mathematical challenges and bring us closer to a physically realistic model. 
   
   In this setting, we prove $L^2$-norm convergence of the full $N$-body wave function $\psi_{t}^\Lambda$, evolving under the microscopic Hamiltonian $H_\rho$ \eqref{eq:Hamiltonian}, to the solution generated by the Bogoliubov-Fröhlich Hamiltonian $H^{\rm BF}_\Lambda$. The convergence holds in the joint limit of large densities $\rho=\frac{N}{\Lambda}$ and volumes $\Lambda$,  assuming the relation $\rho^\alpha=\Lambda$, $0<\alpha<1/3$, on $\BR^3$. Moreover, we provide an explicit convergence rate in \cref{thm:FullDynamics-BF-estimate}. 
    The effective Hamiltonian $H^{\rm BF}_\Lambda$ still depends on $\Lambda$ through condensate contributions and the Bogoliubov Hamiltonian, which creates excitations on the whole condensate volume. 
   After extracting a divergent number of excitations from $H^{\rm BF}_\Lambda$, we show that its infinite-volume limit exists. The resulting limiting Hamiltonian $H_\infty^{\rm BF}$ provides a translation-invariant effective description of the system, independent of the scaling parameters $\rho$ and $\Lambda$, and admits the explicit Bogoliubov dispersion relation (see \eqref{eq:Bogoliubov-Dispersion}).  A precise statement on the result is given in \cref{thm:Infty-Volume-Dynamics}.

A rigorous proof of the validity of the Bogoliubov-Fröhlich Hamiltonian is particularly relevant in light of its widespread use in the study of the Bose polaron, where the Bogoliubov-Fröhlich Hamiltonian serves as the underlying model for capturing quasiparticle behavior (see \cite{GMDA24} for an overview). Notably, recent mathematical results have demonstrated the existence of a stable quasiparticle for the translation-invariant Bogoliubov-Fröhlich Hamiltonian \cite{HiLa24}.

\subsection{Definition of the Model}

\paragraph{Microscopic Hamiltonian.}
We study the dynamics of a quantum gas consisting of $N$ bosons evolving in $\mathbb{R}^3$ in the presence of an impurity particle.
We assume that the Bose gas occupies a large initial volume $\Lambda\geq1$ and has high initial density $\rho=\frac{N}{\Lambda}$. Moreover, we impose the scaling relation $\rho^\alpha=\Lambda$, $\alpha\geq0$. The case $\alpha=0$ is the usual mean-field regime, whereas $\alpha=\infty$ corresponds to the thermodynamic limit. The system's microscopic dynamics is governed by the Schrödinger equation
\begin{align}
	\rmi \partial_t \psi_t^\Lambda &= H_\rho \psi_t^\Lambda\,,\quad
						\psi_0^\Lambda  = \psi^\Lambda
\end{align}
with the Hamiltonian 
\begin{align}
	H_\rho = - \frac{\Delta_x}{2m} - \sum\limits_{i=1}^{N} \frac{\Delta_{y_i}}{2} + \frac{1}{\rho}\sum\limits_{1\leq i<j\leq N} V(y_i- y_j) +\frac{1}{\sqrt{\rho}}\sum\limits_{i=1}^{N} W(x-y_i) \label{eq:Hamiltonian}
\end{align}
acting on the Hilbert space 
$$L^2(\mathbb{R}^{3}_x)\otimes L^2_{\text{sym}}(\mathbb{R}^{3N}_y)\,.$$
 Here, $x$ denotes the position of the impurity, $y_i$ the positions of the bosons, $m$ the impurity's mass in units of masses of the gas particles, and $\Delta$ the Laplace operator. 
 The interactions are weak of mean-field type with both the boson-boson interaction potential $V\in L^\infty(\BR^3,\BR)$ and boson-impurity interaction potential $W\in L^\infty(\BR^3,\BR)$ even and rapidly decreasing (see \cref{Assumption:Initial-datum-and-potential} for a precise definition). The scaling factor $\frac{1}{\rho}$ of the potential $V$ is chosen as a mean-field scaling  and the $\frac{1}{\sqrt{\rho}}$ scaling for $W$  ensures that the impurity-excitation interaction remains of $\cO(1)$ (see \cref{sec:Tracer-localization} for details). The Hamiltonian \eqref{eq:Hamiltonian} in the same scaling, without tracer particle, was first introduced in \cite{DeNa14} and later studied in \cite{DFPP16,PPS20}.

\paragraph{Effective Hamiltonian.}
We decompose the system’s dynamics into contributions from the condensate and its excitations, i.e., particles outside the condensate. The goal is to describe the excitation dynamics through an effective theory. This is achieved via the Bogoliubov approximation, which leads to the Bogoliubov Hamiltonian \cite{Bog47}.
Including the tracer particle and passing to the infinite-volume limit  then leads to the translation-invariant Bogoliubov–Fröhlich Hamiltonian $H_\infty^{\rm BF}$. This Hamiltonian provides an effective description of the coupled dynamics of the tracer and collective excitations, known as phonons. In Fourier representation for the phonons, it takes the form
\begin{align}
	\widehat{H}_\infty^{\rm BF} 
	&= \rmd\Gamma(\omega)  -\frac{\Delta_x}{2m} + a\left(   \sqrt{\tfrac{p^2}{2\omega}} \widehat{W} \rme^{-\rmi px}\right) + a^*\left( \sqrt{\tfrac{p^2}{2\omega}} \widehat{W} \rme^{-\rmi px} \right) \,, \label{eq:H-BF-Infty-Volume}
\end{align}
acting on $L^2(\BR^3_x,\cF(L^2(\BR^3)))$, where the symmetric Fock space $\cF(L^2(\BR^3))$ is used to describe excitations.
The Bogoliubov dispersion relation is given by
\begin{align}
	\omega(p)=\sqrt{\tfrac{p^4}{4} + p^2 \widehat{V}(p) (2\pi)^{3/2}}\,, \label{eq:Bogoliubov-Dispersion}
\end{align} 
where we assumed that $\widehat{V}\geq0$. By the commutator theorem \cite[Theorem~X.37]{ReSi75ii}, it is readily verified that $\widehat{H}_\infty^{\rm BF} $ is essentially self-adjoint on every core of $-\frac{\Delta_x}{2m}+ \rmd \Gamma(p^2+1)+1$.
This model describes an impurity linearly coupled to a freely evolving field of phonons, in a homogeneous condensate with density equal to one \cite{GrDe16}.

The translation-invariant and time-independent Hamiltonian $H_\infty^{\rm BF}$ can then be fiber-decomposed in the total momentum of the system as in \cite{HiLa24}, to obtain an effective dispersion relation for the impurity particle. This provides direct information about its quasiparticle behavior.

\subsection{Mean-Field Description of the Condensate}
In the mean-field regime, the dynamics of the condensate wave function is effectively described by the Hartree equation:
 \begin{align}
 	\rmi\partial_t\varphi_t^\Lambda &= h_t^\Lambda  \varphi_t^\Lambda \,,\quad \varphi_{t=0}^\Lambda=\varphi_0^\Lambda \label{eq:Hartree} 							
 \end{align}
 with 
 \begin{align}
 	h_t^\Lambda=h[\varphi_t^\Lambda]&= -\frac{\Delta}{2} + V\ast |\varphi_t^\Lambda|^2 -\mu_t^\Lambda\,, \label{eq:Hartree-Ham} 
 \end{align}
where the convolution term $V\ast|\varphi_t^\Lambda|^2(y)= \int V(x-y) |\varphi_t^\Lambda|^2(x) dx$ accounts for the mean-field interaction between bosons. The constant $\mu_t^\Lambda\in \BR$ can be freely chosen, as it only affects the global phase of $\varphi_t^\Lambda$. We choose the normalization $\Vert \varphi_0^\Lambda\Vert_2=\Lambda^{1/2}$, and, following the convention of  \cite{LNS15}, set
\begin{align}
\mu_t^\Lambda := \frac{1}{2}\pscal{\frac{\varphi_t^\Lambda}{\Lambda^{1/2}}, V\ast |\varphi_t^\Lambda|^2 \frac{\varphi_t^\Lambda}{\Lambda^{1/2}}}\,. \label{eq:Mu}
\end{align}  
For initial data $\varphi_0^\Lambda\in H^\infty(\BR^3)$, standard arguments using Duhamel's formula yield a unique global solution $\varphi^\Lambda\in C^1\big(\BR,H^\infty(\BR^3)\big)$ of the Hartree equation, which we refer to as the condensate wave function.
 	   The interaction between the impurity and the condensate is not included in \eqref{eq:Hartree-Ham}, as it is subleading in the regime of large density $\rho$.  The condensate, consisting of $\mathcal{O}(N)$ bosons, thus evolves independently of the single impurity.
The rigorous justification of the Hartree description in the setting considered here without the impurity is given in \cite{DFPP16, PPS20}.

\subsection{Excitation Dynamics}

\paragraph{Excitation Representation.}
To effectively describe excitations out of the condensate, we use the excitation representation \cite{LNSS15}. In this framework, a given $N$-body wave function $\psi\in L^2(\mathbb{R}^3)^{\otimes_s N}$ is decomposed into a component in the direction of the condensate $\varphi_t^\Lambda$ and a component orthogonal to it. 
To formalize this approach and the special role of the condensate, we introduce the orthogonal projection $P_t^\Lambda:L^2(\mathbb{R}^3) \to L^2(\mathbb{R}^3)$ onto the condensate via
\begin{align}
P_t^\Lambda\psi :=\left|\frac{\varphi_t^\Lambda}{\Lambda^{1/2}}\right\rangle \left \langle\frac{\varphi_t^\Lambda}{\Lambda^{1/2}}\right| \psi:= \frac{\varphi_t^\Lambda}{\Lambda^{1/2}} \pscal{\frac{\varphi_t^\Lambda}{\Lambda^{1/2}} , \psi} \,. \label{eq:Condensate-Projection} 
\end{align} 
The projection onto excitations is given by $Q_t^\Lambda=1-P_t^\Lambda$.
Any $\psi\in L^2(\mathbb{R}^3)^{\otimes_s N}$ can be decomposed into
\[\psi= (P_t^\Lambda+Q_t^\Lambda)^{\otimes N}\psi= \sum\limits_{k=0}^N \left(\frac{\varphi_t^\Lambda}{\Lambda^{1/2}}\right)^{\otimes N-k} \otimes_s (\chi_t^\Lambda)^{(k)}\]
with unique $(\chi_t^\Lambda)^{(k)} \in \big(\{\varphi_t^\Lambda\}^\perp\big) ^{\otimes_s k}$. 
To analyze excitations out of the condensate, we define the excitation map $U_t^\Lambda:= U(\varphi_t^\Lambda)  : L^2(\mathbb{R}^3)^{\otimes_s N} \to \mathscr{F}_{+,t}^\Lambda$ by
\begin{align}
	 U( \varphi_t^\Lambda )\psi = \bigoplus\limits^N_{k=0} (\chi_t^\Lambda)^{(k)}\,, \label{eq:Excitation-map}
\end{align}
 which acts on $N$-particle wave functions and maps them into the excitation space $\mathscr{F}_{+,t}^\Lambda:=\bigoplus\limits^\infty_{k=0} \big(\{\varphi_t^\Lambda\}^\perp\big)^{\otimes_s k}$.
Note that $U_t^\Lambda $ is an isometry.
We can now introduce the excitation Hamiltonian
 \begin{align}
 	H_{\rho}^{\rm{ex}}(t) =  U_t^\Lambda  H_\rho (U_t^\Lambda )^* + \rmi  (\partial_tU_t^\Lambda ) (U_t^\Lambda )^*  \,, \label{eq:Ex-Ham}
\end{align}
which describes the microscopic dynamics in the excitation space, by satisfying
	\begin{align}
		\rmi \partial_t \psi_{t}^\Lambda = H_\rho \psi_{t}^\Lambda \quad \Leftrightarrow \quad \rmi \partial_t U_t^\Lambda  \psi_{t}^\Lambda =  H_{\rho}^{\rm ex}(t) U_t^\Lambda  \psi_{t}^\Lambda \,. 
	\end{align}

\paragraph{Bogoliubov-Fröhlich Hamiltonian.} To motivate the definition of $H_\infty^{\rm BF}$, we first introduce the volume-dependent Bogoliubov-Fröhlich Hamiltonian $H_\Lambda^{\rm BF}(t)$ and show that we can make sense of its infinite-volume limit, giving rise to $H_\infty^{\rm BF}$. 

We start by deriving the Bogoliubov Hamiltonian from $H_{\rho}^{\rm{ex}}(t)$ by expressing it in second quantization with a basis that includes the condensate wave function. Replacing all creation and annihilation operators of the condensate by $\sqrt{N}$, and neglecting terms that are small when the number of excitations is negligible compared to the total particle number $N$ then yields the Bogoliubov Hamiltonian. 
 Applying this procedure to all parts of the excitation Hamiltonian that are independent of the tracer leads to the effective Bogoliubov Hamiltonian, describing the excitation dynamics without an impurity. It acts on the Fock space $\cF(L^2(\BR^3))$ and has the form
\begin{align}
	H^{\rm Bog}_\Lambda(t) = \rmd\Gamma\big(h_t^\Lambda+K_1^\Lambda(t)\big) + \frac{1}{2} \sum_{m,n\geq0} \big( (K_2^\Lambda(t)J)_{mn} a_m^*a_n^* + \mathrm{h.c.} \big)\,, \label{eq:HBog-Definition}
\end{align}
where $h_t^\Lambda$ is the mean-field Hamiltonian from the Hartree equation  \eqref{eq:Hartree}, $\mathrm{h.c.}$ denotes the Hermitian conjugate and the operators $K_1^\Lambda(t)$ and $K_2^\Lambda(t)$ are defined by
	\begin{align}
		K_1^\Lambda(t)&:=Q_t^\Lambda \tilde{K}_1^\Lambda(t)Q_t^\Lambda \,, \quad &\tilde{K}_1^\Lambda(t):L^2(\BR^3)\to L^2(\BR^3) \,, \label{eq:K1-Def}\\
		K_2^\Lambda(t)&:=Q_t^\Lambda \tilde{K}_2^\Lambda(t) J Q_t^\Lambda J^*\,, \quad &\tilde{K}_2^\Lambda(t):(L^2(\BR^3))^* \to L^2(\BR^3)\,,  \label{eq:K2-Def}
	\end{align}
	with
	\begin{align}
		[\tilde{K}_1^\Lambda(t) \psi](x) &:= \int [\varphi_t^\Lambda(x)V(x-y)(\varphi_t^\Lambda)^*(y)] \psi(y) dy \,, \label{eq:K1-Def-2}\\
		[\tilde{K}_2^\Lambda(t) J\psi](x) &:= \int [\varphi_t^\Lambda(x) \varphi_t^\Lambda(y) V(x-y)] \psi^*(y) dy \,,\label{eq:K2-Def-2}
	\end{align}
	where $J$ maps $L^2(\BR^3)$ into its dual with $J\psi=\pscal{\psi,\,.\,}_{L^2}$.
	 Note that $K_1^\Lambda\in C^1(\BR,\text{HS}(L^2))$, $K_2^\Lambda \in C^1(\BR, \text{HS}((L^2)^*,L^2))$, $K_1^\Lambda(t)$ self-adjoint and we have $K_2^\Lambda(t)^*= J K_2^\Lambda(t)J$.
The operators $a_m^*$ and $a_m$ denote creation and annihilation operators of $u_m$, where $\{u_m(t)\}_{m\in\BN}$ is a time-dependent orthonormal basis of $L^2(\BR^3)$ including the normalized condensate $u_0(t):=\varphi_t^\Lambda/\Lambda^{1/2}$. 
We set $(K_2^\Lambda(t)J)_{mn} =\pscal{u_m,K_2^\Lambda(t)J u_n}$.
To prove the validity of the Bogoliubov dynamics, it is essential to control the number of excitations (see \cref{sec:Bogoliubov-Approximation}).

Repeating the procedure above for the missing tracer contributions, we obtain the finite-volume Bogoliubov-Fröhlich Hamiltonian acting on the space $L^2(\BR^3)\otimes \cF(L^2(\BR^3))= L^2( \BR^3,\cF(L^2(\BR^3)))$,  effectively modeling the dynamics of the full system:
\begin{align}
	H^{\rm BF}_\Lambda(t)= -\frac{\Delta_x}{2m} +  a ( Q_t^\Lambda W_x \varphi_t^\Lambda) + a^*( Q_t^\Lambda W_x \varphi_t^\Lambda) + H^{\rm Bog}_\Lambda(t)  \,, \label{eq:HBF-Ham}
\end{align}
where $W_x(y):=W(x-y)$.
This model describes an impurity linearly coupled to the field of excitations, while the excitation-excitation interaction is quadratic in the creation and annihilation operators. 

Note that in $H^{\rm BF}_\Lambda$ the mean-field contribution of the condensate–impurity interaction, $\rho^{1/2} W \ast |\varphi_t^\Lambda|^2(x)$, is omitted. In fact, given that the condensate is sufficiently flat near the impurity at the initial time, it can be approximated by a constant, so that $\rho^{1/2} W\ast |\varphi_t^\Lambda|^2(x) \sim \rho^{1/2} W\ast 1$. Thus, it can be neglected in the dynamics, as shown in \cref{sec:Control-tracer-condensate-mean-field-interaction}. This omission is crucial: a non-constant mean-field contribution would dominate the impurity's dynamics, masking its interaction with the excitations and potentially causing it to escape the Bose gas on short timescales (see \cref{rem:Flat-condensate}).

\begin{rem}[Bogoliubov-Fröhlich Dynamics]\, \label{rem:Bog-BF-Ham}
The differential equation
	$
	\rmi \partial_t \psi_{\Lambda,t}^{\mathrm{BF}} ={}H^{\rm BF}_\Lambda(t) \psi_{\Lambda,t}^{\mathrm{BF}}$, $\psi_{\Lambda,t=0}^{\mathrm{BF}}= \psi_0
	$ has in a weak sense the unique global solution given by 
	$
	\psi_{\Lambda,t}^{\mathrm{BF}}={}U^{\rm BF}_\Lambda(t,0) \psi_0$, $\forall \psi_0\in L^2(\BR^3,\cF(L^2)),
	$
	where $U^{\rm BF}_\Lambda(t,0)$ is the unitary propagator of $H^{\rm BF}_\Lambda(t)$ (see \cite[Theorem 8]{LNS15} and \cite[Appendix D]{Spr25}).
	The dynamics generated by $H^{\rm BF}_\Lambda(t)$ leaves the excitation space invariant: 
	\begin{align*}
		U^{\rm BF}_\Lambda(t,t_0)\big(L^2(\BR^3,\cF(\{\varphi_{t_0}^\Lambda\}^{\perp}))\cap L^2(\BR^3,Q(\cN))\big) &\subset L^2(\BR^3,\cF(\{\varphi_t^\Lambda\}^{\perp}))\,,
	\end{align*}
	which can be proven analogously to \cite[Theorem 7]{LNS15}.
\end{rem}

\paragraph{Infinite-Volume Dynamics.}
The effective Hamiltonian $H^{\rm BF}_\Lambda(t)$ depends on the volume $\Lambda$ in two ways. First, through the condensate $\varphi_t^\Lambda$, whose initial datum varies on the scale $\Lambda^{1/3}$. Second, through $H^{\rm Bog}_\Lambda(t)$, which creates excitations on the whole condensate volume $\Lambda$.
 	In order to obtain a genuine limiting dynamics, independent of the scaling parameter $\Lambda=\rho^\alpha$, we have to remove the $\Lambda$-dependence from the effective dynamics. 
 	In addition the initial data may also contain a number of excitations that diverges with the volume and must be extracted (see \cref{sec:Infinite-Volume-Dynamics} for details).

Given the right initial conditions on the condensate, especially that it is flat around the origin, we can show that $\varphi_t^\Lambda$ converges locally to a phase $\rme^{-it\nu}$, $\nu\in\BR$ (see \cref{cor:Strong-convergence-infinite-volume})

To study the infinite-volume limit of $H_{\Lambda}^{\rm Bog}(t)$, we introduce Bogoliubov transformations. 
 We call an operator $\cZ\in \cL\big( L^2(\BR^3)\oplus J  L^2(\BR^3)\big)$ a Bogoliubov map if
 		 \begin{align}
 		 \cZ:=\begin{pmatrix}
	c & J^*bJ^* \\
	b & JcJ^* 
\end{pmatrix} \label{eq:def-Q-0}
\end{align} 		  
and if it satisfies the symplectic condition $\cZ S \cZ^*=\cZ^*S\cZ =S$, where $S=\text{diag}(1,-1)$. A Bogoliubov map $\cZ$ is called  unitarily implementable if there exists a unitary $U_{\cZ}$ on the bosonic Fock space $\cF(L^2)$ with
 		 \begin{align}
 		 	U_{\cZ}a(f)U_{\cZ}^*&= a( cf) +a^*(J^*bf) \,, \label{eq:Bog-creation-annihilation} \\
 		 	U_{\cZ}a^*(f)U_{\cZ}^*&=a(J^*bf) + a^*(cf) \,.
 		 \end{align}
 	 We refer to $U_{\cZ}$ as a Bogoliubov transformation, and the states $U_{\cZ}\Omega$ are called quasi-free states (for a detailed discussion of Bogoliubov transformations see \cite{NNS16,Nap18,BPPS22}).

Now, to obtain a candidate for the limit of $H^{\rm Bog}_\Lambda(t)$ we look at its propagator $U_{\Lambda,t}^{\mathrm{Bog}}:=U^{\mathrm{Bog}}_\Lambda(t,0)=U_{\cV_t^\Lambda}$, which is a Bogoliubov transformation (see e.g. \cite[Theorem~2.2]{AKS13} or \cite[Lemma~4.8]{BPPS22}), where $\cV_t^\Lambda\in \cL(L^2(\BR^3)\oplus JL^2(\BR^3))$ is given by
\begin{align}
 \rmi\partial_t \cV_t^\Lambda &=\cA^\Lambda(t)   \cV_t^\Lambda \,, \quad	\cV_0^\Lambda=I \,, \label{eq:V-Matrix-Representation}
\end{align}  with 
\begin{align}
\cA^\Lambda(t) = \begin{pmatrix}
h_t^\Lambda + K_1^\Lambda(t) & -K_2^\Lambda(t) \\
K_2^\Lambda(t)^* & -J( h_t^\Lambda + K_1^\Lambda(t))J^*
\end{pmatrix} \,.  \label{eq:A(t)-Def}
\end{align}
On the level of the Bogoliubov maps $\cV_t^\Lambda$, we are able to take the limit $\Lambda\to \infty$. That is, there exists a limiting Bogoliubov map $\cV_t^\infty$, such that $\cV_t^\Lambda \rme^{\rmi \nu S} \to \cV_t^\infty$ in the strong sense.
 For boson–boson interactions of positive type, $\widehat{V}\geq0$, the limiting dynamics admits a diagonal form in Fourier representation:
\begin{align}
	 \cT\widehat{\cV_t^\infty}\cT^{-1} =   \begin{pmatrix}
	e^{-\rmi t \omega} & 0 \\
	0 & Je^{-\rmi t \omega}J^*
\end{pmatrix}	   \,, \label{eq:Diagonal-V-infty-1}
\end{align}
for a diagonalizing operator 
\begin{align}
  	\cT= \frac{1}{2} 
  	\begin{pmatrix}
  		\tau +\tau^{-1} & (\tau -\tau^{-1}) R \cC  J^*\\
  		J \cC  R (\tau -\tau^{-1}) & J(\tau +\tau^{-1})   J^*
  	\end{pmatrix}  	
  	 \,, \label{def:T-infty-volume}
\end{align}
where $\tau(p) = \sqrt{p^2/(2\omega(p))}$,
$\cC\psi=\psi^*$ and $R\psi(p)=\psi(-p)$ (see for example \cite{BrDe07}). Although $\cV_t^\infty$ itself is not unitarily implementable, the diagonal evolution $e^{-\rmi t \omega} \oplus  J e^{-\rmi t \omega} J^*$ admits a unitary implementation.
We emphasize that $\cT$ does not satisfy Shale’s criterion \cite{Sha62} for unitary implementability on Fock space, since the off-diagonal term $\tau -\tau^{-1}$ fails to be Hilbert–Schmidt. In fact, $\cT$ is not even a Bogoliubov map in the strict sense, as it is unbounded due to the infrared divergence of $\tau^{-1}$. This creates substantial technical difficulties in rigorously conjugating  the many‑body dynamics on Fock space by $\cT$. 

Nevertheless, at a formal level one can proceed as follows: First, take the formal limit $\Lambda\to \infty$ in the $H_\Lambda^{\rm BF} $. Then, assuming hypothetically that $\cT$ were unitarily implementable, one conjugates the limiting Hamiltonian by the corresponding unitary.  This procedure yields a natural candidate for the infinite-volume Bogoliubov–Fröhlich Hamiltonian governing the transformed dynamics, namely $H_\infty^{\rm BF} $ in \eqref{eq:H-BF-Infty-Volume} (see \cref{sec:Infinite-Volume-Dynamics} for more details). The rigorous validity of $H_\infty^{\rm BF} $ is established in \cref{thm:Infty-Volume-Dynamics}.

\section{Main Results} \label{sec:Main-Result}
We now come to the main results of this article, which establish the validity of the Bogoliubov-Fröhlich dynamics in the joint limit of large initial volumes $\Lambda=\rho^\alpha$ and large initial densities $\rho$ for $0<\alpha<1/3$.
We first present the result for the infinite-volume Hamiltonian $H^{\mathrm{BF}}_\infty$, followed by the corresponding statement for the finite-volume case $H^{\mathrm{BF}}_\Lambda (t)$.

\subsection{Infinite-Volume Dynamics}

 To derive the infinite-volume limit, we introduce a unitary implementable approximation $\cZ_0^\Lambda$ of the limiting operator $\widecheck{\cT}$, which diagonalizes the Bogoliubov dynamics.
 We then show that the transformed microscopic dynamics, with initial state $U_{\cZ_0^\Lambda}U_0^\Lambda\psi_0^\Lambda  \to \psi_0^\infty$, converges to the dynamics of the infinite-volume Bogoliubov-Fröhlich Hamiltonian $H_\infty^{\rm BF}$ with initial state $\psi_0^\infty$.

Applying the Bogoliubov transformation $U_{\cZ_0^\Lambda}$ to the microscopic initial state $U_0^\Lambda\psi_0^\Lambda $ yields a representation in which a divergent number of excitations, scaling with the volume, can be extracted.  These excitations are interpreted as arising from the Bogoliubov-transformed vacuum (see \cref{rem:Tech-particle-number-estimate}). 
After extracting this expected divergent contribution from the initial state, it is natural to take the infinite-volume limit $U_{\cZ_0^\Lambda}U_0^\Lambda\psi_0^\Lambda \to \psi_0^\infty$, which is further justified in the proof of  \cref{thm:Infinite-Volume-Dynamics}.

The precise conditions under which $\cZ_0^\Lambda$ approximates $\widecheck{\cT}$ are collected in the following condition, which will later be applied to more general maps $\cZ_0^\infty$ beyond the specific choice $\widecheck{\cT}$.
\begin{condition} \label{con:Infinite-volume-Bog-maps}
Let $\cZ_0^\Lambda $ be a family of unitarily implementable Bogoliubov maps, $\epsilon>0$, and $\varphi_0^\Lambda$ the initial condensate. We say  $\cZ_0^\Lambda $ satisfies \cref{con:Infinite-volume-Bog-maps} with growth rate $\epsilon$ if there exists a constant $C>0$ such that, for all $\Lambda\geq1$ 
\begin{gather}
	\Vert \cZ_0^\Lambda\Vert_{\rm op}\leq C\Lambda^{\epsilon}\,,\quad \Vert \cZ_0^\Lambda (\cZ_0^\Lambda)^*-1\Vert_{\rm HS} \leq C\Lambda^{1/2+\epsilon}\,, \label{eq:Z-Lambda-epsilon-bounds}
\end{gather}
and if $U_{\cZ_0^\Lambda}^*$ leaves $\cF(\{\varphi_0^\Lambda\}^\perp)$ and $Q( \rmd\Gamma(-\Delta+1))$ invariant.  
	
	Moreover, we suppose that there exists a linear operator $\cZ_0^\infty:D(\widecheck{\cT})\subset L^2\oplus JL^2 \to L^2\oplus JL^2$ such that $\cJ \cZ_0^\infty \cJ =\cZ_0^\infty$ with $\cJ=\begin{pmatrix}
		& J^* \\
		J & 
	\end{pmatrix}$, $\cZ_0^\infty \widecheck{\cT}^{-1}$ is bounded, and $\cZ_0^\infty$ is approximated by $\cZ_0^\Lambda$. That is,
	\begin{align}
		\left\Vert \left(\widehat{\cZ_0^\Lambda} - \widehat{\cZ_0^\infty} \right)\cT^{-1} (\tau\oplus J\tau J^*) F\right\Vert_{L^2\oplus JL^2} \to 0\,, \quad \text{as } \Lambda\to \infty \label{eq:Z-Lambda-convergence-to-T-infty}
	\end{align}
	 for all $F\in L^2\oplus JL^2$. Furthermore, we require that the commutator of $\cZ_0^\Lambda-\cZ_0^\infty$ with translations $T_x$, $T_xf(y)= f(y-x)$, converges to zero: 
	 \begin{align}
	 	\sup_{x\in\BR^3} \frac{1}{(1+x^2)^{1/2}} \Big\Vert  \left[ \cZ_0^\Lambda- \cZ_0^\infty,T_x\oplus JT_xJ^*\right] F\Big\Vert_{L^2\oplus JL^2} \to 0\,, \quad \text{as } \Lambda\to \infty \label{eq:Z-Lambda-commutator-translations-convergence}
	 \end{align} for all $F\in L^2\oplus JL^2$.
\end{condition}

An  explicit example of a family $\cZ_0^\Lambda$ that approximates $\widecheck{\cT}$ in the sense of 	\cref{con:Infinite-volume-Bog-maps}, with an arbitrary growth rate $0<\epsilon<1/3$, is constructed in \cref{sec:Construction-of-Z0}, assuming that the initial condensate $\varphi_0^\Lambda$ is real-valued.

We are now in a position to state the main result in the infinite-volume case.

\begin{theorem}[Infinite-Volume Bogoliubov-Fröhlich Dynamics]\label{thm:Infty-Volume-Dynamics}
 Assume that the potentials $V$ and $W$ are Schwartz functions, with $\widehat{V}\geq0$ and $\widehat{V}(0)>0$. For given $0<\alpha<1/3$ choose $\Lambda=\rho^{\alpha}$, $0<\epsilon<1/4\min\{1/6,\, 2/\alpha(1/3-\alpha)\}$, and $n\in\mathbb{N}_+$ with $n>9/4(1/\alpha-2)$.

Let  $\eta\in H^\infty (\BR^3)$, $\Vert \eta\Vert_2=1$, and let the initial data of the Hartree equation be $\varphi_0^\Lambda(y)=\eta(\Lambda^{-1/3}y)$. 
		 Assume that the condensate is flat around the origin, namely for all $\beta\in\BN_+^3$ with $1\leq |\beta| \leq 2n-1$ we have
		\[ \eta(0)=1\,,\quad \partial^{\beta}\eta(0)=0\,.  \]

Furthermore, assume that there exists a family of unitarily implementable Bogoliubov maps $\cZ_0^\Lambda$ that approximates $\widecheck{\cT}$, defined in \eqref{def:T-infty-volume}, in the sense of \cref{con:Infinite-volume-Bog-maps} with growth rate $\epsilon$.

Let $\psi_0^\infty\in L^2(\BR^3,\cF(L^2(\BR^3)))$ and assume that there exist a family of states $\psi_0^\Lambda \in L^2(\BR^3,L^2_{\rm s}(\BR^{3N}))$  with $U_{\cZ_0^\Lambda}U_0^\Lambda\psi_0^\Lambda  \to  \psi_0^\infty $ in $L^2(\BR^3,\cF(L^2(\BR^3))$ as $\rho^\alpha=\Lambda\to \infty$, where  $U_t^\Lambda $ denotes the excitation map.

Then we have that for all times $T\geq0$,  
 \begin{align*}
 	\sup_{t\in [-T,T]}\left\Vert  \rme^{\rmi \nu_t^\Lambda}    U_{\widecheck{\cT}\cV_t^\infty\widecheck{\cT}^{-1}} U_{\cZ_0^\Lambda} U_{\cV_t^\Lambda}^* U_t^\Lambda  \rme^{-\rmi t H_\rho}\psi_0^\Lambda  - \rme^{-\rmi t H_\infty^{\rm BF}} \psi_0^\infty \right\Vert_{L^2(\BR^3, \cF(L^2))} \to 0
 \end{align*}
as $\rho^\alpha=\Lambda\to \infty$, where we set
 $\nu_t^\Lambda=\int^t_0(\rho^{1/2}\int W- \mu_s^\Lambda )ds\in\BR$, and $\mu_t^\Lambda\in\BR$ is given by \eqref{eq:Mu}. 
By $U_{\cV_t^\Lambda}$ we denote the propagator of $H^{\rm Bog}_\Lambda(t)$, and the diagonal Bogoliubov map $\widecheck{\cT}\cV_t^\infty\widecheck{\cT}^{-1}$ is given in \eqref{eq:Diagonal-V-infty-1}.
\end{theorem}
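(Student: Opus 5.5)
The proof splits into two comparisons joined by the triangle inequality. The first compares the microscopic dynamics with the finite-volume Bogoliubov–Fröhlich dynamics $U^{\rm BF}_\Lambda(t,0)$. The second compares the finite-volume dynamics, after conjugation by Bogoliubov transformations, with $\rme^{-\rmi tH_\infty^{\rm BF}}$.

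\emph{Step 1: from $H_\rho$ to $H^{\rm BF}_\Lambda(t)$.} Write $\Phi_t:=U_t^\Lambda \rme^{-\rmi tH_\rho}\psi_0^\Lambda$. This solves $\rmi\partial_t\Phi_t=H^{\rm ex}_\rho(t)\Phi_t$. Compute $H^{\rm ex}_\rho(t)-H^{\rm BF}_\Lambda(t)$, with the phase $\rho^{1/2}\int W-\mu_t^\Lambda$ removed. The remainder consists of three pieces:
\begin{itemize}
\item the cubic and quartic excitation terms, carrying prefactors $\rho^{-1/2}$ and $\rho^{-1}$;
\item the $\sqrt{(N-\cN)/N}-1$ corrections in the linear and pairing terms;
\item the tracer–condensate mean-field term $\rho^{1/2}(W\ast|\varphi_t^\Lambda|^2(x)-\int W)$.
\end{itemize}
By Duhamel, the difference of the two evolutions is bounded by $\int_0^t$ of the remainder applied to $U^{\rm BF}_\Lambda(s,0)\Phi_0$. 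I control this with moment bounds on $\cN$ and $\rmd\Gamma(-\Delta)$ along the Bogoliubov–Fröhlich flow, using a Grönwall argument on $\langle(\cN+1)^k\rangle$. The pairing operator $K_2^\Lambda$ has Hilbert–Schmidt norm of order $\Lambda^{1/2}$, so the excitation number grows like $\Lambda$. Starting from $U_{\cZ_0^\Lambda}^*\psi$, the bounds in \cref{con:Infinite-volume-Bog-maps} give an initial number of order $\Lambda^{1+2\epsilon}$. The cubic error is then of order $\rho^{-1/2}\Lambda^{3/2+3\epsilon}$, and the condition $\Lambda^3\ll\rho$ together with the restriction on $\epsilon$ makes it vanish.

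The mean-field tracer term needs separate treatment. Flatness of $\eta$ up to order $2n-1$ gives $|\varphi_t^\Lambda(y)|^2-1=\cO(|y|^{2n}\Lambda^{-2n/3})$ locally in $y$, uniformly on $[-T,T]$. This follows from propagating the flatness through the Hartree equation by a Taylor or energy argument, as in \cref{cor:Strong-convergence-infinite-volume}. Combined with a bound on the tracer's moment $\langle |x|^{2n}\rangle$ along the flow, where the tracer moves only $\cO(1)$, this gives an error of order $\rho^{1/2}\Lambda^{-2n/3}$. The choice $n>\frac94(\frac1\alpha-2)$ is designed to make this error, times the polynomial losses in $\Lambda$ from the moments, tend to zero.

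\emph{Step 2: infinite-volume limit.} Conjugate the finite-volume Bogoliubov–Fröhlich evolution by $U_{\widecheck{\cT}\cV_t^\infty\widecheck{\cT}^{-1}}U_{\cZ_0^\Lambda}U_{\cV_t^\Lambda}^*$. Since $U_{\cV_t^\Lambda}$ is the propagator of $H^{\rm Bog}_\Lambda$, the quadratic part cancels. What remains is the free tracer together with a linear coupling $a(g_{t,x})+a^*(g_{t,x})$, where the form factor is obtained by applying the transformed Bogoliubov map
\[
\widecheck{\cT}\cV_t^\infty\widecheck{\cT}^{-1}\,\cZ_0^\Lambda(\cV_t^\Lambda)^{-1}
\]
to $(Q_t^\Lambda W_x\varphi_t^\Lambda)\oplus J(\cdot)$. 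The conjugated generator is then $\rmd\Gamma(\omega)-\Delta_x/2m$ plus this coupling, which I compare with $H_\infty^{\rm BF}$ in the interaction picture with respect to $\rmd\Gamma(\omega)$.

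The coupling must converge in $L^2$, uniformly in $x$ up to the weight $(1+x^2)^{1/2}$, to the $\widecheck{\cT}$-image of $W_x$. This form factor is $\sqrt{p^2/2\omega}\,\widehat W\rme^{-\rmi px}$ in Fourier space, and the factor $\tau$ reflects exactly the structure of \eqref{eq:Z-Lambda-convergence-to-T-infty}. The argument has three ingredients:
\begin{itemize}
\item strong convergence $\cV_t^\Lambda\rme^{\rmi\nu S}\to\cV_t^\infty$;
\item local convergence of $\varphi_t^\Lambda$ to a phase;
\item assumptions \eqref{eq:Z-Lambda-convergence-to-T-infty} and \eqref{eq:Z-Lambda-commutator-translations-convergence}, which handle the $x$-dependence through translations.
\end{itemize}
Because the coupling is linear, a Duhamel estimate with the standard bound $\|a^\sharp(f)\Psi\|\le\|f\|\,\|(\cN+1)^{1/2}\Psi\|$ suffices. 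Here $\cN$ is bounded along the $H_\infty^{\rm BF}$ flow, for states approximated from a dense set. With the initial convergence $U_{\cZ_0^\Lambda}U_0^\Lambda\psi_0^\Lambda\to\psi_0^\infty$ and unitarity, this gives the claim, uniformly on $[-T,T]$.

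\emph{Main obstacle.} I expect Step 1 to be the hardest part, specifically propagating moment bounds uniformly in $\Lambda$ when the initial state carries of order $\Lambda^{1+2\epsilon}$ excitations. The cubic term must be controlled without losing more powers of $\Lambda$ than $\rho^{1/2}$ can absorb. The tracer–mean-field term needs simultaneous localization of the tracer and flatness of the condensate. My first attempt would be a Pickl-type counting functional combined with weighted estimates on $\langle(\cN+1)^3\rangle$, and on $\langle |x|^{2n}\rangle$ via $[|x|^{2n},\Delta_x]$.
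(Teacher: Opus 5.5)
\textbf{Gap in Step~1: tracer localization.} Your overall architecture matches the paper: a Bogoliubov approximation, removal of the tracer--condensate mean field via flatness plus tracer localization, then a Duhamel comparison in the frame conjugated by $U_{\cZ_0^\Lambda}U_{\cV_t^\Lambda}^*$. The gap is at the tracer-localization step. You need $\langle |x|^{2n}\rangle$ along the finite-volume Bogoliubov--Fröhlich flow to be essentially $\Lambda$-independent. The method you propose cannot give this: it is a Grönwall argument through $[|x|^{2n},\Delta_x]$ in the \emph{untransformed} dynamics.

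The commutator chain is $x\to\nabla_x\to[\nabla_x,H^{\rm BF}_\Lambda(t)]=a(\nabla_xQ_t^\Lambda W_x\varphi_t^\Lambda)+a^*(\nabla_xQ_t^\Lambda W_x\varphi_t^\Lambda)$. It ends in an operator controlled only by $(\cN+1)^{1/2}$. Along this flow $\langle\cN\rangle\sim\Lambda^{1+2\epsilon}$, as you note yourself. So this route gives tracer momentum and displacement $D$ of order $\Lambda^{1/2+\epsilon}$, which exceeds the scale $\Lambda^{1/3}$ on which the condensate varies; the tracer is not even guaranteed to stay in the gas.

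The localized mean-field error behaves roughly like $\rho^{1/2}(D\Lambda^{-1/3})^{2n}$. Both the flatness gain and the moment loss scale with $n$, so no choice of $n$ helps unless $D\ll\Lambda^{1/3}$. Your ``polynomial losses from the moments'' are therefore fatal, not absorbable.

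\textbf{The missing idea, and the same issue in Step~2.} You already use the fix in Step~2: prove the localization in the conjugated frame. After conjugation by $U_{\cZ_0^\Lambda}U_{\cV_t^\Lambda}^*$, the generator is $-\Delta_x/2m+A\big(\cZ_0^\Lambda(\cV_t^\Lambda)^{-1}(Q_t^\Lambda W_x\varphi_t^\Lambda\oplus JQ_t^\Lambda W_x\varphi_t^\Lambda)\big)$. The pair creation producing the $\Lambda^{1+2\epsilon}$ excitations is gone, and the initial excitation number is $\cO(1)$. A Grönwall argument with comparison operator $(-\Delta_x+x^2+(\cN+1)^2)^M$ (the paper uses \cite[Theorem~8]{LNS15}) then gives $\Lambda$-uniform moments. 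These transfer back because the conjugation acts only on Fock space and commutes with $x$ and $\nabla_x$.

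The required input is that this form factor, and its $x$- and $t$-derivatives, are bounded uniformly in $\Lambda$ with weight $(1+x^2)^{-1/4}$. This is not automatic, since $\Vert\cZ_0^\Lambda\Vert_{\rm op}$ may grow like $\Lambda^\epsilon$. It uses three things:
\begin{itemize}
\item the uniform bound on $\widehat{\cZ_0^\Lambda}\cT^{-1}(\tau\oplus J\tau J^*)$, obtained from the convergence in \cref{con:Infinite-volume-Bog-maps} and uniform boundedness;
\item the explicit diagonal form of $\cV_t^\infty$;
\item a \emph{rate} $\Lambda^{-\gamma/4}$ for $(\cV_t^\Lambda)^{-1}(Q_t^\Lambda W_x\varphi_t^\Lambda\oplus JQ_t^\Lambda W_x\varphi_t^\Lambda)\to(\cV_t^\infty)^{-1}(W_x\oplus JW_x)$, with $\epsilon\le\gamma/4$.
\end{itemize}

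The same point affects your Step~2. Strong convergence of $\cV_t^\Lambda$ alone does not make $\cZ_0^\Lambda$ applied to the difference of form factors tend to zero when $\cZ_0^\Lambda$ is not uniformly bounded. Part of the smallness condition on $\epsilon$ exists precisely to absorb $\Lambda^\epsilon$ against that rate.

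\textbf{Secondary point: flatness propagation.} The local flatness of $|\varphi_t^\Lambda|^2$ does not follow from \cref{cor:Strong-convergence-infinite-volume}, which has no rate. The paper obtains it from a hierarchical localized Grönwall estimate on $\Theta_\Lambda\partial^\beta(\varphi_t^\Lambda-\tphi_t^\Lambda)$. That estimate carries an additional non-flat error, which becomes small only after controlling many derivatives.
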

\begin{proof}[Proof of \cref{thm:Infty-Volume-Dynamics}]
	This theorem follows directly from more general \cref{thm:Infinite-Volume-Dynamics}, proven later in the article, and the fact that $\widecheck{\cT}$ diagonalizes the infinite-volume limit of $\cV^\Lambda_t$ (see \eqref{eq:Diagonal-V-infty-1}). Further details of the proof can be found in \cref{sec:Infinite-Volume-Dynamics}.
\end{proof}
\begin{rem}
 Note that $\widecheck{\cT}$ diagonalizes the limit of $\cV_t^\Lambda$ (see \eqref{eq:Diagonal-V-infty-1}), and that it is neither unitarily implementable nor bounded. The Bogoliubov maps $\widecheck{\cT}_t^\Lambda=  \widecheck{\cT}\cV_t^\infty\widecheck{\cT}^{-1} \cZ_0^\Lambda (\cV_t^\Lambda)^{-1}$ are used as a unitarily implementable approximation of $\widecheck{\cT}$. Since, formally $\widecheck{\cT}^{-1} \cZ_0^\Lambda$ converges to the identity and  $(\cV_t^\Lambda)^{-1}$ to $(\cV_t^\infty)^{-1}$, we also have the convergence of $\cT_t^\Lambda$ to the time-independent $\cT$. A rigorous proof of this convergence follows from the convergence of $\cZ_0^\Lambda$ in \cref{con:Infinite-volume-Bog-maps} together with the convergence of $\cV_t^\Lambda$ shown in \cref{lem:Approx-V-by-V-infty-Conv-Rates}.
		
		 The action of $\widecheck{\cT}_t^\Lambda$ can be interpreted in three steps: first, it removes the Bogoliubov time evolution from the dynamics; next it changes the reference state via $\cZ_0^\Lambda$; finally, it reintroduces the Bogoliubov dynamics, but now in the infinite-volume limit, through $\widecheck{\cT}\cV_t^\infty\widecheck{\cT}^{-1}$. 
\end{rem}

One can generalize the assumptions on the potential and the condensate in \cref{thm:Infty-Volume-Dynamics}. For the more general but technical conditions, see \cref{thm:Infinite-Volume-Dynamics}.

\paragraph{The Scaling and Comparison to the $\beta$-Scaling.}  
The considered scaling with $\rho^\alpha=\Lambda$, $\alpha>0$, introduced in \cite{DeNa14,PPS20} provides a framework for approaching the thermodynamic limit in the dense regime. The usual mean-field scaling is obtained at $\alpha=0$, while the thermodynamic limit corresponds to $\alpha=\infty$. In contrast, most of the literature focuses on the dilute regime, known as the $\beta$-scaling. 
 To compare with the standard $\beta$-scaling,  we place the system in a box of volume $\Lambda$ and rescale it to unit volume, yielding for the bosonic part of the system the Hamiltonian 
  \begin{align}
  		 - \sum\limits_{i=1}^{N} \frac{\Delta_{y_i}}{2N^{2\beta}} + N^{3\beta-1}\sum\limits_{1\leq i<j\leq N} V\big(N^{\beta}(y_i- y_j)\big)\,, \, \quad y_i\in [-1/2,1/2]^3\,, \label{eq:beta-scaling-comparison}
  \end{align}
  with $\beta=\frac{\alpha}{3(1+\alpha)} $.
  Thus, our scaling is not directly comparable to the $\beta$-scaling in the literature. While the interaction potential can be written in the same way,  \eqref{eq:beta-scaling-comparison} involves a semiclassically scaled Laplacian.

\subsection{Finite-Volume Dynamics} \label{sec:conditions}

While \cref{thm:Infty-Volume-Dynamics} assumes simplified conditions, we consider for $H^{\mathrm{BF}}_\Lambda (t)$ in a more general framework with weaker assumptions discussed below.

\subsubsection{Conditions}

\paragraph{Conditions on the Potentials.}
Both potentials may depend on $\Lambda$ and $\rho$  provided that all bounds below are satisfied uniformly. However, this dependence is not captured in our notation.

\begin{assumption}[Assumptions on the Potentials]\label{Assumption:Initial-datum-and-potential}
We denote the boson-boson interaction potential by $V\in L^1(\mathbb{R}^3, \mathbb{R})$. The potential $V$ is even and for all $k\in\BN_0$ there exists a constant $C>0$ such that for all densities $\rho\geq1$ and all volumes $ \Lambda\geq1$
\begin{align}
	\Vert \widehat{V}\Vert_1 +  \big\lVert\thinspace \lvert y \rvert^k V\,\big\rVert_{1}  + \big\lVert\thinspace \lvert y \rvert V\,\big\rVert_{\infty} \leq C \,.
	 \label{eq:Potential-estimates-1}
\end{align}
We denote the boson-impurity interaction potential by $W\in  L^\infty(\BR^3,\BR)$. The potential $W$ is even and $\forall  k\in\BN_0$ $\exists C>0$ such that $\forall \Lambda,\rho\geq1$
\begin{gather}
		\Vert  W\Vert_{H^2}  + \big\Vert\, |y|^k  W\, \big\Vert_{\infty} \leq C\,. \label{eq:Potential-estimates-2}
	\end{gather}
	We say $V$ and $W$ satisfy \cref{Assumption:Initial-datum-and-potential}$_{M}$ for $M\in\BN_+$ if the assumptions above are satisfied and $\exists C>0$ such that $\forall \Lambda,\rho\geq1$
	\begin{gather}
		 \Vert W\Vert_{W^{M,\infty}(\BR^3,\BR)} + \Vert (1+y^2-\Delta)^{1/4}  \partial^\beta W\Vert_{L^2(\BR^3,\BR)}\leq C \quad \text{for all }|\beta|\leq M\,.
	\end{gather}
\end{assumption}
\begin{rem} $\,$ \label{Remark:Initial-datum-and-potential}
\begin{itemize}
	\item \cref{Assumption:Initial-datum-and-potential}$_{1}$ for $M=1$ is assumed throughout the whole article without being mentioned explicitly. Whenever additional regularity of the boson–tracer interaction potential $W$ is required (i.e., $M >1$), we refer to this explicitly by \cref{Assumption:Initial-datum-and-potential}$_{M}$. Note that we use $W\in W^{1,\infty}\cap H^2$ for the well-posedness of the Bogoliubov-Fröhlich dynamics stated in \cref{rem:Bog-BF-Ham}.
	\item  There exists a decreasing function $k_0$ of $\alpha$, with $\alpha$ given by $\rho =\Lambda^{\alpha}$, such that it suffices to assume \eqref{eq:Potential-estimates-1} and \eqref{eq:Potential-estimates-2} only for all $k\leq k_0$.
\end{itemize}
\end{rem}

\paragraph{Conditions on the Condensate.} 
In contrast to \cref{thm:Infty-Volume-Dynamics} we do not need to require that the condensate wave function is a rescaled function varying on the scale $\cO(\Lambda^{1/3})$. 
In the following, we weaken this condition by only requiring that certain $L^p$-norms are consistent with the scaling behavior of a rescaled condensate.

\begin{condition} [Initial Condition of the Condensate]
  \label{con:Initial condition}
For all volumes $\Lambda\geq1$, let the condensate wave function $\varphi_0^\Lambda\in H^\infty(\BR^3)$.
We say $\varphi_0^\Lambda$ satisfies \cref{con:Initial condition} if there exists a constant $ C>0$ such that for all $\Lambda\geq1$ 
\begin{gather}
	    \left\Vert
       \widehat{\,\varphi_{0}^\Lambda\,}  \right\Vert_{1} \leq C\,, 
	\quad \Vert \varphi_0^\Lambda \Vert_{2} = \Lambda^{1/2} \,,  \label{eq:conditions-phi} \\
	\Vert  \widehat{\nabla\varphi_0^\Lambda} \Vert_{1} \leq C \Lambda^{-\frac{1}{3}} \,, \quad \Vert  \nabla \varphi_0^\Lambda \Vert_{2} \leq C \Lambda^{\frac{1}{2}-\frac{1}{3}} \,, \quad \Vert  \Delta\varphi_0 ^\Lambda\Vert_{2} \leq C \Lambda^{\frac{1}{2}-\frac{2}{3}}\,. \label{eq:Initial Derivative phi}
\end{gather}
\end{condition}
Next, we specify a condition that tracks the required bounds on the derivatives of the condensate.

\begin{condition}[Initial Condition for higher Derivatives of the Condensate]\label{con:Initial condition derivative condensate} 
For all volumes $\Lambda\geq1$, let the condensate $\varphi_0^\Lambda\in H^\infty(\BR^3)$.
\begin{itemize}
\item[i)]We say $\varphi_0^\Lambda$ satisfies \cref{con:Initial condition derivative condensate}i)$_{k}$ for a given $k\in \mathbb{N}_0$ if for all $\beta\in \mathbb{N}_0^3$ with $0\leq |\beta|\leq k$, there exists a constant $ C>0$ such that for all $\Lambda\geq1$ 
\begin{align}
	\Vert  \partial^{\beta}\varphi_0^\Lambda \Vert_{\infty} \leq C \Lambda^{-|\beta|/3} \,. \label{eq:DPhi0Infty} 
\end{align}
\item[ii)] We say $\varphi_0^\Lambda$ satisfies \cref{con:Initial condition derivative condensate}ii)$_{k}$ for a given $k\in \mathbb{N}_0$ if for all $\beta\in \mathbb{N}_0^3$ with $0\leq |\beta|\leq k$, there exists a constant $C>0$ such that for all $\Lambda\geq1$ 
\begin{align}
	\Vert  \partial^{\beta}\varphi_0^\Lambda \Vert_{2} \leq C \Lambda^{-|\beta|/3 +1/2} \,. \label{eq:DPhi0L2}
\end{align}
\end{itemize}
We say $\varphi_0^\Lambda$ satisfies \cref{con:Initial condition derivative condensate}$_{k}$ for a given $k\in \mathbb{N}_0$ if both \cref{con:Initial condition derivative condensate}i)$_{k}$ and \cref{con:Initial condition derivative condensate}ii)$_{k}$ are satisfied.
\end{condition}
The generalized flatness condition around the origin of the condensate is given below.

\begin{condition}[Condensate, Flat Around the Origin] \label{con:Condensate-flat-around-origin}
For all volumes $\Lambda\geq1$, let the condensate $\varphi_0^\Lambda\in H^\infty(\BR^3)$.
	We say $\varphi_0^\Lambda$ satisfies \cref{con:Condensate-flat-around-origin}$_{k,s}$ for a given $k\in \BN_+$ and $1/3>s>0$ if for all $0\leq |\beta|\leq k-1$, there exists a constant $ C>0$ such that for all $\Lambda\geq1$
	\begin{align}
		|\partial^{\beta}(\varphi_0^\Lambda(0)-1)|\leq C\Vert \partial^{\beta}(\varphi_0^\Lambda-1)\Vert_\infty \Lambda^{-(k-|\beta|)(1/3-s)}  \label{eq:Faltness-around-origin}
	\end{align}
\end{condition}

\begin{rem}\leavevmode
\begin{itemize}
	\item[i)] For \( \beta=0 \), the bound in \eqref{eq:Faltness-around-origin} gives us that the condensate is approximately equal to 1 at the origin, using 
 $\Vert \varphi_0^\Lambda\Vert_\infty\leq C$ (\cref{con:Initial condition}).
 If we consider \( \varphi_0^\Lambda \) as a rescaled function,  
    $\varphi_0^\Lambda(y) = \eta(\Lambda^{-1/3}y)$,
 then \cref{con:Condensate-flat-around-origin}$_{k,s}$ reduces to  
 \( \partial^{\beta}(\eta(0)-1) = 0 \), $\forall |\beta|\leq k-1$. 
 In general, the inequality \eqref{eq:Faltness-around-origin} implies that the derivatives of $\varphi_0^\Lambda$ are much smaller close to the origin than their supremum.
	\item[ii)] The flatness of the condensate plays a key role in controlling the dynamics of the tracer particle and has two main effects. First, the interaction of the tracer with condensate particles is subleading compared to its interaction with excitations (see \cref{sec:Control-tracer-condensate-mean-field-interaction} for details). Second, the tracer remains inside the Bose gas over the considered timescale of $\cO(1)$ (see \cref{rem:Flat-condensate} for details).
	\item[iii)]  If \cref{con:Condensate-flat-around-origin}$_{k,s}$ is satisfied for some $k\in\mathbb{N}_+$ then it also holds for all $\tilde{k}\leq k$. 
	Clearly, \cref{con:Condensate-flat-around-origin}$_{k,s}$ cannot be satisfied for negative values of $k$.
\end{itemize}
\end{rem}

In Appendix~\ref{sec:Control-of-the-condensate} we analyze how the listed conditions on the condensate are propagated by the Hartree evolution. In particular, if $\varphi_0^\Lambda (y)=  \eta ( \Lambda^{-1/3} y)$ then one can think of $\varphi_t^\Lambda(y)\approx \eta_t ( \Lambda^{-1/3} y)$ for some $\eta_t\in H^{\infty}$.

\paragraph{Conditions on the Tracer Particle and Excitations.} 

We now specify the initial conditions on $U_0^\Lambda \psi_0^\Lambda=:\psi_{\Lambda,0}^{\rm BF}$ for the excitation dynamics. 
For this we assume that the excitation number is of order one locally and essentially of order $\Lambda$ globally.
That is, only $\cO(1)$ many excitations effectively interact with the tracer. The control of the global excitation number is needed for the Bogoliubov approximation, \cref{thm:tildePsi-psiBF-time-estimate}, and the local one to prove tracer localization, \cref{cor:tracer-position-estimate-for-BF}, both are needed for the finite-volume approximation in \cref{thm:FullDynamics-BF-estimate}.

 	To achieve this we assume that  $\psi_{\Lambda,0}^{\rm BF}$ is a perturbation of a quasi-free state. That is, the Bogoliubov-transformed initial state $ U_{\cZ_0^\Lambda}\psi_{\Lambda,0}^{\rm BF}$ contains at most $\cO(1)$ many excitations, whereas $\psi_{\Lambda,0}^{\rm BF}$ itself may have  order $\Lambda$ excitations.

\begin{condition}[Initial Conditions for the Tracer Particle and the Excitation Number] \label{con:Localized-Tracer}
	For all densities $\rho\geq1$ and volumes $\Lambda\geq1$, let $\psi_0^\Lambda \in L^2(\BR^3,\mathcal{F}(L^2))$.
 		  We say $\psi_0^\Lambda $ satisfies \cref{con:Localized-Tracer}$_{\psi_0^\Lambda ,M,\epsilon}$ with power $M\in \BN_0$ and $\epsilon>0$ if there exist constants $ C_M,C>0$ such that for all densities $\rho\geq1$ and volumes $\Lambda\geq1$ there exists a unitarily implementable Bogoliubov map  $\cZ_0^\Lambda\in \cL(L^2\oplus J L^2)$ such that $\psi_0^\Lambda \in Q\big((-\Delta_x +x^2 +U_{\cZ_0^\Lambda}^*(\cN+1)^2U_{\cZ_0^\Lambda})^M\big)$ and
	\begin{align}
 		\Big\langle\psi_0^\Lambda , \left( (-\Delta_x +x^2)\otimes I + I\otimes U_{\cZ_0^\Lambda}^*(\cN+1)^2U_{\cZ_0^\Lambda}\right)^M \psi_0^\Lambda \Big\rangle  \leq C_{M}  \label{eq:(N+1)-x-Nabla-initial-condition-cor}
 \end{align}
  as well as the following bounds
 	\begin{gather}
 		\Vert \widehat{\cZ_0^\Lambda}\cT^{-1} (\tau\oplus J\tau J^*)\Vert_{\cL(L^2\oplus JL^2)} \leq C \,,  \label{eq:Z0-bounds-1} \\
 		\Vert \cZ_0^\Lambda \Vert_{\cL(L^2\oplus JL^2)}\leq C\Lambda^{\epsilon}\,, \quad \Vert \cZ_0^\Lambda(\cZ_0^\Lambda)^*-1\Vert_{\mathrm{HS}(L^2\oplus JL^2)}\leq C \Lambda^{1/2+\epsilon} \,. \label{eq:Z0-bounds-2}
 	\end{gather}

\end{condition}
\begin{rem}
	Due to elementary properties of the Bogoliubov transformation $U_{\cZ_0^\Lambda}$, it suffices to control only the transformed number of excitations $U_{\cZ_0^\Lambda}^*(\cN+1)U_{\cZ_0^\Lambda}$ in \cref{con:Localized-Tracer}. 
 		 	 Moreover, $U_{\cZ_0^\Lambda}$ can change the global excitation number by $\Vert \cZ_0^\Lambda(\cZ_0^\Lambda)^*-1\Vert_{\rm HS}^2 + \Vert \cZ_0^\Lambda\Vert_{\rm op}^2$ (see \cite[Lemma~4.4]{BPPS22}). The localization of the tracer, being a local phenomenon, only requires bounds on $\Vert \cZ_0^\Lambda\Vert_{\rm op}$, as well as the uniform bound in $\Lambda$ shown in \eqref{eq:Z0-bounds-1}, which is proven later in \cref{cor:tracer-position-estimate-for-BF}. Thus, we interpret  $\Vert \cZ_0^\Lambda(\cZ_0^\Lambda)^*-1\Vert_{\rm HS}^2\sim \cO(\Lambda^{1+2\epsilon})$ as the global excitation number, while the bounds on the operator norm ensure that the local excitation number is of $\cO(1)$.

 		All bounds \eqref{eq:Z0-bounds-1} and \eqref{eq:Z0-bounds-2} hold provided $\Vert \cZ_0^\Lambda\Vert_{\rm op}\leq C$. However, this condition does not appear to be directly compatible with the requirement in \cref{con:Infinite-volume-Bog-maps} that $\widehat{\cZ_0^\Lambda}\cT^{-1} (\tau\oplus J\tau J^*) \to  \widehat{\cZ_0^\infty} \cT^{-1} (\tau\oplus J\tau J^*)$ strongly in $L^2\oplus JL^2$, since the limiting operator, for instance, $\widehat{\cZ_0^\infty}=\cT$, is unbounded.
 		 A more careful construction of a $\cZ_0^\Lambda$ satisfying both the operator and Hilbert–Schmidt bounds, as well as the strong convergence, is given in \cref{sec:Construction-of-Z0}. Under the assumption of strong convergence, the uniform bound \eqref{eq:Z0-bounds-1} is automatically satisfied.
 		 The existence of a $\cZ_0^\Lambda$ satisfying strong convergence and all bounds above is a key ingredient in our proofs.
\end{rem}

\subsubsection{Finite-Volume Approximation}

We are now able to state our main result for the finite-volume $H^{\rm BF}_\Lambda(t)$.

 \begin{theorem}[Finite-Volume Bogoliubov-Fröhlich Dynamics] \label{thm:FullDynamics-BF-estimate}
 For given $0<\alpha<1/3$ and $0<s<1/3$ choose $\Lambda=\rho^{\alpha}$ and $n,k\in \BN_+$ large enough. Assume that the potentials $V$ and $W$ satisfy \cref{Assumption:Initial-datum-and-potential}$_{n}$. 
	\begin{itemize}
		\item[i)](Condensate conditions) Assume that for all $\rho\geq1$ the condensate $\varphi_0^\Lambda$ varies on the scale $\Lambda^{1/3}= \rho^{\alpha/3}$. That is, \cref{con:Initial condition} with additional regularity in the derivatives of $\varphi_0^\Lambda$, namely \cref{con:Initial condition derivative condensate}i)$_{k+2n-1}$ and \cref{con:Initial condition derivative condensate}$_{m}$ for $m=\max\{k,2\}+2$. Furthermore, we require that $\varphi_0^\Lambda$ is flat around the origin, namely \cref{con:Condensate-flat-around-origin}$_{2n,s}$, and assume that there exist $\eta\in H^1(\BR^3)$ and constants $\delta,C>0$ such that $\forall \Lambda\geq1$
\begin{align}
	\Vert \varphi_0^\Lambda(\Lambda^{1/3}\,.\,) -\eta \Vert_2 \leq  C\Lambda^{-\delta} \,.
\end{align}
		\item[ii)](Tracer localization and excitation number bound)
		  For all densities $\rho\geq1$ let $\psi_0^\Lambda \in L^2(\BR^3,H^1_{\rm sym}(\BR^{3N}))$ and set  $ \psi_{\Lambda,0}^{\mathrm{BF}}:=U_0^\Lambda\psi_0^\Lambda $, where $U_t^\Lambda $ is the excitation map from \eqref{eq:Excitation-map}.
		 \\Assume that $\psi_{\Lambda,0}^{\mathrm{BF}}$ is a perturbation of a quasi-free state with localized tracer particle, namely that  $\psi_{\Lambda,0}^{\mathrm{BF}}$ satisfies \cref{con:Localized-Tracer}$_{2n,\epsilon}$ with $\epsilon>0 $ small enough.
\end{itemize}
	
	\noindent Let $\psi_{\Lambda,t}^{\mathrm{BF}}\in L^2(\BR^3,\cF(\{\varphi_t^\Lambda\}^{\perp}))$ be the solution of the effective Bogoliubov-Fröhlich dynamics with initial data $\psi_{\Lambda,0}^{\mathrm{BF}}$ (see \cref{rem:Bog-BF-Ham}).
 Then for all times $T\geq0$, there exists a constant $C>0$ such that for all $\rho \geq1$ 
 \begin{align}
 	\sup_{t\in [-T,T]}\Vert  \rme^{\rmi \nu_t^\Lambda}  U_t^\Lambda  \rme^{-\rmi t H_\rho}\psi_0^\Lambda  -\psi_{\Lambda,t}^{\mathrm{BF}} \Vert \leq C \rho^{\frac{3(1+2\epsilon)\alpha -1}{2}} \,, \label{eq:FullDynmaics-BF-estimate}
 \end{align}
 where $\nu_t^\Lambda=\int^t_0(\rho^{1/2}\int W- \mu_s^\Lambda )ds\in\BR$  and $\mu_t^\Lambda\in\BR$ as in \eqref{eq:Mu}.
 \end{theorem}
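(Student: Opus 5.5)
The plan is a Grönwall-type estimate on the difference $\Phi_t := \rme^{\rmi\nu_t^\Lambda}U_t^\Lambda\rme^{-\rmi tH_\rho}\psi_0^\Lambda - \psi_{\Lambda,t}^{\rm BF}$, both living in $L^2(\BR^3,\cF(\{\varphi_t^\Lambda\}^\perp))$ (the first by construction of $U_t^\Lambda$, the second by \cref{rem:Bog-BF-Ham}). The first vector solves $\rmi\partial_t\Psi_t = (H^{\rm ex}_\rho(t)-\partial_t\nu_t^\Lambda)\Psi_t$, the second solves $\rmi\partial_t\psi^{\rm BF}_{\Lambda,t}=H^{\rm BF}_\Lambda(t)\psi^{\rm BF}_{\Lambda,t}$. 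Since $\Phi_0=0$, Duhamel's formula gives
\begin{align*}
\Vert\Phi_t\Vert \leq \int_0^{|t|}\big\Vert \big(H^{\rm ex}_\rho(s)-\partial_s\nu_s^\Lambda - H^{\rm BF}_\Lambda(s)\big)\psi^{\rm BF}_{\Lambda,s}\big\Vert\,ds ,
\end{align*}
because the full excitation dynamics is unitary. So everything reduces to bounding the generator difference applied to the effective solution. This is where the regularity of $\psi^{\rm BF}_{\Lambda,s}$ enters, so I will need propagation bounds for it.

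I will expand $H^{\rm ex}_\rho(s)$ in second quantization on the excitation Fock space, following \cite{LNS15}. The boson part produces $H^{\rm Bog}_\Lambda(s)$ plus cubic and quartic terms and $\sqrt{(N-\cN)(N-\cN-1)}/N$-type corrections. Using $\rho=N/\Lambda$, these errors are bounded by $\rho^{-1/2}$ times powers of $(\cN+1)$, times norms of $\varphi_s^\Lambda$ such as $\Vert\varphi_s^\Lambda\Vert_\infty$, which stays $\cO(1)$ by the condensate propagation in Appendix~\ref{sec:Control-of-the-condensate}. The tracer part $\rho^{-1/2}\sum_i W(x-y_i)$ splits into three pieces:
\begin{itemize}
\item the mean-field term $\rho^{1/2}W*|\varphi_s^\Lambda|^2(x)$;
\item the linear terms $a(Q_sW_x\varphi_s^\Lambda)+\mathrm{h.c.}$, multiplied by $\sqrt{(N-\cN)/N}$;
\item $\rho^{-1/2}\rmd\Gamma(Q_sW_xQ_s)$.
\end{itemize}
The last is $\cO(\rho^{-1/2}\cN)$. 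Replacing $\sqrt{1-\cN/N}$ by $1$ in the linear terms costs $\cO(\cN^{3/2}/N)$. The mean-field term, minus the phase $\rho^{1/2}\int W$, is controlled as in \cref{sec:Control-tracer-condensate-mean-field-interaction}. By Taylor expansion of $|\varphi_s^\Lambda|^2$ around $0$ to order $2n$, and using flatness (\cref{con:Condensate-flat-around-origin}$_{2n,s}$, propagated in time), it becomes $\rho^{1/2}$ times $\Lambda^{-2n(1/3-s)}$ times a polynomial of degree about $2n$ in $|x|$. The choice of $n$ large then makes this small, provided the moments of $x$ are controlled.

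The resulting error has the form
\begin{align*}
\Vert\Phi_t\Vert\lesssim\int_0^{|t|} \Big( \rho^{-1/2}\big\Vert(\cN+1)^{3/2}\psi_s^{\rm BF}\big\Vert + \rho^{1/2}\Lambda^{-2n(1/3-s)}\big\Vert(1+|x|)^{2n}\psi_s^{\rm BF}\big\Vert\Big)ds,
\end{align*}
up to small-$\rho$ corrections. The main obstacle is therefore a priori control of $\psi^{\rm BF}_{\Lambda,s}$ along the effective dynamics, in two forms.

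\emph{Global excitation number.} Here I expect $\Vert(\cN+1)^{3/2}\psi_s^{\rm BF}\Vert\lesssim\Lambda^{3(1+2\epsilon)/2}$. I obtain it by conjugating with $U_{\cV_s^\Lambda}$ and $U_{\cZ_0^\Lambda}$. Then $\cN$ transforms to at most $C(\Vert\cZ\Vert^2_{\rm op}(\cN+1)+\Vert\cZ\cZ^*-1\Vert^2_{\rm HS})$ by \cite[Lemma~4.4]{BPPS22}, which is $\Lambda^{1+2\epsilon}$ by \eqref{eq:Z0-bounds-2}. The linear tracer coupling is handled by a Grönwall argument on $\pscal{\psi,(\cN+1)^k\psi}$, using $\Vert W_x\varphi\Vert_2\le C$. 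Multiplying $\rho^{-1/2}$ by $\Lambda^{3(1+2\epsilon)/2}=\rho^{3\alpha(1+2\epsilon)/2}$ gives exactly the claimed rate $\rho^{(3(1+2\epsilon)\alpha-1)/2}$.

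\emph{Tracer localization.} I need $\Vert(1+|x|)^{2n}\psi_s^{\rm BF}\Vert\le C$ uniformly in $\Lambda$, which is the content of \cref{cor:tracer-position-estimate-for-BF}. It rests on the uniform local bound \eqref{eq:Z0-bounds-1}: the tracer couples only to $\cO(1)$ local excitations, so $\pscal{(-\Delta_x+x^2+U^*(\cN+1)^2U)^{M}}$ grows at most boundedly on $[-T,T]$.

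I expect this localization step to be hardest, since commutators of $x^{2n}$ with $-\Delta_x/2m$ generate momenta that must in turn be controlled by the coupling. The plan is to estimate the combined quadratic form by induction in $M$, with constants independent of $\Lambda$. Finally, taking $n$ large relative to $\alpha$ and $s$ makes the mean-field remainder $\rho^{1/2}\Lambda^{-2n(1/3-s)}$ smaller than the main rate.
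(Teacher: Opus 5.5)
Your proposal is correct and follows essentially the paper's argument. The paper uses the same ingredients: the remainder bounds of \cref{lem:Tech-RN-estimate} against an excitation bound of order $\Lambda^{1+2\epsilon}$ (obtained via \cite[Lemma~4.4]{BPPS22}), plus the weighted control of $\sqrt{\rho}\,W\ast(|\varphi_s^\Lambda|^2-1)$ against the tracer moments of \cref{cor:tracer-position-estimate-for-BF}; the only organizational difference is that the paper splits through the intermediate dynamics $\xi_t^\Lambda$ and applies \cref{thm:tildePsi-psiBF-time-estimate} and \cref{Part1:MainThm-new} separately, whereas you merge both error sources into a single Duhamel integral.

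One correction: the ``flatness propagated in time'' is not a pointwise Taylor bound at time $s$, because global Hartree estimates only give $\sqrt{\rho}\,\Lambda^{-1/6}\to\infty$. It is instead the localized stability estimate behind \cref{cor:ThetaDPhi-TPhi}, which adds a term $\sqrt{\rho}\,\Lambda^{-1/6-k/3-ks}$ to your mean-field remainder. So $k$ must be taken large as well as $n$, and this is exactly where the derivative hypotheses on $\varphi_0^\Lambda$ are used.
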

 \begin{rem} \label{rem:FullDynamics-BF-estimate}
 We have explicit control on the lower bounds on $n$ and $k$ and upper bound on $\epsilon$ such that \cref{thm:FullDynamics-BF-estimate} is valid:
 		 \begin{gather}
	n\geq  \frac{3}{4(1/3-s)} \bigg( \frac{1}{\alpha} -2 -\frac{s}{3} \bigg)  \,, \quad
	k\geq \frac{(2n-1/2)(1/3 -s) }{ 1/3 + s }\,, \label{eq:tilde-n-and-k-large-enough-condition-main-thm} \\
	0<\epsilon< 1/4 \min\{ \delta,\, s,\, 3/2(1/3-s),\, 1/6,\, 2(1/(3\alpha)-1)\} \,, \label{eq:Condition-epsilon-small-enough}
\end{gather}
where  $\epsilon<1/2(1/(3\alpha)-1)$ ensures that \eqref{eq:FullDynmaics-BF-estimate} converges to zero.
 \end{rem}

 \begin{proof}[Proof of \cref{thm:FullDynamics-BF-estimate}]
 We define $\xi_t^\Lambda$ as the solution  of the intermediate Bogoliubov-Fröhlich dynamics by
  \begin{align} 
 \rmi \partial_t \xi_t^\Lambda &=\left( H^{\rm BF}_{\Lambda}(t)+ \rho^{1/2} W\ast |\varphi_t^\Lambda|^2(x) \right) \xi_t^\Lambda\,, \label{eq:Intermediate-Bog-Froehlich-Dynamics} \\
  \xi_{t=0}^\Lambda &= \psi_{\Lambda,0}^{\mathrm{BF}}\in L^2(\BR^3, Q( \rmd\Gamma(1-\Delta))) \,, \nn
 \end{align}
 which still includes the mean tracer-condensate interaction term. The well-posedness of \eqref{eq:Intermediate-Bog-Froehlich-Dynamics} as well as the invariance of the excitation space under this dynamics, follow by the same arguments as in \cref{rem:Bog-BF-Ham}.
 Then we split 
	\begin{align}
		&\Vert  \rme^{\rmi \int^t_0(\rho^{1/2}\int W- \mu_s^\Lambda )ds}  I\otimes U_t^\Lambda  \rme^{-\rmi H_\rho t}\psi_0^\Lambda -\psi_{\Lambda,t}^{\mathrm{BF}} \Vert \nn \\
		&\leq \Vert  \rme^{-\rmi \int^t_0 \mu_s^\Lambda ds}  I\otimes U_t^\Lambda  \rme^{-\rmi H_\rho t}\psi_0^\Lambda -\xi_t^\Lambda \Vert \label{eq:Bog-Approx}\\
		&\quad + \Vert  \rme^{\rmi t \rho^{1/2} \int W} \xi_t^\Lambda -\psi_{\Lambda,t}^{\mathrm{BF}} \Vert \,. \label{eq:Control-mean-field}
	\end{align}
The effective Hamiltonian  $H^{\rm BF}_{\Lambda}(t)+ \rho^{1/2} W\ast |\varphi_t^\Lambda|^2(x)$ is chosen so that the leading-order contributions generated by the two distinct dynamics in the first term \eqref{eq:Bog-Approx} cancel, thereby rendering it small.

 In the second term \eqref{eq:Control-mean-field} we extract the $\rho$-dependent mean tracer-condensate interaction term $\rho^{1/2} W\ast |\varphi_t^\Lambda|^2$ from the dynamics by approximating it with $\rho^{1/2} \int W$, which is large but constant. For this we use the results from \cref{sec:Control-tracer-condensate-mean-field-interaction}, especially the localization of the tracer particle, given in \cref{Part1:MainThm-new}$_{\gamma}$ (with $\gamma=1/2(3-1/\alpha)>0$ such that $\Lambda^{-\gamma}= (\Lambda^3/\rho)^{1/2}$).

	 Note that the conditions of \cref{thm:tildePsi-psiBF-time-estimate}$_\kappa$ are satisfied for $\kappa=2\epsilon$, since $\forall 1\leq m\leq4$:
	\begin{align*}
		&\pscal{\xi_0^\Lambda ,  (\cN +1 )^m \xi_0^\Lambda }  = \pscal{ U_{\cZ_0}\xi_0^\Lambda ,  (U_{\cZ_0}(\cN +1 )^m U_{\cZ_0}^*)  U_{\cZ_0}\xi_0^\Lambda } \\
		&\leq C_m (1+\Vert \cZ_0\cZ_0^*-1\Vert_{\rm HS}^2+ \Vert \cZ_0\Vert_{\rm op}^2)^m \pscal{ U_{\cZ_0}\xi_0^\Lambda  ,   (\cN +1 )^m   U_{\cZ_0}\xi_0^\Lambda } \\
		&\leq C_m(1+\Vert \cZ_0\cZ_0^*-1\Vert_{\rm HS}^2+ \Vert \cZ_0\Vert_{\rm op}^2)^m \leq C_m \Lambda^{(1+2\epsilon)m}\,,
	\end{align*}
	where we used \cite[Lemma~4.4]{BPPS22},
$\Vert \cZ_0^\Lambda\Vert_{\cL(L^2\oplus JL^2)}\leq C\Lambda^\epsilon$, $\Vert \cZ_0^\Lambda(\cZ_0^\Lambda)^*-1\Vert_{\mathrm{HS}(L^2\oplus JL^2)}\leq C \Lambda^{1/2+\epsilon}$ and \cref{con:Localized-Tracer}$_{\psi_{\Lambda,0}^{\mathrm{BF}},2n,\epsilon}$ together with $4n\geq\frac{6}{2(1/3-s)} \left( \frac{1}{\alpha} -2 -\frac{s}{3} \right)\geq 8\geq m$ for $\alpha,s\in(0,1/3)$.
\end{proof}

\subsubsection{Tracer Localization}
One of the important ingredients for \cref{thm:FullDynamics-BF-estimate} is the localization of the tracer particle, which is an interesting result on its own. For details about its derivation and interpretation we refer to \cref{sec:Tracer-localization}.

\begin{theorem}[Tracer Localization in Position and Momentum for the Effective Dynamics] \label{cor:tracer-position-estimate-for-BF}
Let $M\in\BN_+$. Assume that the potentials $V$ and $W$ satisfy \cref{Assumption:Initial-datum-and-potential}$_{\max\{M,2\}}$.
\begin{itemize}
	\item[i)](Condensate conditions) For all volumes $\Lambda\geq1$ let $\varphi_t^\Lambda$ be the solution of the Hartree equation \eqref{eq:Hartree}. Assume that its initial data $\varphi_0^\Lambda$  varies on the scale $\Lambda^{1/3}$, that is, 
\cref{con:Initial condition} and \cref{con:Initial condition derivative condensate}$_{4}$. Furthermore, we require that $\varphi_0^\Lambda$ is flat around the origin, namely \cref{con:Condensate-flat-around-origin}$_{2,s}$, $0<s<1/3$, and assume that there exist $\eta\in H^1(\BR^3)$ and constants $\delta,C>0$ such that for all $\Lambda\geq1$
\begin{align}
	\Vert \varphi_0(\Lambda^{1/3}\,.\,) -\eta \Vert_2 \leq  C\Lambda^{-\delta} \,.
\end{align}
	\item[ii)](Tracer localization) For all densities $\rho\geq1$ and volumes $\Lambda\geq1$ let $\psi_{\Lambda,0}^{\mathrm{BF}}\in L^2\big(\BR^3, Q( \rmd\Gamma(1-\Delta))\big)$. Assume that $\psi_{\Lambda,0}^{\mathrm{BF}}$ is a perturbation of a quasi-free state with localized tracer particle, namely that  $\psi_{\Lambda,0}^{\mathrm{BF}}$ satisfies \cref{con:Localized-Tracer}$_{M,\epsilon}$ with power $M$ and  $0<\epsilon\leq 1/4 \min\{ \delta,\, s,\, 3/2(1/3-s),\, 1/6\}$. 
\end{itemize}
Let $\psi_{\Lambda,t}^{\mathrm{BF}}$ be the solution of the effective Bogoliubov-Fröhlich dynamics with initial data $\psi_{\Lambda,0}^{\mathrm{BF}}$ (see \cref{rem:Bog-BF-Ham}).
	Then 
	\begin{itemize}
		\item[a)] $(t\mapsto \psi_{\Lambda,t}^{\mathrm{BF}}) \in \cL_{\rm loc}^\infty\big(\BR, Q((-\Delta_x)^M)\big)\cap\cL_{\rm loc}^\infty\big(\BR, Q( x^{2M})\big)$. 
		\item[b)] For all times $ T\geq0$ there exists a constant $C>0$ such that for all densities $\rho\geq1$ and volumes $ \Lambda\geq1$   
	\begin{align*}
		\sup_{t\in[-T,T]} \pscal{\psi_{\Lambda,t}^{\mathrm{BF}},\big((-\Delta_x)^M + x^{2M} \big) \psi_{\Lambda,t}^{\mathrm{BF}}}  \leq C  \,.
	\end{align*}
	\end{itemize}
\end{theorem}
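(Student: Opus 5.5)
We plan a Grönwall argument for the functional
\[ f(t):=\pscal{\psi_{\Lambda,t}^{\mathrm{BF}},\big((-\Delta_x)^M+x^{2M}+ \cN_t^{M}\big)\psi_{\Lambda,t}^{\mathrm{BF}}}, \]
where the phonon part $\cN_t$ is the transported, \emph{local} excitation number
\[ \cN_t:=U_{\Lambda,t}^{\rm Bog}\,U_{\cZ_0^\Lambda}^*\,(\cN+1)\,U_{\cZ_0^\Lambda}\,(U_{\Lambda,t}^{\rm Bog})^* . \]
At $t=0$ the bound $f(0)\le C$ is exactly \cref{con:Localized-Tracer}$_{M,\epsilon}$. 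Since $\cN_t$ commutes with $H^{\rm Bog}_\Lambda(t)-\rmi\partial_t$ by construction, it is only affected by the linear coupling $a(Q_tW_x\varphi_t)+a^*(Q_tW_x\varphi_t)$. The essential point is to avoid the bare number operator $\cN$, whose expectation is of order $\Lambda^{1+2\epsilon}$.

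\textbf{Step 1 (Bogoliubov part and the coupling).} We conjugate by the unitary $U_{\cZ_0^\Lambda}(U^{\rm Bog}_{\Lambda,t})^*$, under which $\cN_t$ becomes $\cN+1$. By \eqref{eq:Bog-creation-annihilation}, the coupling is then transformed into $a(g_{t,x})+a^*(g_{t,x})$ with
\[ g_{t,x}\oplus Jg_{t,x}=\cZ_0^\Lambda(\cV_t^\Lambda)^{-1}\big(Q_tW_x\varphi_t\oplus JQ_tW_x\varphi_t\big). \]
The key estimate is
\[ \sup_{|t|\le T,\;x}\Vert g_{t,x}\Vert_2\le C\,(1+|x|)^{m}\Lambda^{\epsilon'}, \]
with $m$ fixed and the power of $\Lambda$ either absent or harmless. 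Ideally the bound is uniform in $\Lambda$, and we aim for that.

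To obtain it, we write $(\cV_t^\Lambda)^{-1}$ via Duhamel around the flat, translation-invariant infinite-volume evolution $\cV_t^\infty$. Near the origin $\varphi_t\approx \rme^{-\rmi t\nu}$ by \cref{con:Condensate-flat-around-origin}$_{2,s}$ and the propagated Hartree bounds of Appendix~\ref{sec:Control-of-the-condensate}, which use \cref{con:Initial condition derivative condensate}$_4$ and the closeness to $\eta$. For $\cV_t^\infty$, in Fourier space $\widehat{Q_tW_x\varphi_t}\approx \widehat W\rme^{-\rmi px}$ up to the condensate mode. We then use \eqref{eq:Z0-bounds-1}, namely $\widehat{\cZ_0^\Lambda}\cT^{-1}(\tau\oplus J\tau J^*)$ bounded, together with \eqref{eq:Diagonal-V-infty-1}: the factor $\cT^{-1}$ produced by the diagonalization is compensated by $\tau\sim|p|$ times $\widehat W$. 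This leaves an $L^2$ function bounded uniformly in $\Lambda$, $t$ and $x$.

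The error from replacing $\varphi_t$ by a constant grows polynomially in $|x|\Lambda^{-1/3}$. After multiplication by $\Vert\cZ_0^\Lambda\Vert\le C\Lambda^\epsilon$, this is controlled because $\epsilon$ is small relative to $s$ and $1/3-s$; this is the role of the constraint on $\epsilon$. Any polynomial weight in $|x|$ is absorbed by the $x^{2M}$ part of $f$.

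\textbf{Step 2 (commutator estimates).} We differentiate $f$. For the field part we use the standard bounds
\[ [\cN^M,a^*(g)+a(g)]\lesssim \Vert g\Vert\,(\cN+1)^{M-1/2}, \]
which give
\[ \Big|\pscal{\psi,[\cN^M,a(g_{t,x})+a^*(g_{t,x})]\psi}\Big|\le C\,f(t), \]
using Step 1 and Cauchy--Schwarz with the weight $(1+|x|)^m$ against $x^{2M}$.

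For $x^{2M}$, the commutator with $-\Delta_x/(2m)$ produces terms of the type $x^{2M-1}\cdot\nabla_x$. These are bounded by $x^{2M}+(-\Delta_x)^M$ through weighted interpolation.

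For $(-\Delta_x)^M$, the commutator with the coupling gives $\nabla_x^{j}$ acting on $g_{t,x}$, i.e.\ on $\partial^\beta W$. This is where \cref{Assumption:Initial-datum-and-potential}$_{\max\{M,2\}}$ is needed: Step 1 must be repeated with $W$ replaced by $\partial^\beta W$, and the $(1+y^2-\Delta)^{1/4}$ condition handles the $\tau^{-1}$ infrared factor. The mixed terms are again bounded by $f$ via Cauchy--Schwarz and $\cN^{M}$-interpolation.

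Grönwall then gives $f(t)\le C_T$, which is part b). Part a) follows from the same estimates applied first to regularized dynamics, with a cutoff in $x$ and $\cN$, and then removing the cutoff.

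\textbf{Main obstacle.} The hardest step is the uniform bound on $g_{t,x}$, that is, controlling $\cZ_0^\Lambda(\cV_t^\Lambda)^{-1}$ on the coupling function despite the unbounded $\cT$ and the $\Lambda^\epsilon$ growth of $\cZ_0^\Lambda$. The first thing to try is to prove
\[ \big\Vert\,\big(\cV^\Lambda_t-\cV^\infty_t\big)\,\chi(|y|\le\Lambda^{1/3-s})\big\Vert\lesssim \Lambda^{-c} \]
via a Duhamel formula, exploiting the flatness of $\varphi_t$ and finite propagation estimates. This bound then beats $\Lambda^\epsilon$ because $\epsilon\le \tfrac14\min\{\delta,s,\tfrac32(1/3-s)\}$. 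The remaining piece, far from the origin, is tamed by the $|y|^kW$ decay together with the $x$-moments.
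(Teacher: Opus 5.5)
\paragraph{Overall assessment.} Your overall route is the paper's: conjugate by $U_{\cZ_0^\Lambda}(U^{\rm Bog}_{\Lambda,t})^*$, control the transformed coupling $g_{t,x}\oplus Jg_{t,x}=\cZ_0^\Lambda(\cV_t^\Lambda)^{-1}(Q_t^\Lambda W_x\varphi_t^\Lambda\oplus JQ_t^\Lambda W_x\varphi_t^\Lambda)$ by comparison with $\cV_t^\infty$ and \eqref{eq:Z0-bounds-1}, then run Gr\"onwall.

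\paragraph{The gap: Steps 1 and 2 do not fit together.} In your functional $f$ the phonon number enters only as $\cN_t^M$, so $(\cN+1)^{1/2}$ carries the same weight as $|x|$ or $|\nabla_x|$. The commutator of the coupling with $(-\Delta_x)^M$ contains terms $a^\#(\partial_x^\beta g_{t,x})\,\partial_x^\delta$ with $|\beta|=1$, $|\delta|=2M-1$. If $\Vert\partial_x^\beta g_{t,x}\Vert\sim(1+|x|)^m$, these terms have order $2M+m$ and are not bounded by $f$ for any $m>0$.

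Already for $M=1$, write $A_x=a(g_{t,x})+a^*(g_{t,x})$. The derivative of $\pscal{\psi,-\Delta_x\psi}$ is $\pm 2\,\mathrm{Im}\sum_j\pscal{(\partial_{x_j}A_x)\psi,\partial_{x_j}\psi}$. Closing it would require $\pscal{\psi,(1+|x|)^{2m}(\cN+1)\psi}\lesssim f$. This fails: take a state localized at $|x|\sim L$ with momentum $\sim L$ and a coherent phonon component with $\cN\sim L^2$. The term is then of order $L^{2+m}$, while $f\sim L^2$.

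So the claim ``any polynomial weight in $|x|$ is absorbed by the $x^{2M}$ part of $f$'' is false. Yet your Step 1 explicitly produces such weights (the error ``polynomial in $|x|\Lambda^{-1/3}$''). Likewise, any factor $\Lambda^{\epsilon'}$ in the bound on $g$ enters the Gr\"onwall rate, so it is never harmless.

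\paragraph{How the paper closes it.} Let $(\cN+1)$ enter with power $2M$, i.e.\ use the comparison operator $h_{\rm oc}^M$ with $h_{\rm oc}=-\Delta_x+x^2+(\cN+1)^2$. This is exactly what \eqref{eq:(N+1)-x-Nabla-initial-condition-cor} provides; your $f$ discards half of that hypothesis. Then $(1+x^2)^{1/4}(\cN+1)^{1/2}\le h_{\rm oc}$, and all commutator terms close provided the coupling grows at most like $(1+x^2)^{1/4}$, and no faster.

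Step 1 must therefore give $\sup_x(1+x^2)^{-1/4}\Vert\partial_x^\beta g_{t,x}\Vert\le C$ uniformly in $\Lambda$ for $|\beta|\le M$; this is \cref{lem:uniform-boundedness-F-Lambda}. The paper obtains it in two pieces.
\begin{itemize}
\item The main part $\cZ_0^\Lambda(\cV_t^\infty)^{-1}(\partial^\beta W_x\oplus J\partial^\beta W_x)$ is bounded uniformly in $x$. This uses translation invariance of $\cV_t^\infty$, its explicit form \eqref{eq:-V-infty-Explicit}, and \eqref{eq:Z0-bounds-1}.
\item The remainder is at most $C\Lambda^{\epsilon}\Lambda^{-\gamma/4}(1+x^2)^{1/4}$. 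This comes from the rate $\Lambda^{-\gamma}$ for $\rme^{\rmi t\nu S}\cV_t^\Lambda-\cV_t^\infty$ in the graph norm of $h=-\Delta+y^2+1$, combined with Heinz--Kato interpolation. The condition $\epsilon\le\gamma/4$ is precisely the hypothesis on $\epsilon$.
\end{itemize}

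\paragraph{On your proposed localized bound.} An operator-norm estimate for $\cV_t^\Lambda-\cV_t^\infty$ on functions supported in $|y|\le\Lambda^{1/3-s}$, obtained via finite propagation, is doubtful as stated. $-\Delta$ has no finite propagation speed, so high-momentum components leave the flat region in time $O(1)$. Some momentum control is needed, and the paper gets it from the $h$-graph norm.
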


The proof of \cref{cor:tracer-position-estimate-for-BF} is given in \cref{sec:Tracer-localization}.

\subsection{Notation} \label{sec:Notation}
Our notation is based on \cite{DFPP16,LaPi22,NNS16,LaTr25,LNS15}. Let $\sH,\sK$ be $\BC$--Hilbert spaces and $\sH^*$ the topological dual of $\sH$.
\begin{enumerate}
\item By $C$ we denote a universal constant, which is independent of our scaling parameters $\Lambda$ and $\rho$ and whose value may change from one line to another.
\item We denote $\BN_+:=\BN\setminus \{0\}$ and $\BN_0:=\BN\cup \{0\}$.
\item Let $E$ be a Banach space and $p\in[1,\infty]$. We denote by $\cL^p(\BR^d, E)$ the space of strongly measurable functions $f$ with $\Vert f\Vert_{\cL^p}<\infty$ and by $L^p(\BR^d, E)$ the corresponding $\cL^p$-space modulo functions vanishing almost everywhere. We set $\Vert\,.\,\Vert_p:=\Vert\,.\,\Vert_{L^p}$. By $L^p_s(\BR^{dn}, E)$ we denote the subspace of $L^p(\BR^{dn}, E)$ which is symmetric under exchange of the $n$ variables. We set  $L^p(\BR^{d}):=L^p(\BR^{d}, \BC) $.
\item For a Banach space $E$ we denote the Sobolev spaces by $W^{m,p}(\BR^d,E):=\{ f\in L^p(\BR^d, E) \,|\, \partial^\alpha f \in L^p(\BR^d, E)\ for \ |\alpha|\leq m \}$, $p\in[1,\infty],m\in \BN_0$, by $H^m= W^{m,2}(\BR^d,E)$ and by $H^\infty = \cap_{m\in \BN_0} H^m$. We set $H^m(\BR^d):= H^m(\BR^d,\BC)$.
\item Let $f\in L^2(\BR^d,\sH)$ and let $A$ be an operator on $\sH$. We denote the Fourier transform of $f$ and the Fourier-transformed operator associated with $A$ by
\begin{align*}
	\cF(f)(k) &=\hat{f}(k) =\frac{1}{(2\pi)^{d/2}} \int_{\BR^d} \rme^{-ikx} f(x) dx\,, \quad k\in \BR^d\,,
	\end{align*}
	and
	\begin{align*}
	\cF A\cF^{-1} &= \hat{A}\,.
\end{align*}
\item $J: \sH\to \sH^*:\psi\to \langle \psi,\,.\,\rangle$ denotes the canonical anti-unitary map between a Hilbert space and its dual.
\item  We denote by $\mathscr{H}\bar{\otimes}\mathscr{K}:=\text{lin}\{f\otimes g| f\in\sH,g\in\sK\}$ the algebraic tensor product and by $\mathscr{H}\otimes_s\mathscr{K}$ the symmetric tensor product of Hilbert spaces. 
For $\phi \in\sH^{\otimes k}$ and $\psi \in \sH^{\otimes l}$ we set
\begin{align*}
&\phi_k\otimes_s \psi_l(y_1,\dots,y_{k+l}) \\ 
&:=  \frac{1}{\sqrt{k!l!(k+l)!}} \sum_{\sigma\in P_{k+l}} \phi(y_{\sigma(1)},\dots , y_{\sigma(k)})  \psi(y_{\sigma(k+1)},\dots , y_{\sigma(k+l)})\,.
\end{align*}
\item For $\psi\in D(\cN^{1/2})$ and $f\in  L^2(\BR^3)$ we define the bosonic creation operator by $ a(f)^*\psi= f\otimes_s\psi$ and $a(f)$ as its adjoint. For $f,g\in L^2(\BR^3)$ we denote $A\left(  f \oplus J g \right):= a(f) + a^*(g)$.
\item We denote by $\mathcal{F}(\sH)= \bigoplus_{n=0}^\infty \sH^{\otimes_s n}$ the symmetric Fock space over $\sH$ and by $\mathcal{F}^{\leq N}(\sH)$ its truncation to at most $N$ particles. For $\psi\in \cF(\sH)$ we denote its $n$--th component by $\psi_n\in \sH^{\otimes_s n}$ and define the particle number operator $\cN\psi:= \sum_{n\geq0} n \psi_n$ on a suitable subspace of $\cF(\sH)$.
\item For a self-adjoint operator $A$ on $\sH$ we set $A_j= I\otimes...\otimes I\otimes A \otimes ...\otimes I$, where $A$ acts on the $j$--th space and set
	\begin{align*}
		 \rmd\Gamma(A)\psi ={}& \sum_{n\geq1} \sum_{1\leq j\leq n} A_j \psi_n\,, 
	\end{align*}
	on the domain of self-adjointness.
\item Let $A\geq -\beta$, for $\beta\in\BR$, be an operator on a Hilbert space then $q_A$ denotes the closed symmetric quadratic form associated to $A$ and $Q(A)$ its quadratic form domain (see \cite[Chapter VIII.6]{ReSi80}). We denote $q_A(\psi)= \pscal{\psi, A\psi}$.
\item For $f:\BR^3\to \BC$ we denote $f_x(y):=f(x-y)$.
\item For a map $A:x\mapsto A_x$ on $\BR^3$ whose values are quadratic forms we define for suitable $\psi\in Q(A)\subset L^2(\BR^3,\cF(\sH))$
	\begin{align*}
		\pscal{\psi,A\psi} ={}& \int \pscal{\psi(x),A_x\psi(x)} dx\,, \\
		Q(A) ={}&\Big\{ \psi\in L^2(\BR^3,\cF(\sH)) \Big| \, \big(x\mapsto \pscal{\psi(x),A_x\psi(x)}\big)\in L^1(\BR^3) \Big\} \,.
	\end{align*}	
	\item For $L^2(\BR^d, \mathcal{F}(L^2(\BR^d))$ or similar spaces we often use the notation $L^2(\BR^d_x, \mathcal{F}(L^2(\BR^d_y))$ to clearly separate the different arguments of the corresponding functions.
	\item We call a two-parameter family of operators $U(s,t)$, $s,t\in\BR$, on  $\sH$ a unitary propagator if satisfies $U(r,s)U(s,t)=U(r,t)$, $U(t,t)=I$ and $U(s,t)$ is jointly strongly continuous in $s$ and $t$.
	\item We denote by $\cL(\sH,\sK)$ the space of all bounded operators mapping from $\sH$ to $\sK$, and by $\mathrm{HS}(\sH,\sK)$  the space of all Hilbert-Schmidt operators. The corresponding norms are denoted by $\| \,.\,\|_{\cL(\sH,\sK)}$ and $\| \,.\,\|_{\mathrm{HS}(\sH,\sK)}$, respectively.
	\item We denote by $\ast$ the convolution:
	\[ (f\ast g )(x) = \int f(x-y) g(y) dy  \]
	for  measurable $f,g:\BR^3 \to \BC$ such that $\big(y\mapsto f(x-y) g(y)\big)\in L^1$ for almost all $x\in \BR^3$.
	\item We denote by $\cC$ the complex conjugation operator, $\cC\psi=\psi^*$, and by $R$ the reflection operator, $R\psi(y)= \psi(-y)$.
\end{enumerate}


\section{Bogoliubov Approximation} \label{sec:Bogoliubov-Approximation}
 In this section, we establish the validity of the intermediate Bogoliubov-Fröhlich Hamiltonian $H^{\rm BF}_{\Lambda}(t)+ \rho^{1/2} W\ast |\varphi_t^\Lambda|^2(x)$. To this end, we first isolate those contributions in the excitation Hamiltonian $H_\rho^{\rm ex}$ that constitute error terms (see \cref{sec:Determination-Error-Terms}). We then proceed with the Bogoliubov approximation in \cref{sec:Bogoliubov-Approximation-subsection}, where we rigorously justify the resulting intermediate Bogoliubov–Fröhlich Hamiltonian.

\subsection{Determination of the Error Terms} \label{sec:Determination-Error-Terms}

Starting from the excitation Hamiltonian \eqref{eq:Ex-Ham}, we derive the intermediate Bogoliubov-Fröhlich Hamiltonian $H^{\rm BF}_{\Lambda}(t)+ \rho^{1/2} W\ast |\varphi_t^\Lambda|^2(x)$, which still includes the mean tracer-condensate interaction, treated in \cref{sec:Control-tracer-condensate-mean-field-interaction}. For this, we isolate all terms in the excitation Hamiltonian that are small when the number of excitations is small compared to the total number of particles $N$. These terms are collected into an error term $R_N$. Additionally, we extract the constant $-\mu_t^\Lambda$ from the dynamics. The process of obtaining the effective Hamiltonian is referred to as the Bogoliubov approximation. The analysis in this section is based on results from \cite{LNS15,PPS20, LaPi22}.

Unless stated otherwise, we use a time-dependent orthonormal basis $\{u_n\}_{n\in\mathbb{N}_0}$ of the Hilbert space $L^2(\mathbb{R}^3)$ with $u_0(t):=\varphi_t^\Lambda/\Lambda^{1/2}$. We set $ V_{mnpq}= \pscal{u_m\otimes u_n, V(y-y') u_p\otimes u_q}$ and denote by $\cN_+^\Lambda(t)=\cN-a^*\left(\frac{\varphi_t^\Lambda}{\Lambda^{1/2}}\right) a\left(\frac{\varphi_t^\Lambda}{\Lambda^{1/2}}\right)$ the number operator on the excitation space $ \mathcal{F}\big(\{\varphi_t^\Lambda\}^\perp\big)$. Observe that $\cN_+^\Lambda(t)$ and $\cN$ coincide on the excitation space.

\begin{prop}[Intermediate Bogoliubov-Fröhlich Hamiltonian] \label{prop:HBF-TildeHN}
	  For all volumes $\Lambda\geq1$ let $\varphi_t^\Lambda$ be the solution of the Hartree equation \eqref{eq:Hartree}. We set $\mu_t^\Lambda:=\frac{1}{2} \pscal{\frac{\varphi_t^\Lambda}{\Lambda^{1/2}}, V\ast |\varphi_t^\Lambda|^2 \frac{\varphi_t^\Lambda}{\Lambda^{1/2}}}$ as in \eqref{eq:Mu}. Then
 \begin{align}
 	H_{\rho}^{\rm ex}(t) ={}&  H^{\rm BF}_{\Lambda}(t)+ \rho^{1/2} W\ast |\varphi_t^\Lambda|^2(x)  -\mu_t^\Lambda + R_N(t)  \label{eq:Connection-H-ex-and-H-BF} 
 \end{align}
and 
\begin{align}
	R_{1,N} ={}& - \frac{1}{2} \rmd\Gamma \left( Q_t^\Lambda \left[ V\ast |\varphi_t^\Lambda|^2  +K_1(t) - \mu_t^\Lambda \right] Q_t^\Lambda \right) \frac{\cN_+^\Lambda(t)}{N} \nonumber \\
	&- \frac{(\cN_+^\Lambda(t) +1)\sqrt{N-\cN_+^\Lambda(t)}}{N} a\left(Q_t^\Lambda V\ast |\varphi_t^\Lambda|^2 \frac{\varphi_t^\Lambda}{\Lambda^{1/2}}\right)  \nonumber \\
	&+  \frac{1}{2} \sum_{m,n\geq 1} \rho^{-1}{V}_{mn00}a_m^*a_n^* \left(  \sqrt{(N-\cN_+^\Lambda(t)-1)(N-\cN_+^\Lambda(t))} -N \right) \nonumber \\
	&+ \sum_{m,n,p\geq1} \Lambda{V}_{0mnp} \frac{\sqrt{N-\cN_+^\Lambda(t)}}{N} a_m^* a_n a_p  + \frac{1}{2} \mu_t^\Lambda \frac{\cN_+^\Lambda(t)}{N} + \mathrm{h.c.} \,, \label{eq:Def-R1N} \\
	R_{2,N} ={}& \frac{1}{2N} \sum_{m,n,p,q\geq 1} \Lambda{V}_{mnpq} a_m^* a_n^* a_p a_q  \,, \label{eq:Def-R2N} \\
	R_{3,N} ={}& a^*(Q_t^\Lambda W_x \varphi_t^\Lambda) \frac{1}{\sqrt{N}}\left( \sqrt{N-\cN_+^\Lambda(t)} -\sqrt{N}\right) + \mathrm{h.c.} \,, \label{eq:Def-R3N} \\
	R_{4,N} ={}& - \frac{1}{\sqrt{\rho}}  W\ast |\varphi_t^\Lambda|^2(x) \cN_+^\Lambda(t) + \frac{1}{\sqrt{\rho}} \rmd\Gamma(Q_t^\Lambda W_x Q_t^\Lambda) \,, \label{eq:Def-R4N} \\
	R_N ={}&\sum_{i=1}^4 R_{i,N}  \,, \label{eq:Def-RN}
\end{align} 
on the truncated excitation space $ \mathcal{F}^{\leq N}\big(\{\varphi_t^\Lambda\}^\perp\big)$ and $R_{i,N}=0$ elsewhere. 
\end{prop}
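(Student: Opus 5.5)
The identity is purely algebraic, so I will compute $U_t^\Lambda H_\rho (U_t^\Lambda)^*$ and $\rmi(\partial_t U_t^\Lambda)(U_t^\Lambda)^*$ explicitly and sort the resulting terms. The computation is done on the truncated excitation space $\mathcal{F}^{\leq N}(\{\varphi_t^\Lambda\}^\perp)$.

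\textbf{Second quantization.} First I write $H_\rho$ in second quantization with respect to the basis $\{u_n(t)\}$, $u_0=\varphi_t^\Lambda/\Lambda^{1/2}$. The tracer part reads
\[
-\frac{\Delta_x}{2m}+\rho^{-1/2}\sum_{m,n}\pscal{u_m,W_xu_n}a_m^*a_n ,
\]
the kinetic part is $\rmd\Gamma(-\Delta/2)$, and the interaction is $\frac{1}{2\rho}\sum V_{mnpq}a_m^*a_n^*a_pa_q$.

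\textbf{Substitution rules.} I then use the rules of \cite{LNS15}, valid on $\mathcal{F}^{\leq N}$:
\[
U_t a_0^*a_0 U_t^*=N-\cN_+ ,\qquad U_t a_m^*a_0U_t^*=a_m^*\sqrt{N-\cN_+},
\]
\[
U_t a_0^*a_nU_t^*=\sqrt{N-\cN_+}\,a_n,\qquad U_t a_m^*a_nU_t^*=a_m^*a_n \quad (m,n\geq1).
\]
The time-derivative term gives $\rmi(\partial_tU_t)U_t^*=-\big(a^*(Q_t\rmi\partial_t u_0)\sqrt{N-\cN_+}+\mathrm{h.c.}\big)$ plus a number-operator contribution involving $\pscal{u_0,\rmi\partial_t u_0}$. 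I will use the Hartree equation \eqref{eq:Hartree} to replace $\rmi\partial_tu_0=h_t^\Lambda u_0$.

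\textbf{Sorting terms.} I group terms by the number of $a_0$'s, using $N=\rho\Lambda$ so that $\rho^{-1}(N-\cN_+)\,\Lambda^{-1}=1-\cN_+/N$ for the mean-field coefficients.
\begin{itemize}
\item[(a)] Constant terms: $\frac{1}{2\rho}V_{0000}(N-\cN_+)(N-\cN_+-1)$ together with the phase contributions produce $\frac{N}{2\Lambda}\Lambda^2\pscal{u_0,V*|u_0|^2u_0}$-type terms. These must combine with the $\rmi\partial_tU_t$ constant and the choice \eqref{eq:Mu} to leave exactly $-\mu_t^\Lambda$. The remainders proportional to $\cN_+/N$ go into $R_{1,N}$ as $\frac12\mu_t\cN_+/N$ and the first line of \eqref{eq:Def-R1N}.
\item[(b)] Linear terms in $a_m$, $m\geq1$: the boson kinetic plus $V$ contributions cancel against the $\rmi\partial_tU_t$ term by the Hartree equation. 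The only leftover is the $(\cN_++1)/N$ correction, which forms the second line of $R_{1,N}$.
\item[(c)] Quadratic terms: $a_m^*a_n$ yields $\rmd\Gamma(h_t^\Lambda+K_1)$, using $\Lambda V_{m00n}$ and $\Lambda V_{m0n0}$ to identify $V*|\varphi|^2$ and $\tilde K_1$ after projecting with $Q_t$. The $a^*a^*$ terms yield $K_2$ with $N$ in place of $\sqrt{(N-\cN_+)(N-\cN_+-1)}$, the difference giving the third line of $R_{1,N}$.
\item[(d)] Cubic and quartic terms give the last line of $R_{1,N}$ and $R_{2,N}$.
\item[(e)] Tracer terms: $a_0^*a_0$ gives $\rho^{-1/2}(N-\cN_+)\Lambda^{-1}W*|\varphi|^2=\rho^{1/2}W*|\varphi|^2-\rho^{-1/2}W*|\varphi|^2\cN_+$, producing part of $R_{4,N}$. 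The mixed terms give $\rho^{-1/2}\Lambda^{-1/2}a^*(Q_tW_x\varphi_t)\sqrt{N-\cN_+}=a^*(Q_tW_x\varphi_t)\sqrt{N-\cN_+}/\sqrt N$, whose deviation from the leading part is $R_{3,N}$. The pure excitation part is $\rho^{-1/2}\rmd\Gamma(Q_tW_xQ_t)$.
\end{itemize}

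\textbf{Main obstacle.} The main difficulty is the bookkeeping of the constants and the $\cN_+/N$ corrections in (a) and (b). The $\mu_t^\Lambda$-shift in $h_t^\Lambda$ enters both the phase term and $\rmd\Gamma(h_t)$. Getting exactly $-\mu_t^\Lambda$ plus the stated $\cN_+/N$ corrections, rather than some other split, requires carefully tracking the normalization $\|u_0\|=1$ versus $\|\varphi_t\|^2=\Lambda$ and the symmetrization producing the ``$+\mathrm{h.c.}$'' halves. I would verify this by comparing coefficient by coefficient with the analogous computation in \cite{LaPi22} and \cite{PPS20}, adapted to $\Lambda\neq1$.
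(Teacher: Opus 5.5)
Your route is the paper's. Its proof delegates the bosonic part to \cite[Appendix~B and Lemma~6]{LNS15} with $V\to\Lambda V$ and $u_t\to\varphi_t^\Lambda/\Lambda^{1/2}$, and treats the tracer part exactly as in your item (e), using the rules of \cite{LNSS15}. Your sorting of the constants, which gives $-\mu_t^\Lambda-\mu_t^\Lambda\cN_+ +\mu_t^\Lambda(\cN_++\cN_+^2)/N$, checks out. So do the $(\cN_++1)/N$ linear leftover, the pairing terms and the cubic/quartic terms. Two points need fixing, however.

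First, your formula for $\rmi(\partial_tU_t^\Lambda)(U_t^\Lambda)^*$ misses a term. Let $\Phi(t)$ be a curve in the excitation space. Differentiating $a(u_0)\Phi=0$ gives $a(u_0)\partial_t\Phi=-a(Q_t^\Lambda\partial_tu_0)\Phi$, which is in general nonzero, so $\partial_t\Phi$ has a component along $u_0$. On $\mathcal F^{\leq N}(\{\varphi_t^\Lambda\}^\perp)$ one finds
\[
\rmi(\partial_tU_t^\Lambda)(U_t^\Lambda)^* = -\pscal{u_0,\rmi\partial_tu_0}(N-\cN_+) - \big(a^*(Q_t^\Lambda\rmi\partial_tu_0)\sqrt{N-\cN_+}+\mathrm{h.c.}\big) + a^*(u_0)\,a(Q_t^\Lambda\rmi\partial_tu_0)\,.
\]
The last term is needed, for the following reason. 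The $a_m^*a_n$ terms with $m,n\geq1$, together with $-\mu_t^\Lambda\cN_+$, only produce $\rmd\Gamma\big(Q_t^\Lambda(h_t^\Lambda+K_1^\Lambda(t))Q_t^\Lambda\big)$. But $H^{\rm Bog}_\Lambda(t)$ contains $\rmd\Gamma(h_t^\Lambda+K_1^\Lambda(t))$ on the full Fock space. On the excitation space this equals $\rmd\Gamma\big(Q_t^\Lambda(h_t^\Lambda+K_1^\Lambda(t))Q_t^\Lambda\big)+a^*(u_0)a(Q_t^\Lambda h_t^\Lambda u_0)$. By the Hartree equation $\rmi\partial_tu_0=h_t^\Lambda u_0$, the extra time-derivative term supplies exactly this piece. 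Without it, your claim in (c) holds only after compressing to the excitation space. That suffices for the Grönwall argument in \cref{thm:tildePsi-psiBF-time-estimate}, where both vectors lie there, but not for the operator identity as stated.

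Second, the arithmetic in (e) is off. The correct expansion is
\[
\rho^{-1/2}\Lambda^{-1}(N-\cN_+)\,W\ast|\varphi_t^\Lambda|^2=\rho^{1/2}W\ast|\varphi_t^\Lambda|^2-\rho^{-1/2}\Lambda^{-1}(W\ast|\varphi_t^\Lambda|^2)\cN_+ ,
\]
not $-\rho^{-1/2}(W\ast|\varphi_t^\Lambda|^2)\cN_+$ in the second term. The displayed $R_{4,N}$ in the statement carries the same slip. However, the paper's own computation in \cref{lem:Tech-UW_xU-splitting} gives $\frac{N-\cN_+}{\Lambda}W\ast|\varphi_t^\Lambda|^2$, and the estimate \eqref{eq:R4N-estimate-part-1} uses $W\ast|\varphi_t^\Lambda|^2/\Lambda$. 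So the correct remainder carries the factor $1/\Lambda$, and you should derive it rather than force agreement with the display.
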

\begin{proof}[Proof of \cref{prop:HBF-TildeHN}]
The error terms in \eqref{eq:Connection-H-ex-and-H-BF} can be extracted by splitting the purely bosonic part of the Hamiltonian from the tracer-dependent part, and then calculating the action of the excitation map. For the purely bosonic Hamiltonian this was done in \cite[Appendix B and Lemma~6]{LNS15}. The proof can be adapted to our setting if one replaces the potential there with $\Lambda\cdot V$ and $u_t$ with $\frac{\varphi_t}{\Lambda^{1/2}}$. Note that $\Vert \frac{\varphi_t}{\Lambda^{1/2}}\Vert_2=1$ is normalized. This gives
\begin{align}
	& U_t^\Lambda  \Big( - \sum\limits_{i=1}^{N} \frac{\Delta_{y_i}}{2} + \frac{1}{\rho}\sum\limits_{1\leq i<j\leq N} V(y_i- y_j) \Big) (U_t^\Lambda )^* + \rmi  \left( \partial_t{ U}_{t}^\Lambda \right) (U_t^\Lambda)^* \nonumber \\
	&= H^{\rm Bog}_\Lambda(t) -\mu_t^\Lambda + R_{1,N} + R_{2,N} \,,
\end{align}
where $H^{\rm Bog}_\Lambda(t)$ is the Bogoliubov Hamiltonian defined in \eqref{eq:HBog-Definition}.
\\ The calculation of the tracer-dependent part of the Hamiltonian can be found in \cref{lem:Tech-UW_xU-splitting} below.
\end{proof}

Now, we follow the method displayed in \cite[Lemma 3.2]{LaPi22} to transform the tracer particle contributions to the microscopic Hamiltonian $H_\rho$ into the excitation space.

\begin{lemma} \label{lem:Tech-UW_xU-splitting}
For all volumes $\Lambda\geq1$ let $\varphi_t^\Lambda$ be the solution of the Hartree equation \eqref{eq:Hartree}.
Then
 \begin{align*}
 	&-\frac{\Delta_x}{2m}+ \frac{1}{\rho^{1/2}}U_t^\Lambda  \sum_{n=1}^{N} (W_x)_{n} (U_t^\Lambda )^*  \\
 	 &=\rho^{1/2} W\ast |\varphi_t^\Lambda|^2(x)  -\frac{\Delta_x}{2m} 
	+ A\left(  Q_t^\Lambda W_x \varphi_t^\Lambda\oplus  J Q_t^\Lambda W_x \varphi_t^\Lambda\right) + R_{3,N} +R_{4,N}\,,
 \end{align*}
 where $R_{3,N}$ and $R_{4,N}$ are defined in \cref{prop:HBF-TildeHN}.
\end{lemma}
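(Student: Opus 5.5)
The plan is to compute $U_t^\Lambda\, \rmd\Gamma(W_x)\,(U_t^\Lambda)^*$ for fixed $x$ directly. Two inputs do the work: the standard conjugation rules of the excitation map from \cite{LNS15} (with $u_t$ replaced by the normalized condensate $u_0=\varphi_t^\Lambda/\Lambda^{1/2}$), and the fact that $-\Delta_x/2m$ acts only on the tracer variable and so commutes with $U_t^\Lambda$. On the $N$-particle space we have $\sum_{n}(W_x)_n=\rmd\Gamma(W_x)=\sum_{m,n\ge0}(W_x)_{mn}a_m^*a_n$ in the basis $\{u_m\}$, where $(W_x)_{mn}=\pscal{u_m,W_xu_n}$. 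We split this sum according to whether the indices equal $0$, i.e.\ we write $W_x=(P+Q)W_x(P+Q)$ with $P=P_t^\Lambda$ and $Q=Q_t^\Lambda$.

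The conjugation rules on $\cF^{\le N}(\{\varphi_t^\Lambda\}^\perp)$ (LNS15, Prop.\ 4) are:
\begin{align*}
U a_0^*a_0U^* &= N-\cN_+^\Lambda(t)\,,\\
U a_m^*a_0 U^* &= a_m^*\sqrt{N-\cN_+^\Lambda(t)} \quad (m\ge1)\,,\\
U a_0^*a_n U^* &= \sqrt{N-\cN_+^\Lambda(t)}\,a_n\,,\\
U a_m^*a_n U^* &= a_m^*a_n \quad (m,n\ge1)\,.
\end{align*}
Each block of $\rmd\Gamma(W_x)$ then transforms as follows.

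The $PP$ part gives $(W_x)_{00}(N-\cN_+^\Lambda)$. Since $(W_x)_{00}=\Lambda^{-1}\int W(x-y)|\varphi_t^\Lambda(y)|^2dy=\Lambda^{-1}W\ast|\varphi_t^\Lambda|^2(x)$ (using that $W$ is even), multiplying by $\rho^{-1/2}$ and using $N=\rho\Lambda$ yields
\[
\rho^{1/2}W\ast|\varphi_t^\Lambda|^2(x)-\rho^{-1/2}W\ast|\varphi_t^\Lambda|^2(x)\,\cN_+^\Lambda(t)\,.
\]
The first term is the mean-field term and the second is the first summand of $R_{4,N}$.

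The $QQ$ part gives $\rho^{-1/2}\rmd\Gamma(QW_xQ)$, which is the second summand of $R_{4,N}$.

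The $QP$ part gives $\rho^{-1/2}\sum_{m\ge1}(W_x)_{m0}a_m^*\sqrt{N-\cN_+^\Lambda}=\rho^{-1/2}a^*(QW_xu_0)\sqrt{N-\cN_+^\Lambda}$. Since $\rho^{-1/2}u_0=\varphi_t^\Lambda/\sqrt{N}$, this equals
\[
a^*(QW_x\varphi_t^\Lambda)\,\tfrac{1}{\sqrt N}\sqrt{N-\cN_+^\Lambda}
= a^*(QW_x\varphi_t^\Lambda)+a^*(QW_x\varphi_t^\Lambda)\,\tfrac{1}{\sqrt N}\bigl(\sqrt{N-\cN_+^\Lambda}-\sqrt N\bigr)\,.
\]
The $PQ$ part is the adjoint of this, and it produces $a(QW_x\varphi_t^\Lambda)$ plus the h.c.\ in $R_{3,N}$. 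Together the leading pieces form $A(QW_x\varphi_t^\Lambda\oplus JQW_x\varphi_t^\Lambda)$, and the remainders form exactly $R_{3,N}$.

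The only delicate point is bookkeeping rather than estimates. We must check that the $x$-dependence is fiberwise, so that the identity holds pointwise in $x$ as operators on $\cF^{\le N}$. We must also fix the ordering of $\sqrt{N-\cN_+^\Lambda}$ relative to $a^*$: in $R_{3,N}$ the square root sits to the right of $a^*$, which agrees with the rule $Ua_m^*a_0U^*=a_m^*\sqrt{N-\cN_+}$. Finally, the basis change is unitary, so the sums over $m$ are independent of the chosen basis, and $a^*(\cdot)$ is linear in its argument, consistent with $\sum_m (W_x)_{m0}a_m^*=a^*(QW_xu_0)$. With these checks the identity follows on the truncated excitation space.
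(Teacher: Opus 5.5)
Your route is the same as the paper's. You second-quantize $\sum_n (W_x)_n$ in a basis containing $u_0=\varphi_t^\Lambda/\Lambda^{1/2}$, split the sum by whether the indices are $0$, and apply the conjugation rules of the excitation map. Your $QP$, $PQ$ and $QQ$ blocks are correct: they give $A(Q_t^\Lambda W_x\varphi_t^\Lambda\oplus JQ_t^\Lambda W_x\varphi_t^\Lambda)$, exactly $R_{3,N}$, and $\rho^{-1/2}\rmd\Gamma(Q_t^\Lambda W_xQ_t^\Lambda)$.

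There is, however, an algebra slip in the $PP$ block. You have $(W_x)_{00}=\Lambda^{-1}(W\ast|\varphi_t^\Lambda|^2)(x)$; this is just the definition of the convolution, and evenness of $W$ is not needed. It then gives
\[
\rho^{-1/2}(W_x)_{00}\,(N-\cN_+^\Lambda)=\rho^{1/2}(W\ast|\varphi_t^\Lambda|^2)(x)-\frac{1}{\rho^{1/2}\Lambda}(W\ast|\varphi_t^\Lambda|^2)(x)\,\cN_+^\Lambda(t)\,,
\]
with $\rho^{-1/2}\Lambda^{-1}=(N\Lambda)^{-1/2}$. The factor $\Lambda^{-1}$ in the second term cannot be dropped.

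That factor is also missing from the displayed definition of $R_{4,N}$ in \cref{prop:HBF-TildeHN}, but there it is a typo. The paper's own proof computes $(W_x)_{00}(N-\cN_+^\Lambda)=\frac{N-\cN_+^\Lambda}{\Lambda}W\ast|\varphi_t^\Lambda|^2(x)$. The later bound on $R_{4,N}$ in \cref{lem:Tech-RN-estimate} likewise uses $W\ast|\varphi_t^\Lambda|^2/\Lambda$ with prefactor $2/\sqrt{N\Lambda}$. So the identity holds once the first summand of $R_{4,N}$ is read as $-\rho^{-1/2}\Lambda^{-1}(W\ast|\varphi_t^\Lambda|^2)(x)\,\cN_+^\Lambda(t)$. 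Rather than bending your algebra to match the display, you should state this correction explicitly.
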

\begin{proof}[Proof of \cref{lem:Tech-UW_xU-splitting}] 
 Our analysis is carried out for a fixed $x\in\BR^3$. We write $\sum_{n=1}^{N} (W_x)_{n}$ in its second quantized form to see that 
 \begin{align*}
	\sum_{n=1}^{N} (W_x)_{n} ={}&  \sum_{j,k=1}^\infty (W_x)_{jk} a_j^* a_k + \sum_{j=1}^\infty (W_x)_{0j} a_0^* a_j \\
	&+ \sum_{j=1}^\infty (W_x)_{j0} a_j^* a_0 + (W_x)_{00} a_0^* a_0 \,.
\end{align*}
We can transform creation and annihilation operators with the excitation map using \cite[Proposition~4.2]{LNSS15}. That is, $U_t^\Lambda = \bigoplus\limits_{k=0}^N Q_t^{\otimes k} \frac{\left(\Lambda^{-1/2}\varphi_t^\Lambda\right)^{N-k} }{\sqrt{(N-k)!}}$. With this we find
\begin{align*}
&U_t^\Lambda \sum_{j=1}^\infty (W_x)_{j0} a_j^*a_0 (U_t^\Lambda )^* =\sum_{j=1}^\infty (W_x)_{j0} a_j^* \sqrt{N-\cN_+} \\
&=  \sum_{j=1}^\infty \pscal{u_j,W_x  u_0} a^*(u_j) \sqrt{N-\cN_+^\Lambda} 
= a^*\left(Q_t^\Lambda W_x \frac{\varphi_t^\Lambda}{\Lambda^{1/2}} \right) \sqrt{N-\cN_+^\Lambda} 
\end{align*}
and
\begin{align*}
U_t^\Lambda (W_x)_{00} a_0^* a_0(U_t^\Lambda )^*&= (W_x)_{00} (N-\cN_+^\Lambda)=\frac{N-\cN_+^\Lambda}{\Lambda} W\ast |\varphi_t^\Lambda|^2(x)\,, \\
 U_t^\Lambda \sum_{j,k=1}^\infty (W_x)_{jk} a_j^* a_k (U_t^\Lambda )^* &= \sum_{j,k=1}^\infty (W_x)_{jk} a_j^* a_k = \rmd\Gamma(Q_t^\Lambda W_x Q_t^\Lambda)\,, 
\end{align*}
yielding the claim.
\end{proof}

\subsection{Error Estimation}  \label{sec:Bogoliubov-Approximation-subsection}

We aim to estimate $\Vert  \rme^{-\rmi \int^t_0 \mu_s^\Lambda ds} \psi^{\rm ex}_{\rho,t}- \xi_t^\Lambda \Vert$, where $\psi^{\rm ex}_{\rho,t} = U_t^\Lambda  \rme^{-\rmi H_\rho t}\psi_0^\Lambda $ and $\xi_t^\Lambda$ solves the equation $\rmi  \partial_t \xi_t^\Lambda = \left( H^{\rm BF}_{\Lambda}(t)+ \rho^{1/2} W\ast |\varphi_t^\Lambda|^2(x) \right) \xi_t^\Lambda$.
To this end, we adapt the methods of \cite{PPS20} to the second quantized setting, incorporating the interaction with the tracer particle. The key tool is a bound on the excitation number operator acting on the effective dynamics $\xi_t^\Lambda$, which controls the difference between the two dynamics.
The following lemma shows that the number of excitations grows at most as the volume.

\begin{lemma}[Excitation Number Estimate] \label{lem:Tech-particle-number-estimate} 
 For all volumes $\Lambda\geq1$, let $\varphi_0^\Lambda$ be the condensate satisfying \cref{con:Initial condition}.
 Then for all $n\in \mathbb{N}_0$ and times $ T\geq0$, there exists a constant $C>0$ such that for all densities $\rho\geq1$ and volumes $\Lambda\geq1$ we have
	\begin{align}
		\sup_{t\in [-T,T]}\pscal{\xi_t^\Lambda , I\otimes (\cN +1 )^n \xi_t^\Lambda }  \leq C \pscal{\xi_0^\Lambda , I\otimes (\Lambda+\cN +1 )^n \xi_0^\Lambda }  \,. \label{eq:Excitation-Number-Bound}
	\end{align}
\end{lemma}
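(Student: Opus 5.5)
We will prove \eqref{eq:Excitation-Number-Bound} by a Grönwall argument for the quantity $f_n(t):=\pscal{\xi_t^\Lambda, (\Lambda+\cN+1)^n\xi_t^\Lambda}$. Since $\cN+1\leq \Lambda+\cN+1$, it suffices to bound $f_n(t)\leq C f_n(0)$. The shift by $\Lambda$ is natural here: the pairing terms in $H^{\rm Bog}_\Lambda(t)$ create excitations at a rate proportional to $\Vert K_2^\Lambda(t)\Vert_{\rm HS}^2\sim\Lambda$.

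\textbf{Step 1: the commutator.} We compute $\frac{d}{dt}f_n(t)=\pscal{\xi_t^\Lambda,\rmi[H^{\rm BF}_\Lambda(t)+\rho^{1/2}W\ast|\varphi_t^\Lambda|^2(x),(\Lambda+\cN+1)^n]\xi_t^\Lambda}$. This is justified first on a regularized level, for instance with a cutoff $(\Lambda+\cN+1)^n(1+\delta\cN)^{-n}$ or on the domain from \cref{rem:Bog-BF-Ham}, and we remove the regularization at the end. Several pieces commute with $\cN$: $-\Delta_x/2m$, the multiplication operator $\rho^{1/2}W\ast|\varphi_t^\Lambda|^2(x)$, and $\rmd\Gamma(h_t^\Lambda+K_1^\Lambda(t))$. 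Only two kinds of terms survive:
\begin{itemize}
\item[(a)] the pairing terms $\frac12\sum(K_2J)_{mn}a_m^*a_n^*+\mathrm{h.c.}$;
\item[(b)] the linear tracer terms $a^*(Q_t^\Lambda W_x\varphi_t^\Lambda)+\mathrm{h.c.}$.
\end{itemize}
Using $a^*a^*$-pull-through, $(\cN+1)^n a^*_ma^*_n=a^*_ma^*_n(\cN+3)^n$, we write each commutator as a difference $(\Lambda+\cN+3)^n-(\Lambda+\cN+1)^n$ or $(\Lambda+\cN+2)^n-(\Lambda+\cN+1)^n$. We then split these symmetrically, placing half powers on both sides via the usual $\mathcal{N}$-shift trick.

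\textbf{Step 2: bounding the terms.} For (a) we use the standard estimate
\[
\Big|\pscal{\Psi,\textstyle\sum (K_2J)_{mn}a_m^*a_n^*\Phi}\Big|\leq C\Vert K_2^\Lambda(t)\Vert_{\rm HS}\Vert(\cN+1)^{1/2}\Psi\Vert\,\Vert(\cN+1)^{1/2}\Phi\Vert.
\]
The Hilbert–Schmidt norm satisfies $\Vert K_2^\Lambda(t)\Vert_{\rm HS}\leq \Vert \tilde K_2^\Lambda(t)\Vert_{\rm HS}\leq \Vert V\Vert_2\Vert\varphi_t^\Lambda\Vert_\infty\Vert\varphi_t^\Lambda\Vert_2\leq C\Lambda^{1/2}$. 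Here we need $\Vert\varphi_t^\Lambda\Vert_\infty\leq \Vert\widehat{\varphi_t^\Lambda}\Vert_1\leq C$ uniformly on $[-T,T]$. The difference of powers contributes $(\Lambda+\cN+1)^{n-1}$, so the contribution of (a) is bounded by
\[
C\Lambda^{1/2}\pscal{\xi,(\Lambda+\cN+1)^{n-1}(\cN+1)\xi}^{\cdot}\leq C\Lambda^{1/2}\pscal{\xi,(\Lambda+\cN+1)^{n-1/2}\xi}.
\]
Since $\Lambda^{1/2}\leq(\Lambda+\cN+1)^{1/2}$, this is at most $Cf_n(t)$. For (b) we estimate similarly with $\Vert Q_t^\Lambda W_x\varphi_t^\Lambda\Vert_2\leq\Vert W\Vert_2\Vert\varphi_t^\Lambda\Vert_\infty\leq C$ uniformly in $x$, which gives a bound $Cf_n(t)$. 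Grönwall's lemma then yields $f_n(t)\leq \rme^{C|t|}f_n(0)$.

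\textbf{Step 3: uniformity of the condensate bound (main obstacle).} The main obstacle is to propagate $\Vert\widehat{\varphi_t^\Lambda}\Vert_1\leq C$ for $|t|\leq T$ uniformly in $\Lambda$, starting from \cref{con:Initial condition}. We will use the Duhamel formula for the Hartree equation in Fourier space. The free flow preserves the $\ell^1$ norm of $\widehat{\varphi}$. The nonlinearity obeys
\[
\Vert\widehat{(V\ast|\varphi|^2)\varphi}\Vert_1\leq \Vert\widehat{V\ast|\varphi|^2}\Vert_1\Vert\widehat\varphi\Vert_1\leq C\Vert \widehat V\Vert_1\Vert|\varphi|^2\Vert_1\Vert\widehat\varphi\Vert_1,
\]
and $\Vert|\varphi|^2\Vert_1=\Lambda$ appears to be bad. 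Bounding the Fourier transform of $|\varphi|^2$ in sup norm instead, $\Vert\widehat{V}\widehat{|\varphi|^2}\Vert_1\leq C\Vert\widehat V\Vert_1\Vert|\varphi|^2\Vert_1$, still gives $\Lambda$. To get a $\Lambda$-independent bound we use instead $\Vert V\ast|\varphi|^2\Vert_{\infty}\leq\Vert V\Vert_1\Vert\varphi\Vert_\infty^2$. We then apply a Grönwall argument in the Wiener algebra with $\Vert\widehat{V\ast|\varphi|^2}\Vert_1\leq\Vert\widehat V\Vert_1\Vert\widehat{|\varphi|^2}\Vert_\infty$, combined with $\Vert\widehat{|\varphi|^2}\Vert_\infty\leq(2\pi)^{-3/2}\Vert\varphi\Vert_2^2$, which again fails. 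We therefore expect to rely on the appendix results on propagation of \cref{con:Initial condition} (\cref{sec:Control-of-the-condensate}) to supply this uniform bound. If a direct argument is needed, we would work with $\Vert\varphi\Vert_\infty$ together with the local $L^2$ structure of the condensate.
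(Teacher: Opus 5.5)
Your Step 2 has a genuine gap in the pairing term. The bound you use, $C\Vert K_2^\Lambda(t)\Vert_{\rm HS}\Vert(\cN+1)^{1/2}\Psi\Vert\,\Vert(\cN+1)^{1/2}\Phi\Vert$ with $\Vert K_2^\Lambda(t)\Vert_{\rm HS}\sim\Lambda^{1/2}$, is true. However, it puts the large factor $\Lambda^{1/2}$ in front of a full power of $\cN+1$, so after the symmetric splitting what you actually get is $C\Lambda^{1/2}\pscal{\xi,(\cN+1)(\Lambda+\cN+1)^{n-1}\xi}$. Your next inequality, $(\cN+1)(\Lambda+\cN+1)^{n-1}\leq(\Lambda+\cN+1)^{n-1/2}$, is false. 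It amounts to $(\cN+1)^2\leq\Lambda+\cN+1$, which fails once $\cN\gtrsim\Lambda^{1/2}$. In particular it fails in the relevant regime $\cN\sim\Lambda$, where the left side is of order $\Lambda^{n}$ and the right side of order $\Lambda^{n-1/2}$. Your estimate really yields only $\frac{d}{dt}f_n\leq C\Lambda^{1/2}f_n$, i.e.\ $f_n(t)\leq \rme^{C\Lambda^{1/2}|t|}f_n(0)$, which is not uniform in $\Lambda$. No rearrangement of powers fixes this, because the Hilbert--Schmidt norm alone does not see that $K_2^\Lambda(t)$ has operator norm $\cO(1)$.

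The missing ingredient is the refined pairing bound of \cref{lem:qqpp-term-estimate}. Using the CCR, $\Vert a^*(g)\Phi\Vert^2=\Vert a(g)\Phi\Vert^2+\Vert g\Vert_2^2\Vert\Phi\Vert^2$, one splits the pairing form into two parts:
\begin{itemize}
\item[(i)] a part bounded by $\Vert\varphi_t^\Lambda\Vert_\infty^2\Vert V\Vert_1\Vert\cN^{1/2}\Psi\Vert\,\Vert\cN^{1/2}\Phi\Vert$, whose coefficient is $\Lambda$-independent thanks to the locality of $V$ and $\Vert\varphi_t^\Lambda\Vert_\infty\leq C$;
\item[(ii)] a part bounded by $\Lambda^{1/2}\Vert\varphi_t^\Lambda\Vert_\infty\Vert V\Vert_2\Vert\cN^{1/2}\Psi\Vert\,\Vert\Phi\Vert$, where $\Lambda^{1/2}$ multiplies only one half-power of $\cN$.
\end{itemize}
Abstractly, for a pairing kernel $K$ this reads $\Vert K\Vert_{\rm op}\Vert\cN^{1/2}\Psi\Vert\Vert\cN^{1/2}\Phi\Vert+\Vert K\Vert_{\rm HS}\Vert\cN^{1/2}\Psi\Vert\Vert\Phi\Vert$. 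With this bound your shifted functional does close. The two parts contribute $C\pscal{\xi,(\cN+1)(\Lambda+\cN+1)^{n-1}\xi}$ and $C\Lambda^{1/2}\pscal{\xi,(\cN+1)^{1/2}(\Lambda+\cN+1)^{n-1}\xi}$, both $\leq Cf_n$, so Gr\"onwall gives the lemma in one step.

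The paper instead runs Gr\"onwall on $\pscal{(\cN+1)^n}$. It obtains $\partial_t\pscal{(\cN+1)^n}\leq C\pscal{(\cN+1)^n}+C\Lambda\pscal{(\cN+1)^{n-1}}$ and closes by induction on $n$. Your $\Lambda$-shift would spare that induction, but only after the refinement above.

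In Step 3 you do not need $\Vert\widehat{\varphi_t^\Lambda}\Vert_1$ at all. Only $\Vert\varphi_t^\Lambda\Vert_\infty\leq C$ enters, and that is exactly \eqref{eq:phi-L2infty-bound} in \cref{prop:phi-propagation}, so deferring to the appendix there is legitimate.
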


\begin{rem}\label{rem:Tech-particle-number-estimate}
	The bound \eqref{eq:Excitation-Number-Bound} indicates that the excitation number grows with the volume $\Lambda$. 
	Indeed, by applying Duhamel we see that on short timescales
	\[\pscal{U^{\rm Bog}_{\Lambda,t}\Omega, \cN_+U^{\rm Bog}_{\Lambda,t}\Omega}\sim \Vert K_2^\Lambda(t)\Vert_{\rm HS}^2\sim \Lambda \Vert V\Vert_2^2 |\eta(0)|^2\,, \]
	where we set $\varphi_0^\Lambda(y)= \eta (\Lambda^{-1/3}y)$ and used \eqref{eq:K2-Def-2} for large volumes.
\end{rem}
The proof of \cref{lem:Tech-particle-number-estimate} can be found in Appendix~\ref{sec:Proof-lem:Tech-particle-number-estimate}. The growth in $\Lambda$ is determined by the estimates in \cref{lem:qqpp-term-estimate}.
The final Bogoliubov approximation is given in the following theorem.

\begin{theorem}[Bogoliubov Approximation]\label{thm:tildePsi-psiBF-time-estimate}
Assume that for all volumes $\Lambda\geq1$ the condensate $\varphi_0^\Lambda\in H^\infty (\BR^3)$ varies on the scale $\Lambda^{1/3}$, namely that it satisfies \cref{con:Initial condition}. For all  volumes $\Lambda\geq1$ and densities $\rho\geq1$ let $\psi_0^\Lambda \in L^2(\BR^3,H^1_{\rm sym}(\BR^{3N}))$ and  $ \xi_0^\Lambda:=U_0^\Lambda\psi_0^\Lambda \in H^1(\BR^3,\cF(L^2))\cap L^2(\BR^3, Q( \rmd\Gamma(1-\Delta))) $, where $U_t^\Lambda $ is the excitation map from \eqref{eq:Excitation-map}.
\\ If $\xi_0^\Lambda\in Q(\cN^4)$ and there exists constants $ C,\kappa\geq0$ such that for all densities $\rho\geq1$ and volumes $\Lambda\geq1$ and $1\leq n\leq 4$
 \begin{align}
 	\pscal{\xi_0^\Lambda , I\otimes (\cN +1 )^n \xi_0^\Lambda }  \leq C \Lambda^{n(1+\kappa)} \,, \label{eq:tildePsi-psiBF-initial-data}
 \end{align}
 then for all times $T\geq0$, there exists a constant $C>0$ such that for all densities $\rho\geq1$, volumes $\Lambda\geq1$ and $t\in[-T,T]$
 \begin{align}
 	 \Vert \rme^{-\rmi \int^t_0 \mu_s^\Lambda ds} U_t^\Lambda  \rme^{-\rmi H_\rho t}\psi_0^\Lambda  - \xi_t^\Lambda \Vert \leq C \left( \frac{\Lambda^{3(1+\kappa)}}{\rho} \right) ^{1/2} \bigg( 1+ \left(\frac{\Lambda^{(1+\kappa)}}{\rho} \right) ^{1/2} \bigg)  \,. \label{eq:Bog-Approximation}
 \end{align}
\end{theorem}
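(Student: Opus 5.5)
We compare the two evolutions with a Duhamel/energy argument in the excitation space. Set $\psi^{\rm ex}_t := \rme^{-\rmi \int_0^t \mu_s^\Lambda ds} U_t^\Lambda \rme^{-\rmi H_\rho t}\psi_0^\Lambda$. By \eqref{eq:Ex-Ham} and \cref{prop:HBF-TildeHN}, $\psi^{\rm ex}_t$ solves $\rmi\partial_t \psi^{\rm ex}_t = (H_\Lambda^{\rm BF}(t) + \rho^{1/2}W\ast|\varphi_t^\Lambda|^2(x) + R_N(t))\psi^{\rm ex}_t$ on $\cF^{\le N}$, with the same initial datum as $\xi_t^\Lambda$. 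Since both generators are symmetric, we obtain
\begin{align*}
\frac{\rmd}{\rmd t}\Vert \psi^{\rm ex}_t - \xi_t^\Lambda\Vert^2 = 2\,\mathrm{Im}\,\pscal{\psi^{\rm ex}_t - \xi_t^\Lambda, R_N(t)\,\xi_t^\Lambda}\,,
\end{align*}
where we use that $R_N$ acts on the truncated space and $\xi_t^\Lambda$ can be split into its parts with $\cN\le N$ and $\cN>N$. The tail $\mathbbm{1}_{\cN>N}\xi_t^\Lambda$ is bounded by $N^{-2}\pscal{\xi_t^\Lambda,(\cN+1)^4\xi_t^\Lambda}$. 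This is a harmless error, because $\Lambda^{4(1+\kappa)}/N^2$ is small compared with the target rate. The proof therefore reduces to the estimate $\Vert R_N(t)\xi_t^\Lambda\Vert \le C(\Lambda^{3(1+\kappa)}/\rho)^{1/2}(1+(\Lambda^{1+\kappa}/\rho)^{1/2})$, followed by integration in time. The right-hand side is justified by rewriting it as $\Vert\psi^{\rm ex}_t-\xi_t^\Lambda\Vert\,\Vert R_N\xi\Vert$ and using Gronwall, or simply by taking the square root.

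To bound $\Vert R_N\xi_t^\Lambda\Vert$, we treat each $R_{i,N}$ using moments of $\cN$ along $\xi_t^\Lambda$. \cref{lem:Tech-particle-number-estimate} together with the assumption \eqref{eq:tildePsi-psiBF-initial-data} gives $\pscal{\xi_t^\Lambda,(\cN+1)^n\xi_t^\Lambda}\le C\Lambda^{n(1+\kappa)}$ for $n\le 4$. Each summand of $R_N$ is a product of a bounded one-body or two-body expression and a factor of the form $\cN/N$ or $(\sqrt{N-\cN}-\sqrt N)/\sqrt N\sim \cN/N$. The relevant one-body and two-body norms scale as follows. The quantities $\Vert V\ast|\varphi_t|^2\Vert_\infty$, $\Vert K_1\Vert_{\rm op}$ and $\mu_t$ are $O(1)$, by \cref{con:Initial condition} propagated along the Hartree flow (Appendix~\ref{sec:Control-of-the-condensate}). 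The vector $Q W_x\varphi_t$ has $L^2$-norm $O(1)$. The cubic term involves $\Lambda V_{0mnp}$ with $u_0=\varphi/\Lambda^{1/2}$, so its kernel carries a factor $\Lambda^{1/2}$. The quartic term carries $\Lambda/N=1/\rho$. Collecting these, the quadratic and $\rmd\Gamma$ terms in $R_{1,N}$, as well as $R_{3,N}$, contribute $\Vert(\cN+1)^{2}\xi\Vert/N\lesssim\Lambda^{2(1+\kappa)}/N=\Lambda^{1+2\kappa}/\rho$, which is dominated by the claimed rate. $R_{4,N}$ contributes $\rho^{-1/2}\Vert(\cN+1)\xi\Vert\lesssim (\Lambda^{2(1+\kappa)}/\rho)^{1/2}$. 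The cubic term contributes $\Lambda^{1/2}N^{-1/2}\Vert(\cN+1)^{3/2}\xi\Vert\lesssim(\Lambda^{3(1+\kappa)}/\rho)^{1/2}$, which is the leading rate. The quartic term $R_{2,N}$ contributes $\rho^{-1}\Vert(\cN+1)^2\xi\Vert\lesssim \Lambda^{2(1+\kappa)}/\rho$, which accounts for the second factor in \eqref{eq:Bog-Approximation}.

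The main obstacle is bounding the cubic and quartic terms in norm (not merely as forms) without a derivative loss and without using $\Vert V\Vert_\infty$ in a way that loses volume factors. For the cubic term I would write $\sum \Lambda V_{0mnp}a_m^*a_na_p = \int\!\!\int \Lambda^{1/2}\overline{\varphi_t(y)}V(y-y')\,a^*_{y'}a_y a_{y'}$ (with $Q$-projections) and estimate
\[
\Big\Vert\int a^*(Q V_{y'}\ldots)\ldots\Big\Vert\le \Lambda^{1/2}\Vert\varphi_t\Vert_\infty\Vert V\Vert_2\Vert(\cN+1)^{3/2}\xi\Vert\,.
\]
Here $\Vert\varphi_t\Vert_\infty\le\Vert\widehat{\varphi_t}\Vert_1\le C$ is propagated in time. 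The $Q$ projections are handled by expanding $Q=1-P$, which produces terms with $u_0$ that are bounded by $\Vert u_0\Vert_\infty\lesssim\Lambda^{-1/2}$. A similar pointwise-in-$(y,y')$ Cauchy–Schwarz with $\Vert V\Vert_\infty$ handles the quartic term. This relies on \cref{Assumption:Initial-datum-and-potential}, using $\Vert V\Vert_\infty\le C$ from the $|y|V$ and $\widehat V\in L^1$ bounds. It remains to check that all needed condensate bounds ($\Vert\varphi_t\Vert_\infty$, $\Vert\varphi_t\Vert_2=\Lambda^{1/2}$) hold uniformly for $|t|\le T$. This is where the Hartree propagation estimates from the appendix enter.
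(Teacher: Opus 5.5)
Your proposal is correct and follows essentially the paper's proof. Both run a Gr\"onwall argument for the difference of the two evolutions driven by $R_N$, bound $R_N$ on the effective dynamics by $\rho^{-1/2}\Vert(\cN+1)^{3/2}\xi_t^\Lambda\Vert+\rho^{-1}\Vert(\cN+1)^2\xi_t^\Lambda\Vert$, and control the moments by \cref{lem:Tech-particle-number-estimate}; the paper states these bounds as form bounds tested against the difference, which amounts to your norm bounds. Your explicit treatment of the $\cN>N$ tail is a point the paper passes over.

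One bookkeeping point: the pair terms $\sum_{m,n\geq1}\Lambda V_{mn00}a_m^*a_n^*$ in $R_{1,N}$ and their adjoints are not $\cN$-bounded uniformly in $\Lambda$, because their kernel has Hilbert--Schmidt norm of order $\Lambda^{1/2}$. So they are not of the form ``bounded two-body expression times $\cN/N$'' and need the additional $\Lambda^{1/2}$ term of \cref{lem:qqpp-term-estimate}, or a crude Hilbert--Schmidt bound giving at most $\Lambda^{3/2+2\kappa}/\rho$. Either way their contribution stays within the claimed rate.
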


\begin{rem} \label{rem:tildePsi-psiBF-time-estimate}
If we choose $\Lambda = \rho^{\alpha}$ with $\alpha(1+\kappa)\leq 1$ then the right-hand side of \eqref{eq:Bog-Approximation} can be simplified. In this case, we have that $\Lambda^{1+\kappa}/\rho\leq 1$ and thus
	\[ \sup_{t\in[-T,T]} \Vert \rme^{-\rmi \int^t_0 \mu_s^\Lambda ds} U_t^\Lambda  \rme^{-\rmi H_\rho t}\psi_0^\Lambda  - \xi_t^\Lambda \Vert \leq C \left( \frac{\Lambda^{3(1+\kappa)}}{\rho} \right) ^{1/2}\,, \]
	which tends to zero for $\rho\to \infty$ if $\alpha<1/3$ and $\kappa<1/(3\alpha)-1$.
\end{rem}

\begin{proof}[Proof of \cref{thm:tildePsi-psiBF-time-estimate}]

	To shorten our notation we write $\rme^{\rmi \int^t_0 \mu_s^\Lambda ds}\xi_t^\Lambda =:\Phi_t$ and $U_t^\Lambda  \rme^{-\rmi H_\rho t}\psi_0^\Lambda  =:\psi^{\rm ex}_t$. For the proof we use a Grönwall estimate. We start by calculating the time derivative
	\begin{align*}
		\frac{d}{dt}  \Vert  \psi^{\rm ex}_t - \Phi_t \Vert^2  
		 &=2\mathrm{Im} \pscal{\psi^{\rm ex}_t , \left(H_{\rho}^{\rm ex }  -    H^{\rm BF}_{\Lambda}(t) -\rho^{1/2} W\ast |\varphi_t^\Lambda|^2(x)  + \mu_t^\Lambda \right) \Phi_t } \\
		&=2\mathrm{Im} \pscal{\psi^{\rm ex}_t, R_N \Phi_t } \,.
	\end{align*}
	We now use the estimate on the remainder term proven in the appendix (see \cref{lem:Tech-RN-estimate}) to bound the right-hand side by the particle number operator acting on the effective dynamics $\Phi_t$
	\begin{align}
		2\mathrm{Im} \pscal{\psi^{\rm ex}_t, R_N \Phi_t } &\leq  C \rho^{-1/2} \Vert \psi^{\rm ex}_t -\Phi_t \Vert  \Vert (\cN+1)^{\frac{3}{2}} \Phi_t \Vert \nn \\
		&\quad +C \rho^{-1} \Vert \psi^{\rm ex}_t -\Phi_t \Vert \Vert (\cN + 1)^2 \Phi_t \Vert \,. \label{eq:Remainder-Term-Estimate}
	\end{align}
	It follows with our initial condition \eqref{eq:tildePsi-psiBF-initial-data} and the excitation number estimates \cref{lem:Tech-particle-number-estimate} that $\forall T\geq0$ $\exists C>0$ such that $\forall \Lambda\geq1$ and $-T\leq t \leq T$
	\begin{align*}
		\frac{d}{dt}  \Vert \psi^{\rm ex}_t - \Phi_t \Vert^2 &\leq \Vert \psi^{\rm ex}_t -\Phi_t \Vert \Big\{C \rho^{-1/2} \pscal{\Phi_0, (\Lambda + \cN + 1)^3 \Phi_0 }^{1/2} \\
		&\quad +C \rho^{-1}  \pscal{\Phi_0, (\Lambda + \cN + 1)^4 \Phi_0 }^{1/2} \Big\} \\
		\leq{}& \Vert \psi^{\rm ex}_t -\Phi_t \Vert \Big\{ C \rho^{-1/2} \Lambda^{3/2(1+\kappa)} 
		+C \rho^{-1}  \Lambda^{2(1+\kappa)} \Big\} \,.
	\end{align*}
	The claim now follows with Grönwall's lemma.
\end{proof}


\section{Control of the Tracer-Condensate Interaction}\label{sec:Control-tracer-condensate-mean-field-interaction}

We started with the Hamiltonian $H_\rho$ and established the validity of $ H^{\rm BF}_\Lambda + \sqrt{\rho} W\ast |\varphi_t^\Lambda|^2(x)$ for the corresponding excitation dynamics in \cref{sec:Bogoliubov-Approximation}. 
In this section, we show that the mean tracer-condensate interaction $\sqrt{\rho} W\ast |\varphi_t^\Lambda|^2(x)$  is approximately constant, allowing us to remove it from the effective dynamics. As a result, the system is well described by $H^{\rm BF}_\Lambda(t)$.

\subsection{Mean-Field Interaction} \label{sec:Result-Control-Tracer-Condensate}

\paragraph{Overview of the Method.}
Since the mean tracer-condensate interaction term $\sqrt{\rho} W\ast |\varphi_t^\Lambda|^2(x)$ is of $\cO(\sqrt{\rho})$, it could potentially dominate the dynamics of the tracer particle and lead to the tracer leaving the Bose gas during times of order one (for details see \cref{rem:Flat-condensate}). However, our focus is on the interaction between the tracer and the excitations, so we choose a setting where $\sqrt{\rho} W\ast |\varphi_t^\Lambda|^2(x)$  does not outweigh the other interactions. In fact, for the tracer to be able to interact with excitations it has to stay inside the gas cloud.
To achieve this, we show that 
\begin{align}
	\sqrt{\rho} W\ast |\varphi_t^\Lambda|^2(x)\sim\sqrt{\rho} W\ast 1\sim \mathrm{constant} \,, \label{eq:aux-eq-17}
\end{align}
by using that the condensate remains flat around the position of the tracer particle, specifically $|\varphi_t^\Lambda|^2\sim 1$ if $|\varphi_0^\Lambda|^2\sim1$. 
 Note that we only need to control $|\varphi_t^\Lambda|^2$ within the range of interaction potential $W$ around the tracer particle's position.
\\The idea above is realized in two separate steps:
\begin{description}
	\item[1. Flatness of the Condensate.] Show that the condensate remains flat around the origin, ensuring that \eqref{eq:aux-eq-17} holds rigorously in this region.
	\item[2. Tracer Localization.] Show that the tracer is localized around the origin, assuming the condensate remains flat in this region.
\end{description}
Since we do not control the precise position of the tracer, we assume it is initially localized around a fixed point -- taken to be the origin -- and show that the condensate remains approximately flat around this point. Then, we demonstrate that the tracer remains localized around this fixed position over time.

\begin{rem}[Necessity of a Flat Condensate] \label{rem:Flat-condensate}
	Consider a rescaled condensate $\varphi_0^\Lambda(y) = \eta(\Lambda^{-1/3}y)$, which is not necessarily flat near the origin. Then the term $\sqrt{\rho}W\ast |\varphi_t^\Lambda|^2$ contributes to the tracer dynamics. Its influence can be estimated the following way 
	\begin{align*}
	\partial_t^2 \pscal{\psi_t, x \psi_t} =(2m)^{-1} \pscal{\psi_t, \rmi[H,-\rmi\nabla_x]\psi_t}\,.
\end{align*}
The contribution of the additional $\sqrt{\rho}W\ast |\varphi_t^\Lambda|^2(x)$ term in the Hamiltonian $H$ can be estimated by $\pscal{\psi_t, (\nabla\sqrt{\rho}W\ast |\varphi_t^\Lambda|^2) \psi_t} \leq C \sqrt{\rho} \Lambda^{-1/3}$. And thus for $\cO(1)$ times it can lead to a position change of $\cO(\sqrt{\rho} \Lambda^{-1/3})$.
 To ensure the tracer remains inside the condensate of volume $\cO(\Lambda)$, we obtain the constraint $\rho \ll \Lambda^{4/3}$. This conflicts with the condition $\Lambda^{3(1+ \kappa)} \ll \rho$ of the Bogoliubov approximation (\cref{thm:tildePsi-psiBF-time-estimate}).   Therefore, without flatness to suppress tracer energy gain, it is not expected  that the tracer remains inside the condensate on $\cO(1)$ timescales.
\end{rem}

Following the strategy outlined above, we derive the following estimate, proving convergence of the intermediate Bogoliubov-Fröhlich dynamics $\xi_t^\Lambda$ (see \eqref{eq:Intermediate-Bog-Froehlich-Dynamics}) to the Bogoliubov-Fröhlich dynamics  $\psi_{\Lambda,t}^{\mathrm{BF}}$ (\cref{rem:Bog-BF-Ham}) with extracted mean-field interaction of the tracer and condensate.

\begin{theorem}[Control of the Mean Tracer-Condensate Interaction] \label{Part1:MainThm-new}
Let  $\alpha>0$, $\gamma>0$, $1/3>s>0$ and $n,k\in\mathbb{N}_+$ with
\begin{align}
	n\geq \frac{1}{(1/3-s)} \left( \frac{1}{4\alpha} + \gamma - \frac{s}{4} \right)  \,, \quad
	k\geq \frac{(2n-1/2)(1/3 -s)  }{ 1/3 + s }\,. \label{eq:tilde-n-and-k-large-enough-condition}
\end{align}
Assume that the potentials $V$ and $W$ satisfy \cref{Assumption:Initial-datum-and-potential}$_{n}$. 
\begin{itemize}
		\item[i)](Condensate conditions) Assume that for all volumes $\Lambda\geq1$ the condensate $\varphi_0^\Lambda\in H^\infty (\BR^3)$ varies on the scale $\Lambda^{1/3}$. That is, \cref{con:Initial condition} with additional regularity in the derivatives of $\varphi_0^\Lambda$, namely \cref{con:Initial condition derivative condensate}i)$_{k+2n-1}$ and \cref{con:Initial condition derivative condensate}$_{m}$ for $m=\max\{k,2\}+2$. Furthermore, we require that $\varphi_0^\Lambda$ is flat around the origin, namely \cref{con:Condensate-flat-around-origin}$_{2n,s}$, and assume that $\exists \eta\in H^1(\BR^3)$, $\delta,C>0$ such that $\forall \Lambda\geq1$
\begin{align}
	\Vert \varphi_0^\Lambda(\Lambda^{1/3}\,.\,) -\eta \Vert_2 \leq  C\Lambda^{-\delta} \,.
\end{align}
		\item[ii)](Tracer Localization) For all densities $\rho\geq1$ and volumes $\Lambda\geq1$ let $\xi_0^\Lambda=\psi_{\Lambda,0}^{\mathrm{BF}}\in I\otimes Q( \rmd\Gamma(1-\Delta))$. Assume that $\psi_{\Lambda,0}^{\mathrm{BF}}$ is a perturbation of a quasi-free state with localized tracer particle, namely that $\psi_{\Lambda,0}^{\mathrm{BF}}$ satisfies  \cref{con:Localized-Tracer}$_{2n,\epsilon}$ with $0<\epsilon\leq 1/4 \min\{ \delta,\, s,\, 3/2(1/3-s),\, 1/6\}$. 
	\end{itemize}
 Then for all times $ T\geq0$, there exists a constant $C>0$ such that for all volumes $\Lambda\geq1$ and densities $\rho=\Lambda^{1/\alpha}$ we have
	\[ \sup_{t\in [-T,T]}\Vert   \rme^{\rmi t \rho^{1/2}\int W} \xi_t^\Lambda -\psi_{\Lambda,t}^{\mathrm{BF}} \Vert \leq C \Lambda^{-\gamma} \,.\]
\end{theorem}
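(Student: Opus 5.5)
We compare the two dynamics by a Duhamel/Grönwall argument. Set $\tilde\xi_t:=\rme^{\rmi t\rho^{1/2}\int W}\xi_t^\Lambda$. Since the phase is a constant in $x$, $\tilde\xi_t$ solves
\[
\rmi\partial_t\tilde\xi_t=\big(H^{\rm BF}_\Lambda(t)+G_t(x)\big)\tilde\xi_t,\qquad G_t(x):=\rho^{1/2}\Big(W\ast|\varphi_t^\Lambda|^2(x)-\int W\Big),
\]
with the same initial datum as $\psi^{\rm BF}_{\Lambda,t}$. Differentiating $\Vert\tilde\xi_t-\psi^{\rm BF}_{\Lambda,t}\Vert^2$ leaves only the multiplication operator $G_t(x)$. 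This gives
\[
\Vert\tilde\xi_t-\psi^{\rm BF}_{\Lambda,t}\Vert\le\int_0^{|t|}\Vert G_s(x)\,\psi^{\rm BF}_{\Lambda,s}\Vert\,ds .
\]
Here we put $G$ on the BF side, so that \cref{cor:tracer-position-estimate-for-BF} applies directly. It therefore suffices to show $\sup_{|s|\le T}\Vert G_s\psi^{\rm BF}_{\Lambda,s}\Vert\le C\Lambda^{-\gamma}$.

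\textbf{Step 1 (flatness of the condensate is propagated).} We need a pointwise bound on $|\varphi_s^\Lambda|^2(y)-1$ near the origin. Taylor-expanding $|\varphi_s^\Lambda|^2$ around $0$ to order $2n-1$ and using the regularity from \cref{con:Initial condition derivative condensate}i)$_{k+2n-1}$ propagated by the Hartree flow (Appendix~\ref{sec:Control-of-the-condensate}), we aim for
\[
\big||\varphi_s^\Lambda(y)|^2-1\big|\le C\sum_{j=0}^{2n-1}|y|^j\Lambda^{-(2n-j)(1/3-s)}+C|y|^{2n}\Lambda^{-2n/3}.
\]
The main input is that \cref{con:Condensate-flat-around-origin}$_{2n,s}$ survives the time evolution, i.e.\ $|\partial^\beta(|\varphi_s^\Lambda|^2-1)(0)|\le C\Lambda^{-(2n-|\beta|)(1/3-s)}$ uniformly for $|s|\le T$. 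I would prove this by differentiating the Hartree equation at $y=0$. The time derivative of $\partial^\beta\varphi_s(0)$ involves $\partial^{\beta+2e_i}\varphi_s(0)$ and $\partial^\beta(V\ast|\varphi_s|^2)(0)$. The first is controlled by flatness at order $|\beta|+2$. Using the moment bounds on $V$, the second reduces to local derivatives plus a tail of order $\Lambda^{-(2n)/3}$. The parameter $k$ and the closeness to the profile $\eta$ are needed to close this hierarchy; it is truncated at order $2n$, and the order $k$ derivative bounds handle the remainder. Convolving with $W$, which decays faster than any polynomial by \eqref{eq:Potential-estimates-2}, then gives $|G_s(x)|\le C\rho^{1/2}\big(\Lambda^{-2n(1/3-s)}(1+|x|)^{2n}+\ldots\big)$. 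More precisely, this is a polynomial in $|x|$ of degree $2n$ whose coefficients are at most $\rho^{1/2}\Lambda^{-2n(1/3-s)}\,\Lambda^{j(1/3-s)}$ for the $|x|^j$ term.

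\textbf{Step 2 (tracer localization).} We insert this bound and use \cref{cor:tracer-position-estimate-for-BF} with $M=2n$, which gives $\langle\psi^{\rm BF}_{\Lambda,s},x^{4n}\psi^{\rm BF}_{\Lambda,s}\rangle\le C$. This controls $\Vert(1+|x|)^{2n}\psi^{\rm BF}_{\Lambda,s}\Vert$ uniformly. Hence
\[
\Vert G_s\psi^{\rm BF}_{\Lambda,s}\Vert\le C\rho^{1/2}\Lambda^{-2n(1/3-s)}\Lambda^{\text{(small loss)}}=C\Lambda^{1/(2\alpha)-2n(1/3-s)+\ldots}.
\]
The lower bound on $n$ in \eqref{eq:tilde-n-and-k-large-enough-condition} is exactly what makes this exponent at most $-\gamma$. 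Mixed terms with lower powers of $|x|$ are dealt with by a Hölder-type splitting, either $|x|\le\Lambda^{1/3-s}$ or the complementary region, where the moments of order $4n$ give additional decay. The hypotheses on $\epsilon$ are used only through \cref{cor:tracer-position-estimate-for-BF}.

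\textbf{Main obstacle.} I expect the hardest step to be Step 1: propagating the flatness condition through the nonlinear Hartree flow with the precise rate $\Lambda^{-(2n-|\beta|)(1/3-s)}$. The difficulty is that the hierarchy of derivative equations at the origin couples order $|\beta|$ to order $|\beta|+2$, and the nonlocal term $V\ast|\varphi|^2$ mixes in values away from the origin. I would first try a Grönwall argument on the weighted quantity $\sum_{|\beta|\le 2n-1}\Lambda^{(2n-|\beta|)(1/3-s)}|\partial^\beta(\varphi_s-\varphi_s(0)\text{-phase})(0)|$. In this approach the top-order terms are bounded crudely by \cref{con:Initial condition derivative condensate} with $\Lambda^{-|\beta|/3}$. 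The resulting loss $\Lambda^{s}$ per order is what the choice of $k$ absorbs.
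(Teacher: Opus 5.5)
Your outer architecture matches the paper's. You use Duhamel with the multiplication operator $G_t$ acting on $\psi^{\mathrm{BF}}_{\Lambda,t}$, and your linear bound $\Vert\tilde\xi_t-\psi^{\mathrm{BF}}_{\Lambda,t}\Vert\le\int_0^{|t|}\Vert G_s\psi^{\mathrm{BF}}_{\Lambda,s}\Vert\,ds$ is even slightly sharper than the paper's estimate of the squared norm. You then combine a weighted sup bound on $G_s$ with the $x^{4n}$ moment from \cref{cor:tracer-position-estimate-for-BF} at $M=2n$; the paper's weight is $1/\Theta_\Lambda(x)=1+(\Lambda^{-s}|x|)^{2n}$.

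\textbf{The gap is in Step~1: its quantitative core is mis-scaled.} The flatness condition bounds $|\partial^\beta(\varphi_0^\Lambda-1)(0)|$ by $\Vert\partial^\beta\varphi_0^\Lambda\Vert_\infty\Lambda^{-(2n-|\beta|)(1/3-s)}\lesssim\Lambda^{-|\beta|/3-(2n-|\beta|)(1/3-s)}$, and you dropped the factor $\Lambda^{-|\beta|/3}$.
\begin{itemize}
\item With your rates, the coefficient of $|x|^j$ in $G_s$ is $\rho^{1/2}\Lambda^{-(2n-j)(1/3-s)}$, as you state. At $|x|\sim1$ the term $j=2n-1$ is already of size $\Lambda^{\frac{1}{2\alpha}-\frac13+s}$, which is not small (it diverges for $\alpha<1/3$).
\item No splitting in $|x|$ repairs this, since the tracer sits at $|x|=O(1)$. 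So the claimed total $\rho^{1/2}\Lambda^{-2n(1/3-s)}\Lambda^{\text{small loss}}$ does not follow.
\item With your weights the hierarchy does not close either: the Laplacian couples order $|\beta|$ to order $|\beta|+2$ with coefficient $\Lambda^{2(1/3-s)}$.
\end{itemize}

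\textbf{How to fix the scaling.} Track $\Lambda^{|\beta|/3+(2n-|\beta|)(1/3-s)}|\partial^\beta\varphi_t^\Lambda(0)|$ for $1\le|\beta|\le2n-1$, together with $\Lambda^{2n(1/3-s)}\big||\varphi_t^\Lambda(0)|^2-1\big|$.
\begin{itemize}
\item The Laplacian coupling then carries a gain $\Lambda^{-2s}$.
\item The term $(V\ast|\varphi|^2-\mu)(0)\,\partial^\beta\varphi(0)$ is a pure phase and drops out of $|\partial^\beta\varphi(0)|$.
\item The convolution terms couple with $O(1)$ coefficients. To see this, Taylor-expand $\partial^\delta|\varphi|^2$ inside $V\ast$ using the moments of $V$; the remainder is $\lesssim\Lambda^{-2n/3}$.
\end{itemize}
A plain Grönwall/bootstrap then closes, with no role for $k$ or $\eta$. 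It gives $\big||\varphi_s^\Lambda(y)|^2-1\big|\le C\Lambda^{-2n(1/3-s)}(1+\Lambda^{-s}|y|)^{2n}\lesssim\Lambda^{-2n(1/3-s)}/\Theta_\Lambda(y)$, after which Step~2 is immediate without any splitting.

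\textbf{A second gap: regularity.} Even corrected, this pointwise route needs $\sup_{|t|\le T}\Vert\partial^\beta\varphi_t^\Lambda\Vert_\infty\le C\Lambda^{-|\beta|/3}$ for all $|\beta|\le2n+1$. These are needed for the Laplacian at the top of the hierarchy and for the order-$2n$ Taylor remainders. The theorem's hypotheses do not give this:
\begin{itemize}
\item Appendix~\ref{sec:Control-of-the-condensate} controls $\partial^\beta(\varphi_t^\Lambda-\tilde\varphi_t^\Lambda)$ only in $L^2$.
\item Passing to $L^\infty$ by Sobolev costs two derivatives, so you would need the $L^2$-derivative condition up to order roughly $2n+5$.
\item Only order $\max\{k,2\}+2$ is assumed, and $k$ may be much smaller than $2n$.
\end{itemize}

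\textbf{How the paper avoids pointwise values of $\varphi_t^\Lambda$.} It writes $|\varphi_t^\Lambda|^2-1=(|\varphi_0^\Lambda|^2-1)+(|\varphi_t^\Lambda|^2-|\tilde\varphi_t^\Lambda|^2)$, where $\tilde\varphi_t^\Lambda=\rme^{-\rmi(tV\ast|\varphi_0^\Lambda|^2-\int_0^t\mu_s^\Lambda ds)}\varphi_0^\Lambda$.
\begin{itemize}
\item Since $|\tilde\varphi_t^\Lambda|=|\varphi_0^\Lambda|$, the first term is flat by hypothesis and nothing needs to be propagated.
\item The second term is controlled in $\Vert\Theta_\Lambda\,\cdot\,\Vert_{1\wedge2}$ by a Grönwall argument for $\Vert\Theta_\Lambda\partial^\beta(\varphi_t^\Lambda-\tilde\varphi_t^\Lambda)\Vert_2$. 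The source $\Theta_\Lambda\partial^\beta\Delta\tilde\varphi_t^\Lambda$ is small by initial flatness. The term $(\nabla\Theta_\Lambda)\nabla\partial^\beta$ gains $\Lambda^{-s}$ per order and is closed at the top order by the global $L^2$ bound; this is where $k$ and the rate $\Lambda^{-1/6-k/3-ks}$ come from.
\item The sup norm appears only after convolving with $W$ (\cref{Lemma:CalculationTheta}\,b)).
\end{itemize}
Granting the extra regularity, your corrected route would remove the $k$-dependent term; the paper's scheme is what lets it work under the stated hypotheses.

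\textbf{Minor point.} Your final bound is $C\Lambda^{\frac{1}{2\alpha}-2n(\frac13-s)}$, and the hypothesis on $n$ gives only $\le C\Lambda^{-2\gamma+s/2}$. So it is not ``exactly'' the needed condition: it yields $\Lambda^{-\gamma}$ only when $\gamma\ge s/2$.
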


\begin{proof}[Proof of \cref{Part1:MainThm-new}]
Applying Duhamel's formula we obtain the estimate
	\begin{align}
		&\Vert   \rme^{\rmi t \rho^{1/2}\int W} \xi_t^\Lambda -\psi_{\Lambda,t}^{\mathrm{BF}} \Vert^2 \nn \\
		&= \int_0^t 2 \mathrm{Im} \pscal{   \rme^{\rmi \tau \rho^{1/2}\int W} \xi_\tau^\Lambda  , \sqrt{\rho} W\ast (|\varphi_\tau|^2-1)(x) \psi_{\Lambda,\tau}^{\mathrm{BF}}   } d\tau \nn \\
		&\leq 2\int\limits_0^t \Vert \Theta_\Lambda \sqrt{\rho} W\ast (|\varphi_t|^2-1) \Vert_{\infty}  \Vert \xi_0^\Lambda \Vert \left\Vert \frac{ \psi_{\Lambda,\tau}^{\mathrm{BF}} }{\Theta_\Lambda(x)}  \right\Vert d\tau \,, \label{eq:psi-psi-s-estimate-step-2} 
	\end{align}
	where we have inserted the  function $\Theta_\Lambda (x)= \frac{1}{1+ (\Lambda^{-s} |x|)^{2n}}$, $0<s<1/3$, which localizes on a scale smaller than the $\cO(\Lambda^{1/3})$ scale of the condensate (see Appendix~\ref{sec:LocalizationFunction}).

	The estimation of the localized mean-field interaction $\Vert\tl \sqrt{\rho} W\ast (|\varphi_t^\Lambda|^2-1) \Vert_\infty$ corresponds to step 1 (flatness of the condensate) and can be found in \cref{cor:ThetaDPhi-TPhi}. The estimation of the effective dynamics weighted with moments of the tracer position operator $\left\Vert \frac{ \psi_{\Lambda,\tau}^{\mathrm{BF}} }{\Theta_\Lambda(x)}  \right\Vert$ corresponds to step 2 (tracer localization) and can be found in \cref{cor:tracer-position-estimate-for-BF}. We conclude
	\begin{align}
	\Vert   \rme^{\rmi t \rho^{1/2}\int W} \xi_t^\Lambda -\psi_{\Lambda,t}^{\mathrm{BF}} \Vert^2 \leq C \Big\{ \Lambda^{\frac{1}{2\alpha}-  \left(\frac{1}{6} +\frac{k}{3}+ ks \right)} + \Lambda^{\frac{1}{2\alpha} -\frac{s}{2}-2n(1/3-s)}   \Big\} \,, \label{eq:psi-psi-s-estimate-step-3} 
	\end{align}
	where the exponent on right-hand side of \eqref{eq:psi-psi-s-estimate-step-3} is smaller than $-\gamma$ due to the condition \eqref{eq:tilde-n-and-k-large-enough-condition} on $n$ and $k$.
\end{proof}

To control $ \Vert\tl \sqrt{\rho} W* (|\varphi_t^\Lambda|^2-1)  \Vert_\infty$ in the proof of \cref{Part1:MainThm-new}, we need  precise control over the localized condensate $\tl \varphi_t^\Lambda$ and that its flatness is preserved in time. The desired estimates on the condensate are proven in Appendix~\ref{sec:Localized-condensate}, leading to the following bound.

\begin{prop}[Local Control of the Mean Tracer-Condensate Interaction] \label{cor:ThetaDPhi-TPhi}
Let $\alpha>0,1/3>s>0$, $n,k\in \mathbb{N}_+$, and $\tl(x)=\frac{1}{1+(\Lambda^{-s}x)^{2n}}$ be the localization function. For all volumes $\Lambda\geq1$, let $\varphi_t^\Lambda$ be the solution of the Hartree equation \eqref{eq:Hartree}. Assume that its initial data $\varphi_0^\Lambda$ varies on the scale $\Lambda^{1/3}$. That is, \cref{con:Initial condition} with additional regularity in the derivatives of $\varphi_0^\Lambda$, given by \cref{con:Initial condition derivative condensate}i)$_{k+2n-1}$ and \cref{con:Initial condition derivative condensate}ii)$_{k+2}$. Furthermore, we require that $\varphi_0^\Lambda$ is flat around the origin, namely \cref{con:Condensate-flat-around-origin}$_{2n,s}$.
\\ Then for all times $T\geq0$, there exists a constant $C>0$ such that for all volumes $\Lambda\geq1$ and densities $\rho = \Lambda^{1/\alpha}$ we have 
	\begin{align}
		\sup_{t\in [-T,T]}\Vert \Theta_\Lambda \sqrt{\rho} W\ast (|\varphi_t^\Lambda|^2-1)\Vert_{\infty} \leq C \Big\{ \Lambda^{\frac{1}{2\alpha}-  \left(\frac{1}{6} +\frac{k}{3}+ ks \right)} + \Lambda^{\frac{1}{2\alpha} -\frac{s}{2}-2n(1/3-s)}   \Big\}  \,. \label{eq:EstimateThetaWLambdaLInfty} 
	\end{align}
\end{prop}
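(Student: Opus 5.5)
We bound $\Theta_\Lambda(x)\sqrt{\rho}\,|W\ast(|\varphi_t^\Lambda|^2-1)(x)|$ pointwise. Since $W$ decays faster than any power, we split the convolution integral $\int W(x-y)(|\varphi_t^\Lambda|^2(y)-1)\,dy$ into a near region $|x-y|\le \Lambda^{s}$ and a far region. On the far region, $|W(x-y)|\le C_k|x-y|^{-k'}$ and $\Vert\varphi_t^\Lambda\Vert_\infty\le C$ (propagated from \cref{con:Initial condition}) give an arbitrarily negative power of $\Lambda$. In the near region, $\Theta_\Lambda(x)\le C\,\Theta_\Lambda(y)$ up to constants, so it suffices to bound $\sqrt{\rho}\,\Vert\Theta_\Lambda(|\varphi_t^\Lambda|^2-1)\Vert_\infty$, that is, the localized deviation of the condensate from being flat.

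The core step shows that flatness around the origin is propagated by the Hartree flow. We write $\varphi_t^\Lambda=\rme^{-\rmi t\nu}(1+\chi_t)$, with $\nu$ chosen so that the constant function $1$ is a stationary solution of \eqref{eq:Hartree}; up to the phase convention in $\mu_t^\Lambda$, $\nu=\int V-\mu$. The deviation $\chi_t$ then solves a nonlinear Schrödinger-type equation with source terms quadratic and cubic in $\chi$ and linear terms $V\ast 2\mathrm{Re}\chi$.

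We then estimate $\Theta_\Lambda\chi_t$ and its derivatives. Commuting $\Theta_\Lambda$ with $-\Delta/2$ produces terms $\nabla\Theta_\Lambda\cdot\nabla\chi$ and $\Delta\Theta_\Lambda\,\chi$, and each derivative of $\Theta_\Lambda$ gains a factor $\Lambda^{-s}$. Each derivative of $\chi_t$ gains $\Lambda^{-1/3}$, by \cref{con:Initial condition derivative condensate} propagated in time through a standard Grönwall argument on $\Vert\partial^\beta\varphi_t^\Lambda\Vert_\infty$ via Duhamel and the $\widehat{\,\cdot\,}$ in $L^1$ bounds.

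Initially, \cref{con:Condensate-flat-around-origin}$_{2n,s}$ combined with a Taylor expansion to order $2n$ around $0$ gives
\[
\Theta_\Lambda(y)|\partial^\beta\chi_0(y)|\lesssim \Lambda^{-(2n-|\beta|)(1/3-s)},
\]
since on the support scale $|y|\lesssim\Lambda^{s}$ every further order contributes $\Lambda^{s-1/3}$. Outside that scale, $\Theta_\Lambda\sim(\Lambda^{-s}|y|)^{-2n}$ compensates the Taylor remainder. Multiplying by $\sqrt{\rho}=\Lambda^{1/(2\alpha)}$ then produces the second term of \eqref{eq:EstimateThetaWLambdaLInfty}; the $-s/2$ presumably arises from interpolating an $L^2$ bound, which uses \cref{con:Initial condition derivative condensate}ii), to $L^\infty$ on the localized region.

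The time evolution is handled by a Duhamel expansion of $\Theta_\Lambda\chi_t$ iterated $k$ times. The free Schrödinger group does not preserve localization exactly, so at each iteration we commute $\Theta_\Lambda$ through $\rme^{\rmi t\Delta/2}$, gaining $\Lambda^{-1/3}\cdot\Lambda^{-s}$ per step. Stopping after $k$ steps and bounding the remainder crudely by the global derivative bounds yields the first term, $\Lambda^{1/(2\alpha)-(1/6+k/3+ks)}$, where the $1/6$ comes from Sobolev or $L^2$-to-$L^\infty$ conversion using $\Vert\Delta\varphi\Vert_2$.

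We expect the main obstacle to be this propagation step: controlling $\Theta_\Lambda\chi_t$ uniformly on $[-T,T]$ despite the nonlocal linear term $V\ast\mathrm{Re}\chi$ and dispersion transporting non-flat regions from distance $\Lambda^{1/3}$ toward the origin. We would first attempt a Grönwall estimate on the weighted quantities $\sum_{|\beta|\le 2n}\Lambda^{(2n-|\beta|)(1/3-s)}\Vert\Theta_\Lambda\partial^\beta\chi_t\Vert_\infty$, falling back to the $k$-fold commutator expansion for the contributions this estimate cannot close.
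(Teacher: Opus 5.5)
The step that carries the whole proposition is not established: that local flatness survives the Hartree flow with the stated rates. Neither mechanism you sketch delivers it.

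In $L^\infty$ there is no energy cancellation, so $-\tfrac12\Delta\chi_t$ feeds level $\beta$ from level $\beta+2$. With your weights $\Lambda^{(2n-|\beta|)(1/3-s)}$ this coupling carries the factor $\Lambda^{2(1/3-s)}\to\infty$, and Grönwall returns $\rme^{C\Lambda^{2(1/3-s)}T}$. Your initial bound also discards the factor $\Lambda^{-|\beta|/3}$ that \cref{Lemma:CalculationTheta}a) actually provides, and that factor is exactly what is needed.

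The $k$-fold Duhamel/commutator expansion must be run in $L^2$, since $\rme^{\rmi t\Delta/2}$ is unbounded on $L^\infty$. But $\chi_t$ is neither small nor square integrable (it tends to $-1$ at infinity). The crude top-level input $\Vert\partial^K\varphi_t^\Lambda\Vert_2\sim\Lambda^{1/2-K/3}$ is larger by $\Lambda^{2/3}$ than what the first term of \eqref{eq:EstimateThetaWLambdaLInfty} requires. The global $L^\infty$ derivative bounds you assume are not free either: a Wiener-algebra Grönwall is cubic for the Hartree nonlinearity and only closes for short times.

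Your accounts of the $1/6$ and the $-s/2$ are guesses, and the second cannot be right. Your Taylor bound $\sqrt\rho\,\Lambda^{-2n(1/3-s)}$ for the $t=0$ piece is exactly what the paper obtains in \eqref{eq:Theta-varhpi0-sup-estimate-1}, and it is sharp. Take $\varphi_0^\Lambda=\eta(\Lambda^{-1/3}\,\cdot\,)$ with $\eta$ flat of order $2n$ at $0$, $\int W\neq0$, and $x=\Lambda^{1/3}x_0$ with $|\eta(x_0)|\neq1$. Then $\Theta_\Lambda(x)\approx|x_0|^{-2n}\Lambda^{-2n(1/3-s)}$, while $W\ast(|\varphi_0^\Lambda|^2-1)(x)\to(\int W)(|\eta(x_0)|^2-1)$. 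So the displayed bound fails at $t=0$ whenever $\frac16+k(\frac13+s)>2n(\frac13-s)$. The correct second term is $\Lambda^{\frac{1}{2\alpha}-2n(\frac13-s)}$, which is what the paper's argument actually yields; its final combination step overlooks this, which is harmless downstream since $n$ is taken large. There is no extra $\Lambda^{-s/2}$ to be found at $t=0$.

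What you are missing is the comparison with the auxiliary function $\tphi_t^\Lambda=\rme^{-\rmi(tV\ast|\varphi_0^\Lambda|^2-\int_0^t\mu_s^\Lambda ds)}\varphi_0^\Lambda$. It solves the Hartree equation without the kinetic term and satisfies $|\tphi_t^\Lambda|=|\varphi_0^\Lambda|$. One splits $|\varphi_t^\Lambda|^2-1=(|\varphi_0^\Lambda|^2-1)+(|\varphi_t^\Lambda|^2-|\varphi_0^\Lambda|^2)$. The first piece is handled by flatness at $t=0$, as you do. The second is written as $|\varphi_t^\Lambda-\tphi_t^\Lambda|^2+2\mathrm{Re}\,(\tphi_t^\Lambda)^*(\varphi_t^\Lambda-\tphi_t^\Lambda)$.

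Unlike your $\chi_t$, this difference is globally small with all derivatives: $\Vert\partial^\beta(\varphi_t^\Lambda-\tphi_t^\Lambda)\Vert_2\le C\Lambda^{-1/6-|\beta|/3}$ by \cref{lem:FirstApproxThetaDPhi-TPhi}. The $1/6$ is the $L^2$ size $\Lambda^{1/2-2/3}$ of the source $\Delta\tphi_t^\Lambda$. This supplies the top level of a localized $L^2$ Grönwall for $\Theta_\Lambda\partial^\beta(\varphi_t^\Lambda-\tphi_t^\Lambda)$, run downward in $|\beta|$ with a gain $\Lambda^{-s}$ per step from $\nabla\Theta_\Lambda$ (\cref{prop:ThetaDPhi-TPhi}).

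Flatness enters there only through the source $\Theta_\Lambda\partial^\beta\Delta\tphi_t^\Lambda$. Its $L^2$ norm is $\Lambda^{-2(n-1)(1/3-s)-(|\beta|+2)/3+3s/2}=\Lambda^{-s/2-|\beta|/3-2n(1/3-s)}$, and this, not an interpolation, is where $-s/2$ comes from. No pointwise control of the localized deviation is needed: \cref{Lemma:CalculationTheta}b) plus Young's inequality with $W\in L^2$ turns the $1\wedge2$ bound into the sup bound.

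Your $L^\infty$ route can in fact be repaired. Use weights $\Lambda^{2n(1/3-s)+|\beta|s}$ and integrate directly in time; the Laplacian coupling becomes $\Lambda^{-2s}$ and the hierarchy closes. This gives $\Vert\Theta_\Lambda(|\varphi_t^\Lambda|^2-1)\Vert_\infty\le C\Lambda^{-2n(1/3-s)}$ directly. The price is uniform $L^\infty$ bounds on $\partial^\beta\varphi_t^\Lambda$ up to $|\beta|=2n+2$. These need more regularity than assumed unless $k$ is large compared with $n$, and they themselves come from the $\tphi_t^\Lambda$ comparison.
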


\begin{rem} By choosing $k$ and $n$ large enough, we can obtain an arbitrarily good convergence rate in \eqref{eq:EstimateThetaWLambdaLInfty}. 
\end{rem}

\begin{proof}[Proof of \cref{cor:ThetaDPhi-TPhi}] 
Let $T\geq0$ and $-T\leq t \leq T$. 
We start the estimate by splitting $\Vert \Theta_\Lambda \sqrt{\rho} W\ast (|\varphi_t^\Lambda|^2-1) \Vert_{\infty}$ into two parts
\begin{align}
	\Vert \Theta_\Lambda \sqrt{\rho} W\ast (|\varphi_t^\Lambda|^2-1) \Vert_{\infty} 
	&\leq \Vert \Theta_\Lambda \sqrt{\rho}  W\ast (\vert \varphi_0^\Lambda\vert^2 - 1)  \Vert_{\infty} \label{eq:ThetaWLambdaSplitting2Parts1} \\
	&\quad + \Vert \Theta_\Lambda \sqrt{\rho}  W\ast (\vert \varphi_t^\Lambda\vert^2 - \vert \varphi_0^\Lambda\vert^2)  \Vert_{\infty} \,. \label{eq:ThetaWLambdaSplitting2Parts}
\end{align}
 Since we assume that initially the condensate is flat around the origin (see \cref{con:Condensate-flat-around-origin}$_{2n,s}$), the term in \eqref{eq:ThetaWLambdaSplitting2Parts1} can be estimated straightforwardly. 
We use Taylor expansion of $\tl(x-y +y)$ at $x-y$ to move $\tl$ inside the convolution (see \cref{Lemma:CalculationTheta}b) and then estimate the potential $W$ with \cref{Assumption:Initial-datum-and-potential}. This yields, $\forall m\in\mathbb{N}_+$ and $\Lambda\geq1$
	\begin{align}
		&\left\vert \Theta_\Lambda W\ast (\vert \varphi_0^\Lambda\vert^2 - 1) \right\vert (x) \nn \\
		&\leq \sqrt{\rho} C_m \Vert ((\varphi_0^\Lambda)^*+1)\tl (\varphi_0^\Lambda-1)\Vert_\infty + \sqrt{\rho} \Lambda^{-sm} C_m ( \Vert\varphi_0^\Lambda\Vert_\infty^2 + 1) \nn \\ 
		&\leq C_{n} \sqrt{\rho} \Lambda^{-2n \left(\frac{1}{3} -s \right)  } \,, \label{eq:Theta-varhpi0-sup-estimate-1}
	\end{align}
	where we used $\Vert \varphi_0^\Lambda\Vert_\infty\leq C$, \cref{Lemma:CalculationTheta}a) (which requires flatness around the origin of the condensate) 
	and that we can choose $m$ large enough (for fixed $n, s>0$), such that $\Lambda^{-sm} \leq \Lambda^{-2n(\frac{1}{3} -s )}$.

Since $|\varphi_t^\Lambda\vert^2 - \vert \varphi_0^\Lambda\vert^2$ is missing a flatness condition around the origin, the estimate of  \eqref{eq:ThetaWLambdaSplitting2Parts} is considerably more delicate. However, using the estimate of the localized condensate  $\tl \varphi_t^\Lambda$ from Appendix~\ref{sec:Localized-condensate} we can complete the argument.
We again use \cref{Lemma:CalculationTheta}b) to move $\tl$ inside the convolution to conclude
\begin{align}
	\Vert \Theta_\Lambda \sqrt{\rho}  W\ast (\vert \varphi_t^\Lambda\vert^2 - \vert \varphi_0^\Lambda\vert^2)  \Vert_{\infty} 
	 &\leq
	 \sqrt{\rho} C
	 \Vert   \Theta_\Lambda(\vert \varphi_t^\Lambda\vert^2 - \vert \varphi_0^\Lambda\vert^2)  \Vert_{1\wedge 2 \wedge \infty}
	  \nonumber\\
	 &\quad + 
	 C \sqrt{\rho} \Lambda^{-ms} 
	 \Vert   \, \vert \varphi_t^\Lambda\vert^2 - \vert \varphi_0^\Lambda\vert^2 \Vert_{1\wedge 2 \wedge \infty} \nn
\end{align}
Note that the norm $\Vert\,.\,\Vert_{1\wedge 2 \wedge \infty}$ is defined in \cref{def:Young-norms}. We use \cref{cor:ThetaPhi-Phi0}, for $(\beta=0,\tilde{k}=k-1)$, and \cref{cor:phi-propagation} to get that $\forall m\in\BN_+$
\begin{align}
	\Vert \Theta_\Lambda \sqrt{\rho}  W\ast (\vert \varphi_t^\Lambda\vert^2 - \vert \varphi_0^\Lambda\vert^2)  \Vert_{\infty} 
	&\leq
	 \sqrt{\rho} C \Big\{ \Lambda^{-\frac{1}{6} -\frac{k}{3}- ks} + \Lambda^{-\frac{1}{2}s -2n(1/3-s)} \Big\} 
	 \nn \\
	&\quad +
	 C \sqrt{\rho} \Lambda^{-\left(\frac{1}{6}+ms\right)}  \,. \label{eq:PullingThetaInsideConvOfWConvPhiSquared-Phi0Squared}
\end{align}
Since $s>0$, we can choose $m$ in \eqref{eq:PullingThetaInsideConvOfWConvPhiSquared-Phi0Squared} large enough such that $\Lambda^{-\frac{1}{6}}\Lambda^{-ms}\leq \Lambda^{-s/2-2n(1/3-s)}$.
With this choice, \cref{cor:ThetaDPhi-TPhi} follows directly from \eqref{eq:ThetaWLambdaSplitting2Parts1}, \eqref{eq:ThetaWLambdaSplitting2Parts}, \eqref{eq:Theta-varhpi0-sup-estimate-1} and \eqref{eq:PullingThetaInsideConvOfWConvPhiSquared-Phi0Squared}.
\end{proof}

\subsection{Tracer Localization} \label{sec:Tracer-localization}
In this section we explain the idea behind the tracer localization in both position and momentum space in the dynamics generated by $H^{\rm BF}_\Lambda(t)$, given in \cref{cor:tracer-position-estimate-for-BF}. Its proof can be found at the end of this section.

Our goal is to show that the tracer remains confined within a region of $\cO(1)$ over timescales of $\cO(1)$.
For this, we have to control two contributions to its dynamics, one coming from the condensate $\varphi_t^\Lambda$ and one from the excitations.
\begin{description}
	\item[$\bullet$ Condensate Contribution.] This is minor, as the largest contribution that arises from its interaction with the tracer, $\sqrt{\rho} W\ast |\varphi_t^\Lambda|^2(x)$, has already been extracted from the dynamics.
	\item[$\bullet$ Excitation Contribution.] We will show that the tracer gains at most $\mathcal{O}(1)$ energy from its interactions with the excitations. This involves proving two key points:
\begin{description}
\item[1. The Number of Effective Interactions.] The tracer effectively interacts with only $\mathcal{O}(1)$ many excitations. Heuristically, this follows from the finite $\cO(1)$-range  of the tracer-boson interaction potential $W$, together with the assumption that the globally present excitations of essential order  $\Lambda$ (see \cref{rem:Tech-particle-number-estimate}) are roughly evenly distributed in the gas.
\item[2. The Energy Gain per Excitation.] The tracer particle gains at most $\mathcal{O}(1)$ energy from each excitation it interacts with, as controlled by the interaction prefactor in $H^{\rm BF}_\Lambda$. Rescaling the term $\frac{1}{\rho^r}\sum\limits_{i=1}^{N} W(x-y_i)$ in  $H_\rho$ yields an interaction  $\rho^{1/2-r}A(  Q_t^\Lambda W_x\varphi_t^\Lambda \oplus J Q_t^\Lambda W_x\varphi_t^\Lambda)$ in $H^{\rm BF}_\Lambda$. To ensure localization, this energy gain must remain $\mathcal{O}(1)$, requiring  $r \geq 1/2$. For $r < 1/2$, the interaction is too strong and localization may fail.
\end{description}
\end{description}

\paragraph{Number of Effectively Interacting Excitations with the Tracer.}

To establish localization of the tracer, we derive bounds on the moments of its position operator, $\langle \psi^{\mathrm{BF}}_{\Lambda,t}, |x|^{2M} \psi^{\mathrm{BF}}_{\Lambda,t}\rangle$, by using Grönwall-type arguments. For a simpler understanding, consider $\langle\psi^{\mathrm{BF}}_{\Lambda,t}, x_i \psi^{\mathrm{BF}}_{\Lambda,t}\rangle$.
Its time derivative involves the commutator $[x_i,H^{\rm BF}_\Lambda]=2\partial_{x_i}$, requiring bounds on $\partial_{x_i}$. Applying the same strategy to $\partial_{x_i}$, we encounter the commutator $[\partial_{x_i},H^{\rm BF}_\Lambda]=a(\partial_{x_i}Q_t^\Lambda W_x\varphi_t^\Lambda) + a^*(\partial_{x_i}Q_t^\Lambda W_x\varphi_t^\Lambda)\leq C (\cN+1)^{1/2}$. According to \cref{rem:Tech-particle-number-estimate},  $(\cN+1)^{1/2}$ is expected to scale as $\Lambda^{1/2}$, which does not suffice to conclude localization of  $x_i$ to an $\cO(1)$ region.

To improve the above estimate, we conjugate $H^{\rm BF}_\Lambda(t)$ with the propagator of the Bogoliubov Hamiltonian, $U_{\Lambda,t}^{\mathrm{Bog}}:=U^{\mathrm{Bog}}_\Lambda(t,0)=U_{\cV_t^\Lambda}$, which is itself a Bogoliubov transformation (see \eqref{eq:V-Matrix-Representation}).
By the conjugation with $U_{\cV_t^\Lambda}$ we extract the excitations that are effectively non-interacting with the tracer. The transformed Hamiltonian is
\begin{align}
	\widetilde{H}^{\mathrm{BF}}_\Lambda(t) ={}& I\otimes (U^{\rm Bog}_{\Lambda,t})^* H^{\rm BF}_\Lambda(t)I\otimes  U^{\rm Bog}_{\Lambda,t} + \rmi \left(I\otimes\partial_t U^{\rm Bog}_{\Lambda,t}\right) ^* I\otimes U^{\rm Bog}_{\Lambda,t} \nn \\
	 ={}& -\frac{\Delta_x}{2m} +A\left(\big(\cV_t^\Lambda)^{-1} (Q_t^\Lambda W_x\varphi_t^\Lambda \oplus J Q_t^\Lambda W_x\varphi_t^\Lambda\big)\right) \,. \label{eq:Transformed-BF-Hamiltonian}
\end{align}

In the transformed dynamics we get in the Grönwall argument for the excitation number operator  that
\[ A\left(\big(\cV_t^\Lambda)^{-1} (Q_t^\Lambda W_x\varphi_t^\Lambda \oplus J Q_t^\Lambda W_x\varphi_t^\Lambda\big)\right) \leq C \Vert(\cV_t^\Lambda)^{-1}\Vert_{\rm op} (\cN+1)^{1/2} \,. \]
And hence due to $\Vert\cV_t^\Lambda\Vert_{\rm op}\leq C$ (see \eqref{eq:Bog-Map-bounded}), $(\cV_t^\Lambda)^{-1}= S\cV_t^* S$ and Grönwall: 
\begin{align}
\pscal{\widetilde{\psi}_{\Lambda,t}^{\mathrm{BF}} ,(\cN+1) \widetilde{\psi}_{\Lambda,t}^{\mathrm{BF}}} \leq C \,,\quad t\in [-T,T] \label{eq:Ex-Number-Bound-Transformed-Dynamics}
\end{align}
 if we assume $\pscal{\widetilde{\psi}_{\Lambda,0}^{\mathrm{BF}} ,(\cN+1)\widetilde{\psi}_{\Lambda,0}^{\mathrm{BF}} } \leq C$.
 Although we have extracted a global number of excitations growing with the volume $\Lambda$ with the Bogoliubov transformation $U_{\Lambda,t}^{\rm Bog}$ coming from $H^{\rm Bog}_\Lambda(t)$ (see \cref{rem:Tech-particle-number-estimate}), the interaction term with the tracer particle has changed only by $\cO(1)$ due to $\Vert\cV_t^\Lambda\Vert_{\rm op}\leq C$. This indicates that the tracer effectively interacts with only $\cO(1)$ many excitations.
This way $\Vert\cV_t^\Lambda\Vert_{\rm op}$ controls the local quantity of effective interacting excitations with the tracer.

 In the transformed dynamics, the Grönwall estimate becomes:
\begin{align*}
\widetilde{H}^{\mathrm{BF}}_\Lambda: x_i &\overset{-\Delta_x}{\longrightarrow} &\partial_{x_i} &\overset{A((\cV_t^\Lambda)^{-1} ( Q_t^\Lambda W_x\varphi_t^\Lambda \oplus J Q_t^\Lambda W_x\varphi_t^\Lambda))}{\longrightarrow}& (\cN+1)^{1/2} && \quad \mathclap{\underset{\eqref{eq:Ex-Number-Bound-Transformed-Dynamics}}{\overset{\text{Estimate}}{\longrightarrow}}} \quad && \mathcal{O}(1)& \,, 
\end{align*}
ensuring localization.
Since  $I\otimes U_{\Lambda,t}^{\rm Bog}$ commutes with $x \otimes I$, bounds on tracer position in the transformed system $\widetilde{H}^{\mathrm{BF}}_\Lambda$ carry over to the original dynamics.
\\The bound on the excitation number in the transformed dynamics indicates that indeed the excitations are evenly distributed in the gas as claimed in our heuristic argument.

\paragraph{Generalization of the Initial Conditions.}
In addition to the Bogoliubov transformation $U^{\rm Bog}_{\Lambda,t}$, we can apply a time-independent Bogoliubov transformation, $U_{\cZ_0^\Lambda}$.
 This allows us to follow a similar argument as above while generalizing the initial state in \cref{cor:tracer-position-estimate-for-BF} from $\psi^{\mathrm{BF}}_{\Lambda,0}=\widetilde{\psi}^{\mathrm{BF}}_{\Lambda,0}$ to $\psi^{\mathrm{BF}}_{\Lambda,0}=I\otimes U_{\cZ_0^\Lambda}^* \widetilde{\psi}^{\mathrm{BF}}_{\Lambda,0} $ (see \eqref{eq:Transformation-prop-BF-Ham} below), where we only need to assume control of the excitation number on the transformed initial state, as in \cref{con:Localized-Tracer}. 

The resulting transformed Hamiltonian takes the form:
\begin{align}
 	\widetilde{H}^{\mathrm{BF}}_{\Lambda,\cZ_0^\Lambda}(t)&=  U_{\cZ_0^\Lambda}\left(  (U^{\rm Bog}_{\Lambda,t})^* H^{\rm BF}_\Lambda(t)   U^{\rm Bog}_{\Lambda,t} + \rmi \left(\partial_t U^{\rm Bog}_{\Lambda,t}\right) ^* U^{\rm Bog}_{\Lambda,t} \right)  U_{\cZ_0^\Lambda}^* \nn \\
 	&= -\frac{\Delta_x}{2m} +A\left(\cZ_0^\Lambda (\cV_t^\Lambda)^{-1} ( Q_t^\Lambda W_x\varphi_t^\Lambda \oplus J Q_t^\Lambda W_x\varphi_t^\Lambda)\right) \,. \label{eq:Transformed-BF-Hamiltonian-Z}
 \end{align}
The dynamics of the original state $\psi_t$ under $H^{\rm BF}_\Lambda(t)$ are equivalent to the transformed dynamics:
\begin{align}
		\rmi  \partial_t \psi_t ={}& H^{\rm BF}_\Lambda(t) \psi_t \quad  \Leftrightarrow \quad \rmi   \partial_t   U_{\cZ_0^\Lambda}(U^{\rm Bog}_{\Lambda,t})^*\psi_t = \widetilde{H}^{\mathrm{BF}}_{\Lambda,\cZ_0^\Lambda}(t)  U_{\cZ_0^\Lambda}(U^{\rm Bog}_{\Lambda,t})^*\psi_t \,. \label{eq:Transformation-prop-BF-Ham}
	\end{align}


\paragraph{Technical Implementation.}

For a rigorous version of the Grönwall estimate we use \cite[Theorem 8]{LNS15}, which was adapted to our setting in \cite[Theorem~D.1.1]{Spr25}.
Specifically, we apply \cite[Theorem 8]{LNS15} to the time-dependent quadratic form generated by $\widetilde{H}^{\mathrm{BF}}_{\Lambda,\cZ_0^\Lambda}(t)$ with comparison operator $A:=B:=h_{\rm oc}^M$, $M\in\BN_+$. Here, $h_{\rm oc}^M=(-\Delta_x +x^2 +(\cN+1)^2)^M$ is the well studied harmonic oscillator in the tracer position $x$, which controls $x^{2M}, (-\Delta_x)^{M}$ and $(\cN+1)^{2M}$.
The conditions of the theorem are verified in \cref{lem:Conditons-LNS15-Theorem-8} below.

\begin{lemma}  \label{lem:Conditons-LNS15-Theorem-8}
Let $M\in\BN_+$. Assume that the potentials $V$ and $W$ satisfy \cref{Assumption:Initial-datum-and-potential}$_{\max\{M,2\}}$. 
\begin{itemize}
	\item[i)](Condensate condition) For all volumes $\Lambda\geq1$, let $\varphi_t^\Lambda$ be the solution of the Hartree equation \eqref{eq:Hartree}. Assume that its initial data $\varphi_0^\Lambda$  varies on the scale $\Lambda^{1/3}$, that is, 
\cref{con:Initial condition} and \cref{con:Initial condition derivative condensate}$_{4}$. Furthermore, we require that $\varphi_0^\Lambda$ is flat around the origin, namely \cref{con:Condensate-flat-around-origin}$_{2,s}$, $0<s<1/3$, and assume that $\exists \eta\in H^1(\BR^3)$, $\delta$, $C>0$ such that $\forall \Lambda\geq1$
\begin{align}
	\Vert \varphi_0^\Lambda(\Lambda^{1/3}\,.\,) -\eta \Vert_2 \leq  C\Lambda^{-\delta} \,.
\end{align}
 	\item[ii)](Bogoliubov map) For all densities $\rho\geq1$ and volumes $\Lambda\geq1$ let $\cZ_0^\Lambda\in \cL(L^2\oplus JL^2)$ be a unitarily implementable Bogoliubov map such that $\exists C>0$ and $0<\epsilon\leq 1/4 \min\{ \delta,\, s,\, 3/2(1/3-s),\, 1/6\}$ with $\forall
 	\Lambda,\rho\geq1$
 	\begin{gather*}
 		\Vert \widehat{\cZ_0^\Lambda}\cT^{-1} (\tau \oplus J\tau J^*)\Vert_{\cL(L^2\oplus JL^2)} \leq C \,, \quad \Vert \cZ_0^\Lambda \Vert_{\cL(L^2\oplus JL^2)}\leq C\Lambda^{\epsilon} \,.
 	\end{gather*}
\end{itemize}
Then for all $T\geq0$, there exists a constant $C>0$ such that for all densities $\rho\geq1$, volumes $\Lambda\geq1$ and $-T\leq t\leq T$ we have 
\begin{itemize}
   		\item[a)] That 
   				\begin{align}
		C h_{\rm oc}^M \geq \widetilde{H}^{\mathrm{BF}}_{\Lambda,\cZ_0^\Lambda}(t)  \geq - C h_{\rm oc}^M  \label{eq:aux-eq-27}
 	\end{align}
   			 and the operator $h_{\rm oc}^M$ bounds the commutator of $h_{\rm oc}^M$ with $\widetilde{H}^{\mathrm{BF}}_{\Lambda,\cZ_0^\Lambda}(t)$, meaning 
   			\begin{align}
   				   \mp 2 \mathrm{Im} \,\pscal{ \widetilde{H}^{\mathrm{BF}}_{\Lambda,\cZ_0^\Lambda}(t) \psi, h_{\rm oc}^M \psi } \leq C \pscal{\psi, h_{\rm oc}^M\psi}  \,, \quad \forall \psi\in D(h_{\rm oc}^M)\,. \label{eq:LNS15-Thm-commutator-realtion}
   			\end{align} 
   		\end{itemize}
   		\item[b)] The time derivative of $\widetilde{H}^{\mathrm{BF}}_{\Lambda,\cZ_0^\Lambda}(t)$ is bounded by $h_{\rm oc}^M$, meaning $\forall \psi\in Q(h_{\rm oc}^M)$: $\big( t\mapsto \langle \psi, \widetilde{H}^{\mathrm{BF}}_{\Lambda,\cZ_0^\Lambda}(t)  \psi \rangle\big)\in C^1(\BR, \mathbb{R})$ and
   		\begin{align} 
   		\Big\vert \frac{d}{dt} \big\langle \psi, \widetilde{H}^{\mathrm{BF}}_{\Lambda,\cZ_0^\Lambda}(t)  \psi \big\rangle  \Big\vert \leq  C \pscal{\psi, h_{\rm oc}^M\psi}\,. \label{eq:aux-eq-28}
   		\end{align} 
   	
\end{lemma}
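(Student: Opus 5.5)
The Hamiltonian \eqref{eq:Transformed-BF-Hamiltonian-Z} has the form $-\Delta_x/(2m) + A(F_t(x))$, where
\[
F_t(x):=\cZ_0^\Lambda(\cV_t^\Lambda)^{-1}\big(Q_t^\Lambda W_x\varphi_t^\Lambda\oplus JQ_t^\Lambda W_x\varphi_t^\Lambda\big).
\]
The plan is to reduce all three claims to two kinds of uniform bounds on this vector-valued function of $x$ and $t$:
\[
\sup_{|t|\le T,\,x}\Vert \partial_x^\beta F_t(x)\Vert_{L^2\oplus JL^2}\le C \ \ (|\beta|\le \max\{M,2\}),
\qquad
\sup_{|t|\le T,\,x}\Vert \partial_t F_t(x)\Vert\le C .
\]
Once these hold, the creation/annihilation part is handled by the standard bound $|\langle\psi,A(F)\psi\rangle|\le 2\Vert F\Vert\,\Vert(\cN+1)^{1/2}\psi\Vert\,\Vert\psi\Vert$, applied fiberwise in $x$.

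\textbf{Step 1: a) first inequality.} The kinetic term satisfies $-\Delta_x\le h_{\rm oc}$. The field term is bounded by $C(\cN+1)^{1/2}\le C h_{\rm oc}$. Together with $1\le h_{\rm oc}\le h_{\rm oc}^M$, this gives \eqref{eq:aux-eq-27}.

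\textbf{Step 2: the commutator \eqref{eq:LNS15-Thm-commutator-realtion}.} Expand $h_{\rm oc}^M$ with $h_{\rm oc}=-\Delta_x+x^2+(\cN+1)^2$. Then compute $[\widetilde H,h_{\rm oc}]$.
\begin{itemize}
\item $[-\Delta_x,x^2]$ is of order $x\cdot\nabla_x$, which is bounded by $h_{\rm oc}$.
\item $[A(F_t(x)),-\Delta_x]$ produces $A(\nabla_x F)\cdot\nabla_x$ and $A(\Delta_x F)$.
\item $[A(F),(\cN+1)^2]$ is of the form $A(F)(\cN+1)$ plus lower-order terms.
\end{itemize}
Each of these is bounded by $C h_{\rm oc}$ in form sense, using the $x$-derivative bounds on $F$. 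For general $M$, commute the factors through one at a time. Terms like $h_{\rm oc}^{j}[\widetilde H,h_{\rm oc}]h_{\rm oc}^{M-1-j}$ are controlled by $C h_{\rm oc}^M$ using the standard pull-through estimates $(\cN+1)^{k}A(F)\approx A(F)(\cN+1\pm1)^{k}$. Higher commutators bring in $\partial_x^\beta F$ with $|\beta|\le M$, which is where Assumption~\ref{Assumption:Initial-datum-and-potential}$_{\max\{M,2\}}$ enters.

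\textbf{Step 3: b).} Compute $\frac{d}{dt}\langle\psi,\widetilde H\psi\rangle=\langle\psi,A(\partial_tF_t(x))\psi\rangle$. By the standard bound this is at most $C\Vert\partial_tF\Vert\,\langle\psi,h_{\rm oc}\psi\rangle$. Continuity of the derivative follows from $\varphi^\Lambda\in C^1(\BR,H^\infty)$ and $\cV^\Lambda\in C^1$.

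\textbf{Main obstacle: uniform-in-$\Lambda$ bounds on $F_t$ and $\partial_tF_t$.} The difficulty is that $\Vert\cZ_0^\Lambda\Vert$ may grow like $\Lambda^\epsilon$. A naive estimate $\Vert\cZ_0^\Lambda\Vert\,\Vert(\cV_t^\Lambda)^{-1}\Vert\,\Vert W_x\varphi_t\Vert$ therefore costs $\Lambda^\epsilon$, which is not good enough. My plan is as follows.
\begin{enumerate}
\item Use Lemma~\ref{lem:Approx-V-by-V-infty-Conv-Rates} to write $(\cV_t^\Lambda)^{-1}=(\cV_t^\infty)^{-1}+$ an error, where the error on the localized vector $W_x\varphi_t$ decays like $\Lambda^{-c}$. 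Here $c$ is gained from flatness (Condition~\ref{con:Condensate-flat-around-origin}$_{2,s}$), the $\eta$-approximation with rate $\delta$, and the localization in $x$ via $\tl$.
\item Multiply this error by $\Lambda^\epsilon$. The constraint $\epsilon\le\frac14\min\{\delta,s,\tfrac32(1/3-s),1/6\}$ is exactly what makes the product bounded.
\item For the main term, use that $\varphi_t^\Lambda\approx \rme^{-\rmi t\nu}$ near $x$ (Corollary~\ref{cor:Strong-convergence-infinite-volume}). Write
\[
\cZ_0^\Lambda(\cV_t^\infty)^{-1}(W_x\oplus JW_x)
=\widehat{\cZ_0^\Lambda}\,\cT^{-1}(\tau\oplus J\tau J^*)\cdot(\tau^{-1}\oplus J\tau^{-1}J^*)\,\cT(\cV_t^\infty)^{-1}\cT^{-1}\cdots
\]
and invoke the hypothesis $\Vert \widehat{\cZ_0^\Lambda}\cT^{-1}(\tau\oplus J\tau J^*)\Vert\le C$.
\item The remaining factor involves $\tau^{-1}$, which behaves like $|p|^{-1/2}$ near $p=0$, applied to the diagonal evolution $\rme^{-\rmi t\omega}$ and to $\widehat W$. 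It is in $L^2$ because $\widehat W$ is bounded near $0$, and $\widehat V(0)>0$ gives $\omega\sim|p|$.
\end{enumerate}
I expect this infrared bookkeeping, together with the corresponding control of the $Q_t^\Lambda$ projection and of $\partial_t$ (which brings in $\partial_t\varphi_t$ and $\cA^\Lambda(t)$), to be the technically hardest part.
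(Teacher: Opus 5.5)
Your reduction is the same as the paper's. You bound $F_t(x)$, $\partial_x^\beta F_t(x)$ and $\partial_tF_t(x)$, control $A(F)$ by $\cN$, and split $(\cV_t^\Lambda)^{-1}$ into $(\cV_t^\infty)^{-1}$ plus an error that must beat $\Vert\cZ_0^\Lambda\Vert\le C\Lambda^\epsilon$. The gap is the target itself: $\sup_x\Vert\partial_x^\beta F_t(x)\Vert\le C$, uniformly in the tracer position, is not something this route delivers, and Steps 1--3 as written rest on it.

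The approximations $\varphi_t^\Lambda\approx\rme^{-\rmi t\nu}$, $Q_t^\Lambda\approx 1$ and $\rme^{\rmi t\nu S}\cV_t^\Lambda\approx\cV_t^\infty$ (with $\nu=\int V-\mu^\infty$) hold only near the \emph{origin}, where the condensate is flat, not ``near $x$''. Also, $\Theta_\Lambda$ localizes the boson variable around $0$, not around the tracer. Quantitatively, \cref{lem:Approx-V-by-V-infty-Conv-Rates-2} and \cref{lem:Approx-V-by-V-infty-Conv-Rates} bound the errors by $\Lambda^{-\gamma}\Vert(1+y^2)f\Vert_2$ and $\Lambda^{-\gamma}\Vert(h\oplus JhJ^*)F\Vert$, with $h=1-\Delta+y^2$. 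For $f=W_x$ and $F=W_x\oplus JW_x$ these norms grow like $1+x^2$. This is not an artifact of the estimates. Since $\varphi_t^\Lambda\in L^2$, there are regions, at distance $\gtrsim\Lambda^{1/3}$, where $|\varphi_t^\Lambda|$ is far from $1$. For $x$ there, $Q_t^\Lambda W_x\varphi_t^\Lambda$ is not close to $\rme^{-\rmi t\nu}W_x$, so the ``main term plus small error'' decomposition breaks down and cannot absorb $\Lambda^\epsilon$ uniformly in $x$.

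The missing step is an interpolation. Heinz--Kato between the uniform operator bound and the rate above gives $C\Lambda^{-\gamma/4}\Vert(h\oplus JhJ^*)^{1/4}(W_x\oplus JW_x)\Vert\le C\Lambda^{-\gamma/4}(1+x^2)^{1/4}$ (\cref{lem:Approx-V-by-V-infty-on-specific-state}). Multiplied by $\Lambda^\epsilon$, this is $O((1+x^2)^{1/4})$ precisely because $\epsilon\le\gamma/4$, where $\gamma=\min\{\delta,s,\tfrac32(\tfrac13-s),\tfrac16\}$. That is where the factor $\tfrac14$ in the hypothesis comes from; an unweighted rate $\Lambda^{-\gamma}$ would only need $\epsilon\le\gamma$.

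The correct intermediate statement is the weighted bound of \cref{lem:uniform-boundedness-F-Lambda}:
\[
\sum_{|\beta|\le M}(1+x^2)^{-1/4}\Vert\partial_x^\beta F_t(x)\Vert+(1+x^2)^{-1/4}\Vert\partial_tF_t(x)\Vert\le C.
\]
Steps 1--3 must then use $\pm A(F_t(x))\le C(1+x^2)^{1/4}(\cN+1)^{1/2}$ together with $(1+x^2)^{1/2}(\cN+1)\le h_{\rm oc}$; this is where the $x^2$ in $h_{\rm oc}$ is genuinely needed. The exponent $\tfrac14$ keeps the weighted field operator of the same order as $\cN+1$ relative to $h_{\rm oc}$. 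Hence $[A(F),(\cN+1)^2]\sim A(F)(\cN+1)$ and the iterated commutators $\mathrm{ad}_{h_{\rm oc}}^{(k)}(a^\#(f_x))$ stay controlled by the right powers of $h_{\rm oc}$.

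Two smaller points. First, in Step 2 the unbalanced terms $h_{\rm oc}^{j}[\widetilde{H}^{\mathrm{BF}}_{\Lambda,\cZ_0^\Lambda}(t),h_{\rm oc}]h_{\rm oc}^{M-1-j}$ are not directly form-bounded by $h_{\rm oc}^M$. The paper splits $h_{\rm oc}^{2m}=h_{\rm oc}^mh_{\rm oc}^m$ symmetrically, expands in iterated commutators of strictly lower total order, and treats odd $M$ via $h_{\rm oc}=(-\nabla_x+x)(\nabla_x+x)+(\cN+1)^2+3$. Second, your main term needs no infrared argument. For real $W_x$ one has $\cT(\widehat{W_x}\oplus J\widehat{W_x})=\tau\widehat{W_x}\oplus J\tau\widehat{W_x}$, so $\tau^{-1}$ cancels after the diagonal evolution and the term is bounded by $C\Vert W\Vert_2$ uniformly in $x$. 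The infrared bookkeeping ($\tau^{-2}p^2=2\omega$ and $W\in H^2$) only enters for $\partial_tF$.
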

The proof of \cref{lem:Conditons-LNS15-Theorem-8} can be found in Appendix~\ref{sec:Proof-lem:Conditons-LNS15-Theorem-8}. 
Note that the bounds obtained in \cref{lem:Conditons-LNS15-Theorem-8} are uniform in $\Lambda$ and $\rho$, which allows us, together with the validity of \cite[Theorem 8]{LNS15}, to prove \cref{cor:tracer-position-estimate-for-BF} and thereby establish tracer localization.

\begin{proof}[Proof of \cref{cor:tracer-position-estimate-for-BF}]
	 Due to \cref{con:Localized-Tracer}$_{\psi_{\Lambda,0}^{\mathrm{BF}},M,\epsilon}$ there exists a unitarily implementable Bogoliubov map  $\cZ_0^\Lambda\in \cL(L^2\oplus JL^2)$ such that we have $\psi_{\Lambda,0}^{\mathrm{BF}}\in I\otimes U_{\cZ_0^\Lambda}^*Q(h_{\rm oc}^M)\cap Q( \rmd\Gamma(1-\Delta))\subset Q(h_{\rm oc})\cap  Q( \rmd\Gamma(1-\Delta))$ with
	\begin{align}
 		\pscal{\psi_{\Lambda,0}^{\mathrm{BF}},U_{\cZ_0^\Lambda}^*h_{\rm oc}^M U_{\cZ_0^\Lambda}\psi_{\Lambda,0}^{\mathrm{BF}}}  \leq C \,. \label{eq:(N+1)-x-Nabla-initial-condition-cor-2}
 \end{align}
 We aim to estimate the dynamics $\psi_{\Lambda,t}^{\mathrm{BF}}$. To this end, we apply two Bogoliubov transformations and define the transformed state by $\widetilde{\psi}_{\Lambda,t}^{\mathrm{BF}}= I\otimes U_{\cZ_0^\Lambda} U^{\rm Bog}_\Lambda(t,0)^*\psi_{\Lambda,t}^{\mathrm{BF}}$ with initial data $\widetilde{\psi}_{\Lambda,0}^{\mathrm{BF}}:=I\otimes U_{\cZ_0^\Lambda} \psi_{\Lambda,0}^{\mathrm{BF}}$, where $U^{\rm Bog}_\Lambda(t,0)$ denotes the propagator of the Bogoliubov Hamiltonian as discussed above. 
	In $\widetilde{\psi}_{\Lambda,t}^{\mathrm{BF}}$, we have extracted the dominant contribution to the excitation number coming from $U^{\mathrm{Bog}}_\Lambda(t,0)$. We can now apply the Grönwall-type estimate \cite[Theorem 8]{LNS15} to the transformed state $\widetilde{\psi}_{\Lambda,t}^{\mathrm{BF}}$ to control the growth of $h_{\mathrm{oc}}^M$ under the dynamics. Specifically, for $T\geq0$ there exists a $C>0$ such that for all densities $\rho\geq1$, volumes $\Lambda\geq1$ and $-T\leq t\leq T$
	\begin{align}
		\big\langle\widetilde{\psi}_{\Lambda,t}^{\mathrm{BF}}, h_{\rm oc}^M \widetilde{\psi}_{\Lambda,t}^{\mathrm{BF}}\big\rangle \leq C \big\langle\widetilde{\psi}_{\Lambda,0}^{\mathrm{BF}}, h_{\rm oc}^M \widetilde{\psi}_{\Lambda,0}^{\mathrm{BF}} \big\rangle \,. \label{eq:Harmonic-Osc-Estimate}
	\end{align}
	The verification of the conditions of \cite[Theorem 8]{LNS15} is carried out in \cref{lem:Conditons-LNS15-Theorem-8}. To transfer the estimate \eqref{eq:Harmonic-Osc-Estimate} back to the original dynamics $\psi_{\Lambda,t}^{\mathrm{BF}}$, we note that the operator $I \otimes U_{\cZ_0^\Lambda} U^{\mathrm{Bog}}_\Lambda(t,0)^*$ commutes with $x^{2M} \otimes I$ and $(-\Delta_x)^M \otimes I$, and that $(-\Delta_x)^M + x^{2M}\leq h_{\rm oc}^M$. Thus, using \eqref{eq:(N+1)-x-Nabla-initial-condition-cor-2}, we obtain
	\begin{align*}
		\pscal{\psi_{\Lambda,t}^{\mathrm{BF}}, \big(x^{2M} + (-\Delta_x)^M\big) \psi_{\Lambda,t}^{\mathrm{BF}}} \leq C \pscal{ \psi_{\Lambda,0}^{\mathrm{BF}}, U_{\cZ_0}^* h_{\rm oc}^M U_{\cZ_0^\Lambda} \psi_{\Lambda,0}^{\mathrm{BF}}} \leq C \,,
	\end{align*}
	which proves the claim.
\end{proof}

\section{Infinite-Volume Approximation} \label{sec:Infinite-Volume-Dynamics}

In this section, we establish the effective description of the dynamics through the infinite-volume Bogoliubov-Fröhlich Hamiltonian $H^{\rm BF}_\infty$. 
To this end, we first motivate the infinite-volume limit, which leads to the general result stated in \cref{thm:Infinite-Volume-Dynamics}.

To derive the infinite-volume limit description of the effective dynamics,
we restrict ourselves to the case where the interaction potentials $V$ and $W$ are independent of $\rho$ and $\Lambda$.

 As in the tracer localization argument of \cref{sec:Tracer-localization}, we have to extract the Bogoliubov Hamiltonian $H^{\mathrm{Bog}}_\Lambda(t)$ from $H^{\rm BF}_\Lambda(t)$, since it generates a divergent number of excitations that grows with the volume (see \cref{rem:Tech-particle-number-estimate}). This is done by conjugating with the propagator $U^{\rm Bog}_{\Lambda,t}=U_{\cV_t^\Lambda}$ of the Bogoliubov Hamiltonian.
 Although in general $U_{\cV_t^\Lambda}$ does not converge as $\Lambda\to\infty$, the corresponding Bogoliubov map $\cV_t^\Lambda$ admits a well-defined limit, introduced later in \eqref{eq:def-V-infty}.
 
Moreover, as in \cref{thm:FullDynamics-BF-estimate}, we consider initial data for the excitation dynamics of the form $\psi_{\Lambda,0}^{\mathrm{BF}}=U_{\cZ_0^\Lambda}^*\psi^{\Lambda}$, with $\pscal{\psi^{\Lambda}, (\cN+1)\psi^{\Lambda}}\leq C$. Here the Bogoliubov transformation $U_{\cZ_0^\Lambda}$ can create order $\Lambda^{1+2\epsilon}$ many excitations. Accordingly, we conjugate the Hamiltonian with $U_{\cZ_0^\Lambda}$ in order to extract these contributions from the initial state to the dynamics.

The resulting Hamiltonian $\widetilde{H}^{\mathrm{BF}}_{\Lambda,\cZ_0^\Lambda}$ relevant in the infinite-volume limit is given in \eqref{eq:Transformed-BF-Hamiltonian-Z}, precisely 
\begin{align}
	\widetilde{H}^{\mathrm{BF}}_{\Lambda, \cZ_0^\Lambda} =  -\frac{\Delta_x}{2m} + A\left( \cZ_0^\Lambda (\cV_t^\Lambda)^{-1} \left( Q_t^\Lambda W_x\varphi_t^\Lambda\oplus JQ_t^\Lambda W_x\varphi_t^\Lambda \right)\right) \,.
\end{align}
In order to define the infinite-volume limit of the Hamiltonian, we introduce the corresponding limits of all quantities appearing in $\widetilde{H}^{\mathrm{BF}}_{\Lambda,\cZ_0^\Lambda}$, while retaining their precise definitions for each theorem in which they are used.

For the initial data we assume that for $\Lambda\to \infty$ the Bogoliubov map $\cZ_0^\Lambda$ admits a limit $\cZ_0^\infty$ (possibly unbounded), which satisfies the symmetry condition $\cJ\cZ_0^\infty\cJ=\cZ_0^\infty$ with $\cJ=\begin{pmatrix}
 	&J^* \\
 	J&
 \end{pmatrix}$.
	 For the initial condensate, scaled to order one, we assume that  
	\begin{align}
	\varphi_{0}^\Lambda(\Lambda^{1/3}\,.\,) \to \eta 
	\end{align}
	in $L^2(\BR^3)$. In this setting, we define the limit of the constant $\mu_t^\Lambda$, appearing in the Hartree equation \eqref{eq:Hartree},  as
	\begin{align}
		\mu^\infty = \frac{1}{2} \int |\eta|^4 \int V\,.
	\end{align}
Further we make use of the flatness of the condensate around the origin in order to neglect the Laplacian in the Hartree equation \eqref{eq:Hartree}, leading us to the limit
\begin{align}
	\varphi_t^\Lambda \to \rme^{\rmi t \left(\int V -\mu^\infty\right)} \,. \label{eq:condensate-infinite-volume}
\end{align}

To define $\widetilde{H}^{\mathrm{BF}}_{\Lambda,\cZ_0^\Lambda}$ in the infinite-volume limit, we construct a candidate $ \cV_t^\infty$ for the limit of $ \rme^{\rmi t \left(\int V -\mu^\infty\right) S}\cV_t^\Lambda$, by taking the limit of its generator. 
Thus, we define $ \cV_t^\infty$ as the solution of the differential equation
\begin{align}
	\rmi \partial_t \cV_t^\infty = \cA^\infty(t) \cV_t^\infty \,, \quad \cV_0^\infty= I \,,	\label{eq:def-V-infty}
\end{align}
with generator
\begin{align}
		&\cA^\infty 
		= \begin{pmatrix}
		-\frac{\Delta}{2} + K_1^\infty & -K_2^\infty \\
		(K_2^\infty)^* & -J\big(-\frac{\Delta}{2} + K_1^\infty \big) J^*
\end{pmatrix} \,, \label{eq:def-A-infty} 	
	\end{align}
	where $K_1^\infty$ is the limit of $K_1^\Lambda(t)$ and $K_2^\infty$ the limit of $\rme^{2\rmi t \left(\int V -\mu^\infty\right)}K_2^\Lambda(t)$, given by
	\begin{align}
 K_1^\infty\psi &= V\ast \psi \,, \\
	K_2^\infty(t) J\psi &= V\ast \psi^*  \,,
 \end{align}
 for $\psi\in L^2(\BR^3)$. 
Equation \eqref{eq:def-V-infty} has a unique global solution, which can be proved similarly to \cite[Lemma 4.8]{BPPS22} or \cite[Lemma D.2.6]{Spr25}. Rigorous convergence results on $\mu_t^\Lambda$, $\varphi_t^\Lambda$ and $K_i^\Lambda(t)$ are collected in the appendix (see \cref{lem:Approx-V-by-V-infty-Conv-Rates-2}). There it is also shown that $\rme^{\rmi t \left(\int V -\mu^\infty\right) S}\cV_t^\Lambda \to \cV_t^\infty$ strongly in $L^2\oplus JL^2$ as $\Lambda\to\infty$ (see \cref{cor:Approx-V-by-V-infty}).

Motivated by the limits established above, we define the infinite-volume Hamiltonian:
\begin{align}
	H^\infty(t) :=  -\frac{\Delta_x}{2m} + A\left( \cZ_0^\infty (\cV_t^\infty)^{-1} ( W_x\oplus JW_x) \right) \,.
\end{align}
Since $\cZ_0^\infty$ and $(\cV_t^\infty)^{-1}$ are of the form $\begin{pmatrix}
	a &J^*bJ \\
	b &JaJ^*
\end{pmatrix}$ we know that $H^\infty(t)$ is symmetric. Moreover, by \cite[Theorem~8]{LNS15}, $H^\infty(t)$ generates a well-defined dynamics, such that the differential equation
	$
	\rmi \partial_t \psi_t^{\infty} ={}H^{\infty}(t) \psi_t^{\infty}$, $\psi_{t=0}^{\infty}= \psi_0^{\infty}
	$ has in a weak sense the unique global solution given by 
	$
	\psi_t^{\infty}={}U^{\infty}(t,0) \psi_0^{\infty}$, $\forall \psi_0^{\infty}\in L^2(\BR^3,\cF(L^2)),
	$
	where $U^{\infty}_t:=U^{\infty}(t,0)$ is the unitary propagator of $H^{\infty}(t)$.
The infinite-volume Hamiltonian $H^\infty(t)$ admits a considerably simpler time evolution than $\widetilde{H}^{\mathrm{BF}}	_{\Lambda,\cZ_0^\Lambda}(t)$. In fact, the condensate evolution governed by the Hartree equation reduces to a constant phase factor, while the time evolution of $\cV_t^\infty$ can be made explicit (see \eqref{eq:Diagonal-V-infty} below).

\paragraph{Diagonalization of $\cV_t^\infty$.} 
The diagonalization of the translation-invariant Bogoliubov map $\cV_t^\infty$ is well known in the literature, see for example \cite{BrDe07}.
To fix notation we repeat the results here.
In Fourier representation $\cA^\infty$ takes the form
	\begin{align}
	&\widehat{\cA^\infty}=(\cF\oplus J\cF J^* )\cA^\infty ( \cF^{-1}\oplus J\cF^{-1}J^*)\nn \\
	 &= \begin{pmatrix}
1 &0 \\
0 & J \cC R 
\end{pmatrix}   \begin{pmatrix}
		c & -b    \\
		b   & -c 
\end{pmatrix} \begin{pmatrix}
1 &0 \\
0 &  R \cC J^*
\end{pmatrix} \,, \label{eq:def-A-infty-fourier}
	\end{align}
	where $c(p)=\frac{p^2}{2}  + (2\pi)^{3/2}\widehat{V}(p)$, $b(p)=(2\pi)^{3/2} \widehat{V}(p)$, $\cC\psi=\psi^*$ and $R\psi(p)=\psi(-p)$. If $\widehat{V}\geq0$, we can explicitly diagonalize $\widehat{\cA^\infty}$: 
\begin{align}
\cT\widehat{\cA^\infty}\cT^{-1} = \begin{pmatrix}
	\omega &0 \\
	0 &-J\omega J^*
\end{pmatrix} \,,
\end{align}
where $\omega(p)=\sqrt{c^2-b^2}=\sqrt{p^4/4 + p^2 \widehat{V}(p) (2\pi)^{3/2}}$ and the Bogoliubov map $\cT$ is defined in \eqref{def:T-infty-volume}.
Since $\widehat{\cV_t^\infty}$ is generated by $\widehat{\cA^\infty}$ (see \eqref{eq:def-V-infty}), we find
\begin{align}
	\widehat{\cV_t^\infty}= \cT^{-1}  \begin{pmatrix}
	\rme^{-\rmi t \omega} & 0 \\
	0 & J \rme^{-\rmi t \omega}J^*
\end{pmatrix}	   \cT  \label{eq:Diagonal-V-infty} \\
	=  \begin{pmatrix}
	L(t) &  M(t)^* \cC R J^*\\
	J\cC R M(t) & J L(t)J^*
\end{pmatrix} \label{eq:-V-infty-Explicit}
\end{align}
with
\begin{align}
	L(t) &= \cos (\omega t ) -\rmi \frac{c}{\omega} \sin(\omega t)\,, \quad
	M(t) = -\rmi \frac{b}{\omega} \sin(\omega t) \,.
\end{align}

\paragraph*{Convergence to the Infinite Volume Dynamics.}

We are now able to prove the convergence of the dynamics generated by the transformed microscopic dynamics to the one generated by $H^\infty(t)$.

\begin{theorem} \label{thm:Infinite-Volume-Dynamics}
For given $\alpha,s \in (0,1/3)$ choose $\Lambda=\rho^{\alpha}\geq1$, $n,k\in \BN_+$ large enough and $\epsilon>0$ small enough (as in \cref{rem:FullDynamics-BF-estimate}). Assume that the potentials $V$ and $W$ satisfy \cref{Assumption:Initial-datum-and-potential}$_{n}$.  
For the initial condensate $\varphi_0^\Lambda$ we impose the conditions of \cref{thm:FullDynamics-BF-estimate}. 

Furthermore, assume that there exists a family of unitarily implementable Bogoliubov maps $\cZ_0^\Lambda$  that satisfies \cref{con:Infinite-volume-Bog-maps} with growth rate $\epsilon$ and limiting operator $\cZ_0^\infty$.

Let $\psi_0^\infty\in L^2(\BR^3,\cF(L^2(\BR^3)))$ and assume that there exist a family of states $\psi_0^\Lambda \in L^2(\BR^3,L^2_{\rm s}(\BR^{3N}))$  with $U_{\cZ_0^\Lambda}U_0^\Lambda\psi_0^\Lambda  \to  \psi_0^\infty $ in $L^2(\BR^3,\cF(L^2(\BR^3))$ as $\rho^\alpha=\Lambda\to \infty$, where  $U_t^\Lambda $ denotes the excitation map. 
Then, for all times $T\geq0$, 
 \begin{align*}
 	\sup_{t\in [-T,T]}\Vert  \rme^{\rmi \nu_t^\Lambda}  U_{\cZ_0^\Lambda} U_{\cV_t^\Lambda}^*U_t^\Lambda  \rme^{-\rmi t H_\rho}\psi_0^\Lambda  -U_t^\infty \psi_0^\infty \Vert_{L^2(\BR^3, \cF(L^2))} \to 0
 \end{align*}
as $\rho^\alpha=\Lambda\to \infty$, where $U_t^\infty$ denotes the unitary propagator of $H^\infty(t)$, $U_{\cV_t^\Lambda}$ the propagator of $H^{\rm Bog}_\Lambda(t)$, $\nu_t^\Lambda=\int^t_0(\rho^{1/2}\int W- \mu_s^\Lambda )ds\in\BR$ and $\mu_t^\Lambda\in\BR$ is given by \eqref{eq:Mu}.
\end{theorem}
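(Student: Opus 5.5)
The plan is to split the error with the triangle inequality into a finite-volume part, handled by \cref{thm:FullDynamics-BF-estimate}, and an infinite-volume comparison of two Fock-space dynamics.

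\textbf{Step 1 (reduction to finite-volume Bogoliubov--Fröhlich).} Since $U_{\cZ_0^\Lambda}U_{\cV_t^\Lambda}^*$ is unitary, \cref{thm:FullDynamics-BF-estimate} gives
\[
\sup_{|t|\le T}\big\Vert \rme^{\rmi\nu_t^\Lambda}U_{\cZ_0^\Lambda}U_{\cV_t^\Lambda}^*U_t^\Lambda\rme^{-\rmi tH_\rho}\psi_0^\Lambda - U_{\cZ_0^\Lambda}U_{\cV_t^\Lambda}^*\psi^{\rm BF}_{\Lambda,t}\big\Vert\le C\rho^{(3(1+2\epsilon)\alpha-1)/2}\to0 .
\]
The hypotheses needed there are the following. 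The condensate conditions are imposed directly. \Cref{con:Localized-Tracer} is a different matter, since we are only given the convergence $U_{\cZ_0^\Lambda}U_0^\Lambda\psi_0^\Lambda\to\psi_0^\infty$, not uniform bounds on $h_{\rm oc}^{2n}$. I would handle this by density. Pick $\tilde\psi_0^\infty$ in a core with $\langle\tilde\psi_0^\infty,h_{\rm oc}^{2n}\tilde\psi_0^\infty\rangle<\infty$, close to $\psi_0^\infty$. Set $\tilde\psi^\Lambda:=U_{\cZ_0^\Lambda}^*\tilde\psi_0^\infty$, projected onto $\cF(\{\varphi_0^\Lambda\}^\perp)$ and truncated to $\cF^{\le N}$; by the invariance assumptions in \cref{con:Infinite-volume-Bog-maps} this is the image of an $N$-body state. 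The $\Lambda$-independent $\tilde\psi_0^\infty$ then satisfies \cref{con:Localized-Tracer}$_{2n,\epsilon}$. Bound \eqref{eq:Z0-bounds-1} follows from the strong convergence \eqref{eq:Z-Lambda-convergence-to-T-infty} by uniform boundedness, together with boundedness of $\cZ_0^\infty\widecheck{\cT}^{-1}$. All the propagators are unitary, so
\[
\Vert U_{\cZ_0^\Lambda}U_0^\Lambda\psi_0^\Lambda-\tilde\psi_0^\infty\Vert \le \Vert \psi_0^\infty-\tilde\psi_0^\infty\Vert+o(1),
\]
and this error carries over uniformly in $t$. Hence it suffices to treat the initial data $\tilde\psi_0^\infty$.

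\textbf{Step 2 (comparison of the transformed dynamics).} By \eqref{eq:Transformation-prop-BF-Ham}, $\widetilde\psi_t:=U_{\cZ_0^\Lambda}U_{\cV_t^\Lambda}^*\psi^{\rm BF}_{\Lambda,t}$ solves the equation with $\widetilde H^{\rm BF}_{\Lambda,\cZ_0^\Lambda}(t)$ and has initial data $\tilde\psi_0^\infty$. Duhamel then gives
\[
\Vert\widetilde\psi_t-U_t^\infty\tilde\psi_0^\infty\Vert\le\int_0^t\big\Vert A(F^\Lambda_{s,x}-F^\infty_{s,x})\,U_s^\infty\tilde\psi_0^\infty\big\Vert\,ds ,
\]
with
\[
F^\Lambda_{s,x}=\cZ_0^\Lambda(\cV_s^\Lambda)^{-1}(Q_s^\Lambda W_x\varphi_s^\Lambda\oplus JQ_s^\Lambda W_x\varphi_s^\Lambda),\qquad F^\infty_{s,x}=\cZ_0^\infty(\cV_s^\infty)^{-1}(W_x\oplus JW_x).
\]
There is a bookkeeping point about phases. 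The factor $\rme^{\rmi t(\int V-\mu^\infty)}$ of the condensate combines with the $\rme^{\rmi\cdot S}$ in $\cV^\Lambda_t$, and the mismatch between $\mu^\infty$ and $\mu_t^\Lambda$ is absorbed by $\nu_t^\Lambda$. Using $\Vert A(F)\phi\Vert\le\Vert F\Vert\,\Vert(\cN+1)^{1/2}\phi\Vert$ pointwise in $x$, the integrand is bounded by
\[
\Big(\int \Vert F^\Lambda_{s,x}-F^\infty_{s,x}\Vert^2\,\Vert(\cN+1)^{1/2}(U_s^\infty\tilde\psi_0^\infty)(x)\Vert^2\,dx\Big)^{1/2}.
\]
The infinite-volume dynamics satisfies the harmonic-oscillator bounds of \cref{lem:Conditons-LNS15-Theorem-8}, with the same proof via \cite[Theorem 8]{LNS15}. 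Hence $U_s^\infty\tilde\psi_0^\infty$ has bounded moments of $x^2$ and $\cN$, and it suffices to show that
\[
(1+x^2)^{-1/2}\sup_{|s|\le T}\Vert F^\Lambda_{s,x}-F^\infty_{s,x}\Vert\to0
\]
uniformly in $x$; in the end only $\sup_{|x|\le R}$ plus a tail is needed.

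\textbf{Step 3 (main obstacle).} The hardest step is the convergence of $F^\Lambda_{s,x}$. I would first translate: write $W_x=T_xW$ and use translation invariance of $\cV_s^\infty$. The commutator condition \eqref{eq:Z-Lambda-commutator-translations-convergence} then reduces the problem to $x$-independent convergence, at the cost of the factor $(1+x^2)^{1/2}$. I would split the difference as
\[
\cZ_0^\Lambda\big[(\cV_s^\Lambda)^{-1}-(\cV_s^\infty)^{-1}\big]F+(\cZ_0^\Lambda-\cZ_0^\infty)(\cV_s^\infty)^{-1}F+\cZ_0^\Lambda(\cV_s^\Lambda)^{-1}\big[(Q_s^\Lambda W_x\varphi_s^\Lambda)-W_x\big].
\]
The growth $\Vert\cZ_0^\Lambda\Vert\le C\Lambda^\epsilon$ is too weak to absorb these errors directly. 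Instead I would write $\cZ_0^\Lambda=\big(\cZ_0^\Lambda\widecheck\cT^{-1}(\tau\oplus J\tau J^*)\big)(\tau\oplus J\tau J^*)^{-1}\widecheck\cT$, whose first factor is uniformly bounded. This requires the errors in the weighted norm $\Vert(\tau^{-1}\oplus\cdot)\widecheck\cT\cdot\Vert$, i.e.\ with an infrared weight $|p|^{-1}$ in Fourier space. For the $\cV$-difference, \cref{lem:Approx-V-by-V-infty-Conv-Rates} gives rates, and $\widehat{W_x}$ is smooth. For the condensate term, local convergence of $\varphi_s^\Lambda$ on the support of $W_x$ (\cref{cor:Strong-convergence-infinite-volume}) together with the projection $Q_s^\Lambda$ (an $O(\Lambda^{-1/2})$ correction) gives smallness. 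The potential infrared problem, $\tau^{-1}\sim|p|^{-1/2}$ against the vanishing of the Fourier transform of $Q$-projected smooth data near $0$, is where I expect the real difficulty. I would handle it by interpolating between the operator bound $\Lambda^\epsilon$ and the polynomial convergence rates in $\Lambda$, which the restriction $\epsilon<1/4\min\{\dots\}$ appears designed to make possible. Finally, \eqref{eq:Z-Lambda-convergence-to-T-infty} handles the middle term, and dominated convergence in $s$ finishes the proof.
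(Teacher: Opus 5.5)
\paragraph{Steps 1 and 2.} These are workable, with two fixes in Step 1.
\begin{itemize}
\item Apply $\Gamma(Q_0^\Lambda)$ \emph{before} $U_{\cZ_0^\Lambda}^*$, not after. This is what the invariance hypothesis in \cref{con:Infinite-volume-Bog-maps} is for, and it makes $U_{\cZ_0^\Lambda}U_0^\Lambda\tilde\psi_0^\Lambda\to\tilde\psi_0^\infty$ follow from $\Gamma(Q_0^\Lambda)\to1$.
\item Give $\tilde\psi_0^\infty$ more than $h_{\rm oc}^{2n}$ moments. You need enough that $\mathbbm{1}^{>N}\le(\cN/N)^m$ beats the loss $U_{\cZ_0^\Lambda}(\cN+1)^mU_{\cZ_0^\Lambda}^*\le C\Lambda^{(1+2\epsilon)m}(\cN+1)^m$.
\end{itemize}
In Step 2 you let the generator difference act on $U_s^\infty\tilde\psi_0^\infty$. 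That is viable, but it obliges you to redo \cref{lem:Conditons-LNS15-Theorem-8} for $H^\infty(t)$. The paper instead lets $\widetilde H^{\rm BF}_{\Lambda,\cZ_0^\Lambda}(t)-H^\infty(t)$ act on $\phi_t^\Lambda$ and reuses the $\Lambda$-uniform bound \eqref{eq:Harmonic-Osc-Estimate}.

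\paragraph{The gap: Step 3.} This step carries the proof, and the convergence $\sup_x(1+x^2)^{-1/2}\Vert\cZ_0^\Lambda(\cdots)-\cZ_0^\infty(\cdots)\Vert\to0$ is never established. You assert that $\Vert\cZ_0^\Lambda\Vert\le C\Lambda^\epsilon$ cannot be absorbed directly, and reroute through $(\tau\oplus J\tau J^*)^{-1}\widecheck{\cT}$. That route needs convergence of $(\cV_s^\Lambda)^{-1}$ and of $Q_s^\Lambda W_x\varphi_s^\Lambda$ in an infrared-weighted norm. Nothing in your proposal, or in the paper, controls these non-translation-invariant finite-volume objects against such a weight. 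You leave the infrared problem open, and the final appeal to ``interpolation'' contradicts your own earlier claim without supplying an argument. In addition, the translation reduction only applies to the $\cZ_0^\Lambda-\cZ_0^\infty$ term, because $\cV_s^\Lambda$, $Q_s^\Lambda$ and $\varphi_s^\Lambda$ do not commute with $T_x$.

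\paragraph{What closes it: direct absorption works.}
\begin{itemize}
\item Drop $\cZ$ and compare $(\cV_s^\Lambda)^{-1}(Q_s^\Lambda W_x\varphi_s^\Lambda\oplus JQ_s^\Lambda W_x\varphi_s^\Lambda)$ with $(\cV_s^\infty)^{-1}(W_x\oplus JW_x)$. This covers your terms (a) and (c) together.
\item The condensate phase is compensated by the factor $\rme^{\rmi s\nu S}$, with $\nu=\int V-\mu^\infty$, in the limit of $\cV_s^\Lambda$. It is not absorbed by the global phase $\nu_s^\Lambda$.
\item \Cref{lem:Approx-V-by-V-infty-Conv-Rates} gives a rate $\Lambda^{-\gamma}$ in the graph norm of $h\oplus JhJ^*$, where $h=1-\Delta+y^2$ and $\gamma$ is as in \eqref{eq:def-gamma}.
\item Heinz--Kato interpolation down to $h^{1/4}$, together with $(1+x^2)^{-1/4}\Vert h^{1/4}W_x\Vert_2\le C\Vert h^{1/4}W\Vert_2$, gives the $x$-uniform rate $C\Lambda^{-\gamma/4}$ of \cref{lem:Approx-V-by-V-infty-on-specific-state}.
\item The smallness condition in \cref{rem:FullDynamics-BF-estimate} is exactly $\epsilon<\gamma/4$; this is where the factor $1/4$ comes from. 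So these terms are $O(\Lambda^{\epsilon-\gamma/4})\to0$.
\end{itemize}
Only the middle term uses the structure of $\cZ_0^\infty$. By \eqref{eq:Diagonal-V-infty} and the reality of $W$, it equals $(\widehat{\cZ_0^\Lambda}-\widehat{\cZ_0^\infty})\cT^{-1}(\rme^{\rmi s\omega}\tau\widehat{W_x}\oplus J\cC R\rme^{-\rmi s\omega}\tau\widehat{W_x})$. Writing $\widehat{W_x}=\rme^{\rmi px}\widehat W$, you finish with \eqref{eq:Z-Lambda-commutator-translations-convergence} and \eqref{eq:Z-Lambda-convergence-to-T-infty}. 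Pointwise convergence in $s$, together with the $\Lambda$-uniform majorant of \cref{lem:uniform-boundedness-F-Lambda}, then closes the Grönwall argument by dominated convergence.
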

\begin{proof}[Proof of \cref{thm:Infinite-Volume-Dynamics}] 
We first restrict ourselves to the case of regular initial data for the infinite-volume dynamics. Let $\tilde\psi_0^\infty\in Q((-\Delta_x+x^2+(\cN+1)^2)^{2n(1+4\alpha(1+2\epsilon))})\cap Q( \rmd\Gamma(-\Delta))$, where we choose $n\in \BN_+$ as in \cref{thm:FullDynamics-BF-estimate}. Let $\epsilon>0$ such that it satisfies \eqref{eq:Condition-epsilon-small-enough} and we have \eqref{eq:Z-Lambda-epsilon-bounds}.
For the initial data of the microscopic dynamics we choose a specific sequence 
\begin{align}
	\tilde\psi_0^\Lambda := (U_0^\Lambda)^* \mathbbm{1}^{\leq N} U_{\cZ_0^\Lambda}^* \Gamma(Q_0^\Lambda) \tilde\psi_0^\infty\,,
\end{align}
where $\mathbbm{1}^{\leq N}$ is the projection to the truncated Fock space with at most $N=\lceil\rho\Lambda\rceil$ particles. 
Note that, due to the conditions imposed on $U_{\cZ_0^\Lambda}^*$, we have $U_{\cZ_0^\Lambda}^* \Gamma(Q_0^\Lambda) \tilde\psi_0^\infty \in Q( \rmd\Gamma(-\Delta+1))\cap L^2(\BR^3,\cF(\{\varphi_0^\Lambda\}^\perp))$. Thus the action of $(U_0^\Lambda)^*$ in the definition of $ \tilde\psi_0^\Lambda $ is well defined, and $\tilde\psi_0^\Lambda \in L^2(\BR^3,H^1_{\rm s}(\BR^{3N}))$.

Applying \cref{thm:FullDynamics-BF-estimate}, we obtain
	\begin{align}
	\left\Vert   U_{\cZ_0^\Lambda} U_{\cV_t^\Lambda}^* \left(\rme^{\rmi \int^t_0(\rho^{1/2}\int W- \mu_s^\Lambda )ds} U_t^\Lambda \rme^{-\rmi t H_\rho}\tilde\psi_0^\Lambda  -\psi_{\Lambda,t}^{\mathrm{BF}} \right)\right\Vert \leq C \rho^{\frac{3(1+2\epsilon)\alpha -1}{2}}\,,  \label{eq:psi-ex-psi-BF-approx}
	\end{align}
	where $\psi_{\Lambda,t}^{\mathrm{BF}}$ is the solution of the effective Bogoliubov-Fröhlich dynamics with initial datum $\psi_{\Lambda,0}^{\mathrm{BF}}=U_0^\Lambda\tilde\psi_0^\Lambda $ (see \cref{rem:Bog-BF-Ham}). The conditions of \cref{thm:FullDynamics-BF-estimate} apply, since $U_0^\Lambda\tilde\psi_0^\Lambda $ satisfies \cref{con:Localized-Tracer}$_{2n}$, especially \eqref{eq:(N+1)-x-Nabla-initial-condition-cor}, as we now show. To obtain the bound \eqref{eq:Z0-bounds-1} on $\cZ_0^\Lambda$ required in \cref{con:Localized-Tracer} we use the convergence \eqref{eq:Z-Lambda-convergence-to-T-infty} together with the uniform boundedness principle.
	The tracer localization \eqref{eq:(N+1)-x-Nabla-initial-condition-cor} follows from combining the bound $\mathbbm{1}^{> N}\leq (\cN/N)^m$ for all $m\in\BN_+$, the estimate  $U_{\cZ_0^\Lambda}(\cN +1 )^m U_{\cZ_0^\Lambda}^*\leq C \Lambda^m(\cN +1 )^m$ (see \cite[Lemma~4.4]{BPPS22}), the strong convergence $\Gamma(Q_0^\Lambda)\to 1$ (see \cref{cor:Strong-convergence-infinite-volume}), and the fact that $\tilde\psi_0^\infty\in Q((-\Delta_x+x^2+(\cN+1)^2)^{2n(1+4\alpha(1+2\epsilon))})$. Moreover, in the same way, one shows that
	\begin{align}
		 U_{\cZ_0^\Lambda}U_0^\Lambda\tilde\psi_0^\Lambda  \to \tilde\psi_0^\infty\,,\quad \rho^\alpha=\Lambda \to \infty\,. \label{eq:Initial-data-limit}
	\end{align}
	 It remains to prove that
	\begin{align}
	\left\Vert U_{\cZ_0^\Lambda} U_{\cV_t^\Lambda}^* \psi_{\Lambda,t}^{\mathrm{BF}} - U^\infty_t \tilde\psi_0^\infty  \right\Vert_{L^2(\BR^3,\cF)} \to 0 \,, \label{eq:psi-BF-psi-infty-approx}
\end{align}
which would yield the claim for initial data $ \tilde\psi_0^\infty $ and $\tilde\psi_0^\Lambda $.
To simplify the notation we set $\phi_{t}^\Lambda= U_{\cZ_0^\Lambda} U_{\cV_t^\Lambda}^* \psi_{\Lambda,t}^{\mathrm{BF}}$. Then
	\begin{align}
		&\pm\partial_t\left\Vert \phi_{t}^\Lambda - \psi_t^\infty  \right\Vert_{L^2(\BR^3,\cF)}^2 = \pm 2\mathrm{Im} \pscal{\phi_{t}^\Lambda - \psi_t^\infty  , \big(\widetilde{H}^{\mathrm{BF}}_{\Lambda,\cZ_0^\Lambda}(t) -H^\infty(t)\big)  \phi_{t}^\Lambda  }  \nn \\
		&\leq  2   \Vert \phi_{t}^\Lambda - \psi_t^\infty  \Vert   \Big(\sup_x   (1+x^2)^{-1/2} \left\Vert  \cZ_0^\Lambda  F^{\Lambda}_x(t) - \cZ_0^\infty  F^\infty_x(t)\right\Vert_{L^2\oplus JL^2} \Big) \nn \\
		&\quad  \times \Vert (1+x^2)^{1/2}(\cN+1)^{1/2}  \phi_{t}^\Lambda \Vert  \,, \label{eq:Groenwall-phi-psi-infty}
	\end{align}
	where we introduced $F^\Lambda_x(t)= (\cV_t^\Lambda)^{-1} ( Q_t^\Lambda W_x\varphi_t^\Lambda \oplus JQ_t^\Lambda W_x\varphi_t^\Lambda) $ and $F^\infty_x(t)=   (\cV_t^\infty)^{-1} ( W_x\oplus JW_x)  $, which we localized with $(1+x^2)^{-1/2}$ around the origin. The term $\Vert  (1+x^2)^{1/2}(\cN+1)^{1/2}  \phi_t^\Lambda \Vert$ is controlled by our tracer localization argument, discussed in \cref{sec:Tracer-localization}. In fact with \eqref{eq:Harmonic-Osc-Estimate}  and \cref{con:Localized-Tracer}$_{\psi_{\Lambda,0}^{\mathrm{BF}},M=2}$, we get
	\begin{align}
		\Vert  (1+x^2)^{1/2}(\cN+1)^{1/2}  \phi_{t}^\Lambda \Vert \leq C \Vert h_{\rm oc}  U_{\cZ_0^\Lambda}\psi_{\Lambda,0}^{\mathrm{BF}} \Vert \leq C \,. \label{eq:Bound-initial-harmonic-osc}
	\end{align}
	We now show that
	\begin{align}
		\sup_{x\in\BR^3} (1+x^2)^{-1/2}\Vert \cZ_0^\Lambda   F^{\Lambda}_x(t) - \cZ_0^\infty  F^\infty_x(t)\Vert_{L^2\oplus JL^2} \to 0 \,. \label{eq:F-Lambda-to-F-infty}
	\end{align}
	Let $x\in\BR^3$ fixed. Then we use that $\Vert \cZ_0^\Lambda\Vert_{\rm op}\leq C \Lambda^\epsilon$ to get
	\begin{align}
		&(1+x^2)^{-1/2}\big\Vert \cZ_0^\Lambda   F^{\Lambda}_x(t) - \cZ_0^\infty  F^\infty_x(t)  \big\Vert \nn \\
		&\leq  C \Lambda^\epsilon   (1+x^2)^{-1/2}\big\Vert   F^\Lambda_x(t)  - F^\infty_x(t)   \big\Vert \label{eq:aux-eq-108} \\
		&\quad+ (1+x^2)^{-1/2}\big\Vert  \big( \cZ_0^\Lambda - \cZ_0^\infty \big) (\cV_t^\infty)^{-1} ( W_x\oplus JW_x)  \big\Vert \,.  \label{eq:aux-eq-109}
	\end{align}
	We conclude the convergence \eqref{eq:aux-eq-108}$\to 0$ as $\Lambda\to \infty$, from \cref{lem:Approx-V-by-V-infty-on-specific-state}. The convergence of \eqref{eq:aux-eq-109} is proven below. 
	We use the diagonalization of $\widehat{\cV_t^\infty}$ in \eqref{eq:Diagonal-V-infty} to obtain
	\begin{align}
		&\eqref{eq:aux-eq-109}=(1+x^2)^{-1/2} \Big\Vert  \big( \widehat{\cZ_0^\Lambda} - \widehat{\cZ_0^\infty}  \big)  \cT^{-1}  \begin{pmatrix}
	\rme^{\rmi t \omega} & 0 \\
	0 & J\rme^{\rmi t \omega}J^*
\end{pmatrix}	   \cT\big( \widehat{W_x}\oplus J\widehat{W_x}\big)  \Big\Vert \nn \\
	&= (1+x^2)^{-1/2}\Big\Vert  \big( \widehat{\cZ_0^\Lambda} - \widehat{\cZ_0^\infty} \big)\cT^{-1} \big( \rme^{\rmi t \omega} \tau \widehat{W_x}\oplus J \cC R e^{-\rmi t \omega} \tau \widehat{W_x} \big)  \Big\Vert \,. \label{eq:aux-eq-110}
	\end{align}
	Now we use $\widehat{W_x}(p)= e^{\rmi px}\widehat{W}(p)$ to get, since  $e^{\rmi px}$ is unitary,
	\begin{align}
		\eqref{eq:aux-eq-110} &\leq (1+x^2)^{-1/2}  \Big\Vert  \left[\widehat{\cZ_0^\Lambda} - \widehat{\cZ_0^\infty}, \rme^{\rmi px}\oplus J\rme^{\rmi px}J^*\right]\cT^{-1} \nn \\
		&\quad \times \big( \rme^{\rmi t \omega} \tau \widehat{W_x}\oplus J \cC R \rme^{-\rmi t \omega} \tau \widehat{W_x} \big)  \Big\Vert \label{eq:aux-eq-111} \\
		&\quad + \Big\Vert  \big( \widehat{\cZ_0^\Lambda} - \widehat{\cZ_0^\infty} \big) \cT^{-1} \big( \rme^{\rmi t \omega} \tau \widehat{W}\oplus J \cC R e^{-\rmi t \omega} \tau \widehat{W} \big)  \Big\Vert \,, \label{eq:aux-eq-112}
	\end{align}
	where \eqref{eq:aux-eq-111} and \eqref{eq:aux-eq-112} converge to zero due to convergence of the commutator of $\cZ_0^\Lambda - \cZ_0^\infty$ with translations and the convergence of $\cZ_0^\Lambda \to \cZ_0^\infty$ (see \cref{con:Infinite-volume-Bog-maps}).
	And thus we have the full convergence \eqref{eq:F-Lambda-to-F-infty}.

	We conclude using Grönwall from \eqref{eq:Groenwall-phi-psi-infty} and \eqref{eq:Bound-initial-harmonic-osc} that 
	\begin{align}
		&\left\Vert \phi_{t}^\Lambda - \psi_t^\infty  \right\Vert \leq \left\Vert \phi_{0}^\Lambda - \tilde\psi_0^\infty  \right\Vert  \nn \\
		&\quad +C  \int_0^t  \sup_{x\in\BR^3} (1+x^2)^{-1/2} \Vert \cZ_0^\Lambda F^{\Lambda}_x(s) -  \cZ_0^\infty F^\infty_x(s)\Vert_{L^2\oplus JL^2}  \,ds  \,. \label{eq:Phi-psi-infty-approx-2}
	\end{align}
	Then \eqref{eq:psi-BF-psi-infty-approx} follows from \eqref{eq:F-Lambda-to-F-infty}, dominated convergence, and $\phi_{0}^\Lambda \to \tilde\psi^\infty_0$ in \eqref{eq:Initial-data-limit}. Indeed, for $s\in[0,t]$ we have the integrable majorant $\Vert F^{\Lambda}_x(s) - F^\infty_x(s)\Vert_{L^2\oplus JL^2}\leq C_t\in L^1((0,t))$ with $C_t$ independent of $s$ and $\Lambda$, which follows directly from \cref{lem:uniform-boundedness-F-Lambda} and the fact that $\widehat{\cZ^\infty_0} \cT^{-1}$ is bounded. 
	
	Thus as a direct consequence of \eqref{eq:psi-ex-psi-BF-approx} and \eqref{eq:psi-BF-psi-infty-approx}, we get
	 \begin{align}
 	\Vert  \rme^{\rmi \int^t_0(\rho^{1/2}\int W- \mu_s^\Lambda )ds}  U_{\cZ_0^\Lambda} U_{\cV_t^\Lambda}^*U_t^\Lambda  \rme^{-\rmi t H_\rho}\tilde\psi_0^\Lambda  -U_t^\infty \tilde\psi_0^\infty \Vert \to 0 \,, \label{eq:Psi-psi-infty-tilde}
 \end{align}
 which proves the claim for the specific initial data $ \tilde\psi_0^\infty $ and $\tilde\psi_0^\Lambda $.
	
		Now, let $\tilde\epsilon>0$ and consider general initial data $\psi_0^\infty$ and $\psi_0^\Lambda $ with $U_{\cZ_0^\Lambda} U_0^\Lambda\psi_0^\Lambda  \to \psi_0^\infty$. By density, there exists a $\tilde\psi_0^\infty\in Q((-\Delta_x+x^2+(\cN+1)^2)^{2n(1+4\alpha(1+2\epsilon))})\cap Q( \rmd\Gamma(-\Delta))$ such that
	$\Vert 	\psi_0^\infty - \tilde\psi_0^\infty\Vert<\tilde\epsilon$. As before, we set   $\tilde\psi_0^\Lambda =(U_0^\Lambda)^* \mathbbm{1}^{\leq N} U_{\cZ_0^\Lambda}^* \Gamma(Q_0^\Lambda) \tilde\psi_0^\infty$. 
	 Then from
	 \begin{align*}
	 	& \Vert \tilde\psi_0^\Lambda  - \psi_0^\Lambda   \Vert \leq \Vert \Gamma(Q_0^\Lambda) \tilde\psi_0^\infty - U_{\cZ_0^\Lambda} U_0^\Lambda \psi_0^\Lambda  \Vert   + \Vert \mathbbm{1}^{>N} U_{\cZ_0^\Lambda}^* \Gamma(Q_0^\Lambda) \tilde\psi_0^\infty \Vert  \\
	 	& \leq \Vert \Gamma(Q_0^\Lambda) \tilde\psi_0^\infty -  \tilde\psi_0^\infty \Vert + \Vert\psi_0^\infty - U_{\cZ_0^\Lambda} U_0^\Lambda \psi_0^\Lambda  \Vert  + \Vert \tilde\psi_0^\infty -\psi_0^\infty \Vert \\
	 	 &\quad + \Vert \mathbbm{1}^{>N} U_{\cZ_0^\Lambda}^* \Gamma(Q_0^\Lambda) \tilde\psi_0^\infty \Vert \,,
	 \end{align*}
	the strong convergence $\Gamma(Q_0^\Lambda)\to 1$,  $U_{\cZ_0^\Lambda} U_0^\Lambda \psi_0^\Lambda  \to \psi_0^\infty$, and $\mathbbm{1}^{>N}\leq \cN/N$, it follows that $ \Vert \tilde\psi_0^\Lambda  - \psi_0^\Lambda  \Vert<2\tilde\epsilon$ for large enough $\Lambda$. Combining this with \eqref{eq:Psi-psi-infty-tilde}, we obtain for large enough $\Lambda$
\begin{align*} 
	\Vert  \rme^{\rmi \int^t_0(\rho^{1/2}\int W- \mu_s )ds}  U_{\cZ_0^\Lambda} U_{\cV_t^\Lambda}^*U_t^\Lambda  \rme^{-\rmi t H_\rho}\psi_0^\Lambda  -U_t^\infty \psi_0^\infty \Vert 	&\leq \epsilon + \Vert \tilde\psi_0^\Lambda  - \psi_0^\Lambda  \Vert \\
	&\quad +\Vert \tilde\psi_0^\infty -\psi_0^\infty \Vert \leq 4\tilde\epsilon  \,,
\end{align*}
which proves the claim.
\end{proof}

An  explicit example of $\cZ_0^\Lambda$ satisfying \cref{con:Infinite-volume-Bog-maps} with  $\cZ_0^\infty=  \widecheck{\cT} $ is constructed in \cref{sec:Construction-of-Z0}, under the additional assumption that the initial condensate $\varphi_0^\Lambda$ is real-valued.

In the case that $\cZ_0^\infty=  \widecheck{\cT} $  we observe, using \eqref{eq:Diagonal-V-infty}, that the infinite-volume Bogoliubov-Fröhlich Hamiltonian defined in \eqref{eq:H-BF-Infty-Volume} is equal to
\begin{align*}
	H_\infty^{\rm BF} &= \rme^{-\rmi t  \rmd\Gamma(\cF^{-1}\omega \cF)}   H^\infty  \rme^{\rmi t  \rmd\Gamma(\cF^{-1}\omega \cF)}  + \left(\partial_t \rme^{-\rmi t  \rmd\Gamma(\cF^{-1}\omega \cF)}   \right) \rme^{\rmi t  \rmd\Gamma(\cF^{-1}\omega \cF)} \,.
\end{align*}
Consequently, \cref{thm:Infty-Volume-Dynamics} follows directly from \cref{thm:Infinite-Volume-Dynamics}.

\paragraph{Acknowledgements.}
This work was funded by the Deutsche Forschungsgemeinschaft (DFG, German Research Foundation) – SFB/CRC TRR 352 – Project-ID 470903074; and Project-ID 258734477 – SFB/CRC 1173,  and by the Agence Nationale de la Recherche (ANR-23-CE40-0025) together with the Deutsche Forschungsgemeinschaft (Project-ID 529797785).
S. Spruck gratefully acknowledges financial support from the German Academic Scholarship Foundation, and the EIPHI Graduate School (contract ANR-17-EURE-0002). J.L. received financial support from the Bougogne-Franche Comté region
through the project SQC.


\appendix
\appendixpage

\section{The Localization Function} \label{sec:LocalizationFunction}
For $0<s<1/3$ and $n\in\BN_0$ we define the localization function
$$\Theta_\Lambda(x)=\frac{1}{1+|\Lambda^{-s} x|^{2n}}\in C^{\infty}(\mathbb{R}^3, \mathbb{R})\,,$$ which is used to localize the condensate around the origin. It satisfies the following characterizing properties: For all $ \beta\in\mathbb{N}_0^3$ there exists a constant $C>0$ such that for all volumes $\Lambda\geq1$
\begin{align}
	|\Theta_\Lambda|\leq C \, , \quad 
 	|\partial^\beta\Theta_\Lambda|\leq C \Lambda^{-s|\beta|}|\Theta_\Lambda| \,. \label{eq:Loc-Fct-Porp-2}
\end{align}

In order for $\tl$ to localize the condensate $\varphi_0$ we have to choose $s<1/3$ such that its scale is smaller than the $\cO(\Lambda^{1/3})$ scale of the condensate.

To facilitate our estimates, we introduce some notation from \cite{DFPP16}, which is particularly useful when applying Young's inequality.
\begin{notation} \label{def:Young-norms}
 For $1\leq
  p_1,\ldots, p_M\leq \infty$, $M\in\mathbb{N}_+$, we define the norms
  \begin{align*}
    \Vert f \Vert_{p_1\wedge \ldots\wedge p_M} & :=
    \inf_{f=f_{p_1}+\ldots+f_{p_M}} \Big(
    \Vert f_{p_1} \Vert_{p_1}+\ldots
    +\Vert f_{p_M} \Vert_{p_M}\Big) \, . 
  \end{align*}
and
\[
    \Vert f \Vert_{p_1, \ldots,
    p_M} := \Vert f \Vert_{p_1}+\ldots+\Vert f \Vert_{p_M} \,.
\]
\end{notation}

All relevant estimates for the localization function, obtained via Taylor expansion, are collected in the following lemma.

\begin{lemma}[Localized Estimates] \label{Lemma:CalculationTheta}
	Let $n\in\BN_0$ and $s>0$. Let $\tl(x)=\frac{1}{1+(\Lambda^{-s}x)^{2n}}$ be the localization function.
	\begin{itemize}
		 \item[a)](Functions flat around the origin) Let $k\in\BN_0$ with $k\leq 2n$. For all volumes $\Lambda\geq1$, let $f:=f_\Lambda\in C^{k}(\BR^3,\BC)$. If we have flatness around the origin of $f$, namely there exists a constant $C>0$ such that for all volumes $\Lambda\geq1$ and $ 0\leq |\beta|\leq k-1$ we have  $|\partial^{\beta}f(0)|\leq \Vert \partial^{\beta}f\Vert_\infty C \Lambda^{-(k-|\beta|)(1/3-s)} $ then 
		 \begin{itemize}
		 		\item[i)]  There exists a constant $C>0$ such that for all volumes $\Lambda\geq1$
		 \begin{align}
		 	\Vert \tl f\Vert_\infty \leq{}& C  \Lambda^{-k(1/3-s)} \Big( \sum_{|\beta|\leq k} \Vert \partial^{\beta}f \Vert_\infty\Lambda^{|\beta|/3}\Big) \,. \label{eq:Theta-flat-around-origin-estimate-infty}
 		 \end{align}
 		 	\item[ii)] If in addition $n\geq1$ and $k\leq 2(n-1)$ then there exists a constant $C>0$ such that for all volumes $\Lambda\geq1$ 
 		 \begin{align}
 		 	\Vert \tl f\Vert_2 \leq{}& C \Lambda^{-k(1/3-s) +3s/2}\Big( \sum_{|\beta|\leq k} \Vert \partial^{\beta}f \Vert_\infty\Lambda^{|\beta|/3}\Big) \,. \label{eq:Theta-flat-around-origin-estimate-L2}
 		 \end{align}
		\end{itemize}		 
 		 
		 \item[b)](Convolution)
		 Then for all orders $ m\in \BN_+$, there exists a constant $ C>0$ such that for all volumes $\Lambda\geq1$ and $ f,W\in L^p(\mathbb{R}^3)$, for $1\leq p\leq \infty$, we have
		 \begin{align}
	\Vert \Theta_\Lambda  W\ast f\Vert_2
	  \leq{}& 
	  C \sum\limits_{0\leq |\beta | \leq m-1} 
	  \Vert \,|y|^{|\beta|} W \Vert_{1,2}  \Vert \tl f \Vert_{1 \wedge 2} 
	  \nn \\
	  &+
	  C  \Lambda^{-sm} 
	  \Vert \, |y|^m W\Vert_{1,2} \Vert f \Vert_{1 \wedge 2} \,, \label{eq:MoveThetaInsideConvolutionL2}
	  \\
	  \Vert \Theta_\Lambda  W\ast f\Vert_\infty
	  \leq{}& 
	  C \sum\limits_{0\leq |\beta | \leq m-1} 
	  \Vert \,|y|^{|\beta|} W \Vert_{1,2,\infty}  \Vert \tl f \Vert_{1 \wedge 2\wedge \infty} 
	  \nn \\
	  &+
	  C  \Lambda^{-sm} 
	  \Vert \, |y|^m W\Vert_{1,2,\infty} \Vert f \Vert_{1 \wedge 2\wedge \infty} \,. \label{eq:MoveThetaInsideConvolutionLInfinity}
\end{align}
	\end{itemize}
\end{lemma}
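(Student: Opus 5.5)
For part a) I plan to Taylor expand $f$ around the origin and absorb the powers of $|x|$ using $\tl$. For part b) I plan to Taylor expand $\tl$ around the integration variable and then apply Young's inequality term by term.

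\emph{Part a)i).} For $|\beta|\le k-1$ set $c_\beta=\partial^\beta f(0)/\beta!$, and write
\[
f(x)=\sum_{|\beta|\le k-1} c_\beta x^\beta+R_k(x),\qquad |R_k(x)|\le C|x|^k\sum_{|\beta|=k}\Vert\partial^\beta f\Vert_\infty ,
\]
using the integral form of the remainder. The basic observation is that for $0\le j\le 2n$,
\[
\tl(x)|x|^j=\Lambda^{sj}\frac{|y|^j}{1+|y|^{2n}}\Big|_{y=\Lambda^{-s}x}\le C\Lambda^{sj}.
\]
Hence $\Vert\tl f\Vert_\infty\le C\sum_{|\beta|<k}|\partial^\beta f(0)|\Lambda^{s|\beta|}+C\Lambda^{sk}\sum_{|\beta|=k}\Vert\partial^\beta f\Vert_\infty$, which uses $k\le 2n$. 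Inserting the flatness hypothesis, the $\beta$-term becomes
\[
\Vert\partial^\beta f\Vert_\infty\Lambda^{-(k-|\beta|)(1/3-s)+s|\beta|}=\Lambda^{-k(1/3-s)}\Vert\partial^\beta f\Vert_\infty\Lambda^{|\beta|/3}.
\]
The remainder term is $\Lambda^{sk}=\Lambda^{-k(1/3-s)}\Lambda^{k/3}$, which also has the claimed form. This gives \eqref{eq:Theta-flat-around-origin-estimate-infty}.

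\emph{Part a)ii).} The same decomposition works, but now I need the $L^2$ norm instead of the supremum. Substituting $y=\Lambda^{-s}x$ gives
\[
\Vert\tl|x|^j\Vert_2=\Lambda^{sj+3s/2}\,\big\Vert |y|^j(1+|y|^{2n})^{-1}\big\Vert_2 .
\]
This norm is finite iff $2j-4n<-3$, which is guaranteed for all $j\le k\le 2(n-1)$. The exponent bookkeeping is identical to part i), with the extra factor $\Lambda^{3s/2}$. This is the only place where the restriction $k\le 2(n-1)$ enters.

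\emph{Part b).} I write $\tl(x)(W\ast f)(x)=\int W(x-y)\tl(x)f(y)\,dy$ and Taylor expand $\tl(x)=\tl(y+(x-y))$ around $y$ to order $m-1$:
\[
\tl(x)=\sum_{|\beta|\le m-1}\frac{\partial^\beta\tl(y)}{\beta!}(x-y)^\beta+R_m(x,y),\qquad |R_m|\le C\Lambda^{-sm}|x-y|^m,
\]
where the remainder bound comes from \eqref{eq:Loc-Fct-Porp-2}. Also by \eqref{eq:Loc-Fct-Porp-2}, $|\partial^\beta\tl(y)|\le C\Lambda^{-s|\beta|}\tl(y)\le C\tl(y)$. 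Therefore
\[
|\tl\, W\ast f|\le C\sum_{|\beta|<m}\big(|\cdot|^{|\beta|}|W|\big)\ast(\tl|f|)+C\Lambda^{-sm}\big(|\cdot|^m|W|\big)\ast|f|.
\]
For the $L^2$ estimate \eqref{eq:MoveThetaInsideConvolutionL2}, I split $\tl f=g_1+g_2$ with $g_1\in L^1$, $g_2\in L^2$, and use Young's inequality in the forms $L^2\ast L^1\subset L^2$ and $L^1\ast L^2\subset L^2$. Taking the infimum over such splittings produces $\Vert\cdot\Vert_{1,2}$ on the weighted $W$ and $\Vert\cdot\Vert_{1\wedge2}$ on $\tl f$. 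For the $L^\infty$ estimate \eqref{eq:MoveThetaInsideConvolutionLInfinity}, I split into three pieces and use $L^\infty\ast L^1$, $L^2\ast L^2$ and $L^1\ast L^\infty$, all of which land in $L^\infty$. The remainder term is handled identically, with $|f|$ in place of $\tl|f|$.

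The main obstacle is the derivative bound $|\partial^\beta\tl|\le C\Lambda^{-s|\beta|}\tl$ with a constant uniform in $\Lambda$; this is \eqref{eq:Loc-Fct-Porp-2}. It follows from the scaling $\tl(x)=\Theta_1(\Lambda^{-s}x)$ together with the fact that $\partial^\beta\big(1/(1+|y|^{2n})\big)$ decays at least like $(1+|y|^{2n})^{-1}$. That decay holds because $|y|^{2n}$ is a polynomial in $y$, so each derivative only lowers the degree of the numerator relative to the denominator.
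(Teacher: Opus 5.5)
Your proposal is correct and follows essentially the same route as the paper. In part a) you Taylor expand $f$ at the origin, bound $\tl(x)|x|^{j}\le \Lambda^{sj}$, and use the substitution $y=\Lambda^{-s}x$ for the $L^2$ case; in part b) you Taylor expand $\tl$ inside the convolution via \eqref{eq:Loc-Fct-Porp-2} and then apply Young's inequality. Your integral form of the remainder is slightly cleaner than the paper's Lagrange form for complex-valued $f$, and your scaling argument supplies a proof of \eqref{eq:Loc-Fct-Porp-2}, which the paper only asserts.
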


\begin{rem} 
Part b) of the lemma is required to close Grönwall-type estimates of the form 
$\partial_t\Vert \Theta_\Lambda f_t\Vert \leq \alpha(t) +  \beta(t)\Vert \Theta_\Lambda f_t\Vert$, where the localization function $\tl$ has to be moved inside convolution terms of the form $W\ast f$ in order to complete the argument. In this context, the terms $C\Lambda^{-sm} 
	  \Vert \, |y|^m W\Vert_{1,2} \Vert f \Vert_{1 \wedge 2}$ and $ C\Lambda^{-sm} 
	  \Vert \, |y|^m W\Vert_{1,2,\infty} \Vert f \Vert_{1 \wedge 2\wedge \infty}$ appear as error terms.
	  
	 Note that the flatness around the origin of the initial condensate, \cref{con:Condensate-flat-around-origin}, is chosen in such a way that the condensate satisfies the requirements of \cref{Lemma:CalculationTheta}.
\end{rem}

\begin{proof}[Proof of \cref{Lemma:CalculationTheta}]
\textit{Part a):}
\\ The case $k=0$ is trivial. Now let $k\geq1$. By the Taylor expansion formula  of $f$ up to order $k-1$ around 0 we have $\forall x\in \BR^3$ $\exists \xi:=\xi_x\in [0,1]$ such that
	\begin{align}
		| (\tl f)(x)| ={}&  \tl (x) \bigg\vert   \sum_{0\leq |\beta|\leq k-1} \frac{\partial^{\beta}f(0)}{\beta!} x^{\beta} + \sum_{|\beta|=k} \frac{\partial^{\beta} f(\xi x)}{\beta!} x^{\beta}  \bigg\vert \nn \\
		\leq{}&  C  \sum_{|\beta|\leq k}   \Vert \partial^{\beta}f \Vert_\infty \Lambda^{-(k-|\beta|)(1/3-s)}  \frac{|x|^{|\beta|}}{1+ (\Lambda^{-s}|x|)^{2n}} \,, \label{eq:Theta-f-sup-estimate}
	\end{align} 
	where we used $|\partial^{\beta}f(0)|\leq \Vert \partial^{\beta}f\Vert_\infty \Lambda^{-(k-|\beta|)(1/3-s)} $, $\forall 0\leq |\beta|\leq k-1$, and the definition $\tl(x)= \frac{1}{1+ (\Lambda^{-s}|x|)^{2n}}\geq0$.
	\\Using $|\beta|\leq k\leq 2n$ and distinguishing between the cases $|x|<\Lambda^s$ and $|x|\geq\Lambda^s$ we can show for all $x\in\BR^3$
	\begin{align}
		\frac{|x|^{|\beta|}}{1+ (\Lambda^{-s}|x|)^{2n}} \leq  \Lambda^{s|\beta|} \,. \label{eq:Localized-postion-in-Taylor-1}
	\end{align}
	The estimate \eqref{eq:Theta-flat-around-origin-estimate-infty} follows from \eqref{eq:Theta-f-sup-estimate} and \eqref{eq:Localized-postion-in-Taylor-1}.
	\\Now we want to prove the $\Vert \tl f\Vert_2$ estimate. Let us assume $k\leq 2(n -1)$ and thus $n\geq1$. We use \eqref{eq:Theta-f-sup-estimate} and the substitution $y=\Lambda^{-s} x$ to obtain 
	\begin{align*}
		\Vert \tl f\Vert_2& = \left( \int_{\BR^3} |\tl f(x)|^2 dx \right)^{1/2} \\
		&\leq C\Lambda^{-k(1/3-s)}  \sum_{|\beta|\leq k} \Vert \partial^{\beta}f \Vert_\infty\Lambda^{|\beta|/3}       \left( \int_{\BR^3} \bigg( \frac{|y|^{|\beta|}}{1+ |y|^{2n}} \right)^2 \Lambda^{3s} dy \bigg)^{1/2} \\
		&\leq  C\Lambda^{-k(1/3-s) +3/2s} \sum_{|\beta|\leq k} \Vert \partial^{\beta}f \Vert_\infty\Lambda^{|\beta|/3}\,.
	\end{align*}
\textit{Part b):}
\\We want to change the argument of $\tl$ from $x$ to $x-y$ such that we can move it inside the convolution. To this end, for $x,y\in \BR^3$, we expand $\tl$ around the point $x-y$. By Taylor’s theorem, there exists a $ \theta\in[0,1]$ such that
\begin{align}
	 \vert \Theta_\Lambda(x)  \vert 
	 \leq{}& \sum\limits_{0\leq |\beta | \leq m-1} C_m | \Theta_1\left(\Lambda^{-s}(x-y)\right)| \cdot   \Lambda^{-s |\beta|}\frac{|y|^{|\beta|}}{\beta!} \nonumber\\
	  &+ \sum\limits_{|\beta|=m} C_{m} \cdot \Lambda^{-s|\beta|}\frac{|y|^{|\beta|}}{\beta!} \,,  \label{eq:DiffThetaXThetaX-Y}
\end{align}
where we have used \eqref{eq:Loc-Fct-Porp-2}.
Now with \eqref{eq:DiffThetaXThetaX-Y}
\begin{align*}
	\vert  \Theta_\Lambda (x) W\ast f(x) \vert
	\leq{}& C_m \sum\limits_{0\leq |\beta | \leq m-1} \int dy    |y|^{|\beta|}\cdot |W(y)  \Theta_\Lambda\left(x-y\right) f(x-y)| \\
	  &+ C_m \Lambda^{-sm}  \int dy |y|^{m}\cdot |W(y)  f(x-y)|
\end{align*}
with Young's inequality for $p\geq1$
\begin{align*}
	\Vert \Theta_\Lambda  W\ast f\Vert_p
	  \leq{}& 
	  C_m \sum\limits_{0\leq |\beta | \leq m-1} 
	  \Vert \,|y|^{|\beta|} W\Vert_{1,p}  \Vert \tl f \Vert_{1 \wedge p} 
	  \\
	  &+
	  C_m  \Lambda^{-sm} 
	  \Vert \, |y|^m W\Vert_{1,p} \Vert f \Vert_{1 \wedge p} \,.
\end{align*}
This proves \eqref{eq:MoveThetaInsideConvolutionL2}. The proof of \eqref{eq:MoveThetaInsideConvolutionLInfinity} is analogous.
\end{proof}


\section{Control of the Condensate Dynamics} \label{sec:Control-of-the-condensate}
To control and approximate the dynamics of the condensate $\varphi_{t}^\Lambda$, we define an auxiliary function $\tilde{\varphi}_t^\Lambda$ as in \cite{DFPP16}. 
\begin{definition}[Auxiliary Function] \label{def:Auxiliary-Function}
	For all volumes $\Lambda\geq1$, let $\varphi_0^\Lambda\in H^\infty(\BR^3)$. We call
	\begin{align}
		\tilde{\varphi}_t^\Lambda = \rme^{-\rmi \left( t V \ast |\varphi_0^\Lambda|^2 - \int_0^t \mu_s^\Lambda ds\right) } \varphi_0^\Lambda 
	\end{align}
	the auxiliary function, where $\mu_t^\Lambda$ is defined in \eqref{eq:Mu}.
\end{definition}
\begin{rem}\label{rem:Auxiliary-Function}

The kinetic term in the time evolution of $\varphi_t^\Lambda$ in \eqref{eq:Hartree} can be omitted, as it is subleading compared to the interaction term. Specifically,
\[ \Vert -\Delta\varphi_0^\Lambda\Vert_2\leq C \Lambda^{1/2-2/3} \,, \quad \text{while}\quad \Vert V\ast |\varphi_0^\Lambda|^2\varphi_0^\Lambda\Vert_2\leq C \Lambda^{1/2}\,. \]
Due to the appropriate scaling of $\varphi_0^\Lambda$, these properties also hold for the time evolved state (see \cref{cor:derivative-varphi-L2-norm-estimate} below). This justifies the approximation of $\varphi_t^\Lambda$ through $\tphi_t^\Lambda$. In particular, $\tphi_t^\Lambda$ and its derivatives satisfy analogous estimates as $\varphi_0^\Lambda$ (see e.g. \cref{lem:EstimatesTPhi} below).
\end{rem}

The following preliminaries \eqref{eq:moduls-phi-phi0-L2-bound} and \eqref{eq:phi-L2infty-bound} form the foundation of our control of the condensate. 
They are adapted from bounds in \cite{DFPP16}, and remain valid under slightly weaker conditions than considered there. Nevertheless, their proofs are analogous to those presented in \cite{DFPP16} (see \cite{Spr25} for an adaptation to the setting considered here).

\begin{prop}\label{prop:phi-propagation} 
 For all volumes $\Lambda\geq1$, let $\varphi_t^\Lambda$ be the solution of the Hartree equation \eqref{eq:Hartree}. Assume that its initial data $\varphi_0^\Lambda$ varies on the scale $\Lambda^{1/3}$, that is, \cref{con:Initial condition}. Then for all times $T\geq0$, there exists a constant $C>0$ such that for all volumes $ \Lambda\geq1$ and $-T\leq t \leq T$
  \begin{align}
   \label{eq:moduls-phi-phi0-L2-bound}
    \left\Vert \varphi_{t}^\Lambda -
       \tilde{\varphi}_{t}^\Lambda
    \right\Vert_{2} & \leq C \Lambda^{-\frac{1}{6}}\,,
    \\
    \Vert\varphi_t^\Lambda\Vert_\infty &\leq C\, . \label{eq:phi-L2infty-bound}
  \end{align}
\end{prop}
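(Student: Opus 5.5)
The plan is a continuity (bootstrap) argument that proves the two bounds simultaneously. Let $C_0:=\sup_\Lambda\Vert\varphi_0^\Lambda\Vert_\infty$, which is finite since $\Vert\varphi_0^\Lambda\Vert_\infty\le(2\pi)^{-3/2}\Vert\widehat{\varphi_0^\Lambda}\Vert_1\le C$ by \cref{con:Initial condition}. Let $T^*\le T$ be maximal such that $\Vert\varphi_t^\Lambda\Vert_\infty\le 2C_0$ on $[-T^*,T^*]$; this is possible because $\varphi^\Lambda\in C^1(\BR,H^\infty)$. On this interval I would prove three things:
\begin{itemize}
\item[(A)] regularity bounds $\Vert\Delta\varphi_t^\Lambda\Vert_2\le C\Lambda^{1/2-2/3}$, $\Vert\widehat{\nabla\varphi_t^\Lambda}\Vert_1\le C\Lambda^{-1/3}$ and $\Vert\widehat{\Delta\varphi_t^\Lambda}\Vert_1\le C\Lambda^{-2/3}$;
\item[(B)] the $L^2$ estimate \eqref{eq:moduls-phi-phi0-L2-bound};
\item[(C)] the improved bound $\Vert\varphi_t^\Lambda\Vert_\infty\le\tfrac32 C_0$ for $\Lambda$ large.
\end{itemize}
Step (C) closes the bootstrap, giving $T^*=T$; finitely many small $\Lambda$ are handled by a crude local-in-time bound.

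For (B), note that $\tphi_t^\Lambda$ solves $\rmi\partial_t\tphi_t^\Lambda=(V\ast|\varphi_0^\Lambda|^2-\mu_t^\Lambda)\tphi_t^\Lambda$ and $|\tphi_t^\Lambda|=|\varphi_0^\Lambda|$. Hence
\begin{align*}
\rmi\partial_t(\varphi_t^\Lambda-\tphi_t^\Lambda)=-\tfrac12\Delta\varphi_t^\Lambda+\big(V\ast(|\varphi_t^\Lambda|^2-|\tphi_t^\Lambda|^2)\big)\varphi_t^\Lambda+(V\ast|\varphi_0^\Lambda|^2-\mu_t^\Lambda)(\varphi_t^\Lambda-\tphi_t^\Lambda).
\end{align*}
The last term is a real multiplication operator, so it drops out of $\frac{d}{dt}\Vert\varphi_t^\Lambda-\tphi_t^\Lambda\Vert_2^2$. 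For the middle term I use $\big||\varphi|^2-|\tphi|^2\big|\le(|\varphi|+|\tphi|)|\varphi-\tphi|$ together with Young's inequality:
\[
\big\Vert\big(V\ast(|\varphi_t|^2-|\tphi_t|^2)\big)\varphi_t\big\Vert_2\le\Vert\varphi_t\Vert_\infty\Vert V\Vert_1\big(\Vert\varphi_t\Vert_\infty+\Vert\varphi_0\Vert_\infty\big)\Vert\varphi_t-\tphi_t\Vert_2\le C\Vert\varphi_t-\tphi_t\Vert_2 .
\]
This uses the bootstrap bound on $\Vert\varphi_t\Vert_\infty$. Grönwall then gives $\Vert\varphi_t-\tphi_t\Vert_2\le C\int_0^t\Vert\Delta\varphi_s\Vert_2\,ds\le C\Lambda^{1/2-2/3}=C\Lambda^{-1/6}$, using (A). This computation also explains the rate $\Lambda^{-1/6}$.

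For (A) I would run energy-type estimates on $\partial^\beta\varphi_t$ with $|\beta|\le2$, both in $L^2$ and in the Wiener norm $\Vert\widehat{\,\cdot\,}\Vert_1$. The free flow preserves both norms, so only the nonlinearity contributes. Leibniz's rule places the derivatives either on $\varphi_t$, which yields a linear Grönwall term with coefficient $\Vert V\Vert_1\Vert\varphi_t\Vert_\infty^2\le C$, or on the potential. In the latter case
\[
\partial^\gamma(V\ast|\varphi|^2)=V\ast\partial^\gamma|\varphi|^2,
\]
and $\partial^\gamma|\varphi|^2$ is a sum of products $\partial^{\gamma_1}\varphi\,\partial^{\gamma_2}\varphi^*$. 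In the Wiener algebra, each derivative factor already carries its own power $\Lambda^{-|\gamma_i|/3}$, and the undifferentiated factors are bounded by $\Vert\widehat{\varphi_t}\Vert_1$. The system then closes by Grönwall, with initial values from \cref{con:Initial condition}. For the $L^2$ version I would put the factor carrying the most derivatives in $L^2$ and the remaining factors in $L^\infty$, bounded via their Wiener norms.

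The main obstacle is that all of these estimates also need $\Vert\widehat{\varphi_t}\Vert_1\le C$ itself, and the naive Duhamel bound $\frac{d}{dt}\Vert\widehat{\varphi_t}\Vert_1\le C\Vert\widehat{\varphi_t}\Vert_1^3$ only holds for short times. I must not use $\Vert\widehat{|\varphi|^2}\Vert_\infty\le\Vert\varphi\Vert_2^2=\Lambda$, since that loses a factor $\Lambda$. Instead I would compare $\varphi_t$ with $\tphi_t$ in the Wiener norm. First, $\Vert\widehat{\tphi_t}\Vert_1$ stays bounded, because $\rme^{-\rmi tV\ast|\varphi_0|^2}$ has a Wiener norm bounded by the exponential of $t\,\Vert\widehat{V}\Vert_1\Vert\widehat{|\varphi_0|^2}\Vert_\infty$. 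That quantity is still of size $\Lambda$, so I would remove the constant part: $V\ast|\varphi_0|^2-|\varphi_0(0)|^2\!\int V$ has Wiener norm $O(1)$, using $\Vert\widehat{\nabla\varphi_0}\Vert_1\le C\Lambda^{-1/3}$ and the moment bounds on $V$. The constant part is only a phase. Second, the Wiener norm of $\varphi_t-\tphi_t$ is driven by $\Vert\widehat{\Delta\varphi_t}\Vert_1\le C\Lambda^{-2/3}$, so it is $o(1)$. This yields the improved $L^\infty$ bound (C). If the phase-subtraction step fails, I would fall back on the more careful argument of \cite{DFPP16}, as adapted in \cite{Spr25}.
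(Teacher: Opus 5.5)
\textbf{Gap.} Your argument breaks at step (C), which is the only place where the bootstrap is closed. Step (C) relies on the third bound in (A), $\Vert\widehat{\Delta\varphi_t^\Lambda}\Vert_1\le C\Lambda^{-2/3}$. However, \cref{con:Initial condition} gives Wiener-norm control only of $\varphi_0^\Lambda$ and $\nabla\varphi_0^\Lambda$. Second derivatives are controlled only in $L^2$, and in three dimensions $\Vert |p|^2\widehat{\varphi_0^\Lambda}\Vert_2$ does not control $\Vert |p|^2\widehat{\varphi_0^\Lambda}\Vert_1$.

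Concretely, add $g_\delta=\Lambda^{-1/3}\delta\,\chi(\cdot/\delta)$, with $\chi$ a fixed Schwartz function and $0<\delta\le1$, to an admissible $\varphi_0^\Lambda$ and renormalise. Every bound in \cref{con:Initial condition} survives up to constants, since
$\Vert\widehat{g_\delta}\Vert_1\sim\Lambda^{-1/3}\delta$, $\Vert\widehat{\nabla g_\delta}\Vert_1\sim\Lambda^{-1/3}$, $\Vert\nabla g_\delta\Vert_2\sim\Lambda^{-1/3}\delta^{3/2}$ and $\Vert\Delta g_\delta\Vert_2\sim\Lambda^{-1/3}\delta^{1/2}$.
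But $\Vert\widehat{\Delta g_\delta}\Vert_1\sim\Lambda^{-1/3}\delta^{-1}$ is unbounded as $\delta\to0$. So the bound fails already at $t=0$, and the Wiener-norm Gr\"onwall estimate for $\varphi_t^\Lambda-\tphi_t^\Lambda$ has no small source term.

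There is also a structural problem. The Wiener estimates in (A), and the nonlinear term in the Wiener comparison, need $\Vert\widehat{\varphi_t^\Lambda}\Vert_1\le C$. Your bootstrap hypothesis $\Vert\varphi_t^\Lambda\Vert_\infty\le 2C_0$ does not supply this, so as written (A) and (C) presuppose each other.

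\textbf{Repair.} The bootstrap can be fixed without any Wiener control of second derivatives.
\begin{itemize}
\item[1.] Bootstrap on $\Vert\widehat{\varphi_t^\Lambda}\Vert_1\le 2K$, where $K:=\sup_{\Lambda\ge1,\,|t|\le T}\Vert\widehat{\tphi_t^\Lambda}\Vert_1$. This $K$ is finite without any phase subtraction, because $\Vert\widehat{V\ast|\varphi_0^\Lambda|^2}\Vert_1\le C\Vert\widehat V\Vert_\infty\Vert\widehat{\varphi_0^\Lambda}\Vert_1^2\le C\Vert V\Vert_1\Vert\widehat{\varphi_0^\Lambda}\Vert_1^2$. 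Hence multiplication by $\rme^{-\rmi tV\ast|\varphi_0^\Lambda|^2}$ raises the Wiener norm by at most a factor $\rme^{C|t|}$. Your spurious factor $\Lambda$ came from pairing $\Vert\widehat V\Vert_1$ with $\Vert\widehat{|\varphi_0^\Lambda|^2}\Vert_\infty$.
\item[2.] Under this hypothesis, your remaining bounds in (A) go through as you describe: $\Vert\widehat{\nabla\varphi_t^\Lambda}\Vert_1\le C\Lambda^{-1/3}$, $\Vert\nabla\varphi_t^\Lambda\Vert_2\le C\Lambda^{1/6}$ and $\Vert\Delta\varphi_t^\Lambda\Vert_2\le C\Lambda^{-1/6}$. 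Step (B) also goes through unchanged.
\item[3.] The explicit formula for $\tphi_t^\Lambda$ together with \cref{con:Initial condition} gives $\Vert\Delta\tphi_t^\Lambda\Vert_2\le C\Lambda^{-1/6}$.
\item[4.] Close with the three-dimensional interpolation $\Vert\widehat f\Vert_1\le C\Vert f\Vert_2^{1/4}\Vert\Delta f\Vert_2^{3/4}$, obtained by splitting at $|p|=R$ and optimising in $R$. Apply it to $f=\varphi_t^\Lambda-\tphi_t^\Lambda$. This gives $\Vert\widehat{\varphi_t^\Lambda}\Vert_1\le K+C\Lambda^{-1/6}<2K$ for $\Lambda\ge\Lambda_0$, and then $\Vert\varphi_t^\Lambda\Vert_\infty\le(2\pi)^{-3/2}\Vert\widehat{\varphi_t^\Lambda}\Vert_1$.
\item[5.] The range $1\le\Lambda\le\Lambda_0$ is a compact parameter interval, not finitely many values. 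It is handled by the at most exponential growth of the $H^2$-norm at fixed mass.
\end{itemize}
The paper gives no proof of this proposition and only refers to \cite{DFPP16} and \cite{Spr25}. Your proposal therefore has to stand as a self-contained argument, and with the change above it does.
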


\begin{cor}\label{cor:phi-propagation} 
Assume the conditions of \cref{prop:phi-propagation}. Then for all times $T\geq0$, there exists a constant $C>0$ such that for all volumes $ \Lambda\geq1$ and $-T\leq t \leq T$
\begin{align}
	\Vert |\varphi_t^\Lambda|^2-|\varphi_0^\Lambda|^2\Vert_{1\wedge2}\leq C \Lambda^{-\frac{1}{6}}\,.
\end{align}
\end{cor}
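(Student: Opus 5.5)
We need to prove $\Vert |\varphi_t^\Lambda|^2-|\varphi_0^\Lambda|^2\Vert_{1\wedge2}\leq C\Lambda^{-1/6}$ from \cref{prop:phi-propagation}. The key observation is that the auxiliary function $\tphi_t^\Lambda$ of \cref{def:Auxiliary-Function} differs from $\varphi_0^\Lambda$ only by a phase factor, since $V$ is real and $\mu_s^\Lambda\in\BR$. Hence $|\tphi_t^\Lambda|^2=|\varphi_0^\Lambda|^2$ pointwise, and it suffices to bound $\Vert |\varphi_t^\Lambda|^2-|\tphi_t^\Lambda|^2\Vert_{1\wedge2}$.

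Next we factor the difference of moduli. Pointwise,
\[
|\varphi_t^\Lambda|^2-|\tphi_t^\Lambda|^2=\mathrm{Re}\big((\varphi_t^\Lambda-\tphi_t^\Lambda)^*(\varphi_t^\Lambda+\tphi_t^\Lambda)\big),
\]
which can be split as $|\varphi_t^\Lambda-\tphi_t^\Lambda|^2+2\,\mathrm{Re}\big((\varphi_t^\Lambda-\tphi_t^\Lambda)^*\tphi_t^\Lambda\big)$.

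We estimate the two pieces separately. Since $\Vert\cdot\Vert_{1\wedge2}\le\Vert\cdot\Vert_1$ and $\Vert\cdot\Vert_{1\wedge 2}\le\Vert\cdot\Vert_2$, we may place each piece in whichever space is convenient.
\begin{itemize}
\item The quadratic piece goes in $L^1$: $\Vert |\varphi_t^\Lambda-\tphi_t^\Lambda|^2\Vert_1=\Vert\varphi_t^\Lambda-\tphi_t^\Lambda\Vert_2^2\le C\Lambda^{-1/3}\le C\Lambda^{-1/6}$, using \eqref{eq:moduls-phi-phi0-L2-bound}.
\item The linear piece goes in $L^2$: $\Vert(\varphi_t^\Lambda-\tphi_t^\Lambda)^*\tphi_t^\Lambda\Vert_2\le\Vert\tphi_t^\Lambda\Vert_\infty\Vert\varphi_t^\Lambda-\tphi_t^\Lambda\Vert_2\le C\Lambda^{-1/6}$.
\end{itemize}
For the second bullet we need $\Vert\tphi_t^\Lambda\Vert_\infty=\Vert\varphi_0^\Lambda\Vert_\infty\le(2\pi)^{-3/2}\Vert\widehat{\varphi_0^\Lambda}\Vert_1\le C$, which follows from \cref{con:Initial condition}. (Alternatively, one could use the split $\varphi_t^\Lambda+\tphi_t^\Lambda$ together with \eqref{eq:phi-L2infty-bound}.)

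Summing the two bounds gives the claim, uniformly for $t\in[-T,T]$. There is no real obstacle. The only point requiring care is choosing the splitting so that the cross term avoids the large $L^2$ norm $\Vert\varphi_0^\Lambda\Vert_2=\Lambda^{1/2}$: that is why the linear piece must be controlled by an $L^\infty$ bound times the small $L^2$ difference, rather than by Cauchy–Schwarz in $L^1$.
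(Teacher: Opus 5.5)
Your proof is correct and is the paper's argument: it uses $|\tphi_t^\Lambda|=|\varphi_0^\Lambda|$ and the identity $|\varphi_t^\Lambda|^2-|\varphi_0^\Lambda|^2=|\varphi_t^\Lambda-\tphi_t^\Lambda|^2+2\mathrm{Re}(\tphi_t^\Lambda)^*(\varphi_t^\Lambda-\tphi_t^\Lambda)$, with the quadratic piece placed in $L^1$ and the linear piece in $L^2$ via \eqref{eq:moduls-phi-phi0-L2-bound}. Your bound $\Vert\tphi_t^\Lambda\Vert_\infty=\Vert\varphi_0^\Lambda\Vert_\infty\le C\Vert\widehat{\varphi_0^\Lambda}\Vert_1$ supplies the one detail the paper leaves implicit.
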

\begin{proof}[Proof of \cref{cor:phi-propagation}]
The claim follows directly from \cref{prop:phi-propagation}, the definition of the norm $\Vert\,.\,\Vert_{1\wedge2}$ (see \cref{def:Young-norms}) and the identity $|\varphi_t^\Lambda|^2-|\varphi_0^\Lambda|^2 = |\varphi_t^\Lambda -\tphi_t^\Lambda|^2 + 2\mathrm{Re}(\tphi_t^\Lambda)^* (\varphi_t^\Lambda - \tphi_t^\Lambda)$, where we use $|\varphi_0^\Lambda|=|\tphi_t^\Lambda|$. 
\end{proof}

\subsection{Propagation Estimates} 

In this section, we use the auxiliary function $\tphi_t^\Lambda$ to approximate the condensate $\varphi_t^\Lambda$. We begin by estimating $\tphi_t^\Lambda$.
\begin{lemma}[Estimates of $\tphi_t^\Lambda$] \label{lem:EstimatesTPhi}
Let $\beta\in \mathbb{N}_0^3$. For all $\Lambda\geq1$ let $\varphi_0^\Lambda$ be the condensate, which varies on the scale $\Lambda^{1/3}$, meaning it satisfies \cref{con:Initial condition derivative condensate}$_{|\beta|}$. Then there exists a constant $ C>0$ such that for all volumes $ \Lambda\geq1$ and times $t\in\BR$
\begin{align}
		\Vert  \partial^\beta\tphi_t^\Lambda  \Vert_{\infty} \leq{}& C \Lambda^{- \frac{|\beta|}{3}}  \,, \label{eq:DTPhiInfty} \\
		\Vert  \partial^\beta\tphi_t^\Lambda  \Vert_{2}\leq{}& C \Lambda^{- \frac{|\beta|}{3} +\frac{1}{2}} \,. \label{eq:DTPhiL2}
	\end{align}
\end{lemma}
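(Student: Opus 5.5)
The plan is a direct computation: $\tphi_t^\Lambda$ is $\varphi_0^\Lambda$ times an explicit unimodular phase, so every derivative can be expanded by the Leibniz rule and Faà di Bruno's formula and each factor estimated. Write $g^\Lambda := V\ast|\varphi_0^\Lambda|^2$ and $\theta_t := t\, g^\Lambda - \int_0^t\mu_s^\Lambda\,ds$. The constant $\int_0^t \mu_s^\Lambda ds$ does not depend on $y$, so $\partial^\beta \tphi_t^\Lambda = \partial^\beta\big(\rme^{-\rmi t g^\Lambda}\varphi_0^\Lambda\big)$ up to a unimodular factor. By Leibniz,
\begin{align*}
\partial^\beta \tphi_t^\Lambda = \sum_{\gamma\leq\beta}\binom{\beta}{\gamma}\,\partial^{\gamma}\big(\rme^{-\rmi t g^\Lambda}\big)\,\partial^{\beta-\gamma}\varphi_0^\Lambda .
\end{align*}
Faà di Bruno writes $\partial^\gamma \rme^{-\rmi t g^\Lambda}$ as $\rme^{-\rmi t g^\Lambda}$ times a finite sum of products $(-\rmi t)^j\prod_{l=1}^j \partial^{\gamma_l} g^\Lambda$. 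Here $\gamma_1+\dots+\gamma_j=\gamma$ and each $|\gamma_l|\geq1$.

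The first step is to bound the derivatives of $g^\Lambda$. Move all derivatives onto $|\varphi_0^\Lambda|^2$, so that $\partial^{\gamma_l}g^\Lambda = V\ast \partial^{\gamma_l}|\varphi_0^\Lambda|^2$. Young's inequality with $\Vert V\Vert_1\leq C$ (\cref{Assumption:Initial-datum-and-potential}) then gives $\Vert\partial^{\gamma_l}g^\Lambda\Vert_\infty\leq C\Vert\partial^{\gamma_l}|\varphi_0^\Lambda|^2\Vert_\infty$. Expanding $\partial^{\gamma_l}(\varphi_0^\Lambda(\varphi_0^\Lambda)^*)$ by Leibniz and using \cref{con:Initial condition derivative condensate}i)$_{|\beta|}$ bounds this by $C\Lambda^{-|\gamma_l|/3}$. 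Multiplying these bounds gives
\begin{align*}
\Big|\partial^\gamma \rme^{-\rmi t g^\Lambda}\Big| \leq C\sum_{j=1}^{|\gamma|}|t|^j\,\Lambda^{-|\gamma|/3},
\end{align*}
with the value $1$ when $\gamma=0$.

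The second step combines this with the bounds on $\varphi_0^\Lambda$ itself. For \eqref{eq:DTPhiInfty}, use $\Vert\partial^{\beta-\gamma}\varphi_0^\Lambda\Vert_\infty\leq C\Lambda^{-|\beta-\gamma|/3}$. For \eqref{eq:DTPhiL2}, put the phase derivatives in $L^\infty$ and use $\Vert\partial^{\beta-\gamma}\varphi_0^\Lambda\Vert_2\leq C\Lambda^{-|\beta-\gamma|/3+1/2}$ from \cref{con:Initial condition derivative condensate}ii)$_{|\beta|}$. In both cases the powers of $\Lambda$ add up to exactly $-|\beta|/3$, respectively $-|\beta|/3+1/2$. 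This is the structural point of the lemma: every derivative, whether it falls on $\varphi_0^\Lambda$ or on the phase, gains $\Lambda^{-1/3}$. The reason is that $g^\Lambda$ varies on the same scale $\Lambda^{1/3}$ as $\varphi_0^\Lambda$, since $V$ is integrable of order-one range.

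The main obstacle, and the only point where the argument above falls short of the statement as worded, is the time dependence. For $\gamma\neq0$ each chain-rule factor carries a power $t^j$ with $1\leq j\leq|\gamma|$. The computation therefore gives the constant $C(1+|t|)^{|\beta|}$, which is uniform in $\Lambda$ but not uniform over all $t\in\BR$. For $\beta=0$ the claim is uniform in $t$, since $|\tphi_t^\Lambda|=|\varphi_0^\Lambda|$. For $|\beta|\geq1$ a genuinely $t$-uniform bound would require cancellations in the phase $\rme^{-\rmi t g^\Lambda}$. I do not see these in general: for instance, $\nabla\tphi_t^\Lambda$ contains the term $-\rmi t(\nabla g^\Lambda)\tphi_t^\Lambda$, and its $L^2$ norm grows linearly in $t$ whenever $\nabla g^\Lambda\neq0$ on the support of $\varphi_0^\Lambda$. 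If that term is nonzero, uniformity over all of $\BR$ fails, and the statement should be read with a constant depending on a time horizon $T$. That is how it is used later, always on $[-T,T]$ together with \cref{prop:phi-propagation}. I would first try to see whether the paper's conventions (normalization of $\mu^\Lambda$, a symmetry of $\varphi_0^\Lambda$) remove this growth; failing that, I would prove the bounds with the explicit factor $(1+|t|)^{|\beta|}$, which suffices for every later application.
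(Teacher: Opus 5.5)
Your argument is correct and is the paper's proof. The paper simply invokes the explicit form $\tphi_t^\Lambda=\rme^{-\rmi(tV\ast|\varphi_0^\Lambda|^2-\int_0^t\mu_s^\Lambda ds)}\varphi_0^\Lambda$, the assumptions on $\varphi_0^\Lambda$, and the Leibniz rule, and every derivative of $V\ast|\varphi_0^\Lambda|^2$ gains $\Lambda^{-1/3}$ exactly as you show.

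Your caveat about time is also right, and no convention removes it. The $\mu_s^\Lambda$ term is constant in space, whereas $-\rmi t\,\nabla(V\ast|\varphi_0^\Lambda|^2)\,\tphi_t^\Lambda$ grows linearly in $t$ whenever $\nabla(V\ast|\varphi_0^\Lambda|^2)\,\varphi_0^\Lambda\not\equiv0$. So the lemma should be read with the constant $C(1+|t|)^{|\beta|}$, i.e.\ $C=C(T)$ for $t\in[-T,T]$. That is the only way it is used, e.g.\ in \cref{lem:FirstApproxThetaDPhi-TPhi} and \cref{prop:ThetaDPhi-TPhi}.
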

\begin{proof}[Proof of \cref{lem:EstimatesTPhi}]
The claim follows directly from the definition $\tilde{\varphi}_t^\Lambda = \rme^{-\rmi \left( t V \ast |\varphi_0^\Lambda|^2 - \int_0^t \mu_s^\Lambda ds\right) } \varphi_0^\Lambda$, our assumptions, and the Leibniz rule.
\end{proof}

The following \cref{lem:FirstApproxThetaDPhi-TPhi} extends \eqref{eq:moduls-phi-phi0-L2-bound} to derivatives of $\varphi_t^\Lambda -\tphi_t^\Lambda$, enabling us to transfer estimates from $\tphi_t^\Lambda$ to the condensate $\varphi_t^\Lambda$ and serves as an essential ingredient for the proof of \cref{prop:ThetaDPhi-TPhi} in the following section.

\begin{prop}[Approximation of $\varphi_t^\Lambda$ by $\tphi_t^\Lambda$]  \label{lem:FirstApproxThetaDPhi-TPhi}
 Let $\beta\in \mathbb{N}_0^3$.  For all volumes $\Lambda\geq1$, let $\varphi_t^\Lambda$ be the solution of the Hartree equation \eqref{eq:Hartree}. Assume that its initial data $\varphi_0^\Lambda$ varies on the scale $\Lambda^{1/3}$, that is, \cref{con:Initial condition} and \cref{con:Initial condition derivative condensate}$_{|\beta|+2}$.
 \\ Then for all times $T\geq0$, there exists a constant $C>0$ such that for all volumes $ \Lambda\geq1$ and $-T\leq t \leq T$
	\begin{align}
		\Vert \partial^\beta( \varphi_t^\Lambda -\tphi_t^\Lambda)  \Vert_{2} \leq C \Lambda^{-1/6 -|\beta|/3} \,.
	\end{align}
\end{prop}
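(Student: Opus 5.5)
We prove the estimate by a Grönwall argument for $\delta_t := \varphi_t^\Lambda - \tphi_t^\Lambda$ after applying $\partial^\beta$. The case $\beta=0$ is \eqref{eq:moduls-phi-phi0-L2-bound}. For general $\beta$ we proceed by induction on $|\beta|$ and assume the bound for all multi-indices of lower order.

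We first derive the equation for $\delta_t$. Differentiating the definition of $\tphi_t^\Lambda$ gives $\rmi\partial_t\tphi_t^\Lambda = (V\ast|\varphi_0^\Lambda|^2 - \mu_t^\Lambda)\tphi_t^\Lambda$, and $|\tphi_t^\Lambda|=|\varphi_0^\Lambda|$. Therefore
\begin{align*}
\rmi\partial_t \delta_t = \Big(-\tfrac{\Delta}{2} + V\ast|\varphi_t^\Lambda|^2 - \mu_t^\Lambda\Big)\delta_t - \tfrac{\Delta}{2}\tphi_t^\Lambda + V\ast\big(|\varphi_t^\Lambda|^2-|\varphi_0^\Lambda|^2\big)\,\tphi_t^\Lambda .
\end{align*}
Applying $\partial^\beta$, which commutes with $-\Delta/2$ and with convolution, the self-adjoint part $-\Delta/2 - \mu_t^\Lambda + V\ast|\varphi_t^\Lambda|^2$ drops out of $\frac{d}{dt}\Vert\partial^\beta\delta_t\Vert_2^2$. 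Only the following remain:
\begin{itemize}
\item[(a)] the commutator terms $\partial^{\gamma}(V\ast|\varphi_t^\Lambda|^2)\,\partial^{\beta-\gamma}\delta_t$ with $\gamma\neq0$;
\item[(b)] the source $\tfrac12\partial^\beta\Delta\tphi_t^\Lambda$;
\item[(c)] the source $\partial^\beta\big(V\ast(|\varphi_t^\Lambda|^2-|\varphi_0^\Lambda|^2)\tphi_t^\Lambda\big)$.
\end{itemize}

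We bound the three contributions in turn.

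For (b), \cref{lem:EstimatesTPhi} with $|\beta|+2$ derivatives (available from \cref{con:Initial condition derivative condensate}$_{|\beta|+2}$) gives $\Vert\partial^\beta\Delta\tphi_t^\Lambda\Vert_2\le C\Lambda^{1/2-2/3-|\beta|/3}=C\Lambda^{-1/6-|\beta|/3}$. This is exactly the target rate.

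For (c), we distribute the derivatives with Leibniz. When derivatives fall on $\tphi_t^\Lambda$, we use $\Vert\partial^{\gamma}\tphi_t^\Lambda\Vert_\infty\le C\Lambda^{-|\gamma|/3}$. When they fall on the convolution, we move them onto $|\varphi_t^\Lambda|^2-|\varphi_0^\Lambda|^2$ and use Young with $\Vert V\Vert_{1}$, bounding $V\ast(\cdot)$ in $L^2$ through $\Vert\cdot\Vert_{1\wedge2}$ as in \cref{cor:phi-propagation}. We then write $|\varphi_t^\Lambda|^2-|\varphi_0^\Lambda|^2 = |\delta_t|^2+2\mathrm{Re}(\tphi_t^\Lambda)^*\delta_t$. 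The linear piece gives, by Leibniz, terms $\Vert\partial^{\gamma'}\tphi_t^\Lambda\Vert_\infty\Vert\partial^{\gamma''}\delta_t\Vert_2$. Those with $|\gamma''|<|\beta|$ are $\le C\Lambda^{-1/6-|\beta|/3}$ by induction. The top-order one is $\le C\Vert\partial^\beta\delta_t\Vert_2$, which is fine for Grönwall.

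For (a), we use $\Vert\partial^\gamma V\ast|\varphi_t^\Lambda|^2\Vert_\infty$. Writing $|\varphi_t^\Lambda|^2 = |\tphi_t^\Lambda|^2 + (|\varphi_t^\Lambda|^2-|\varphi_0^\Lambda|^2)$ and putting derivatives on the smooth part gives $\le C\Lambda^{-|\gamma|/3}$. The difference part is controlled in the same way as in (c). Multiplying by the inductive bound $\Vert\partial^{\beta-\gamma}\delta_t\Vert_2\le C\Lambda^{-1/6-(|\beta|-|\gamma|)/3}$ again gives the target rate.

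The main obstacle is the quadratic term $\partial^{\gamma}|\delta_t|^2$, together with the $L^\infty$ control of derivatives of $|\varphi_t^\Lambda|^2-|\varphi_0^\Lambda|^2$. These require the $\delta_t$-derivatives in $L^\infty$ or $L^4$, not only in $L^2$. We would first put all derivatives on $V$, i.e.\ write $\partial^\gamma(V\ast g)=(\partial^\gamma V)\ast g$. However, Assumption~\ref{Assumption:Initial-datum-and-potential} only gives $\widehat V\in L^1$, not derivative bounds, so this does not produce the $\Lambda^{-|\gamma|/3}$ gain. The alternative is to split $\partial^\gamma(\bar\delta\delta)$ by Leibniz into products $\partial^{\gamma_1}\bar\delta\,\partial^{\gamma_2}\delta$ and estimate them in $L^1$ by Cauchy--Schwarz, using the inductive $L^2$ bounds. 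This produces $\Lambda^{-1/3-|\gamma|/3}$, which is better than needed. We then need $V\ast(\cdot)$ bounded from $L^1$ to $L^\infty$ via $\Vert V\Vert_\infty$, which follows from $\Vert\widehat V\Vert_1<\infty$. The only term not covered by the induction is the top order $\partial^\beta\delta_t$ in the quadratic piece. Its coefficient $\delta_t$ must be bounded in $L^\infty$. We would get this from $\Vert\varphi_t^\Lambda\Vert_\infty\le C$ (\eqref{eq:phi-L2infty-bound}) together with $\Vert\tphi_t^\Lambda\Vert_\infty\le C$, so the contribution is $\le C\Vert\partial^\beta\delta_t\Vert_2$ and can be absorbed.

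Collecting the bounds gives
\begin{align*}
\tfrac{d}{dt}\Vert\partial^\beta\delta_t\Vert_2\le C\Vert\partial^\beta\delta_t\Vert_2+C\Lambda^{-1/6-|\beta|/3}.
\end{align*}
Since $\delta_0=0$, Grönwall on $[-T,T]$ closes the induction.
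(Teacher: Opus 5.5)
Your proposal is correct and follows essentially the same route as the paper. Both argue by induction on $|\beta|$ with a Grönwall estimate for $\Vert \partial^\beta(\varphi_t^\Lambda-\tphi_t^\Lambda)\Vert_2^2$, in which the self-adjoint part drops out, and both control the commutator term, the source $\partial^\beta\Delta\tphi_t^\Lambda$ and the source $\partial^\beta\big(V\ast(|\varphi_t^\Lambda|^2-|\varphi_0^\Lambda|^2)\tphi_t^\Lambda\big)$ via \cref{lem:EstimatesTPhi} and the induction hypothesis. Your Leibniz/Cauchy--Schwarz treatment of the quadratic piece $|\varphi_t^\Lambda-\tphi_t^\Lambda|^2$ simply makes explicit the auxiliary bounds \eqref{eq:VConvDPhi-1ForDPhi-TPhi} and \eqref{eq:VConvDPhi-TPhiForDPhi-TPhi}, which the paper states without detail.
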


\begin{proof}[Proof of \cref{lem:FirstApproxThetaDPhi-TPhi}]
Let $T\geq0$ and $-T\leq t\leq T$.
We prove \cref{lem:FirstApproxThetaDPhi-TPhi} by induction on $|\beta|$, denoted as $n$.
\paragraph{Base Case: $|\beta| = 0$.}
For $|\beta| = 0$, the claim is the statement of \cref{prop:phi-propagation}, noting that we assume \cref{con:Initial condition}.
\paragraph{Induction Step: $|\beta| = n+1$.} 
 Now let $n\in\mathbb{N}_0$ be fixed. Assume that the assumptions of \cref{lem:FirstApproxThetaDPhi-TPhi} are fulfilled for n+1, namely \cref{con:Initial condition} and \cref{con:Initial condition derivative condensate}$_{(n+1)+2}$. Now, consider a multi-index  $\beta\in \mathbb{N}_0^3$ with $|\beta|= n+1$. 
\\As the induction hypothesis, suppose that \cref{lem:FirstApproxThetaDPhi-TPhi} holds for all multi-indices $|\tilde{\beta}|\leq n$. Thus
		$$\Vert \partial^{\tilde{\beta}}( \varphi_t^\Lambda-\tphi_t^\Lambda)  \Vert_{2} \leq C \Lambda^{-1/6 -|\tilde{\beta}|/3}\,.$$
We use a Grönwall estimate to get the desired bound:
 \begin{align}
 	&\partial_t \Vert \partial^{\beta} (\varphi_t^\Lambda -\tphi_t^\Lambda) \Vert_2^2 \nonumber \\
 	={}& 2\mathrm{Re} \Big\langle	\partial^{\beta} (\varphi_t^\Lambda -\tphi_t^\Lambda) \, , \,	 
 	\partial^{\beta} \Big( (-\rmi ) \Big[ -\frac{1}{2}\Delta + V\ast |\varphi_t^\Lambda|^2-\mu_t^\Lambda\Big] (\varphi_t^\Lambda\pm\tphi_t^\Lambda)
 	 \nn \\
 	& \qquad\qquad\qquad\qquad \ \ + \rmi \left(V\ast |\varphi_0^\Lambda|^2-\mu_t^\Lambda\right) \tphi_t^\Lambda \Big)  \Big\rangle
 	 \nonumber\\
 	={}& 2\mathrm{Re} \left\langle	\partial^{\beta} (\varphi_t^\Lambda -\tphi_t^\Lambda) \, , \,	 (-\rmi )\left[  \partial^{\beta} \,, \, V\ast |\varphi_t^\Lambda|^2-\mu_t^\Lambda \right] (\varphi_t^\Lambda -\tphi_t^\Lambda) \right\rangle \label{eq:ScalarProdKommV,Phi} \\
 	&+ 2\mathrm{Re} \Big\langle	\partial^{\beta} (\varphi_t^\Lambda -\tphi_t^\Lambda) \, , \,	 (-\rmi )\Big[ -\frac{1}{2}\Delta +  V\ast |\varphi_t^\Lambda|^2-\mu_t^\Lambda \Big] \partial^{\beta}(\varphi_t^\Lambda-\tphi_t^\Lambda) \Big\rangle \label{eq:ScalarProdHartreeDPhi-TPhi}\\
 	&+ 2\mathrm{Re} \left\langle	\partial^{\beta} (\varphi_t^\Lambda -\tphi_t^\Lambda) \, , \,	(-\rmi ) \partial^{\beta} V\ast \left(|\varphi_t^\Lambda|^2-|\varphi_0^\Lambda|^2\right) \tphi_t^\Lambda \right\rangle \label{eq:ScalarProdDVConvPhi-TPhi}\\
 	&+ 2\mathrm{Re} \Big\langle	\partial^{\beta} (\varphi_t^\Lambda-\tphi_t^\Lambda) \, , \,	(-\rmi ) \partial^{\beta} \Big( -\frac{1}{2}\Delta \Big) \tphi_t^\Lambda \Big\rangle \,,		\label{eq:ScalarProdDLaplace}
 \end{align}
 where we used $\mu_t^\Lambda\in \mathbb{R}$.
 The term \eqref{eq:ScalarProdHartreeDPhi-TPhi} vanishes, because we only consider its real part and $\mu_t^\Lambda\in \mathbb{R}$. In the following we estimate the remaining terms \eqref{eq:ScalarProdKommV,Phi}, \eqref{eq:ScalarProdDVConvPhi-TPhi} and \eqref{eq:ScalarProdDLaplace}.
 \\Given \cref{con:Initial condition derivative condensate}$_{n+3=|\beta|+2}$, the estimate for \eqref{eq:ScalarProdDLaplace} follows directly from \cref{lem:EstimatesTPhi}. 
 For \eqref{eq:ScalarProdDVConvPhi-TPhi} and \eqref{eq:ScalarProdKommV,Phi}, we again use \cref{lem:EstimatesTPhi} together with the induction hypothesis, which ensures that
 \begin{align}
		\Vert \,|V|\ast  \partial^{\beta} |\varphi_t^\Lambda|^2 \, \Vert_\infty \leq{}& C \Lambda^{-|\beta|/3} + C \Vert \partial^{\beta}  (\varphi_t^\Lambda -\tphi_t^\Lambda ) \Vert _2  \,, \label{eq:VConvDPhi-1ForDPhi-TPhi}\\
		\Vert \,|V|\ast \left\vert \partial^{\beta} (|\varphi_t^\Lambda|^2-|\tphi_t^\Lambda|^2) \right\vert\, \Vert_2 \leq{}& C \Lambda^{-1/6-|\beta|/3} + C \Vert \partial^{\beta}  (\varphi_t^\Lambda -\tphi_t^\Lambda ) \Vert _2\,, \label{eq:VConvDPhi-TPhiForDPhi-TPhi} \\
		 \big\Vert   -\frac{1}{2}\Delta \partial^{\beta} \tphi_t^\Lambda \big\Vert_2 \leq{}&  C\Lambda^{\frac{1}{2} -\frac{|\beta|+2}{3}}	\leq  C\Lambda^{-\frac{1}{6} -\frac{|\beta|}{3}} \,. \label{eq:ScalarProdDLaplaceEstimate} 
	\end{align}
To summarize
\begin{align*}
	\partial_t \Vert \partial^{\beta} (\varphi_t^\Lambda-\tphi_t^\Lambda) \Vert_2^2 \leq{}& \eqref{eq:ScalarProdKommV,Phi} + \eqref{eq:ScalarProdDVConvPhi-TPhi} +\eqref{eq:ScalarProdDLaplace} \\
	\leq{}& \Vert\partial^{\beta} (\varphi_t^\Lambda-\tphi_t^\Lambda) \Vert_2 C \left( \Lambda^{-\frac{1}{6} -\frac{|\beta|}{3}} + \Vert\partial^{\beta} (\varphi_t^\Lambda-\tphi_t^\Lambda) \Vert_2 \right) \,,
\end{align*}
with Grönwall and $\tphi_0=\varphi_0$ we get the claim $\Vert \partial^{\beta} (\varphi_t^\Lambda-\tphi_t^\Lambda) \Vert_2\leq C \Lambda^{-\frac{1}{6} -\frac{\beta}{3}}$ for $|\beta|=n+1$, thereby completing the induction step.
\end{proof}

Combining the estimates of \cref{lem:EstimatesTPhi} and \cref{lem:FirstApproxThetaDPhi-TPhi} we gain control of the condensate $\varphi_t^\Lambda$ as stated in the following corollary.

\begin{cor}[Estimates of the Condensate $\varphi_t^\Lambda$] \label{cor:derivative-varphi-L2-norm-estimate}
	 Let $\beta\in \mathbb{N}_0^3$.  For all volumes $\Lambda\geq1$, let $\varphi_t^\Lambda$ be the solution of the Hartree equation \eqref{eq:Hartree}. Assume that its initial data $\varphi_0^\Lambda$ varies on the scale $\Lambda^{1/3}$, that is, \cref{con:Initial condition} and \cref{con:Initial condition derivative condensate}$_{|\beta|+2}$. \\ Then for all times $T\geq0$, there exists a constant $C>0$ such that for all volumes $ \Lambda\geq1$ and $-T\leq t \leq T$
	 \begin{align}
	 	\Vert \partial^\beta \varphi_t^\Lambda \Vert_2 \leq{}& C \Lambda^{1/2-|\beta|/3}\,, \\
	 	\Vert \partial^\beta \varphi_t^\Lambda \Vert_{2\wedge\infty} \leq{}& C \Lambda^{-|\beta|/3} \,,
	 \end{align}
	 where the norm $\Vert\,.\,\Vert_{2\wedge\infty}$ is defined in \cref{def:Young-norms}.
\end{cor}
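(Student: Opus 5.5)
The plan is to write $\varphi_t^\Lambda = \tphi_t^\Lambda + (\varphi_t^\Lambda-\tphi_t^\Lambda)$ and estimate the two pieces separately, using \cref{lem:EstimatesTPhi} for the auxiliary function and \cref{lem:FirstApproxThetaDPhi-TPhi} for the difference. Both results require only \cref{con:Initial condition} and \cref{con:Initial condition derivative condensate}$_{|\beta|+2}$, which are exactly the hypotheses of the corollary. Since \cref{con:Initial condition derivative condensate}$_{|\beta|+2}$ implies \cref{con:Initial condition derivative condensate}$_{|\beta|}$, both lemmas apply for the given multi-index $\beta$.

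For the $L^2$ bound, I apply the triangle inequality:
\begin{align*}
\Vert \partial^\beta\varphi_t^\Lambda\Vert_2 \leq \Vert \partial^\beta\tphi_t^\Lambda\Vert_2 + \Vert \partial^\beta(\varphi_t^\Lambda-\tphi_t^\Lambda)\Vert_2 \leq C\Lambda^{1/2-|\beta|/3} + C\Lambda^{-1/6-|\beta|/3}.
\end{align*}
Since $\Lambda\geq1$, the second term is dominated by the first, and this gives the first claim uniformly for $-T\leq t\leq T$.

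For the $\Vert\cdot\Vert_{2\wedge\infty}$ bound, I use the definition in \cref{def:Young-norms} with the particular decomposition $f_\infty=\partial^\beta\tphi_t^\Lambda$ and $f_2=\partial^\beta(\varphi_t^\Lambda-\tphi_t^\Lambda)$. The infimum is then bounded by $\Vert \partial^\beta\tphi_t^\Lambda\Vert_\infty + \Vert\partial^\beta(\varphi_t^\Lambda-\tphi_t^\Lambda)\Vert_2$. By \eqref{eq:DTPhiInfty} the first term is at most $C\Lambda^{-|\beta|/3}$, and by \cref{lem:FirstApproxThetaDPhi-TPhi} the second is at most $C\Lambda^{-1/6-|\beta|/3}\leq C\Lambda^{-|\beta|/3}$.

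No step is a genuine obstacle, since all the analytic work sits in the two earlier results. The only point to check is the bookkeeping of the regularity index: the hypotheses must cover both lemmas, and the constants must be uniform in $t\in[-T,T]$ and $\Lambda\geq1$. \cref{lem:EstimatesTPhi} holds for all $t$, and \cref{lem:FirstApproxThetaDPhi-TPhi} holds on $[-T,T]$, so this is satisfied.
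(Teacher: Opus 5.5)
Your proposal is correct and is exactly the paper's argument: split $\varphi_t^\Lambda=\tphi_t^\Lambda+(\varphi_t^\Lambda-\tphi_t^\Lambda)$, bound the first piece with \cref{lem:EstimatesTPhi} and the second with \cref{lem:FirstApproxThetaDPhi-TPhi}, and use the decomposition $f_\infty=\partial^\beta\tphi_t^\Lambda$, $f_2=\partial^\beta(\varphi_t^\Lambda-\tphi_t^\Lambda)$ for the $\Vert\cdot\Vert_{2\wedge\infty}$ bound. Your bookkeeping of the regularity index $|\beta|+2$ and of the uniformity in $t\in[-T,T]$ is also right.
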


\begin{proof}[Proof of \cref{cor:derivative-varphi-L2-norm-estimate}]
	\cref{cor:derivative-varphi-L2-norm-estimate} follows directly from \cref{lem:EstimatesTPhi} and \cref{lem:FirstApproxThetaDPhi-TPhi}, using $\varphi_t^\Lambda= \tphi_t^\Lambda + (\varphi_t^\Lambda -\tphi_t^\Lambda)$.
\end{proof}

\subsection{Local Stability} \label{sec:Localized-condensate}

The goal of this section is to show that the localized condensate $\tl|\varphi_t^\Lambda|^2$ can be approximated arbitrarily well by the initial condensate $\tl |\varphi_0^\Lambda|^2$, thereby improving \cref{cor:phi-propagation}. Note that the localization function $\tl$ varies on the smaller scale $\cO(\Lambda^{s})$, with $0<s<1/3$, localizing the condensate varying on the scale $\cO(\Lambda^{1/3})$.

\begin{prop}[Local Stability of the Condensate]\label{cor:ThetaPhi-Phi0}
Let $n\in \mathbb{N}_+$, $k\in \mathbb{N}_0$ and $0<s<1/3$. Set the localization function to be $\tl(x)=\frac{1}{1+(\Lambda^{-s}x)^{2n}}$. For all volumes $\Lambda\geq1$, let $\varphi_t$ be the solution of the Hartree equation \eqref{eq:Hartree}. Assume that its initial data $\varphi_0$  varies on the scale $\Lambda^{1/3}$, that is, 
\cref{con:Initial condition}, \cref{con:Initial condition derivative condensate}$_{k+3}$ and \cref{con:Initial condition derivative condensate}i)$_{(k+2)+2(n-1)}$. Furthermore, we require that $\varphi_0$ is flat around the origin, namely \cref{con:Condensate-flat-around-origin}$_{2(n-1),s}$.
\\Then for all $T\geq0$, there exists a constant $C>0$ such that for all volumes $ \Lambda\geq1$ and $-T\leq t \leq T$
	\begin{align}
		\Vert \Theta_\Lambda (|\varphi_t^\Lambda|^2-|\varphi_0^\Lambda|^2)  \Vert_{1\wedge 2} \leq C \Big\{ \Lambda^{-\frac{1}{6} -\frac{k+1}{3}- (k+1)s} + \Lambda^{-\frac{1}{2}s -2n(1/3-s)} \Big\} \,. 
	\end{align}
\end{prop}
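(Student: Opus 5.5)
The plan is to reduce the statement to a localized $L^2$ estimate on the difference $\varphi_t^\Lambda-\tphi_t^\Lambda$. Since $|\tphi_t^\Lambda|=|\varphi_0^\Lambda|$, we have, as in the proof of \cref{cor:phi-propagation},
\[
\tl\big(|\varphi_t^\Lambda|^2-|\varphi_0^\Lambda|^2\big)=\tl|\varphi_t^\Lambda-\tphi_t^\Lambda|^2+2\mathrm{Re}\,(\tphi_t^\Lambda)^*\tl(\varphi_t^\Lambda-\tphi_t^\Lambda).
\]
The first term goes into $L^1$ and is bounded by $\Vert\varphi_t^\Lambda-\tphi_t^\Lambda\Vert_2\,\Vert\tl(\varphi_t^\Lambda-\tphi_t^\Lambda)\Vert_2$. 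The second term goes into $L^2$ and is bounded by $C\Vert\tl(\varphi_t^\Lambda-\tphi_t^\Lambda)\Vert_2$, using $\Vert\tphi_t^\Lambda\Vert_\infty\le C$ from \cref{lem:EstimatesTPhi}. By \cref{prop:phi-propagation}, it therefore suffices to show that
\[
\Vert\tl\,\partial^\beta(\varphi_t^\Lambda-\tphi_t^\Lambda)\Vert_2\le C\big\{\Lambda^{-\frac16-\frac{k+1}{3}-(k+1-|\beta|)s-\frac{|\beta|}{3}}+\Lambda^{-\frac s2-2n(\frac13-s)}\big\}
\]
for $\beta=0$. I would prove this for all $|\beta|\le k+1$ by a downward induction on $|\beta|$. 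The top level $|\beta|=k+1$ is simply the global bound of \cref{lem:FirstApproxThetaDPhi-TPhi}, since $\tl\le1$; this is where \cref{con:Initial condition derivative condensate}$_{k+3}$ enters.

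For the induction step I would run a Grönwall argument on $\Vert\tl\partial^\beta(\varphi_t^\Lambda-\tphi_t^\Lambda)\Vert_2^2$, expanding the time derivative exactly as in the proof of \cref{lem:FirstApproxThetaDPhi-TPhi}. The self-adjoint part again drops out. What is new is the commutator $[\Delta,\tl]=2\nabla\tl\cdot\nabla+\Delta\tl$. By \eqref{eq:Loc-Fct-Porp-2}, $|\partial^\gamma\tl|\le C\Lambda^{-s|\gamma|}\tl$, so this commutator produces terms $\Lambda^{-s}\Vert\tl\partial^{\beta+e_j}(\varphi_t^\Lambda-\tphi_t^\Lambda)\Vert_2$ plus lower-order ones. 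These are controlled by the induction hypothesis at level $|\beta|+1$, and each level gains the factor $\Lambda^{-s}$, which produces the exponent $-(k+1)s$ after $k+1$ steps.

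The potential terms are handled as follows. The commutator $[\partial^\beta,V\ast|\varphi_t^\Lambda|^2]$ is treated as in \eqref{eq:VConvDPhi-1ForDPhi-TPhi}. The nonlocal term $\tl\,\partial^\beta\big(V\ast(|\varphi_t^\Lambda|^2-|\varphi_0^\Lambda|^2)\,\tphi_t^\Lambda\big)$ is treated by moving $\tl$ inside the convolution with \cref{Lemma:CalculationTheta}b). The localized part is then bounded by the quantities $\Vert\tl\partial^{\gamma}(\varphi_t^\Lambda-\tphi_t^\Lambda)\Vert_2$ themselves, which closes the Grönwall loop. The remainder carries $\Lambda^{-sm}$ times the global bound of \cref{cor:phi-propagation}, and I choose $m$ large enough that it is negligible.

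The main obstacle is the source term $\tl\,\partial^\beta\Delta\tphi_t^\Lambda$. Globally it is only of order $\Lambda^{1/2-2/3-|\beta|/3}$, which is too large, so flatness has to be used. I would apply \cref{Lemma:CalculationTheta}a)ii) to $f=\partial^\beta\Delta\tphi_t^\Lambda$ with order $2(n-1)$. This gives $\Lambda^{-2(n-1)(1/3-s)+3s/2}\sum\Vert\partial^{\gamma}f\Vert_\infty\Lambda^{|\gamma|/3}\lesssim\Lambda^{-2(n-1)(1/3-s)+3s/2-2/3}=\Lambda^{-s/2-2n(1/3-s)}$, which is exactly the second term in the claim. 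It uses the sup-norm bounds \cref{con:Initial condition derivative condensate}i)$_{(k+2)+2(n-1)}$. What remains is to verify the flatness hypothesis of that lemma for $f$, namely that the Taylor coefficients of $\Delta\tphi_t^\Lambda=\Delta\big(\rme^{-\rmi tV\ast|\varphi_0^\Lambda|^2}\varphi_0^\Lambda\big)$ at the origin are small. By the Leibniz rule every term contains either a derivative of $\varphi_0^\Lambda$, which is small at $0$ by \cref{con:Condensate-flat-around-origin}$_{2(n-1),s}$, or a derivative of $V\ast|\varphi_0^\Lambda|^2=V\ast(|\varphi_0^\Lambda|^2-1)+\int V$. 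The latter is nonlocal. For it I would Taylor-expand $|\varphi_0^\Lambda|^2-1$ inside the convolution against the fast decay of $|y|^mV$, so that its derivatives at $0$ inherit the flatness bound up to errors $\Lambda^{-sm}$. I expect this to be the most delicate bookkeeping step.
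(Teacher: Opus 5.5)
Your reduction through $|\varphi_t^\Lambda|^2-|\varphi_0^\Lambda|^2=|\varphi_t^\Lambda-\tphi_t^\Lambda|^2+2\mathrm{Re}(\tphi_t^\Lambda)^*(\varphi_t^\Lambda-\tphi_t^\Lambda)$ and the ingredients of your Grönwall step match the paper. However, a single downward induction on $|\beta|$ does not close.

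The Grönwall inequality at level $\beta$ couples to level $|\beta|+1$ through $\nabla\tl\cdot\nabla$, but it also couples to all lower levels. The commutator gives $[\partial^\beta,V\ast|\varphi_t^\Lambda|^2](\varphi_t^\Lambda-\tphi_t^\Lambda)=\sum_{\gamma<\beta}\binom{\beta}{\gamma}(V\ast\partial^{\beta-\gamma}|\varphi_t^\Lambda|^2)\,\partial^{\gamma}(\varphi_t^\Lambda-\tphi_t^\Lambda)$. The Leibniz expansion of the nonlocal term likewise produces $\Vert\tl\partial^\gamma(\varphi_t^\Lambda-\tphi_t^\Lambda)\Vert_2$ with $\gamma<\beta$. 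So ``closes the Grönwall loop'' is only true for $\gamma=\beta$.

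In a top-down sweep these lower levels have not been estimated yet. The only bound available for them, \cref{lem:FirstApproxThetaDPhi-TPhi}, gives $\Lambda^{-|\beta-\gamma|/3}\Lambda^{-1/6-|\gamma|/3}=\Lambda^{-1/6-|\beta|/3}$. This is far above the target $\Lambda^{-1/6-(k+1)/3-(k+1-|\beta|)s}$.

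The paper supplies the missing a priori control by an outer induction on $k$. \cref{prop:ThetaDPhi-TPhi} at $k-1$ gives localized bounds $\Lambda^{-\frac16-\frac k3-(k-|\gamma|)s}+\Lambda^{-\frac s2-\frac{|\gamma|}3-2n(\frac13-s)}$ for all $|\gamma|\le k-1$. Multiplied by the factor $\Lambda^{-|\beta-\gamma|/3}$ coming from $\Vert V\ast\partial^{\beta-\gamma}|\varphi_t^\Lambda|^2\Vert_\infty$ or $\Vert\partial^{\beta-\gamma}\tphi_t^\Lambda\Vert_\infty$, these are dominated by the level-$k$ target, because $|\beta-\gamma|\ge1$. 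Only then is the downward sweep $|\beta|=k,\dots,0$ run, started from \cref{lem:FirstApproxThetaDPhi-TPhi} at order $k+1$ times $\Lambda^{-s}$. A joint Grönwall for $\{\Lambda^{|\gamma|/3}\Vert\tl\partial^\gamma(\varphi_t^\Lambda-\tphi_t^\Lambda)\Vert_2\}_{|\gamma|\le|\beta|}$ at each downward step would also work; some such device is needed.

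Your intermediate exponent is also misprinted. With the extra $-|\beta|/3$, level $|\beta|=k+1$ would require $\Lambda^{-1/6-2(k+1)/3}$, which the global bound does not give. The first term should be $\Lambda^{-\frac16-\frac{k+1}{3}-(k+1-|\beta|)s}$.

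Second, the flatness verification you single out cannot be carried out under \cref{con:Condensate-flat-around-origin}$_{2(n-1),s}$. Applying \cref{Lemma:CalculationTheta}a)ii) with order $2(n-1)$ to $f=\partial^\beta\Delta\tphi_t^\Lambda$ needs $|\partial^\alpha\tphi_t^\Lambda(0)|\lesssim\Lambda^{-|\alpha|/3-(2n+|\beta|-|\alpha|)(1/3-s)}$ for $|\beta|+2\le|\alpha|\le|\beta|+2n-1$. The hypothesis only controls orders $|\alpha|\le 2n-3$, and at the weaker rate $\Lambda^{-|\alpha|/3-(2n-2-|\alpha|)(1/3-s)}$.

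For example, take $n=2$ and $\varphi_0^\Lambda(y)=\rme^{-\pi|\Lambda^{-1/3}y|^2/2}$. The hypothesis holds, but $|\Delta\tphi_t^\Lambda(0)|\ge 3\pi\Lambda^{-2/3}-\cO(\Lambda^{-4/3})$. Hence $\Vert\tl\Delta\tphi_t^\Lambda\Vert_2\gtrsim\Lambda^{-2/3+3s/2}$, far above the required $\Lambda^{-s/2-4(1/3-s)}$.

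The paper's \eqref{eq:ThetaDTPhiL2} asserts the same bound under the same hypothesis, so on this point you are following the paper. The estimate genuinely needs flatness of order $2n$, which is what \cref{cor:ThetaDPhi-TPhi} assumes. Under that assumption $\partial^\beta\Delta(\varphi_0^\Lambda-1)$ is flat of order $2n-2-|\beta|$. \cref{Lemma:CalculationTheta}a)ii) then gives $\Lambda^{-s/2-|\beta|s-2n(1/3-s)}$, with the phase derivatives handled as you propose. That suffices for your second term.
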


\begin{proof}[Proof of \cref{cor:ThetaPhi-Phi0}]
The claim follows directly from \cref{prop:ThetaDPhi-TPhi} below, together with the definition of the norm $\Vert\,.\,\Vert_{1\wedge2}$ (see \cref{def:Young-norms}) and the identity $|\varphi_t^\Lambda|^2-|\varphi_0^\Lambda|^2 = |\varphi_t^\Lambda -\tphi_t^\Lambda|^2 + 2\mathrm{Re}(\tphi_t^\Lambda)^* (\varphi_t^\Lambda - \tphi_t^\Lambda)$ with $|\tphi_t^\Lambda|=|\varphi_0^\Lambda|$.  
\end{proof}
Although flatness of the initial condensate is not assumed in this section, the above approximation provides a mechanism to propagate flatness, if present in the initial data, to the evolved $\varphi_t^\Lambda$, as for instance in \cref{cor:ThetaDPhi-TPhi}.

The estimate in \cref{cor:ThetaPhi-Phi0} is based on the following approximation of $\tl\varphi_t^\Lambda$ by $\tl\tphi_t^\Lambda$, which also needs control over their derivatives.
\begin{lemma} \label{prop:ThetaDPhi-TPhi}
Assume the conditions of \cref{cor:ThetaPhi-Phi0}. Then for all $\beta\in\mathbb{N}_0^3$ with $ |\beta|\leq k$, and $T\geq0$, there exists a constant $C>0$ such that for all volumes $ \Lambda\geq1$ and $-T\leq t \leq T$
	\begin{align}
		\Vert \Theta_\Lambda  \partial^\beta( \varphi_t^\Lambda-\tphi_t^\Lambda)  \Vert_{2} \leq C \Big\{ \Lambda^{-\frac{1}{6} -\frac{k+1}{3}- (k+1-|\beta|)s} + \Lambda^{-\frac{1}{2}s -\frac{|\beta|}{3}-2n(1/3-s)} \Big\} \,. \label{eq:EstimateThetaDPhi-TPhi}
	\end{align}
\end{lemma}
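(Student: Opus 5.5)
We argue by downward induction in $|\beta|$, starting at $|\beta|=k$ and descending to $|\beta|=0$. Set $g_t^\beta:=\Theta_\Lambda\partial^\beta(\varphi_t^\Lambda-\tphi_t^\Lambda)$ and run a Grönwall argument for $\Vert g_t^\beta\Vert_2^2$. We differentiate in time exactly as in the proof of \cref{lem:FirstApproxThetaDPhi-TPhi} and insert $\Theta_\Lambda$ on both sides. This produces four analogues of the terms \eqref{eq:ScalarProdKommV,Phi}--\eqref{eq:ScalarProdDLaplace}, plus one new term coming from the commutator $[\Theta_\Lambda,-\tfrac12\Delta]$.

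The purely self-adjoint part again has vanishing real part. The new commutator term is
\[
\Theta_\Lambda\Delta h-\Delta(\Theta_\Lambda h)=-2\nabla\Theta_\Lambda\cdot\nabla h-(\Delta\Theta_\Lambda)h,\qquad h=\partial^\beta(\varphi_t^\Lambda-\tphi_t^\Lambda).
\]
By \eqref{eq:Loc-Fct-Porp-2}, $|\nabla\Theta_\Lambda|\le C\Lambda^{-s}\Theta_\Lambda$, so this term is bounded by $C\Lambda^{-s}\Vert\Theta_\Lambda\partial^{\beta+e_j}(\varphi_t^\Lambda-\tphi_t^\Lambda)\Vert_2$ plus a lower-order piece. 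This is why the induction runs downward: the level $|\beta|+1$ feeds into level $|\beta|$ with a gain of $\Lambda^{-s}$.

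At the top level $|\beta|=k$ we close the recursion without localization. We use the global bound $\Vert\partial^{\beta+e_j}(\varphi_t^\Lambda-\tphi_t^\Lambda)\Vert_2\le C\Lambda^{-1/6-(k+1)/3}$ from \cref{lem:FirstApproxThetaDPhi-TPhi}, which requires \cref{con:Initial condition derivative condensate}$_{k+3}$. Together with the factor $\Lambda^{-s}$ this gives $\Lambda^{-1/6-(k+1)/3-s}$. Descending one level at a time gains $\Lambda^{-s}$ per step while $|\beta|$ decreases, which yields exactly the first term $\Lambda^{-1/6-(k+1)/3-(k+1-|\beta|)s}$.

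The interaction terms are handled with \cref{Lemma:CalculationTheta}b), which moves $\Theta_\Lambda$ inside the convolutions.
\begin{itemize}
\item[(1)] For the commutator term $[\partial^\beta,V\ast|\varphi_t^\Lambda|^2]$, we expand by Leibniz. The result is $\partial^{\gamma}(V\ast|\varphi_t^\Lambda|^2)$ times $\Theta_\Lambda\partial^{\beta-\gamma}(\varphi_t^\Lambda-\tphi_t^\Lambda)$ for $\gamma\neq0$. The first factor is $O(\Lambda^{-|\gamma|/3})$ in sup norm by \cref{cor:derivative-varphi-L2-norm-estimate}. The second factor is a lower level, which is fine if we control all levels simultaneously as a vector Grönwall system. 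Alternatively, we prove the bound for lower $|\beta|$ first via the global estimate, since the $\Lambda^{-|\gamma|/3}$ factor compensates.
\item[(2)] For the source term $\Theta_\Lambda\partial^\beta\big[V\ast(|\varphi_t^\Lambda|^2-|\varphi_0^\Lambda|^2)\,\tphi_t^\Lambda\big]$, we write $|\varphi_t^\Lambda|^2-|\varphi_0^\Lambda|^2=|\varphi_t^\Lambda-\tphi_t^\Lambda|^2+2\mathrm{Re}\,(\tphi_t^\Lambda)^*(\varphi_t^\Lambda-\tphi_t^\Lambda)$. The linear part is then bounded by $\Vert\Theta_\Lambda\partial^{\beta'}(\varphi_t^\Lambda-\tphi_t^\Lambda)\Vert_2$ plus an error $\Lambda^{-sm}\Lambda^{-1/6}$ with $m$ arbitrary. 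This is closable in Grönwall.
\end{itemize}

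The main obstacle is the free source term $\Theta_\Lambda\partial^\beta\Delta\tphi_t^\Lambda$, which is where flatness enters and the second term $\Lambda^{-s/2-|\beta|/3-2n(1/3-s)}$ arises. Globally this term is only $O(\Lambda^{1/2-(|\beta|+2)/3})$. However, $\partial^{\beta}\Delta\tphi_t^\Lambda$ is a combination of derivatives of $\varphi_0^\Lambda$ and of the phase $tV\ast|\varphi_0^\Lambda|^2$, and each of these is flat at the origin by \cref{con:Condensate-flat-around-origin}$_{2(n-1),s}$. So we apply \cref{Lemma:CalculationTheta}a)ii) with $f=\partial^\beta\Delta\tphi_t^\Lambda$ and flatness order $2(n-1)$. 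This requires sup bounds on derivatives up to order $|\beta|+2+2(n-1)$, which is precisely \cref{con:Initial condition derivative condensate}i)$_{(k+2)+2(n-1)}$. It gives $\Lambda^{3s/2-2(n-1)(1/3-s)-(|\beta|+2)/3}$, which equals $\Lambda^{-s/2-|\beta|/3-2n(1/3-s)}$ after simplification. The delicate point is verifying that flatness of $\varphi_0^\Lambda$ transfers to the product with the phase $\rme^{-\rmi tV\ast|\varphi_0^\Lambda|^2}$ and its derivatives with the required $\Lambda$-rates. We would check this via Leibniz, noting that derivatives of $V\ast|\varphi_0^\Lambda|^2$ at $0$ inherit flatness through the convolution with rapidly decaying $V$.

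Summing all contributions and applying Grönwall with $g_0^\beta=0$ yields \eqref{eq:EstimateThetaDPhi-TPhi}.
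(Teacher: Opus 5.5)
Your skeleton is the paper's: the commutator $[\Theta_\Lambda,-\tfrac12\Delta]$ feeds level $|\beta|+1$ into level $|\beta|$ with a gain $\Lambda^{-s}$. The top level $|\beta|=k$ is closed with the global bound of \cref{lem:FirstApproxThetaDPhi-TPhi}. The convolution source goes through \cref{Lemma:CalculationTheta}b), and the free term through \cref{Lemma:CalculationTheta}a)ii).

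The difference is how you control the lower levels $\partial^{\beta-\gamma}$, $\gamma\neq0$, produced by the Leibniz expansions. Write $\kappa^{(k)}_j:=\Lambda^{-\frac16-\frac{k+1}{3}-(k+1-j)s}+\Lambda^{-\frac s2-\frac j3-2n(\frac13-s)}$ for the right-hand side of \eqref{eq:EstimateThetaDPhi-TPhi} at $|\beta|=j$.
\begin{itemize}
\item The paper runs an outer induction on $k$. The lower levels are bounded by the lemma at $k-1$, which suffices because $\Lambda^{-|\gamma|/3}\kappa^{(k-1)}_{|\beta|-|\gamma|}\le\kappa^{(k)}_{|\beta|}$. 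Each step of the downward sweep is then a scalar Gr\"onwall argument.
\item Your coupled Gr\"onwall at fixed $k$ is a legitimate alternative that removes the outer induction. The ratios $\Vert\Theta_\Lambda\partial^\beta(\varphi_t^\Lambda-\tphi_t^\Lambda)\Vert_2/\kappa^{(k)}_{|\beta|}$ satisfy a linear system with $O(1)$ coefficients, $O(1)$ sources and zero initial data. This holds because $\Lambda^{-s}\kappa^{(k)}_{j+1}\le\kappa^{(k)}_j$ and $\Lambda^{-(j-i)/3}\kappa^{(k)}_i\le\kappa^{(k)}_j$ for $i<j$. You should state this check explicitly, since it is exactly what closes the system.
\item Your second option does not work. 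Bounding the lower levels by the global estimate only gives $\Lambda^{-|\gamma|/3}\Lambda^{-\frac16-\frac{|\beta-\gamma|}{3}}=\Lambda^{-\frac16-\frac{|\beta|}{3}}$. This lacks the localization gain $\Lambda^{-(k+1-|\beta|)(\frac13+s)}$ relative to the first term of $\kappa^{(k)}_{|\beta|}$.
\item You leave the quadratic part $|\varphi_t^\Lambda-\tphi_t^\Lambda|^2$ of the source implicit. It is handled by putting one factor in the global $L^2$ bound and the other in the localized norm.
\end{itemize}

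On the free term you reproduce the paper's step, and it needs more than either of you states, because flatness is lost under differentiation. \cref{con:Condensate-flat-around-origin}$_{2(n-1),s}$ only makes $\partial^\beta\Delta(\varphi_0^\Lambda-1)$ flat of order $2(n-1)-|\beta|-2$ in the sense of \cref{Lemma:CalculationTheta}a). Applying a)ii) at order $2(n-1)$ therefore requires $\varphi_0^\Lambda$ to be flat of order $2n+|\beta|$.

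For instance, take $\varphi_0^\Lambda(y)=\eta(\Lambda^{-1/3}y)$ with $\eta(x)=1+x_1^{2n-2}$ near $0$ and $n\ge2$. Already at $t=0$ one has $\Vert\Theta_\Lambda\Delta\varphi_0^\Lambda\Vert_2\ge c\,\Lambda^{-\frac23-(2n-4)(\frac13-s)+\frac{3s}{2}}$, which is a factor $\Lambda^{2(\frac13-s)}$ above the bound you use.

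Under the flatness actually assumed where the lemma is applied (\cref{con:Condensate-flat-around-origin}$_{2n,s}$ in \cref{cor:ThetaDPhi-TPhi}), the same computation gives $\Lambda^{-\frac s2-|\beta|s-2n(\frac13-s)}$ for this source. This coincides with the claimed second term at $\beta=0$. Your coupled system still closes with $|\beta|s$ in place of $|\beta|/3$ in the second term of $\kappa^{(k)}_j$, now using $\Lambda^{-(j-i)/3}\Lambda^{(j-i)s}\le1$. So the $\beta=0$ estimate needed downstream survives, but your argument does not deliver the stated rate for $|\beta|\ge1$.

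For the phase factor, apply \cref{Lemma:CalculationTheta}b) to $V\ast\partial^\alpha(|\varphi_0^\Lambda|^2-1)$ rather than appealing to pointwise flatness of the convolution at $0$.
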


\begin{proof}[Proof of \cref{prop:ThetaDPhi-TPhi}]
Let $T \geq 0$ and $-T \leq t \leq T$. 
We prove \cref{prop:ThetaDPhi-TPhi} by induction on $k$. 
\paragraph{Base Case: $k = 0$.}
For $k = 0$ also $\beta=0$ and the proof is analogous to the argument in the induction step below. Note that the use of the induction hypothesis is not necessary as no terms $\Vert \tl\partial^{\gamma}( \varphi_t^\Lambda -\tphi_t^\Lambda)\Vert_2$ with $|\gamma|\leq |\beta|-1$ appear. 
\paragraph{Induction Step: $k-1$ to $k$.} 

Let $k \in \mathbb{N}_+$ be fixed. 
In order to estimate derivatives with greater order than $k-1$, for which we cannot use the induction hypothesis, we need an external estimate provided in \cref{lem:FirstApproxThetaDPhi-TPhi}. 
We use a Grönwall argument to prove the induction step. We have
	\begin{align}
		&\partial_t \Vert \tl \partial^\beta (\varphi_t^\Lambda -\tphi_t^\Lambda)\Vert_2^2 \nonumber \\
		={}& 2\mathrm{Re}\Big\langle \tl\partial^{\beta}( \varphi_t^\Lambda -\tphi_t^\Lambda)\, ,\,  \tl  \partial^{\beta}\Big\{ (-\rmi ) \Big( -\frac{1}{2}\Delta + V\ast |\varphi_t^\Lambda|^2-\mu_t^\Lambda \Big)( \varphi_t^\Lambda\pm\tphi_t^\Lambda) \nn \\
		& \qquad\qquad\qquad\qquad\qquad + \rmi \left(V\ast |\varphi_0^\Lambda|^2- \mu_t^\Lambda\right)\tphi_t^\Lambda	\Big\}    \Big\rangle 	\nn \\
		 ={}& +2\mathrm{Re}\left\langle \tl\partial^{\beta}( \varphi_t^\Lambda -\tphi_t^\Lambda)\, ,\, (-\rmi )\tl\left[   \partial^{\beta} \, ,\,  V\ast |\varphi_t^\Lambda|^2- \mu_t^\Lambda \right]( \varphi_t^\Lambda -\tphi_t^\Lambda)   \right\rangle  \label{eq:ScalarProdThetaKommuDVConvPhi} \\
		 &+ 2\mathrm{Re}\left\langle \tl\partial^{\beta}( \varphi_t^\Lambda -\tphi_t^\Lambda)\, ,\, (-\rmi )\tl \partial^{\beta} V\ast\left( |\varphi_t^\Lambda|^2-|\varphi_0^\Lambda|^2 \right)\tphi_t^\Lambda  \right\rangle  \label{eq:ScalarProdThetaDVConvPhi-Phi0}	\\
		 &+ 2\mathrm{Re}\Big\langle \tl\partial^{\beta}( \varphi_t^\Lambda-\tphi_t^\Lambda)\, ,\, (-\rmi )\tl \partial^{\beta}\Big( -\frac{1}{2}\Delta  \Big)\tphi_t^\Lambda  \Big\rangle  \label{eq:ScalarProdThetaDLaplace}	\\
		& -2 \mathrm{Re}\left\langle \tl \partial^{\beta}( \varphi_t^\Lambda -\tphi_t^\Lambda)\, ,\, (-\rmi )(\nabla\tl) \nabla\partial^{\beta}( \varphi_t^\Lambda -\tphi_t^\Lambda)	 \right\rangle  \label{eq:ScalarProdNablaThetaNablaD}	\,.
	\end{align} 
We will now estimate $\partial_t \Vert \tl \partial^\beta (\varphi_t^\Lambda -\tphi_t^\Lambda)\Vert_2^2$ by estimating the individual terms \eqref{eq:ScalarProdThetaKommuDVConvPhi}, \eqref{eq:ScalarProdThetaDVConvPhi-Phi0}, \eqref{eq:ScalarProdThetaDLaplace} and \eqref{eq:ScalarProdNablaThetaNablaD}. To shorten the notation we set $\kappa_{|\beta|,k}(\Lambda) :=\Lambda^{-\frac{1}{6} -\frac{k+1}{3}- (k+1-|\beta|)s}  
		+ \Lambda^{-\frac{1}{2}s -\frac{|\beta|}{3}-2n(1/3-s)}$.
	\\\textit{To \eqref{eq:ScalarProdThetaKommuDVConvPhi}:}
	\\Consider the case $|\beta|\geq 1$. Note that \eqref{eq:ScalarProdThetaKommuDVConvPhi}$_{\beta=0}$=0.
	 We use \cref{cor:derivative-varphi-L2-norm-estimate} and the induction hypothesis \cref{prop:ThetaDPhi-TPhi}$_{k-1}$, $|\gamma|\leq |\beta|-1\leq k-1$ to conclude for all $0\leq |\beta|\leq k$:
	\begin{align}
		\eqref{eq:ScalarProdThetaKommuDVConvPhi}_\beta &\leq \Vert \tl\partial^{\beta}( \varphi_t^\Lambda-\tphi_t^\Lambda)\Vert_2  \nn \\
		&\quad\times\sum_{\gamma=0,\, |\gamma|\leq |\beta|-1}^{\beta} C \Vert V\ast \partial^{\beta-\gamma} |\varphi_t^\Lambda|^2 \Vert_\infty \Vert \tl \partial^{\gamma} (\varphi_t^\Lambda -\tphi_t^\Lambda) \Vert_2 \nn \\
		 &\leq \Vert \tl\partial^{\beta}( \varphi_t^\Lambda -\tphi_t^\Lambda)\Vert_2 C \kappa_{|\beta|,k}(\Lambda) \,, \label{eq:ScalarProdThetaKommuDVConvPhiEstimate}
	\end{align}
		where we used $|\gamma|s\leq (|\beta|-1)s$ and $\Lambda\geq 1$.
	\\\textit{To \eqref{eq:ScalarProdThetaDVConvPhi-Phi0}:}
	\\First note that for $ 0\leq |\beta| \leq k$  we have that
	 \begin{align}
	 	&\Vert  \tl V \ast   \partial^{\beta} \left( |\varphi_t^\Lambda|^2 - |\varphi_0^\Lambda|^2 \right)  \Vert_2
	 	\leq{}  C \kappa_{|\beta|,k}(\Lambda)
		+ C \Vert \tl\partial^{\beta} (\varphi_t^\Lambda - \tphi_t^\Lambda) \Vert_2 \,, \label{eq:auxiliary-estimate-1}
	\end{align}
	which follows from the induction hypothesis \cref{prop:ThetaDPhi-TPhi}$_{k-1}$, the external estimate \cref{lem:FirstApproxThetaDPhi-TPhi}, $\tl\leq C$ and \cref{Lemma:CalculationTheta}b) with the use of the identity $|\varphi_t^\Lambda|^2-|\varphi_0^\Lambda|^2 = |\varphi_t^\Lambda -\tphi_t^\Lambda|^2 + 2\mathrm{Re}(\tphi_t^\Lambda)^* (\varphi_t^\Lambda - \tphi_t^\Lambda)$ and \cref{lem:EstimatesTPhi}.
	\\We use \cref{lem:EstimatesTPhi}, $|\beta|\leq k$, and \eqref{eq:auxiliary-estimate-1}, $|\gamma|\leq k\leq k+2$, to conclude
	\begin{align}
		&\Vert \tl \partial^{\beta} V\ast\left( |\varphi_t^\Lambda|^2-|\varphi_0^\Lambda|^2 \right)\tphi_t^\Lambda \Vert_2 
		\nn \\
		\leq{}&
		 \sum_{\gamma=0}^\beta C \Lambda^{-\frac{|\beta-\gamma|}{3}} \Vert \tl V\ast \partial^{\gamma} \left( |\varphi_t^\Lambda|^2-|\varphi_0^\Lambda|^2 \right)  \Vert_2 
		 \nn \\
		\leq{}&
		\sum_{\gamma=0}^\beta C \Lambda^{-\frac{|\beta-\gamma|}{3}} \Big\{ \kappa_{|\gamma|,k}(\Lambda)  +  \Vert \tl\partial^{\gamma} (\varphi_t^\Lambda - \tphi_t^\Lambda) \Vert_2 \Big\} \label{eq:aux-eq-7}
	\end{align}
	Then we use the induction hypothesis \cref{prop:ThetaDPhi-TPhi}$_{k-1}$ for $|\gamma|\leq |\beta|-1\leq k-1$ to conclude
		\begin{align*}
		\eqref{eq:aux-eq-7}\leq{}&
		 C \kappa_{|\beta|,k}(\Lambda)  + C \Vert \tl\partial^{\beta} (\varphi_t^\Lambda - \tphi_t^\Lambda) \Vert_2   \,.
	\end{align*}
	Therefore for all $0\leq |\beta|\leq k$
	\begin{align}
		\eqref{eq:ScalarProdThetaDVConvPhi-Phi0}_\beta \leq{}& \Vert \tl\partial^{\beta}( \varphi_t^\Lambda-\tphi_t^\Lambda)\Vert_2 C  \kappa_{|\beta|,k}(\Lambda) 
		 + C \Vert \tl\partial^{\beta} (\varphi_t^\Lambda - \tphi_t^\Lambda) \Vert_2^2   \,. \label{eq:ScalarProdThetaDVConvPhi-Phi0Estimate}
	\end{align}
	\textit{To \eqref{eq:ScalarProdThetaDLaplace}:}
	\\We have to estimate $\Vert \tl \partial^\beta \Delta \tphi_t^\Lambda\Vert_2$. In the first step we use  \cref{Lemma:CalculationTheta}a) on $\varphi_0^\Lambda-1$ which is flat around the origin to get that
\begin{align}
	\Vert \tl \partial^{\beta} \Delta(\varphi_0^\Lambda-1)\Vert_2 \leq C \Lambda^{-2n \left(\frac{1}{3} -s \right) -\frac{|\beta|+2}{3} +\frac{3s}{2}}\,. 
\end{align}
Then it follows from $\tilde{\varphi}_t^\Lambda = \rme^{-\rmi \left( t V \ast |\varphi_0^\Lambda|^2 - \int_0^t \mu_s^\Lambda ds\right) } \varphi_0^\Lambda $ and the Leibniz rule that
	\begin{align}
		\Vert \tl \partial^\beta  \Delta\tphi_t^\Lambda  \Vert_{2}\leq{}& C \Lambda^{-2(n-1)\left(\frac{1}{3}-s\right) - \frac{|\beta|+2}{3} +\frac{3}{2}s} \,.\label{eq:ThetaDTPhiL2}
	\end{align}
	Thus	 $\forall 0\leq |\beta|\leq k$ we have the estimate\footnote{This estimate is precisely the reason why we required \cref{con:Condensate-flat-around-origin}$_{2(n-1),s}$, \cref{con:Initial condition derivative condensate}i)$_{(k+2)+2(n-1)}$, and \cref{con:Initial condition derivative condensate}ii)$_{k+2}$.}
	\begin{align}
		\eqref{eq:ScalarProdThetaDLaplace}\leq{}& \Vert \tl\partial^{\beta}( \varphi_t^\Lambda -\tphi_t^\Lambda)\Vert_2 C \Lambda^{-\frac{|\beta|+2}{3} + \frac{3}{2}s} \Lambda^{-2(n-1)\left(\frac{1}{3}-s\right)} \nonumber \\
		 ={}& \Vert \tl\partial^{\beta}( \varphi_t^\Lambda -\tphi_t^\Lambda)\Vert_2 C \Lambda^{- \frac{1}{2}s-\frac{|\beta|}{3} } \Lambda^{-2n\left(\frac{1}{3}-s\right)} \,. \label{eq:ScalarProdThetaDLaplaceEstimate}
	\end{align}
	\textit{To \eqref{eq:ScalarProdNablaThetaNablaD}:}
	\\The desired estimates for \eqref{eq:ScalarProdNablaThetaNablaD}$_\beta$ build on one another: the bounds for higher derivatives propagate to the lower ones. This also explains why it is necessary to control higher derivatives in order to improve the estimate in the derivative-free case, $\Vert \tl( \varphi_t^\Lambda-\tphi_t^\Lambda)\Vert_2$, which is the main objective of this lemma and used to prove \cref{cor:ThetaPhi-Phi0}. Therefore we start with the estimation of \eqref{eq:ScalarProdNablaThetaNablaD}$_{|\beta|=k}$. Here, we use \cref{lem:FirstApproxThetaDPhi-TPhi} and $\tl\leq C$ to get
	\begin{align}
		\eqref{eq:ScalarProdNablaThetaNablaD}_{|\beta|=k} \leq{}& C \Vert \tl\partial^{\beta}( \varphi_t^\Lambda-\tphi_t^\Lambda)\Vert_2 \Vert (\nabla\tl) \nabla\partial^{\beta}( \varphi_t^\Lambda-\tphi_t^\Lambda)\Vert_2 \nonumber \\
		\leq{}& C \Vert \tl\partial^{\beta}( \varphi_t^\Lambda -\tphi_t^\Lambda) \Vert_2 \Lambda^{-s} \Lambda^{-\frac{1}{6} - \frac{k+1}{3}}\,, \label{eq:ScalarProdNablaThetaNablaDEstimateBetaEqK}
	\end{align}
	where in the first step we used \eqref{eq:Loc-Fct-Porp-2} for $\tl$ and in the second \cref{lem:FirstApproxThetaDPhi-TPhi} for $|\beta|+1=k+1$.
	\\For $|\beta|=k$ the estimates obtained above imply
	\begin{align*}
		\partial_t \Vert \tl \partial^\beta (\varphi_t^\Lambda -\tphi_t^\Lambda)\Vert_2^2 \leq{} \left( \eqref{eq:ScalarProdNablaThetaNablaD} + \eqref{eq:ScalarProdThetaDLaplace} + \eqref{eq:ScalarProdThetaDVConvPhi-Phi0} + \eqref{eq:ScalarProdThetaKommuDVConvPhi} \right)_{|\beta|=k}& \\
		\leq{} \left(\eqref{eq:ScalarProdNablaThetaNablaDEstimateBetaEqK} + \eqref{eq:ScalarProdThetaDLaplaceEstimate} + \eqref{eq:ScalarProdThetaDVConvPhi-Phi0Estimate} + \eqref{eq:ScalarProdThetaKommuDVConvPhiEstimate} \right)_{|\beta|=k}& \\
		\leq{} \Vert \tl\partial^{\beta}( \varphi_t^\Lambda-\tphi_t^\Lambda)\Vert_2 C  \kappa_{|\beta|,k}(\Lambda) + C \Vert \tl\partial^{\beta} (\varphi_t^\Lambda - \tphi_t^\Lambda) \Vert_2^2& \,,
	\end{align*}
	with Grönwall and $\tphi_0^\Lambda=\varphi_0^\Lambda$ we conclude for $|\beta|=k$
	\begin{align}
		\Vert \tl \partial^\beta (\varphi_t^\Lambda -\tphi_t^\Lambda)\Vert_2 \leq C  \kappa_{k,k}(\Lambda) \,.
	\end{align}
	We have proven \eqref{eq:EstimateThetaDPhi-TPhi} for $|\beta|=k$. With this we have improved our estimate for $\Vert \tl \partial^\beta (\varphi_t^\Lambda -\tphi_t^\Lambda)\Vert_2$ from \eqref{eq:EstimateThetaDPhi-TPhi}$_{k-1}$, i.e. \cref{prop:ThetaDPhi-TPhi}$_{k-1}$, to \eqref{eq:EstimateThetaDPhi-TPhi}$_{k}$. This result will now be used to improve the estimates for $0\leq \beta\leq k-1$. As we will show below, each estimate for $0\leq |\beta|\leq k-1$ from our induction hypothesis \cref{prop:ThetaDPhi-TPhi}$_{k-1}$ is improved by the one corresponding to the next higher order in $|\beta|$.
	\\\\Now let $1\leq m\leq k$ and we assume that \eqref{eq:EstimateThetaDPhi-TPhi} holds for all $m\leq |\beta|\leq k$. In this case, we will show that \eqref{eq:EstimateThetaDPhi-TPhi} also holds for $|\beta| = m-1$, which proves the lemma.
	\\Due to the estimates \eqref{eq:ScalarProdThetaKommuDVConvPhiEstimate}, \eqref{eq:ScalarProdThetaDVConvPhi-Phi0Estimate} and \eqref{eq:ScalarProdThetaDLaplaceEstimate}, it only remains to bound \eqref{eq:ScalarProdNablaThetaNablaD}$_{|\beta|=m-1}$. 
	With the help of \eqref{eq:EstimateThetaDPhi-TPhi} which holds for $|\beta|+1=m$
	\begin{align}
		\eqref{eq:ScalarProdNablaThetaNablaD}_{|\beta|=m-1} &\leq{} C \Vert \tl\partial^{\beta}( \varphi_t^\Lambda -\tphi_t^\Lambda) \Vert_2 \Lambda^{-s}\Vert \tl \nabla\partial^{\beta}( \varphi_t^\Lambda -\tphi_t^\Lambda)\Vert_2 \nonumber \\
		&\leq{}  \Vert \tl\partial^{\beta}( \varphi_t^\Lambda -\tphi_t^\Lambda)\Vert_2  C \Lambda^{-s} \kappa_{|\beta|+1,k}(\Lambda) \nonumber \\
	 	&\leq{}  \Vert \tl\partial^{\beta}( \varphi_t^\Lambda -\tphi_t^\Lambda)\Vert_2  C  \kappa_{|\beta|,k}(\Lambda) \,. \label{eq:ScalarProdNablaThetaNablaDEstimateBetaEqM-1}
	\end{align}
	With \eqref{eq:ScalarProdNablaThetaNablaDEstimateBetaEqM-1}, \eqref{eq:ScalarProdThetaDLaplaceEstimate},  \eqref{eq:ScalarProdThetaDVConvPhi-Phi0Estimate} and \eqref{eq:ScalarProdThetaKommuDVConvPhiEstimate} all for $|\beta|=m-1$ we conclude
	\begin{align*}
		\partial_t \Vert \tl \partial^\beta (\varphi_t^\Lambda -\tphi_t^\Lambda)\Vert_2^2 
		&\leq \Vert \tl\partial^{\beta}( \varphi_t^\Lambda-\tphi_t^\Lambda)\Vert_2 C \kappa_{|\beta|,k}(\Lambda) \\
		&\quad + C \Vert \tl\partial^{\beta} (\varphi_t^\Lambda - \tphi_t^\Lambda) \Vert_2^2 \,.
	\end{align*}
	From here it follows with Grönwall that for $|\beta|=m-1$
	\begin{align*}
		\Vert \tl \partial^\beta (\varphi_t^\Lambda -\tphi_t^\Lambda)\Vert_2 \leq C \kappa_{|\beta|,k}(\Lambda) \,.
	\end{align*}
	As mentioned above, this proves the lemma.
\end{proof}


\section{Supplementary Proofs}

\subsection{Remainder Term Estimate for Theorem~\ref{thm:tildePsi-psiBF-time-estimate}}

We begin with \cref{lem:qqpp-term-estimate}, which provides one of the key estimates underlying the remainder term estimate in \cref{lem:Tech-RN-estimate}, as well as the excitation number estimate in \cref{lem:Tech-particle-number-estimate}, and consequently \cref{thm:tildePsi-psiBF-time-estimate}.  In particular, \cref{lem:qqpp-term-estimate} yields a result similar to \cite[Lemma 3.5, Part $\gamma^{c}_N$]{PPS20}.

\begin{lemma} \label{lem:qqpp-term-estimate}
For all volumes $\Lambda\geq1$, let $\varphi_t^\Lambda$ be the condensate satisfying $\Vert\varphi_0^\Lambda\Vert_2=\Lambda^{1/2}$.
Let $\tilde{\Phi},\Phi\in L^2(\BR^3,D(\cN))\subset L^2(\BR^3, \cF(L^2))$.
Then
\begin{align}
	\Big\vert \Big\langle \tilde{\Phi} , \sum_{j,k\geq1} \Lambda{V}_{jk00} a_j^* a_k^* \Phi \Big\rangle \Big\vert
	&\leq \Vert \varphi_t^\Lambda \Vert _{\infty}^2 \Vert V \Vert_1 \Vert \cN^{1/2} \tilde{\Phi} \Vert \Vert \cN^{1/2} \Phi \Vert  \nn \\
	&\quad + \Lambda^{1/2} \Vert \varphi_t^\Lambda \Vert _{\infty} \Vert V \Vert_2 \Vert \cN^{1/2} \tilde{\Phi} \Vert \Vert \Phi \Vert \,. \label{eq:qqpp-term-estimate-for-HBF}
\end{align}
\end{lemma}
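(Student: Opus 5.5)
The plan is to write the pairing term in second quantization with a two-body kernel and split off the part of the kernel that is not Hilbert–Schmidt. By definition $\Lambda V_{jk00} = \langle u_j\otimes u_k, V(y-y')\varphi_t^\Lambda\otimes\varphi_t^\Lambda\rangle$, so
\[
\sum_{j,k\geq1}\Lambda V_{jk00}a_j^*a_k^* = \int\!\!\int (Q_t^\Lambda\otimes Q_t^\Lambda)K(y,y')\, a_y^*a_{y'}^*\,dy\,dy',
\qquad K(y,y')=\varphi_t^\Lambda(y)V(y-y')\varphi_t^\Lambda(y').
\]
Its kernel $K$ has $\Vert K\Vert_2\leq\Vert\varphi_t^\Lambda\Vert_\infty\Vert V\Vert_2\Vert\varphi_t^\Lambda\Vert_2=\Lambda^{1/2}\Vert\varphi_t^\Lambda\Vert_\infty\Vert V\Vert_2$. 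This already hints at the second term of \eqref{eq:qqpp-term-estimate-for-HBF}, while the first term should come from a bound that is uniform in $\Lambda$ but costs an extra $\cN^{1/2}$.

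I would use the standard argument. Move one creation operator to the left:
\[
\langle\tilde\Phi, a^*(\cdot)a^*(\cdot)\Phi\rangle=\int\!\!\int \bar{\tilde K}\ldots
\]
Then estimate
\[
\Big|\int\!\!\int K(y,y')\langle a_y a_{y'}\tilde\Phi,\Phi\rangle\Big|
\]
(with the $Q$'s absorbed, since $\Vert Q\Vert\leq1$ and they can be put onto the operators via $a^*(Q\cdot)$). Use the commutation relation $a_ya_{y'}^*=a_{y'}^*a_y+\delta(y-y')$ to rewrite the pairing as
\[
\langle \tilde\Phi, a^*(\cdot)\,[\,\cdot\,]\Phi\rangle
\]
in the form
\[
\int dy\,\big\langle a_y\tilde\Phi,\ a^*(K_y)\Phi\big\rangle,\qquad K_y=K(y,\cdot).
\]
The key identity is $a^*(K_y)=a(K_y)+[\,\cdots]$, or more simply $\Vert a^*(f)\Phi\Vert^2=\Vert a(f)\Phi\Vert^2+\Vert f\Vert^2\Vert\Phi\Vert^2$.

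The two terms then come from Cauchy–Schwarz as follows. First, $\int dy\,\Vert a(K_y)\Phi\Vert\,\Vert a_y\tilde\Phi\Vert$ is handled with $\Vert a(K_y)\Phi\Vert\leq\int|K(y,y')|\Vert a_{y'}\Phi\Vert dy'$. Combining this with the Schur/Young bound $\sup_y\int|K|\,dy'\leq\Vert\varphi\Vert_\infty^2\Vert V\Vert_1$ gives
\[
\Vert\varphi_t^\Lambda\Vert_\infty^2\Vert V\Vert_1\Vert\cN^{1/2}\tilde\Phi\Vert\Vert\cN^{1/2}\Phi\Vert .
\]
Second, the commutator part $\int dy\,\Vert K_y\Vert_2\Vert\Phi\Vert\,\Vert a_y\tilde\Phi\Vert$ is bounded by
\[
\big(\textstyle\int\Vert K_y\Vert^2\big)^{1/2}\Vert\cN^{1/2}\tilde\Phi\Vert\Vert\Phi\Vert\leq\Lambda^{1/2}\Vert\varphi\Vert_\infty\Vert V\Vert_2\Vert\cN^{1/2}\tilde\Phi\Vert\Vert\Phi\Vert,
\]
using $\Vert K\Vert_2$ computed above. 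Since $\sqrt{a^2+b^2}\leq a+b$, splitting $\Vert a^*(K_y)\Phi\Vert$ yields exactly the two claimed terms. The $x$-variable is a spectator: integrate pointwise in $x$ and apply Cauchy–Schwarz in $x$.

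The main obstacle is keeping the projections $Q_t^\Lambda$ harmless. I would write $a_j^*=a^*(u_j)$ and $\sum_{j}u_j\otimes\langle u_j,\cdot\rangle=Q$, so the operator is $\int K\,a^*(Q\delta_y)a^*(Q\delta_{y'})$. Then I would note that the relevant quantities $\int\Vert a(Q_{y}\cdot)\tilde\Phi\Vert^2\leq\Vert\cN^{1/2}\tilde\Phi\Vert^2$ and $|(Q\otimes Q)K|\leq$ norms of $K$ remain valid. The Hilbert–Schmidt norm is contracted by $Q\otimes Q$. For the Schur-type term I would avoid pointwise kernels by phrasing it as the operator bound $\Vert \tilde K_2\Vert_{\rm op}\leq\Vert\varphi\Vert_\infty^2\Vert V\Vert_1$ for the operator with kernel $K$, which survives composition with $Q$. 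I would then use the bound $|\langle\tilde\Phi,\sum_{jk}A_{jk}a_j^*a_k\Phi\rangle|\leq\Vert A\Vert_{\rm op}\Vert\cN^{1/2}\tilde\Phi\Vert\Vert\cN^{1/2}\Phi\Vert$ applied after the normal-ordering step.
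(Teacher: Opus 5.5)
Your proposal is correct and is essentially the paper's proof. You move one creation operator onto $\tilde\Phi$ to get $\int dy\,\langle \sum_{j\geq1}u_j(y)a_j\tilde\Phi,\, a^*(\varphi_t^\Lambda(y)\,Q_t^\Lambda V_y\varphi_t^\Lambda)\Phi\rangle$, which needs no commutation relation, so the detour through $\langle a_ya_{y'}\tilde\Phi,\Phi\rangle$ can be dropped. You then apply $\Vert a^*(f)\Phi\Vert^2=\Vert a(f)\Phi\Vert^2+\Vert f\Vert_2^2\Vert\Phi\Vert^2$, bound the annihilation part by Young/Schur with $\Vert\varphi_t^\Lambda\Vert_\infty^2\Vert V\Vert_1$, and bound the remaining part by Cauchy--Schwarz in $y$ using $\Vert\varphi_t^\Lambda\Vert_2=\Lambda^{1/2}$, exactly as the paper does. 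Your alternative operator-norm phrasing also works if you first apply Cauchy--Schwarz in $y$: the annihilation part then becomes $\langle\Phi,\rmd\Gamma(\mathcal K\mathcal K^*)\Phi\rangle\leq\Vert\mathcal K\Vert_{\rm op}^2\Vert\cN^{1/2}\Phi\Vert^2$, where $\mathcal K$ is the projected operator with kernel $\varphi_t^\Lambda(y)V(y-y')\varphi_t^\Lambda(y')$.
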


\begin{proof}[Proof of \cref{lem:qqpp-term-estimate}]
For the proof we follow the ideas of \cite[Lemma 3.5, Part $\gamma^{c}_N$]{PPS20}.
We compute, using the definition of $\Lambda{V}_{jk00}$ and Cauchy-Schwarz,
\begin{align}
	&\Big\vert \Big\langle \tilde{\Phi} , \sum_{j,k\geq1} \Lambda{V}_{jk00} a_j^* a_k^* \Phi \Big\rangle \Big\vert  \nn \\
	&= \Lambda \Big\vert\sum_{j,k\geq1} \Big\langle \tilde{\Phi} , \int dy_1dy_2 V(y_1-y_2) u_0(y_1) u_0(y_2) u_j^*(y_1) u_k^*(y_2) a_j^* a_k^* \Phi  \Big\rangle \Big\vert \nn \\
	&\leq \Lambda \int dy_1 \Vert \sum_{j\geq1} u_j(y_1) a_j \tilde{\Phi} \Vert \cdot | u_0(y_1)| \label{eq:Sum-ujaj-term}\\
	&\qquad  \times \Vert \sum_{k\geq1} \int dy_2 V(y_1-y_2)  u_0(y_2) u_k^*(y_2) a_k^* \Phi \Vert \,. \label{eq:Vu0u0ukakDagger-term}
\end{align}
Both terms \eqref{eq:Sum-ujaj-term} and \eqref{eq:Vu0u0ukakDagger-term} have one  $u_0=u_0^\Lambda =\varphi_t^\Lambda/\Lambda^{1/2}$ term. We would like to bound both using $\Vert u_0^\Lambda \Vert_\infty \leq C\Lambda^{-1/2}$ to cancel the pre-factor $\Lambda$ in \eqref{eq:Sum-ujaj-term}, but this is not possible for some of the terms appearing below.
\\We start with the estimate of \eqref{eq:Vu0u0ukakDagger-term}. We want to estimate the $a_k^*$ operator by $\cN^{1/2}$. Since this cannot be done directly for $a_k^*$, we first replace it by $a_k$ using the CCR.
Using the notation $V_{y_1}(y_2):= V(y_1-y_2)$, we find
\begin{align}
	\eqref{eq:Vu0u0ukakDagger-term}^2 &= \Vert a^*(Q_t^\Lambda V_{y_1} u_0^\Lambda) \Phi \Vert^2  \nn \\
	 &\leq \pscal{ \Phi , a^*(Q_t^\Lambda V_{y_1} u_0^\Lambda) a(Q_t^\Lambda V_{y_1} u_0^\Lambda)  \Phi	} +\Vert Q_t^\Lambda V_{y_1} u_0^\Lambda\Vert_2^2 \Vert\Phi\Vert^2 \nn \\
	&\leq  \Vert a(Q_t^\Lambda V_{y_1} u_0^\Lambda) \Phi	\Vert^2 + \Vert V\Vert_2^2 \Vert u_0^\Lambda\Vert^2_{\infty} \Vert \Phi \Vert^2 \,. \label{eq:Vu0-L2}
\end{align}
The change in \eqref{eq:Vu0u0ukakDagger-term} from $a_k^*$ to $a_k$ is thus at the expense of the additional term  in \eqref{eq:Vu0-L2}, coming from the CCR. We will see below that this additional term will give us the largest order in $\Lambda$. 
For its estimate it is important that the integral over $y_2$ is inside the norm in \eqref{eq:Vu0u0ukakDagger-term}.
To estimate the first term in \eqref{eq:Vu0-L2} together with \eqref{eq:Sum-ujaj-term} we pull $V$ again outside of the annihilation operator and split all terms symmetrically. Here we can estimate both $u_0$ in the $\Vert\,.\,\Vert_\infty$-Norm, since we can regularise the integrals with $|V|^{1/2}(y_1-y_2)$ 
\begin{align}
	&\eqref{eq:Sum-ujaj-term}\cdot \Vert a(Q_t^\Lambda V_{y_1} u_0^\Lambda) \Phi	\Vert 
	\leq \Lambda \int dy_1dy_2 |V|(y_1-y_2) \lvert u_0^\Lambda(y_2)\rvert \lvert u_0^\Lambda(y_1)\rvert \nn \\
	&\qquad \qquad \qquad\qquad \qquad\qquad\times \lVert \sum_{j\geq1} u_j(y_1) a_j \tilde{\Phi} \rVert    \lVert \sum_{k\geq1}   u_k(y_2)  a_k \Phi	\rVert \nn \\ 
	&\leq \Vert \varphi_t^\Lambda \Vert_\infty^2  \bigg( \int dy_1dy_2 \Vert   |V|^{1/2}(y_1-y_2) \sum_{j\geq1} u_j(y_1) a_j \tilde{\Phi} \Vert^2 \bigg)^{1/2}   \nn \\
	&\qquad \times \bigg( \int dy_1dy_2 \Vert  |V|^{1/2}(y_1-y_2) \sum_{k\geq1}   u_k(y_2)  a_k \Phi	\Vert \bigg)^{1/2} \nn \\ 
	&\leq \Vert \varphi_t^\Lambda \Vert_\infty^2 \Vert V\Vert_1 \Vert \cN^{1/2} \tilde{\Phi} \Vert   \Vert \cN^{1/2} \Phi \Vert \,, \label{eq:qqpp-term-estimate-VL1-term-estimate}
\end{align}
 in the last step we have used
\begin{align}
	\int dy_2 \Vert   \sum_{k\geq1}   u_k(y_2)  a_k \Phi	\Vert^2
	&\leq \Big\langle\Phi, \sum_{k\geq1}a_k^* a_k \Phi \Big\rangle \leq  \Vert \cN^{1/2} \Phi \Vert^2 \,. \label{eq:Int-sum-uk-ak-estimate}
\end{align}

Now we estimate the second term in \eqref{eq:Vu0-L2} in combination with \eqref{eq:Sum-ujaj-term}.
Unlike above, here the potential term $\Vert V\Vert_2$ no longer depends on $y_1$, and we  cannot estimate $u_0^\Lambda(y_1)$ by $\Vert u_0^\Lambda\Vert_\infty$, as we need it to use Cauchy-Schwarz in the integral over $y_1$:
\begin{align}
	&\eqref{eq:Sum-ujaj-term}\cdot  \Vert V\Vert_2 \Vert u_0^\Lambda\Vert_{\infty} \Vert \Phi \Vert\leq \Lambda \Vert u_0^\Lambda\Vert_2   \bigg( \int dy_1 \Vert \sum_{j\geq1} u_j(y_1) a_j \tilde{\Phi} \Vert^2  \bigg)^{1/2}  \Vert u_0^\Lambda \Vert_\infty   \Vert V \Vert_2   \Vert \Phi \Vert  \nn \\
	&\leq  \Lambda^{1/2} \Vert \varphi_t^\Lambda \Vert_\infty \Vert V \Vert_2   \Vert \Phi \Vert   \Vert \cN^{1/2} \tilde{\Phi} \Vert \,, \label{eq:qqpp-term-estimate-VL2-term-estimate}
\end{align}
where we have used \eqref{eq:Int-sum-uk-ak-estimate}.
 We finally conclude
\begin{align*}
	&\Big\vert \Big\langle \tilde{\Phi} , \sum_{j,k\geq1} \Lambda{V}_{jk00} a_j^* a_k^* \Phi \Big\rangle\Big\vert  \leq \eqref{eq:Sum-ujaj-term}\cdot \eqref{eq:Vu0-L2}^{1/2}
	\leq \eqref{eq:qqpp-term-estimate-VL1-term-estimate} + \eqref{eq:qqpp-term-estimate-VL2-term-estimate}  \\
	&\leq \Vert \varphi_t^\Lambda \Vert_\infty^2 \Vert V\Vert_1 \Vert \cN^{1/2} \tilde{\Phi} \Vert   \Vert \cN^{1/2} \Phi \Vert 
	+ \Lambda^{1/2} \Vert \varphi_t^\Lambda \Vert_\infty \Vert V \Vert_2     \Vert \cN^{1/2} \tilde{\Phi} \Vert \Vert \Phi \Vert \,,
\end{align*}
which proves the claim.
\end{proof}

We proceed with the estimate of the remainder term $R_N$ as given in \eqref{eq:Remainder-Term-Estimate}.

\begin{lemma}[Remainder Term Estimate] \label{lem:Tech-RN-estimate}
For all volumes $\Lambda\geq1$, let $\varphi_t^\Lambda$ be the solution of the Hartree equation \eqref{eq:Hartree} satisfying \cref{con:Initial condition}. Then for all $ T\geq0$, there exists a constant $C>0$ such that for all $ \Lambda,\rho\geq1$ and $-T\leq t\leq T$ we have
	\begin{align}
		2\mathrm{Im} \pscal{\psi^{\rm ex}_t, R_N \Phi_t } &\leq  C \rho^{-\frac{1}{2}} \Vert \psi^{\rm ex}_t -\Phi_t \Vert  \Vert (\cN+1)^{\frac{3}{2}} \Phi_t \Vert \nn \\
		&\quad +C \rho^{-1} \Vert \psi^{\rm ex}_t -\Phi_t \Vert \Vert (\cN + 1)^2 \Phi_t \Vert \,.
	\end{align} 
\end{lemma}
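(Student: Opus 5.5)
The plan is to write $\psi^{\rm ex}_t$ paired against each piece $R_{i,N}$ of $R_N=\sum_{i=1}^4 R_{i,N}$ from \cref{prop:HBF-TildeHN} and bound each term by $\Vert \psi^{\rm ex}_t-\Phi_t\Vert$ times a power of $(\cN+1)$ acting on $\Phi_t$. The first reduction is to replace $\psi^{\rm ex}_t$ by $\psi^{\rm ex}_t-\Phi_t$. Since $R_N$ is symmetric on the truncated space, $\mathrm{Im}\pscal{\Phi_t,R_N\Phi_t}=0$. Hence
\[
\mathrm{Im}\pscal{\psi^{\rm ex}_t,R_N\Phi_t}=\mathrm{Im}\pscal{\psi^{\rm ex}_t-\Phi_t,R_N\Phi_t}.
\]
It therefore suffices to show $\Vert R_N\Phi_t\Vert\le C\rho^{-1/2}\Vert(\cN+1)^{3/2}\Phi_t\Vert+C\rho^{-1}\Vert(\cN+1)^2\Phi_t\Vert$, or the corresponding weak form with a general $\tilde\Phi$ in place of $\psi^{\rm ex}_t-\Phi_t$. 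The truncation to $\cF^{\le N}$ only helps here, since $\cN\le N$ there. We also use $N=\rho\Lambda$.

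For $R_{1,N}$ and $R_{2,N}$ we follow \cite{LNS15,PPS20}, using $\Vert\varphi^\Lambda_t\Vert_\infty\le C$ from \cref{prop:phi-propagation}.
\begin{itemize}
\item In the first term, $\Vert Q(V\ast|\varphi|^2+K_1-\mu)Q\Vert_{\rm op}\le C$, so it is bounded by $C\cN^2/N$.
\item The term $a(QV\ast|\varphi|^2\varphi/\Lambda^{1/2})$ involves a function of $L^2$-norm $O(1)$. Its prefactor is $\sim N^{-1/2}(\cN+1)$, so it gives $C\rho^{-1/2}\Lambda^{-1/2}(\cN+1)^{3/2}$.
\item For the pairing term, expand $\sqrt{(N-\cN-1)(N-\cN)}-N=O(\cN+1)$. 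Then apply \cref{lem:qqpp-term-estimate} with prefactor $\rho^{-1}\cdot\Lambda^{-1}$ relative to $\Lambda V_{mn00}$. This gives $\rho^{-1}\Lambda^{-1}(\cN+1)^2+\rho^{-1}\Lambda^{-1/2}(\cN+1)^{3/2}$.
\item The cubic term $\Lambda V_{0mnp}\sqrt{N-\cN}/N\,a^*aa$ is handled as in \cite{PPS20} by writing it as $\int V(y_1-y_2)u_0(y_1)a^*_{y_2}a_{y_1}a_{y_2}$ and using $|u_0|\le C\Lambda^{-1/2}$ and $\Vert V\Vert_\infty$. Together with the Cauchy–Schwarz trick of \cref{lem:qqpp-term-estimate}, this gives $C\Lambda\cdot\Lambda^{-1/2}N^{-1/2}(\cN+1)^{3/2}=C\rho^{-1/2}(\cN+1)^{3/2}$.
\item $R_{2,N}$ is bounded by $\Lambda N^{-1}\Vert V\Vert_\infty\cN^2=\rho^{-1}\cN^2$.
\end{itemize}

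For the tracer terms:
\begin{itemize}
\item In $R_{3,N}$, $\frac{1}{\sqrt N}(\sqrt{N-\cN}-\sqrt N)$ is bounded by $\cN/N$. The factor $a^*(QW_x\varphi)$ is bounded by $\Vert W_x\varphi\Vert_2(\cN+1)^{1/2}\le C\Vert W\Vert_2(\cN+1)^{1/2}$ uniformly in $x$. This gives $C\rho^{-1}\Lambda^{-1}(\cN+1)^{3/2}$.
\item In $R_{4,N}$, both terms are at most $\rho^{-1/2}\Vert W\Vert_\infty\Vert\varphi\Vert_\infty^2(\cN+1)$, which suffices.
\end{itemize}
All bounds hold pointwise in $x$ and integrate over the tracer variable, since the operators act fiberwise.

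The main obstacle is the cubic term and the pairing term, where a naive operator-norm estimate loses powers of $\Lambda$. The remedy is to keep one $u_0$ in $L^2$ or $L^\infty$ exactly as in \cref{lem:qqpp-term-estimate}, so that every $\Lambda$ factor is cancelled by $N=\rho\Lambda$. I would first check the cubic term in this form, pairing $a_{y_1}\tilde\Phi$ against $a_{y_2}a_{y_1}\Phi$, and use Cauchy–Schwarz in $y_1,y_2$ with weight $|V|$.
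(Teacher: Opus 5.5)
Your proposal is correct and is essentially the paper's proof. You replace $\psi^{\rm ex}_t$ by $\psi^{\rm ex}_t-\Phi_t$ using the symmetry of $R_N$, then bound each $R_{i,N}$ separately with $\Vert\varphi_t^\Lambda\Vert_\infty\le C$, \cref{lem:qqpp-term-estimate} for the pairing terms, and a Cauchy--Schwarz argument with $|u_0|\le C\Lambda^{-1/2}$ for the cubic term, which gives the leading $\rho^{-1/2}(\cN+1)^{3/2}$ contribution. The paper spells out the insertions of powers of $\cN+c$ that move $\cN^{1/2}$ off $\psi^{\rm ex}_t-\Phi_t$; your norm formulation uses these implicitly.

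One small slip: the $W\ast|\varphi_t^\Lambda|^2$ part of $R_{4,N}$ is controlled by $\Vert W\Vert_1\Vert\varphi_t^\Lambda\Vert_\infty^2$, not by $\Vert W\Vert_\infty\Vert\varphi_t^\Lambda\Vert_\infty^2$. This is harmless, since $W\in L^1$ under \cref{Assumption:Initial-datum-and-potential}.
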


\begin{proof}[Proof of \cref{lem:Tech-RN-estimate}]  
 Setting $\psi^{\rm ex}_t - \Phi_t =: \tilde{\Phi}_t$,  we write
\begin{align*}
	2\mathrm{Im} \pscal{\psi^{\rm ex}_t,  R_N \Phi_t } = 2\mathrm{Im} \pscal{\psi^{\rm ex}_t - \Phi_t ,  R_N \Phi_t } = 2\mathrm{Im} \pscal{\tilde{\Phi}_t ,  R_N \Phi_t } \,.
\end{align*}
We estimate this expression term by term, considering each $R_{i,N}$ separately.
\\\underline{\textit{To $R_{4,N}$:}}
\\We estimate
\begin{align}
	& 2 \mathrm{Im} \Big\langle\tilde{\Phi}_t , - \frac{1}{\sqrt{\rho}} \cN W\ast \frac{|\varphi_t^\Lambda|^2}{\Lambda}(x) \Phi_t \Big\rangle 
		\leq  \frac{2}{\sqrt{N\Lambda}} \Vert \varphi_t^\Lambda\Vert_\infty^2 \Vert W \Vert_1 \Vert \tilde{\Phi}_t \Vert \Vert \cN \Phi_t \Vert \,. \label{eq:R4N-estimate-part-1}
\end{align}
Using that $\Vert \rmd\Gamma(A)\psi\Vert\leq \Vert A\Vert_{\rm op} \Vert \cN\psi\Vert$ we get for a fixed $x$
\begin{align}
	\frac{2}{\sqrt{\rho}}\mathrm{Im} \pscal{\tilde{\Phi}_t ,  \rmd\Gamma(Q_t^\Lambda W_x Q_t^\Lambda) \Phi_t }	= \frac{2}{\sqrt{\rho}} \Vert W\Vert_\infty \Vert \tilde{\Phi}_t \Vert \Vert \cN  \Phi_t \Vert \,. \label{eq:R4N-estimate-part-2-step-4}
\end{align}
We get the $R_{4,N}$ estimate from \eqref{eq:R4N-estimate-part-1} and \eqref{eq:R4N-estimate-part-2-step-4}:
\begin{align}
	2\mathrm{Im} \pscal{\tilde{\Phi}_t ,  R_{4,N} \Phi_t } = 2\sqrt{\frac{\Lambda}{N}} \left( \Vert\varphi_t^\Lambda\Vert_{\infty}^2 +1\right)  (\Vert W \Vert_1 + \Vert W\Vert_\infty) \Vert \tilde{\Phi}_t \Vert \Vert \cN \Phi_t \Vert \,. \label{eq:R4N-estimate}
\end{align}
\\\underline{\textit{To $R_{3,N}$:}}
\\ We insert $\cN^{-1/2} \cN^{1/2}$ to obtain
\begin{align}
	&2\mathrm{Im} \bigg\langle\tilde{\Phi}_t,  \bigg( \frac{\sqrt{N-\cN}}{\sqrt{N}} -1 \bigg) a (Q_t^\Lambda W_x \varphi_t^\Lambda) \Phi_t \bigg\rangle \nn \\
	&\leq 2 \Big\Vert \bigg( \frac{\sqrt{N-\cN}}{\sqrt{N}} -1 \bigg) (\cN+1)^{-1/2}\tilde{\Phi}_t \Big\Vert \Vert a (Q_t^\Lambda W_x \varphi_t^\Lambda) \cN^{1/2}\Phi_t \Vert \nn \\
	&\leq 2 \Vert W \Vert_2 \Vert \varphi_t^\Lambda \Vert_\infty N^{-1/2}\Vert \tilde{\Phi}_t \Vert  \Vert \cN \Phi_t \Vert\,, \label{eq:R3N-a-estimate} 
\end{align}
where  we have used that $\Vert a (Q_t^\Lambda W_x \varphi_t^\Lambda) \Phi_t \Vert\leq \sup_x\Vert Q_t^\Lambda W_x \varphi_t^\Lambda \Vert_2 \Vert \cN \Phi_t \Vert$ and $\Vert (\sqrt{N-\cN} -\sqrt{N})\psi \Vert^2\leq \pscal{\psi , \cN  \psi }$.
For the Hermitian conjugate term in $R_{3,N}$ we find an estimate similar to \eqref{eq:R3N-a-estimate} 
leading to
\begin{align}
	\pscal{\tilde{\Phi}_t,  R_{3,N} \Phi_t} \leq 4N^{-1/2} \Vert \varphi_t^\Lambda\Vert_\infty \Vert W \Vert_2   \Vert \tilde{\Phi}_t\Vert  \Vert (\cN + 1) \Phi_t \Vert \,. \label{eq:R3N}
\end{align}

\underline{\textit{To $R_{2,N}$:}}
\begin{align}
	&2\mathrm{Im} \Big\langle\tilde{\Phi}_t, R_{2,N}  \Phi_t \Big\rangle = 2\mathrm{Im}\Big\langle \tilde{\Phi}_t, \frac{1}{2N} \sum_{mnpq\geq 1} \Lambda{V}_{mnpq} a_m^*a_n^* a_pa_q \cN^{-1}\cN \Phi_t \Big\rangle \nn \\
	&\leq  \frac{\Lambda}{N} \bigg(  \int dy_1 dy_2 \Vert \sum_{mn\geq 1} V(y_1-y_2) u_m(y_1)u_n(y_2) a_ma_n \cN^{-1}\tilde{\Phi}_t \Vert^2 \bigg)^{1/2} \nn \\
	&\qquad \times \bigg( \int dy_1 dy_2 \Vert \sum_{pq\geq 1} u_p(y_1)u_q(y_2) a_pa_q \cN\Phi_t \Vert^2 \bigg)^{1/2} \nn \\
	&\leq   \frac{\Lambda}{N}  \Vert V \Vert_\infty \Vert  \tilde{\Phi}_t \Vert  \Vert \cN^2 \Phi_t \Vert \,, \label{eq:R2N-estimate}
\end{align}
where in the last step we have used $\pscal{u_p,u_m}=\delta_{p,m}$ to get
\begin{align}
	&  \int dy_1 dy_2 \Vert \sum_{pq\geq 1} u_p(y_1)u_q(y_2) a_pa_q \psi \Vert^2  
	=  \sum_{pq\geq 1} \pscal{\psi , a_p^* a_q^* a_pa_q \psi}
\nn \\
	={}&  \sum_{p\geq 1} \pscal{ a_q\psi , \cN a_q \psi} = \pscal{\psi , (\cN -1) \cN \psi} \,.
\end{align}
\underline{\textit{To $R_{1,N}$:}}
\\The remainder $R_{1,N}$ consists of several terms,
 see \eqref{eq:Def-R1N}. In the following we will estimate each of these terms individually.
\\\underline{To the $\mu_t^\Lambda$ terms in $R_{1,N}$:}
\begin{align}
	2\mathrm{Im} \pscal{\tilde{\Phi}_t, \left[ \frac{1}{2}\mu_t^\Lambda \frac{\cN}{N} + \frac{1}{2} \mu_t^\Lambda\frac{\cN^2}{N} + \mathrm{h.c.} \right] \Phi_t } 
	&\leq \frac{2}{N}\Vert \varphi_t^\Lambda \Vert_\infty^2 \Vert V\Vert_1  \Vert \tilde{\Phi}_t \Vert \Vert \cN^2 \Phi_t \Vert \,, \label{eq:R1N-mu-estimate}
\end{align}
where we have used $\BR\ni \mu_t^\Lambda = \frac{1}{2}\pscal{ \frac{\varphi_t^\Lambda}{\Lambda^{1/2}}, V\ast |\varphi_t^\Lambda|^2  \frac{\varphi_t^\Lambda}{\Lambda^{1/2}}} \leq \frac{1}{2} \Vert \varphi_t^\Lambda \Vert_\infty^2 \Vert V\Vert_1$.
\\\underline{To the $a(.),a^*(.)$ Terms in $R_{1,N}$:}
\begin{align}
	&2\mathrm{Im} \pscal{  \tilde{\Phi}_t ,  \left[ - \frac{(\cN +1)\sqrt{N-\cN}}{N} a \left( Q_t^\Lambda V\ast |\varphi_t^\Lambda|^2 \frac{\varphi_t^\Lambda}{\Lambda^{1/2}} \right) + \mathrm{h.c.} \right]  \Phi_t }  \nonumber \\
	&\leq \frac{4}{\sqrt{N}} 	\Vert V\Vert_1 \Vert \varphi_t^\Lambda \Vert_\infty^2 \Vert \tilde{\Phi}_t\Vert  \Vert (\cN +1)^{3/2} \Phi_t \Vert \,. \label{eq:R1N-a-a-STAR-estimate}
\end{align}
\\\underline{To the $a_j^*a_k$ Terms in $R_{1,N}$:}
\\We get with $\Vert \rmd\Gamma(A)\psi\Vert\leq \Vert A\Vert_{\rm op} \Vert \cN\psi\Vert$ that
\begin{align}
	&2\mathrm{Im} \Big\langle \tilde{\Phi}_t, - \frac{1}{2}  \rmd\Gamma (Q_t^\Lambda V\ast |\varphi_t^\Lambda|^2 Q_t^\Lambda)  \frac{\cN}{N}  \Phi_t \Big\rangle 
	 \leq \frac{\Vert \varphi_t^\Lambda \Vert^2_\infty}{N} \Vert V\Vert_1 \Vert  \tilde{\Phi}_t \Vert  \Vert \cN^{2} \Phi_t\Vert \,. \label{eq:R1N-qpqp-estimate}
\end{align}
The $K_1(t)$ term in $R_{1,N}$ is estimated in the following
\begin{align}
	& 2\mathrm{Im} \Big\langle\tilde{\Phi}_t, - \frac{1}{2} \frac{\cN}{N}  \rmd\Gamma(Q_t^\Lambda K_1^\Lambda(t)Q_t^\Lambda)  \Phi_t \Big\rangle \nn \\
	&= 2\mathrm{Im} \Big\langle\tilde{\Phi}_t, - \frac{1}{2}  \sum_{mn\geq 1} \Lambda{V}_{0mn0} a_m^*a_n \frac{\cN}{N}  \Phi_t \Big\rangle \nn \\
	&\leq \frac{\Vert \varphi_t^\Lambda \Vert_\infty^2}{N}   \nn \\
	  &\times\int dy_1dy_2 \Big\vert  \sum_{m,n\geq1}\Big\langle   |V|^{\frac{1}{2}}(y_1 - y_2) u_m(y_1) a_m \tilde{\Phi}_t ,   |V|^{\frac{1}{2}}(y_1 - y_2) u_n(y_2) a_n \cN  \Phi_t  \Big\rangle \Big\vert \nn \\
	 &\leq \frac{\Vert \varphi_t^\Lambda \Vert^2_\infty}{N} \Vert V\Vert_1 \Vert \tilde{\Phi}_t \Vert  \Vert (\cN + 1)^{2} \Phi_t\Vert \,. \label{eq:R1N-qpqp-h.c.-estimate}
\end{align}
Putting both estimates \eqref{eq:R1N-qpqp-estimate} and \eqref{eq:R1N-qpqp-h.c.-estimate} together we conclude
\begin{align}
	&2\mathrm{Im} \bigg\langle\tilde{\Phi}_t, \Big[- \frac{1}{2}  \sum_{mn\geq 1} \Lambda{V}_{m0n0} a_m^*a_n  \frac{\cN}{N}  - \frac{1}{2}  \sum_{mn\geq 1} \Lambda{V}_{0mn0} a_m^*a_n \frac{\cN}{N}  +\mathrm{h.c.} \Big]  \Phi_t \bigg\rangle \nn \\
	&\leq 4 \frac{\Vert \varphi_t^\Lambda \Vert_\infty^2}{N} \Vert V \Vert_1 \Vert \tilde{\Phi}_t \Vert \Vert (\cN +1 )^2 \Phi_t \Vert \,, \label{R1N-a-STAR-a-estimate}
\end{align}
where we used that the h.c. terms coincide with the original terms due to $V(y_1-y_2)=V(y_2-y_1)$.
\\\underline{To the $a_m^*a_n^*,a_m a_n$ terms in $R_{1,N}$:}
\\The estimates given here rely heavily on \cref{lem:qqpp-term-estimate}. By using \cref{lem:qqpp-term-estimate},  $0\leq -\sqrt{(N-\cN) (N-\cN-1)}+N \leq -(N-\cN-1) +N = \cN +1$ and inserting $(\cN +2)^{-1/2} (\cN +2)^{1/2} $ before $\Phi_t$ we get
\begin{align}
	&2\mathrm{Im} \bigg\langle \tilde{\Phi}_t , \frac{1}{2} \sum_{mn\geq1} \Lambda{V}_{mn00} a_m^*a_n^* \bigg( \frac{\sqrt{(N-\cN) (N-\cN-1)}}{N} -1 \bigg) \Phi_t \bigg\rangle \nn \\
	&\leq  \frac{1}{N} C \Vert \varphi_t^\Lambda \Vert _{\infty}^2 \Vert V \Vert_1 \Vert  \tilde{\Phi}_t \Vert \Vert (\cN +1)^2 \Phi_t \Vert  \nn \\
	&\quad +  \frac{\Lambda^{1/2}}{N} C \Vert \varphi_t^\Lambda \Vert _{\infty} \Vert V \Vert_2 \Vert  \tilde{\Phi}_t \Vert \Vert  (\cN +1)^{3/2} \Phi_t \Vert \,. \label{R1N-a-STAR-a-STAR-estimate}
\end{align}
The conjugate term can be estimated with the same argument as above with $\tilde{\Phi}_t$ and $\Phi_t$ interchanged and inserting $(\cN +1)^{-3/2} (\cN +1)^{3/2} $ before $\Phi_t$:
\begin{align}
	&2\mathrm{Im} \bigg\langle \tilde{\Phi}_t , \bigg( \frac{\sqrt{(N-\cN) (N-\cN-1)}}{N} -1 \bigg)\frac{1}{2} \sum_{mn\geq1} \Lambda{V}_{00mn} a_ma_n \Phi_t \bigg\rangle \nn \\
	&\leq \frac{1}{N} \Vert \varphi_t^\Lambda \Vert _{\infty}^2 \Vert V \Vert_1 \Vert   \tilde{\Phi}_t \Vert \Vert (\cN +1)^{2} \Phi_t \Vert  \nn \\
	&\quad + \frac{\Lambda^{1/2}}{N} \Vert \varphi_t^\Lambda \Vert _{\infty} \Vert V \Vert_2 \Vert  (\cN +3)^{-1/2}  \tilde{\Phi}_t \Vert \Vert (\cN +1)^{2} \Phi_t \Vert  \nn \\
	&\leq C \frac{\Lambda^{1/2}}{N} (\Vert \varphi_t^\Lambda \Vert _{\infty}+ \Vert \varphi_t^\Lambda \Vert _{\infty}^2) ( \Vert V \Vert_1 + \Vert V \Vert_2 ) \Vert   \tilde{\Phi}_t \Vert \Vert (\cN +1)^{2} \Phi_t \Vert\,.  \label{R1N-a-a-estimate}
\end{align}
Note that \eqref{R1N-a-a-estimate} has larger prefactor than \eqref{R1N-a-STAR-a-STAR-estimate}, namely $\Lambda^{1/2}/N$ instead of $1/N$ for the $(\cN+1)^2$ term. However, this is not important, as $R_{2,N}$ already introduces a larger prefactor $\Lambda/N$ for a $(\cN+1)^2$ term, due to \eqref{eq:R2N-estimate}.
\\ \underline{To the $a_m^*a_na_p, a_m^*a_n^*a_p$ terms in $R_{1,N}$:}
\begin{align}
	2 \mathrm{Im} \bigg\langle \tilde{\Phi}_t , \sum_{mnp\geq1} \Lambda{V}_{0mnp} \frac{\sqrt{N- \cN}}{N} a_m^* a_na_p \Phi_t \bigg\rangle 
	\leq 2 \Lambda^{1/2}\Vert \varphi_t \Vert_\infty&  \nn \\
	\times  \bigg(  \int dy_1dy_2 \Vert \sum_{m\geq1} V(y_1-y_2) u_m(y_2) a_m \frac{\sqrt{N- \cN}}{N} \tilde{\Phi}_t \Vert^2 \bigg)^{1/2}& \label{R1N-a-STAR-a-a-1-term} \\
	\times \bigg(  \int dy_1dy_2 \Vert \sum_{np\geq1} u_n(y_1) u_p(y_2) a_na_p  \Phi_t \Vert^2 \bigg)^{1/2}& \,. \label{R1N-a-STAR-a-a-2-term}
\end{align}
Now we estimate both terms \eqref{R1N-a-STAR-a-a-1-term} and \eqref{R1N-a-STAR-a-a-2-term} separately.
\begin{align}
	\eqref{R1N-a-STAR-a-a-1-term}^2 
	&= \int dy_1dy_2 \sum_{mn\geq1} V^2(y_1-y_2)u_n^*(y_2) u_m(y_2)  \Big\langle\tilde{\Phi}_t, a_n^* a_m \frac{N- \cN}{N^2} \tilde{\Phi}_t \Big\rangle \nn \\
	&= \Vert V\Vert_2^2 \sum_{m\geq1} \Big\langle\tilde{\Phi}_t, a_m^* a_m \frac{N- \cN}{N^2} \tilde{\Phi}_t \Big\rangle \leq \Vert V\Vert_2^2 \Big\langle\tilde{\Phi}_t, \frac{\cN}{N} \tilde{\Phi}_t \Big\rangle \,, \label{R1N-a-STAR-a-a-1-term-estimate}
\end{align}
where we have used that $(N-\cN)/N\leq1$. Now we estimate the second term 
\begin{align}
	\eqref{R1N-a-STAR-a-a-2-term}^2 
	={}&  \int dy_1dy_2 \sum_{mpnq\geq1} u_n(y_1) u_p(y_2) u_m^*(y_1) u_q^*(y_2)  \pscal{ \Phi_t , a_m^*a_q^* a_n a_p  \Phi_t } \nn \\
	={}&  \sum_{np\geq1} \pscal{ \Phi_t , a_n^*a_p^* a_n a_p  \Phi_t } \leq \pscal{ \Phi_t ,\cN^2 \Phi_t } \,. \label{R1N-a-STAR-a-a-2-term-estimate}
\end{align}
We conclude
\begin{align}
	&2 \mathrm{Im} \bigg\langle \tilde{\Phi}_t , \sum_{mnp\geq1} \Lambda{V}_{0mnp} \frac{\sqrt{N- \cN}}{N} a_m^* a_na_p \Phi_t \bigg\rangle 
	\nn \\
	&\leq 2 \Lambda^{1/2} \Vert \varphi_t^\Lambda \Vert_\infty   \eqref{R1N-a-STAR-a-a-1-term-estimate}^{1/2} \cdot \eqref{R1N-a-STAR-a-a-2-term-estimate}^{1/2} \nn \\
	&\leq 2 \Lambda^{1/2} \Vert \varphi_t^\Lambda \Vert_\infty  \frac{1}{N^{1/2}}\Vert V\Vert_2 \Vert\cN^{1/2} \tilde{\Phi}_t  \Vert  \Vert \cN\Phi_t \Vert \,. \label{R1N-a-STAR-a-a-estimate-N+-split}
\end{align}
Analogously one finds the same estimate for the conjugate term.
Now by inserting $(\cN +1)^{-1/2} (\cN +1)^{1/2} $ before $\Phi_t$ at the start of our estimate in \eqref{R1N-a-STAR-a-a-estimate-N+-split} and $(\cN +1)^{-1} (\cN +1)$ for the  conjugate term, we conclude
\begin{align}
		&2 \mathrm{Im} \bigg\langle \tilde{\Phi}_t , \Big(\sum_{mnp\geq1} \Lambda{V}_{0mnp} \frac{\sqrt{N- \cN}}{N} a_m^* a_na_p + \mathrm{h.c.} \Big)\Phi_t \bigg\rangle \nn \\
		&\leq 2 \frac{\Lambda^{1/2}}{N^{1/2}} \Vert \varphi_t^\Lambda \Vert_\infty  \Vert V\Vert_2 \Vert \tilde{\Phi}_t  \Vert  \Vert (\cN +1)^{3/2}\Phi_t \Vert \,. \label{R1N-a-STAR-a-a-estimate}
\end{align}
\underline{$R_{1,N}$ estimate conclusion:}
\\We now collect all estimates for the terms in $R_{1,N}$, i.e. \eqref{eq:R1N-mu-estimate}, \eqref{eq:R1N-a-a-STAR-estimate}, \eqref{R1N-a-STAR-a-estimate}, \eqref{R1N-a-STAR-a-STAR-estimate}, \eqref{R1N-a-a-estimate}, \eqref{R1N-a-STAR-a-a-estimate} and
group contributions by the order of $\cN +1$. In doing so, we use the bounds $1/N\leq \Lambda^{1/2}/N \leq \Lambda^{1/2}/N^{1/2}$, valid for $\Lambda,N\geq1$. This leads to
\begin{align}
	&2\mathrm{Im} \pscal{ \tilde{\Phi}_t, R_{1,N} \Phi_t }  \nn \\
	&\leq C\left(\frac{\Lambda}{N}\right)^{1/2} \left( \Vert \varphi_t^\Lambda\Vert_\infty +\Vert \varphi_t^\Lambda\Vert_\infty ^2 \right) \left( \Vert V\Vert_1 + \Vert V\Vert_2 \right)   \Vert \tilde{\Phi}_t  \Vert  \Vert (\cN +1)^{3/2}\Phi_t \Vert  \nn \\
	&+ C \frac{\Lambda^{1/2}}{N} \left( \Vert \varphi_t^\Lambda\Vert_\infty +\Vert \varphi_t^\Lambda\Vert_\infty ^2 \right) \left( \Vert V\Vert_1 + \Vert V\Vert_2 \right)   \Vert \tilde{\Phi}_t  \Vert  \Vert (\cN +1)^2 \Phi_t \Vert \,, \label{eq:R1N-estimate}
\end{align}
which concludes our estimate for $R_{1,N}$.

Collecting the different $R_{i,N}$ estimates done above, i.e. \eqref{eq:R4N-estimate}, \eqref{eq:R3N}, \eqref{eq:R2N-estimate}, \eqref{eq:R1N-estimate} and using $\Lambda\geq1$ to simplify, yields the $R_N$ estimate after applying $\Vert \varphi_t^\Lambda\Vert_\infty \leq C$ (see \cref{prop:phi-propagation}).
\end{proof}


\subsection{Proof of Lemma~\ref{lem:Tech-particle-number-estimate}} \label{sec:Proof-lem:Tech-particle-number-estimate}

The proof of \cref{lem:Tech-particle-number-estimate}, estimating the excitation number, is given below.

\begin{proof}[Proof of \cref{lem:Tech-particle-number-estimate}]
We use a Grönwall estimate. To shorten the notation we write $\psi_{\Lambda,t}^{\mathrm{BF}}=:\psi_t$.  We consider 
\begin{align}
	&\BR\ni \partial_t \pscal{ \psi_t \,,\, (\cN+1)^n \psi_t } =\mathrm{Re} \pscal{ \psi_t \,,\, -\rmi \left[(\cN+1)^n , H^{\rm BF}_\Lambda\right] \psi_t } \nn\\
	 &= \mathrm{Im}  \Big\langle \psi_t \,,\, \Big[(\cN+1)^n , \frac{1}{2} \sum_{mn} \left( (K_2^\Lambda(t)J)_{mn} a_m^*a_n^* +{\rm h.c.}  \right)\Big] \psi_t \Big\rangle \label{eq:N-a*a*-Term} \\
	&\quad +  \mathrm{Im}  \pscal{ \psi_t \,,\, \left[(\cN+1)^n , a^*(Q_t^\Lambda W_x\varphi_t^\Lambda) + a(Q_t^\Lambda W_x\varphi_t^\Lambda)   \right] \psi_t } \label{eq:N-single-a-a*-Term} \,. 
\end{align}
We now estimate both terms \eqref{eq:N-a*a*-Term} and \eqref{eq:N-single-a-a*-Term} separately. 
For \eqref{eq:N-single-a-a*-Term} we use $(\cN+b)^{m}-a^{m} \leq (1+b)^{m} \cN^{m-1}$,  $\forall m\in\mathbb{N}_0$ and $b\geq0$, to obtain
\begin{align}
&\eqref{eq:N-single-a-a*-Term}
	 = 2\mathrm{Im}  \pscal{ \psi_t \,,\,   a^*(Q_t^\Lambda W_x\varphi_t^\Lambda)\{ (\cN+2)^n -(\cN+1)^n \} \psi_t} \nn \\
	 &= 2\mathrm{Im}  \pscal{ \{ (\cN+1)^n -\cN^n \}^{1/2} \psi_t \,,\,   a^*(Q_t^\Lambda W_x\varphi_t^\Lambda)\{ (\cN+2)^n -(\cN+1)^n \}^{1/2} \psi_t} \nn \\
	&\leq C  \Vert W \Vert_2 \Vert (\cN +1)^{\frac{n-1}{2}} \psi_t \Vert^2     +  C \Vert W \Vert_2 \Vert \varphi_t^\Lambda\Vert_\infty^2 \Vert  (\cN +1)^{\frac{n}{2}} \psi_t \Vert^2 \,.  \label{eq:N-single-a-a*-Term-estimate}
\end{align}
The estimate of \eqref{eq:N-a*a*-Term} follows from \cref{lem:qqpp-term-estimate} and the identity\\ $\sum_{mn\geq0} (K_2^\Lambda(t)J)_{mn} a_m^*a_n^*=\sum_{mn\geq1}\Lambda{V}_{mn00} a_m^*a_n^*,$ yielding 
\begin{align}
\eqref{eq:N-a*a*-Term}\leq{}& C( \Vert V\Vert_1 +\Vert V\Vert_2) \Vert \varphi_t^\Lambda\Vert_\infty^2  \Vert  (\cN+1)^{\frac{n}{2}} \psi_t \Vert^2   \nn\\
& + C\Lambda \Vert V\Vert_2 \Vert (\cN+1)^{\frac{n-1}{2}} \psi_t \Vert^2 \,. \label{eq:N-a*a*-Term-estimate}
\end{align}
This finally leads us to
\begin{align}
	& \partial_t \pscal{ \psi_t \,,\, (\cN+1)^n \psi_t } \leq \eqref{eq:N-a*a*-Term} +  \eqref{eq:N-single-a-a*-Term} 	\leq \eqref{eq:N-single-a-a*-Term-estimate} + \eqref{eq:N-a*a*-Term-estimate} \nn \\
	&\leq C \Vert \varphi_t^\Lambda \Vert_\infty^2  ( \Vert V\Vert_1 + \Vert V\Vert_2 + \Vert W\Vert_2 ) \Vert  (\cN +1)^{\frac{n}{2}} \psi_t \Vert^2 \nn\\
	&\quad +  C( \Lambda \Vert V\Vert_2 + \Vert W \Vert_2 ) \Vert (\cN+1)^{\frac{n-1}{2}} \psi_t \Vert^2 \,. \label{eq:Time-derivative-N-estimate}
\end{align}
From \eqref{eq:Time-derivative-N-estimate} and \eqref{eq:phi-L2infty-bound} we are now able to prove by induction on $n$, using Grönwall's inequality, that
\begin{align}
	\pscal{ \psi_t \,,\, (\cN+1)^n \psi_t } \leq C \pscal{ \psi_0 \,,\, \big(\Lambda + (\cN+1)\big)^{n} \psi_0 } \,,\quad \forall n\in\mathbb{N}_0 \,. \label{eq:N+1-estimate-with-induction}
\end{align}
\end{proof}

\subsection{Proof of Lemma~\ref{lem:Conditons-LNS15-Theorem-8}} \label{sec:Proof-lem:Conditons-LNS15-Theorem-8}

\begin{proof}[Proof of \cref{lem:Conditons-LNS15-Theorem-8}] 
Let $U_{\Lambda,t}^{\rm Bog}=U_{\cV_t^\Lambda}$ be the propagator of the Bogoliubov dynamics. We write 
\[\cZ_0^\Lambda=:\begin{pmatrix}
	c^\Lambda & J^*b^\Lambda J^* \\
	b^\Lambda & Jc^\Lambda J^* 
\end{pmatrix}\,, \quad \cV_t^\Lambda=:\begin{pmatrix}
	U_t^\Lambda & J^*V_t^\Lambda J^* \\
	V_t^\Lambda & JU_t^\Lambda J^* 
\end{pmatrix} \,. \]
Set $ f^\Lambda:=\big(t\mapsto (x\mapsto (c^\Lambda +J^*b^\Lambda )((U_t^\Lambda)^*- (V_t^\Lambda)^*J) Q_t^\Lambda W_x \varphi_t^\Lambda)\big)$ and note that $(\cV_t^\Lambda)^{-1}=S(\cV_t^\Lambda)^*S$. 
In \cref{lem:uniform-boundedness-F-Lambda} we prove the following regularity and bounds for $f^\Lambda$:
\begin{itemize}
	\item[a)] For almost all $x\in\mathbb{R}^d$ we have $(t\mapsto f_t^\Lambda(x,\,.\,))\in C^1(\mathbb{R}_t,L^2(\mathbb{R}^d_y))$.
	\item[b)] For all times $T\geq0$ there exists a constant $ C>0$ such that for all volumes $\Lambda\geq1$, $x\in\BR^3$ and $-T\leq t\leq T$
  		\begin{align}
  			   \sum_{|\beta|\leq M} (1+x^2)^{-1/4}\Vert  \partial^\beta_x f_{t,x}^\Lambda\Vert_{L^2(\BR^3)} +  (1+x^2)^{-1/4}\Vert \partial_t f_{t,x}^\Lambda \Vert_{L^2(\BR^3)} \leq C\,. \label{eq:f-estimate}
  		\end{align}
\end{itemize}

 The bound on the Hamiltonian by $\pm h_{\rm oc}$ in \eqref{eq:aux-eq-27} follows directly from $0\leq -\Delta_x\leq h_{\rm oc}$, $\pm A(f_{t,x}^\Lambda \oplus J f_{t,x}^\Lambda)\leq 2 (\sup_x (1+x^2)^{-1/4}\Vert  f_{t,x}^\Lambda \Vert_2)  (1+x^2)^{1/4}(\cN+1)^{1/2}$, $(1+x^2)^{1/4}(\cN+1)^{1/2}\leq h_{\rm oc}$ and the estimate \eqref{eq:f-estimate} on $f$.
	\\ The regularity of $\big(t\mapsto \big\langle \psi, \widetilde{H}^{\mathrm{BF}}_{\Lambda. \cZ_0}(t)  \psi \big\rangle \big)\in C^1(\BR,\BR)$ follows directly from the regularity of $f^\Lambda$. The bound in \eqref{eq:aux-eq-28} can immediately be seen by calculating the derivative $\frac{d}{dt} \big\langle \psi, \widetilde{H}^{\mathrm{BF}}_{\Lambda. \cZ_0}(t)  \psi \big\rangle $ explicitly. 
	\\Thus, it remains to verify the commutator estimate for $\widetilde{H}^{\mathrm{BF}}_{\Lambda,\cZ_0^\Lambda}(t)$ and $h_{\rm oc}^M$ as in \eqref{eq:LNS15-Thm-commutator-realtion}.  To this end, let $\psi \in \bigcup_{L\geq0} \bigoplus_{n=0}^L \sS(\BR^d,\BC) \bar{\otimes} L^2(\BR^d)^{\bar{\otimes}_s n}=:D$ and $f\in C_b^\infty(\BR^d,L^2)$. We prove that there exists a constant $C_{M}$ only dependent on $M$ such that
	\begin{align}
	\left\vert\pscal{\psi,[-\Delta_x,h_{\rm oc}^M] \psi}\right\vert \leq C_{M} \pscal{\psi,h_{\rm oc}^M \psi}\,, \label{eq:aux-eq-18} \\
	\left\vert  \pscal{ A(f_{t,x} \oplus J f_{t,x} ) \psi ,  h_{\rm oc}^M \psi   } -  \pscal{ h_{\rm oc}^M \psi , A(f_{t,x}\oplus J f_{t,x} )  \psi} \right\vert 
		\nn \\
		\leq C_{M} \Big(\sup_{x\in \BR^3} \sum_{|\beta|\leq M} (1+x^2)^{-1/4}\Vert \partial^\beta_x f_{t,x}\Vert_2\Big) \pscal{\psi ,h_{\rm oc}^M  \psi}\,. \label{eq:aux-eq-26}
	\end{align}
	These estimates can be extended to all $\psi\in D(h_{\rm oc}^M)$ and all $f$ satisfying $\sup_{x\in \BR^3} \sum_{|\beta|\leq M} (1+x^2)^{-1/4}\Vert \partial^\beta_x f_{t,x}\Vert_2<\infty$, and hence, in particular, to $f^\Lambda$, by a standard density argument. This will prove \eqref{eq:LNS15-Thm-commutator-realtion} and therefore \cref{lem:Conditons-LNS15-Theorem-8}.
It remains to prove \eqref{eq:aux-eq-18} and \eqref{eq:aux-eq-26}.
	
	We start with the proof of \eqref{eq:aux-eq-26}. 
We begin with the case of $M=:2m$ even. This allows us to symmetrically split $h_{\rm oc}^{2m}=:h^{2m}$ between the arguments of the scalar product. Using the recursive definition of the iterated commutator, namely $\mathrm{ad}_{h}^{(0)}(a^{\#}(f_x)):= a^{\#}(f_x)$ and $\mathrm{ad}_{h}^{(k)}(a^{\#}(f_x)) \psi:= \big[\mathrm{ad}_{h}^{(k-1)}(a^{\#}(f_x)),h \big] \psi$ for $k\in\BN_+$,
we commute the $h$ operators with $a^{\#}(f_x)$ step by step, applying this definition $m$-times. This yields an expansion of the commutator terms in \eqref{eq:aux-eq-26}:
\begin{align}
	& \Big\vert \pscal{ (a^{\#}(f_x))^*\psi ,  h^{2m} \psi   } - \pscal{ h^{2m}\psi ,  a^{\#}(f_x) \psi } \Big\vert \nn \\
	&= \Big\vert \sum_{k=0}^{m-1} {m \choose k}
	\Big\{ \big\langle h^k \psi ,  \mathrm{ad}_{h}^{(m-k)}(a^{\#}(f_x))  h^m \psi   \big\rangle \nn \\
	&\quad - \big\langle h^m \psi , (-1)^{m-k} \mathrm{ad}_{h}^{(m-k)}(a^{\#}(f_x))  h^k \psi   \big\rangle \Big\} \Big\vert\, , \label{eq:aux-eq-65}
\end{align}
where the $k=m$ terms cancel, as they have identical numbers of $h$ operators on both sides of the scalar product.
To proceed, we need the following representation for the iterated commutator (for $m-k\geq1$):
\begin{align}
	\mathrm{ad}_{h}^{(m-k)}(a^{\#}(f_x)) \psi = \sum_{\substack{|\beta|+|\gamma|+|\delta|+ l\leq 2(m-k) \\  |\gamma|+|\delta|+ l\leq 2(m-k)-1}} C_{\beta,\gamma,\delta,l}^{(m-k)} a^{\#}(D_x^\beta f_x)\cN^l x^\gamma D_x^\delta \psi\,. \label{eq:aux-eq-66}
\end{align}
This representation is not valid for $m-k=0$, as $\mathrm{ad}_{h}^{(m-k)}(a^{\#}(f_x))=a^{\#}(f_x)$. Thus, it is essential that the sum in \eqref{eq:aux-eq-65} only runs over $k\leq m-1$.

A key feature of \eqref{eq:aux-eq-66} is that the combined order of $\cN^l x^{\gamma}D_x^{\delta}$ is strictly less than the order of monomials in $h^{m-k}$, allowing us to ultimately close the estimate.
\eqref{eq:aux-eq-66} can be verified by induction and readily checked for $m-k=1$ using the definition of the harmonic oscillator $h$.

We now combine the representation \eqref{eq:aux-eq-66} with the inequality
\begin{align}
		(-iD_x)^\delta x^{\gamma} \cN^{2l}  x^{\gamma}(-iD_x)^\delta \leq C h^{|\delta|+|\gamma|+l} \,, \label{eq:aux-eq-67}
\end{align} 
which follows from the definition of the harmonic oscillator $h=-\Delta_x +x^2 +(\cN+1)^2$ by an induction argument on $(|\gamma|,|\delta|,l)$. 
Applying this, we estimate \eqref{eq:aux-eq-65}: 
\begin{align}
	&\Vert \mathrm{ad}_{h}^{(m-k)}(a^{\#}(f_x)) h^k \psi\Vert \nn \\
	& \leq  \sum_{\substack{|\beta|+|\gamma|+|\delta|+ l\leq 2(m-k) \\  |\gamma|+|\delta| +l \leq 2(m-k)-1}} C_{m,k} \big(\sup_x(1+x^2)^{-1/4}\Vert D_x^\beta f_x \Vert_{L^2}\big) \nn \\
	&\quad \times \Vert (1+x^2)^{1/4} (\cN+1)^{1/2+l} x^\gamma D_x^\delta h^k \psi\Vert \nn \\
	& \overset{\clap{\eqref{eq:aux-eq-67}}}{\leq}  \sum_{\substack{|\beta|+|\gamma|+|\delta|+ l\leq 2(m-k) \\  |\gamma|+|\delta|+l\leq 2(m-k)-1}} C_{m,k} 
	\big(\sup_{x\in \BR^3}(1+x^2)^{-1/4}\Vert \partial^\beta_x f_{t,x}\Vert_2\big)
	  C \Vert h^{\frac{|\delta|+|\gamma|+l +1}{2}}   h^k \psi\Vert \nn \\
	&\leq C_{m,k} \Big(\sup_{x\in \BR^3} \sum_{|\beta|\leq M} (1+x^2)^{-1/4}\Vert \partial^\beta_x f_{t,x}\Vert_2\Big) \Vert h^m\psi\Vert \,. \label{eq:aux-eq-71}
\end{align}
We conclude from \eqref{eq:aux-eq-65} and \eqref{eq:aux-eq-71}
\begin{align*}
	&\big\vert \pscal{(a^{\#}(f_x))^* \psi ,  h^{2m} \psi   } - \pscal{ h^{2m}\psi ,  a^{\#}(f_x) \psi   } \big\vert \\
	&\leq C_m  \Big(\sup_{x\in \BR^3} \sum_{|\beta|\leq M}(1+x^2)^{-1/4}\Vert \partial^\beta_x f_{t,x}\Vert_2\Big) \Vert h^m\psi\Vert^2 \,.
\end{align*}
 This proves the bound \eqref{eq:aux-eq-26} in the case of even $M=2m$ for all $f\in C_b^\infty$.

Now we consider the case $M=2m+1$ odd. 
To avoid commuting $a^{\#}(f_x)$ with $h^{1/2}$, whose commutator is not directly accessible, we rewrite $h$ in the following form:
\begin{align}
h= (-\nabla_x +x) (\nabla_x+x) + (\cN+1)(\cN+1)+3 \,. \label{eq:aux-eq-68}
\end{align}
 To keep the argument simple, we restrict ourselves to the case of $m=0$.
We use the decomposition \eqref{eq:aux-eq-68} to get similarly to \eqref{eq:aux-eq-65} that
\begin{align}
	& \Big\vert \pscal{(a^{\#}(f_x))^* \psi ,  h \psi   } - \pscal{ h\psi ,  a^{\#}(f_x) \psi} \Big\vert \nn\\
	&= \Big\vert \sum_{B\in\{ (\nabla_x +x) ,(\cN+1)\}} 
	  \Big(	
	  \mathrm{Im}\pscal{ \psi ,  \Big[a^{\#}(f_x),B^*\big] B  \psi } \nn \\
	  & \qquad\qquad\qquad\qquad\qquad - \mathrm{Im} \pscal{B \psi ,  \big[a^{\#}(f_x),B^*\big] \psi   } 
	\Big)\Big\vert \,. \label{eq:aux-eq-75}
\end{align}
The commutators in \eqref{eq:aux-eq-75} can be estimated  directly, proving \eqref{eq:aux-eq-26} for $M=1$. 
The result then extends to all odd $M=2m+1$ by analogy with the even case.

 With the use of the following identity,
\begin{align}
	\mathrm{ad}_h^{(k)} (-\Delta_x)\psi &= \sum_{|\gamma|+|\delta|\leq2} C_{\gamma,\delta}^{(k)} x^\gamma D_x^{\delta} \psi \,, \quad \forall k\in\BN_0 \,,
\end{align} 
 instead of the representation of the iterated commutator in \eqref{eq:aux-eq-66} the proof of the estimate \eqref{eq:aux-eq-18} follows in a similar fashion to the proof of the \eqref{eq:aux-eq-26} estimate shown above. 
	\end{proof}


\subsection{Proofs of the Infinite-Volume Approximation}
The following lemma establishes the convergence, as $\Lambda\to \infty$,  of the terms that appear in $\widetilde{H}^{\mathrm{BF}}_{\Lambda,\cZ_0}(t)$ and in the generator of $\cV_t^\Lambda$.

\begin{lemma}\label{lem:Approx-V-by-V-infty-Conv-Rates-2} 
	 For all volumes $\Lambda\geq1$, let $\varphi_t^\Lambda$ be the solution of the Hartree equation \eqref{eq:Hartree}. Assume that its initial data $\varphi_0^\Lambda$  varies on the scale $\Lambda^{1/3}$, that is, 
\cref{con:Initial condition}, \cref{con:Initial condition derivative condensate}i)$_{2}$. Furthermore, we require that $\varphi_0^\Lambda$ is flat around the origin, namely \cref{con:Condensate-flat-around-origin}$_{2,s}$, $0<s<1/3$, and assume that $\exists \eta\in H^1(\BR^3)$, $\delta,C>0$ such that $\forall \Lambda\geq1$
\begin{align}
	\Vert \varphi_0^\Lambda(\Lambda^{1/3}\,.\,) -\eta \Vert_2 \leq  C\Lambda^{-\delta} \,. \label{eq:ApproxV-by-V-infty-0}
\end{align}
We set
\begin{align}
	\gamma=\min\!\left\{\delta,\, s,\, \tfrac{3}{2}\bigl(\tfrac13-s\bigr),\, \tfrac16 \right\} \,. \label{eq:def-gamma}
\end{align}
Then we have for all $T\geq0$ that there exists $C>0$ such that for all $\Lambda\geq1$ and $f\in L^2(\BR^3)$ with $y^2 f\in L^2$ 
 \begin{align}
 	\Vert (Q_t^\Lambda -1) f \Vert_2 &\leq C\Lambda^{-3/2(1/3-s)} \Vert (1+y^2) f\Vert_2 \,,  \label{eq:ApproxV-by-V-infty-2}\\
 	|\mu_t^\Lambda -\mu^{\infty} | &\leq C ( \Lambda^{-\delta} + \Lambda^{-1/6}) \,, \label{eq:ApproxV-by-V-infty-3} \\
 	\left\Vert \left( \varphi_t^\Lambda - \rme^{-\rmi t \left(\int V -\mu^\infty\right)} \right) f \right\Vert_2 &\leq  C \Lambda^{-\gamma}  \Vert (1+y^2) f\Vert_2 \nn \\
 	&\quad  + C\Lambda^{-1/6} \Vert f\Vert_\infty \,, \quad \text{if } f\in L^\infty   \label{eq:ApproxV-by-V-infty-4}
  \end{align}
  and 
  \begin{align}
  	\left\Vert \left(K_1^\Lambda(t) -K_1^\infty \right) f \right\Vert_2 &\leq C  \Lambda^{-\gamma}    \Vert (1+y^2) f\Vert_2  \label{eq:ApproxV-by-V-infty-6} \\
  	\left\Vert \left(K_2^\Lambda(t)  -K_2^\infty \rme^{-2\rmi t\left(\int V -\mu^\infty\right) }\right) J f \right\Vert_2 &\leq C  \Lambda^{-\gamma}    \Vert (1+y^2) f\Vert_2 \,. \label{eq:ApproxV-by-V-infty-6.1}
  \end{align}
  If in addition \cref{con:Initial condition derivative condensate}$_{4}$ is satisfied then
  \begin{align}
  	&\Big\Vert \Big( \left( \rmi \partial_t \varphi_t^\Lambda\right) - \Big(\int V -\mu^\infty\Big) \rme^{-\rmi t\left(\int V -\mu^\infty\right)} \Big) f \Big\Vert_2 \leq  C \Lambda^{-1/6} \Vert f\Vert_\infty \nn \\
  	&\qquad\qquad\qquad\qquad\qquad\qquad\quad +  C \Lambda^{-\gamma}    \Vert (1+y^2) f\Vert_2 \,, \quad \text{if } f\in L^\infty, \label{eq:ApproxV-by-V-infty-7} 
  	\end{align}
  	and 
  	\begin{gather}
  	\left\Vert \rmi \partial_t Q_t^\Lambda f\right\Vert_2 \leq C \left( \Lambda^{-1/6} + \Lambda^{-1/2(1-3s)} \right) \Vert (1+y^2) f\Vert_2 \,. \label{eq:ApproxV-by-V-infty-8}
  \end{gather}

\end{lemma}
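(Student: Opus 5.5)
The common device for all bounds is to weight the test function by $(1+y^2)$, so that only the condensate near the origin matters, on the scale $\Lambda^{s}$ where $\varphi_0^\Lambda$ is flat. I would first prove a local approximation of $\varphi_t^\Lambda$ by the phase and then derive every other estimate from it, together with a separate global treatment of $\mu_t^\Lambda$.

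\emph{Step 1: the projection \eqref{eq:ApproxV-by-V-infty-2}.} Since $Q_t^\Lambda-1=-P_t^\Lambda$, we have $\Vert (Q_t^\Lambda-1)f\Vert_2=\Lambda^{-1}|\langle\varphi_t^\Lambda,f\rangle|\,\Vert\varphi_t^\Lambda\Vert_2=\Lambda^{-1/2}|\langle \varphi_t^\Lambda,f\rangle|$. Using $\Vert\varphi_t^\Lambda\Vert_\infty\le C$ from \cref{prop:phi-propagation} and $\Vert f\Vert_1\le C\Vert(1+y^2)f\Vert_2$, this gives a bound $C\Lambda^{-1/2}\Vert(1+y^2)f\Vert_2$. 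That is better than the claimed $\Lambda^{-3/2(1/3-s)}$, because $3/2(1/3-s)<1/2$.

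\emph{Step 2: the chemical potential \eqref{eq:ApproxV-by-V-infty-3}.} By \cref{cor:phi-propagation}, $\Vert|\varphi_t^\Lambda|^2-|\varphi_0^\Lambda|^2\Vert_{1\wedge2}\le C\Lambda^{-1/6}$, so up to $\Lambda^{-1/6}$ we may replace $\varphi_t^\Lambda$ by $\varphi_0^\Lambda$ in $\mu_t^\Lambda$. For $\mu_0^\Lambda$, rescale $y=\Lambda^{1/3}z$. This turns $\frac1{2\Lambda}\int|\varphi_0^\Lambda|^2 V*|\varphi_0^\Lambda|^2$ into $\frac12\int |\eta_\Lambda|^2 (V_\Lambda*|\eta_\Lambda|^2)$, where $\eta_\Lambda=\varphi_0^\Lambda(\Lambda^{1/3}\cdot)$ and $V_\Lambda=\Lambda V(\Lambda^{1/3}\cdot)$ is an approximate delta of mass $\int V$. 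Two errors remain:
\begin{itemize}
\item replacing $\eta_\Lambda$ by $\eta$ costs $\Lambda^{-\delta}$ by \eqref{eq:ApproxV-by-V-infty-0}, using $L^\infty$ bounds;
\item replacing $V_\Lambda*g$ by $(\int V)g$ costs $\Lambda^{-1/3}$, using $\Vert|y|V\Vert_1$ and $\Vert\nabla|\varphi_0^\Lambda|^2\Vert_\infty\le C\Lambda^{-1/3}$.
\end{itemize}

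\emph{Step 3: local approximation of the condensate \eqref{eq:ApproxV-by-V-infty-4}.} Write $\varphi_t^\Lambda-\rme^{-\rmi t(\int V-\mu^\infty)}=(\varphi_t^\Lambda-\tphi_t^\Lambda)+(\tphi_t^\Lambda-\rme^{-\rmi t(\int V-\mu^\infty)})$.
\begin{itemize}
\item The first part, multiplied by $f$, is bounded by $\Vert f\Vert_\infty C\Lambda^{-1/6}$ via \cref{prop:phi-propagation}.
\item For the second part use $\tphi_t^\Lambda=\rme^{-\rmi(tV*|\varphi_0^\Lambda|^2-\int_0^t\mu_s^\Lambda)}\varphi_0^\Lambda$. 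The phase differences are controlled by Step 2 and by $|V*|\varphi_0^\Lambda|^2(y)-\int V|\varphi_0^\Lambda(y)|^2|\le C\Lambda^{-1/3}$.
\item The remaining factor $|\varphi_0^\Lambda(y)-1|$ at $|y|\lesssim\Lambda^{s}$ is handled by a Taylor expansion with the flatness \cref{con:Condensate-flat-around-origin}$_{2,s}$, exactly as in \cref{Lemma:CalculationTheta}a). This gives $|\varphi_0^\Lambda(y)-1|\le C\Lambda^{-2(1/3-s)}(1+|y|^2\Lambda^{-2s})$. Alternatively, splitting into $|y|\le\Lambda^s$ and $|y|>\Lambda^s$, the weight $(1+y^2)^{-1}$ gives $\Lambda^{-3/2\cdot}$-type decay. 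Either route is dominated by $\Lambda^{-\gamma}\Vert(1+y^2)f\Vert_2$.
\end{itemize}

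\emph{Step 4: the operators $K_1^\Lambda$, $K_2^\Lambda$ \eqref{eq:ApproxV-by-V-infty-6}, \eqref{eq:ApproxV-by-V-infty-6.1}.} Insert Step 1 for the $Q_t^\Lambda$ factors. Then $\tilde K_1^\Lambda f=\varphi_t^\Lambda\, V*((\varphi_t^\Lambda)^*f)$ differs from $V*f$ by terms in which one $\varphi_t^\Lambda$ factor is replaced by the phase, and each such difference is controlled by Step 3. The inner factor $(\varphi_t^\Lambda)^*f$ is directly of the form of Step 3. For the outer factor, multiplying $g:=V*f$ we need $\Vert(1+y^2)g\Vert_2\le C\Vert(1+y^2)f\Vert_2$, which follows from the moments of $V$ together with $\Vert g\Vert_\infty\le C\Vert f\Vert_2$. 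The $K_2^\Lambda$ case is identical, with the phases combining to $\rme^{-2\rmi t(\dots)}$. In this route the $\Lambda^{-1/6}\Vert\cdot\Vert_\infty$ term of Step 3 is absorbed, since $\Vert V*f\Vert_\infty\le C\Vert f\Vert_2$ and $\Lambda^{-1/6}\le\Lambda^{-\gamma}$.

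\emph{Step 5: time derivatives \eqref{eq:ApproxV-by-V-infty-7}, \eqref{eq:ApproxV-by-V-infty-8}.} Use the Hartree equation $\rmi\partial_t\varphi_t^\Lambda=(-\tfrac12\Delta+V*|\varphi_t^\Lambda|^2-\mu_t^\Lambda)\varphi_t^\Lambda$. The Laplacian term is $O(\Lambda^{-2/3})$ in $L^\infty$ by \cref{cor:derivative-varphi-L2-norm-estimate} with \cref{con:Initial condition derivative condensate}$_4$ and Sobolev embedding; this is where the additional condition is needed. The rest is $(V*|\varphi_t^\Lambda|^2-\mu_t^\Lambda)\varphi_t^\Lambda$, handled by Steps 2–3. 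For \eqref{eq:ApproxV-by-V-infty-8}, $\partial_tP_t^\Lambda$ produces $\Lambda^{-1}|\langle\partial_t\varphi_t^\Lambda,f\rangle|\,\Vert\varphi_t^\Lambda\Vert_2$ plus a symmetric term. The $L^2$ bound $\Vert\partial_t\varphi_t^\Lambda\Vert_2\le C\Lambda^{1/2}$ then gives $\Lambda^{-1/2}$-type factors that fit the stated right-hand side.

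\emph{Main obstacle.} I expect Step 3 to be the hardest: obtaining a \emph{local} rate for $|\varphi_0^\Lambda-1|$ and the phase mismatch, while only having the global $\Lambda^{-1/6}$ error for $\varphi_t^\Lambda-\tphi_t^\Lambda$. The exponents $s$ and $3/2(1/3-s)$ in $\gamma$ come from balancing the flat region $|y|\le\Lambda^s$ against the weight $(1+y^2)^{-1}$ outside it.
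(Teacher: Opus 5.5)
Your overall route matches the paper's: the split into $\varphi_t^\Lambda-\tphi_t^\Lambda$ and $\tphi_t^\Lambda-\rme^{-\rmi t\nu}$ with $\nu=\int V-\mu^\infty$, the Taylor/flatness bound on $\varphi_0^\Lambda-1$, the rescaling for $\mu_0^\Lambda$, and the Hartree equation for the time derivatives. Your Step 1, via $\Vert f\Vert_1\le \Vert(1+y^2)^{-1}\Vert_2\Vert(1+y^2)f\Vert_2$, even gives the sharper rate $\Lambda^{-1/2}$ without a localization function.

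\textbf{The gap is in Step 4.} Step 3 controls the global piece only as $\Vert(\varphi_t^\Lambda-\tphi_t^\Lambda)f\Vert_2\le C\Lambda^{-1/6}\Vert f\Vert_\infty$. That is fine for \eqref{eq:ApproxV-by-V-infty-4}, where $f\in L^\infty$ is assumed. In \eqref{eq:ApproxV-by-V-infty-6} and \eqref{eq:ApproxV-by-V-infty-6.1}, however, $f$ only satisfies $(1+y^2)f\in L^2$. Your absorption $\Vert V*f\Vert_\infty\le C\Vert f\Vert_2$ handles the outer factor, because there the function multiplied by $\varphi_t^\Lambda-\rme^{-\rmi t\nu}$ is $V*f$. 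For the inner factor $(\varphi_t^\Lambda)^*f$, the multiplied function is $f$ itself. Treating it ``directly of the form of Step 3'' therefore leaves a term $\Lambda^{-1/6}\Vert f\Vert_\infty$, which the right-hand side does not control. The missing point is that this factor sits inside the convolution, so you should measure it in $L^1$:
\[
\big\Vert \varphi_t^\Lambda\, V*\big((\varphi_t^\Lambda-\tphi_t^\Lambda)^*f\big)\big\Vert_2
\le \Vert\varphi_t^\Lambda\Vert_\infty \Vert V\Vert_2\, \Vert(\varphi_t^\Lambda-\tphi_t^\Lambda)f\Vert_1
\le C\,\Vert\varphi_t^\Lambda-\tphi_t^\Lambda\Vert_2\Vert f\Vert_2\le C\Lambda^{-1/6}\Vert f\Vert_2 .
\]
This is the $1\wedge 2$ estimate \eqref{eq:ApproxV-by-V-infty-5}, which the paper proves alongside \eqref{eq:ApproxV-by-V-infty-4} for exactly this purpose. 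The same fix is needed for the inner factor $\varphi_t^\Lambda (Q_t^\Lambda f)^*$ in $K_2^\Lambda$.

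\textbf{A smaller slip in Step 5.} For \eqref{eq:ApproxV-by-V-infty-8}, the term you display, $\Lambda^{-1}\Vert\varphi_t^\Lambda\Vert_2|\langle\partial_t\varphi_t^\Lambda,f\rangle|=\Lambda^{-1/2}|\langle h_t^\Lambda\varphi_t^\Lambda,f\rangle|$, is not made small by $\Vert\partial_t\varphi_t^\Lambda\Vert_2\le C\Lambda^{1/2}$; that only gives $C\Vert f\Vert_2$. The $L^2$ bound works only for the other term, $\Lambda^{-1}\Vert\partial_t\varphi_t^\Lambda\Vert_2|\langle\varphi_t^\Lambda,f\rangle|$. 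For the displayed term, split $h_t^\Lambda\varphi_t^\Lambda$:
\begin{itemize}
\item pair $(V*|\varphi_t^\Lambda|^2-\mu_t^\Lambda)\varphi_t^\Lambda$, which is bounded in $L^\infty$, against $\Vert f\Vert_1$;
\item pair $\Delta\varphi_t^\Lambda$ in $L^2$ against $\Vert f\Vert_2$, using $\Vert\Delta\varphi_t^\Lambda\Vert_2\le C\Lambda^{-1/6}$ from \cref{cor:derivative-varphi-L2-norm-estimate}.
\end{itemize}
Relatedly, that corollary gives $\Delta\varphi_t^\Lambda=O(\Lambda^{-2/3})$ only in $\Vert\cdot\Vert_{2\wedge\infty}$, not in $L^\infty$. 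For \eqref{eq:ApproxV-by-V-infty-7} the bound $\Vert\Delta\varphi_t^\Lambda\Vert_2\Vert f\Vert_\infty$ already suffices. Note also that \eqref{eq:ApproxV-by-V-infty-7} involves $V*|\varphi_t^\Lambda|^2$, not $V*|\varphi_0^\Lambda|^2$. The difference is $O(\Lambda^{-1/6})$ in $L^\infty$ by \cref{cor:phi-propagation} and Young's inequality with $V\in L^2\cap L^\infty$.
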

\begin{proof}
To prove the lemma, we localize the condensate around the origin, where it is approximately flat: $|\varphi_t^\Lambda|\sim 1$ (see \cref{cor:phi-propagation}  and \cref{con:Condensate-flat-around-origin}). For this purpose, we use the localization function with $n=1$ and $0<s<1/3$: $\tl(y)= 1 /(1+ (\Lambda^{-s} y)^2)$ as defined in \cref{sec:LocalizationFunction}.
 \\Let $T\geq 0$ and $-T\leq t\leq T$. 
  We begin with the estimate of \eqref{eq:ApproxV-by-V-infty-2} 
 \begin{align*}
 	&\Vert (Q_t^\Lambda-1) f \Vert_2 = \frac{1}{\Lambda} \Vert \varphi_t^\Lambda\Vert_2 |\pscal{\tl \varphi_t^\Lambda, \tl^{-1} f}| \\
 	& \leq \frac{\Vert \varphi_t^\Lambda\Vert_\infty}{\Lambda^{1/2}}  \Vert\tl\Vert_2 \Vert (1+y^2) f\Vert_2 \leq C \Lambda^{-3/2(1/3-s)} \Vert (1+y^2) f\Vert_2  \,.
 \end{align*}

Next, we prove $|\mu_t^\Lambda-\mu^\infty |\leq C(\Lambda^{-\delta}+ \Lambda^{-1/6})$. We have
 \begin{align*}
 	\mu_0^\Lambda&= \frac{1}{2\Lambda} \int |\varphi_0^\Lambda|^2 (x) V(y) |\varphi_0^\Lambda|^2(x-y) dxdy \\
 		&=\frac{1}{2} \pscal{|\varphi_0^\Lambda|^2 (\Lambda^{1/3}\,.\,), \Lambda V(\Lambda^{1/3}\,.\,)\ast |\varphi_0^\Lambda|^2(\Lambda^{1/3}\,.\,)}\,,
 \end{align*}
 after the rescaling $x\to \Lambda^{1/3}x$ and $y\to \Lambda^{1/3}y$.
 Using assumption \eqref{eq:ApproxV-by-V-infty-0}, we get
  \begin{align}
 	\Big\vert \mu_0^\Lambda-  \frac{1}{2}\pscal{ |\eta|^2, \Lambda V(\Lambda^{1/3}\,.\,)\ast|\eta|^2} \Big\vert \leq \Vert V\Vert_{2,\infty} \Lambda^{-\delta}\,.
 \end{align}
  The convolution with $\Lambda V(\Lambda^{1/3}\,.\,)$ acts as an approximation of the identity, yielding 
 \begin{align}
 	\Big\vert \pscal{ |\eta|^2, \Lambda V(\Lambda^{1/3}\,.\,)\ast|\eta|^2} -\frac{1}{2} \Big\langle |\eta|^2 ,  |\eta|^2\int V \Big\rangle \Big\vert \leq C\Lambda^{-1/3} \Vert |y|V\Vert_1 \Vert \eta\Vert_{H^1}^4\,.
 \end{align}
 By \cref{cor:phi-propagation}, we also have $|\mu_t^\Lambda-\mu_0^\Lambda|\leq C \Vert V\Vert_{1,2}\Lambda^{-1/6}$, which shows \eqref{eq:ApproxV-by-V-infty-3}.

 We now prove \eqref{eq:ApproxV-by-V-infty-4}. For this we first estimate $\Vert V\ast(|\varphi_t^\Lambda|^2-1) f\Vert_2\leq \Vert \tl V\ast(|\varphi_t^\Lambda|^2-1)\Vert_\infty \Vert 1/\tl f\Vert_2$, and following the proof of \cref{cor:ThetaDPhi-TPhi}, we obtain
 \begin{align}
 	\Vert V\ast(|\varphi_t^\Lambda|^2-1) f\Vert_2 &\leq C \big(\Lambda^{-s} + \Lambda^{-2(1/3-s)}+ \Lambda^{-1/6}\big)  \Vert (1+y^2) f\Vert_2 \,. \label{eq:ApproxV-by-V-infty-1} 
 \end{align}
 We proceed by showing the claim for $\tphi_t^\Lambda$ and subsequently extend it to $\varphi_t^\Lambda$. Using \cref{Lemma:CalculationTheta}a), we have
 \begin{align}
 	&\left\Vert \left( \tphi_t^\Lambda - \rme^{-\rmi \left( t V\ast |\varphi_0^\Lambda|^2  - \int_0^t\mu_s^\Lambda ds \right)}\right) f\right\Vert_2 = \Vert (\varphi_0^\Lambda -1 ) f\Vert_2  \nn \\ 
 	&\leq \Vert \tl (\varphi_0^\Lambda -1 )\Vert_\infty \Vert 1/\tl f\Vert_2
 	 \leq C \Lambda^{-2(1/3-s)} \Vert (1+y^2) f\Vert_2 \,. \label{eq:aux-eq-82}
 \end{align}
 Next, using $|\rme^{\rmi x}-1|\leq |x|$, \eqref{eq:ApproxV-by-V-infty-3} and \eqref{eq:ApproxV-by-V-infty-1}, we obtain
 \begin{align}
 	\left\Vert \left(   \rme^{-\rmi \left( t V\ast |\varphi_0^\Lambda|^2  - \int_0^t\mu_s^\Lambda ds \right)} - \rme^{-\rmi t\left( \int V -\mu^\infty\right)} \right) f\right\Vert_2  
	\leq C  \Lambda^{-\gamma} \Vert (1+y^2) f\Vert_2  \,. \label{eq:aux-eq-83}
 \end{align}

 To estimate $\varphi_t^\Lambda$ by the simplified phase, we insert $\pm\tphi_t^\Lambda$ and  apply \eqref{eq:aux-eq-82}, \eqref{eq:aux-eq-83}, and \cref{prop:phi-propagation}, yielding
 \begin{align*}
 	&\left\Vert \left( \varphi_t^\Lambda - \rme^{-\rmi t\left( \int V -\mu^\infty\right)} \right) f \right\Vert_{2} \leq \Vert \varphi_t^\Lambda - \tphi_t^\Lambda \Vert_2  \Vert f\Vert_\infty + \Vert (\varphi_0^\Lambda-1) f\Vert_2  \\
 	&\quad + \left\Vert \left(   \rme^{-\rmi \left( t V\ast |\varphi_0^\Lambda|^2  - \int_0^t\mu_s^\Lambda ds \right)} - \rme^{-\rmi t\left( \int V -\mu^\infty\right)} \right) f\right\Vert_2 \\
 	&\leq C \Lambda^{-1/6} \Vert f\Vert_\infty + C \big(\Lambda^{-2(1/3-s)} + \Lambda^{-\gamma}\big) \Vert (1+y^2) f\Vert_2  \,,
 \end{align*}
 which proves \eqref{eq:ApproxV-by-V-infty-4}. In the same way, estimating $\Vert (\varphi_t^\Lambda - \tphi_t^\Lambda) f \Vert_{1\wedge 2} \leq \Vert \varphi_t^\Lambda - \tphi_t^\Lambda \Vert_2  \Vert f\Vert_2 $, yields
 \begin{align}
 	\left\Vert \left( \varphi_t^\Lambda - \rme^{-\rmi t \left(\int V -\mu^\infty\right)} \right) f \right\Vert_{1\wedge2} &\leq  C\Lambda^{-\gamma} \Vert (1+y^2) f\Vert_2 \,. \label{eq:ApproxV-by-V-infty-5} 
 \end{align}
 This estimate is used to prove \eqref{eq:ApproxV-by-V-infty-6} and \eqref{eq:ApproxV-by-V-infty-6.1}.
 
Now, we prove \eqref{eq:ApproxV-by-V-infty-6}. Using \eqref{eq:ApproxV-by-V-infty-2}, $K_1^\Lambda(t)=Q_t^\Lambda \tilde{K}_1^\Lambda(t) Q_t^\Lambda$, as well as $\Vert \tilde{K}_1^\Lambda(t)\Vert_{\rm op}\leq C$, we have
 \begin{align}
	&\Vert (K_1^\Lambda(t)- K_1^\infty ) f\Vert_2   \nn \\
	&\leq \Vert Q_t^\Lambda \tilde{K}_1^\Lambda \left(Q_t^\Lambda -1 \right) f \Vert_2 + \Vert Q_t^\Lambda \big(\tilde{K}_1^\Lambda -K_1^\infty \big) f \Vert_2 + \Vert \left( Q_t^\Lambda - 1\right) K_1^\infty f \Vert_2 \nn \\
&\leq  C\Lambda^{-1/2(1-3s)} \Vert (1+y^2) f\Vert_2 + \Vert (\tilde{K}_1^\Lambda -K_1^\infty) f\Vert_2 \,.
 \end{align}
 The last term is estimated using \eqref{eq:ApproxV-by-V-infty-4} and \eqref{eq:ApproxV-by-V-infty-5}
\begin{align}
	&\Vert (\tilde{K}_1^\Lambda -K_1^\infty) f\Vert_2 = \Vert \varphi_t^\Lambda V\ast((\varphi_t^\Lambda)^*f)- V\ast  f\Vert_2 \nn \\
	&\leq \left\Vert \varphi_t^\Lambda V\ast \left(\rme^{\rmi t\left( \int V -\mu^\infty\right)}   f\right) - V\ast f \right\Vert_2 + \left\Vert \varphi_t^\Lambda V \ast \left( (\varphi_t^\Lambda)^* f - \rme^{\rmi t\left( \int V -\mu^\infty\right)}  f \right) \right\Vert_2 \nn \\
	&\leq C\Lambda^{-\gamma} \big(\Vert (1+y^2) V\ast f\Vert_2  + \Vert (1+y^2) f \Vert_2\big) + C\Lambda^{-1/6} \Vert V\ast f\Vert_\infty  \nn \\
	&\leq C \Lambda^{-\gamma}  \Vert (1+y^2)  f\Vert_2 \,, \label{eq:aux-eq-113}
\end{align}
where we used $\Vert (1+y^2) V\ast f\Vert_2= C\Vert (1-\Delta) \widehat{V} \widehat{f}\Vert_2\leq C\Vert (1+y^2)V\Vert_2 \Vert (1+y^2)f\Vert_2 $.
We conclude $\Vert (K_1^\Lambda(t)- K_1^\infty ) f\Vert_2\leq \eqref{eq:aux-eq-113}$. Analogously, one shows \eqref{eq:ApproxV-by-V-infty-6.1}.

From here on, we additionally assume \cref{con:Initial condition derivative condensate}$_{4}$. We start by proving \eqref{eq:ApproxV-by-V-infty-7}. This estimate follows from the Hartree equation, $\rmi\partial_t\varphi_t^\Lambda =  h_t\varphi_t^\Lambda$, where $h_t=-\frac{\Delta}{2} + V\ast |\varphi_t^\Lambda|^2 -\mu_t^\Lambda$, together with the preceding estimates and \cref{cor:derivative-varphi-L2-norm-estimate}$_{|\beta|=2}$. 

Finally, the bound of  \eqref{eq:ApproxV-by-V-infty-8} is obtained analogously, using $\rmi\partial_tQ_t^\Lambda f=  \Lambda^{-1} h_t\varphi_t^\Lambda \pscal{\varphi_t^\Lambda, f} -  \Lambda^{-1}\varphi_t^\Lambda \pscal{- h_t\varphi_t^\Lambda, f}$ and applying the same type of estimates as in \eqref{eq:ApproxV-by-V-infty-7}.
\end{proof}

By a density argument, together with the uniform boundedness of the operators appearing in \cref{lem:Approx-V-by-V-infty-Conv-Rates-2}, the convergence extends immediately to the whole $L^2$-space.

\begin{cor} \label{cor:Strong-convergence-infinite-volume}
Under the conditions of \cref{lem:Approx-V-by-V-infty-Conv-Rates-2} we have for $f\in L^2$ that, as $\Lambda\to \infty$,
 \begin{align}
 	\Vert (Q_t^\Lambda-1)f\Vert_2+ |\mu_t^\Lambda -\mu^\infty |   + \left\Vert \left( \varphi_t^\Lambda - \rme^{-\rmi t \left(\int V -\mu^\infty\right)} \right) f \right\Vert_2 &\to 0 \,, \label{eq:condensate-infinite-volume-1}  \\
 	\Vert (K_1^\Lambda(t) -K_1^\infty) f \Vert_2 + \Vert (K_2^\Lambda(t)  -K_2^\infty \rme^{-2\rmi t\left(\int V -\mu^\infty\right) }) J f \Vert_2&\to 0   \,. 
 \end{align} 
\end{cor}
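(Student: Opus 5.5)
The plan is an $\varepsilon/3$ argument built on \cref{lem:Approx-V-by-V-infty-Conv-Rates-2}. That lemma gives quantitative bounds for every $f$ in the dense subspace $\mathcal{D}:=\{f\in L^2 : (1+y^2)f\in L^2\}$. On $\mathcal{D}$ the right-hand sides of \eqref{eq:ApproxV-by-V-infty-2}, \eqref{eq:ApproxV-by-V-infty-3}, \eqref{eq:ApproxV-by-V-infty-6} and \eqref{eq:ApproxV-by-V-infty-6.1} are at most $C\Lambda^{-\gamma'}\Vert(1+y^2)f\Vert_2$ with some $\gamma'>0$, so they vanish as $\Lambda\to\infty$. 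Note that $\gamma>0$ because $\delta,s>0$ and $s<1/3$. The scalar term $|\mu_t^\Lambda-\mu^\infty|$ needs no density argument at all; it follows directly from \eqref{eq:ApproxV-by-V-infty-3}.

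Next I check that all operators involved are bounded uniformly in $\Lambda$ and $t\in[-T,T]$.
\begin{itemize}
\item $Q_t^\Lambda-1=-P_t^\Lambda$ has norm at most $1$.
\item Multiplication by $\varphi_t^\Lambda-\rme^{-\rmi t(\int V-\mu^\infty)}$ has norm at most $\Vert\varphi_t^\Lambda\Vert_\infty+1\leq C$, by \eqref{eq:phi-L2infty-bound}.
\item $\Vert K_1^\Lambda(t)\Vert_{\rm op}\leq\Vert\varphi_t^\Lambda\Vert_\infty^2\Vert V\Vert_1\leq C$, and the same bound holds for $K_2^\Lambda(t)J$.
\item $K_1^\infty$ and $K_2^\infty J$ are convolutions with $V\in L^1$, so both have norm at most $\Vert V\Vert_1$ by Young's inequality.
\end{itemize}
Call $B_\Lambda$ any one of these difference operators and $M$ a common bound $\sup_\Lambda\Vert B_\Lambda\Vert_{\rm op}\leq M$.

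For the $\varphi$-term on $\mathcal{D}$, estimate \eqref{eq:ApproxV-by-V-infty-4} also contains $C\Lambda^{-1/6}\Vert f\Vert_\infty$. So the dense class for this term should be $\mathcal{D}\cap L^\infty$, for example $C_c^\infty$, which is still dense in $L^2$ and on which both summands vanish. For simplicity I will take $C_c^\infty$ as the dense set for every term.

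Now fix $f\in L^2$ and $\varepsilon>0$, and choose $g\in C_c^\infty$ with $\Vert f-g\Vert_2<\varepsilon$. Then
\[
\Vert B_\Lambda f\Vert_2\leq M\varepsilon+\Vert B_\Lambda g\Vert_2 ,
\]
and the last term tends to zero. Hence $\limsup_{\Lambda\to\infty}\Vert B_\Lambda f\Vert_2\leq M\varepsilon$, and since $\varepsilon$ is arbitrary the limit is zero. Applying this to each term gives both displayed convergences.

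There is no real obstacle here. The only point needing care is the uniform operator bounds: $\Vert\varphi_t^\Lambda\Vert_\infty\leq C$ comes from \cref{prop:phi-propagation}, which is available under the hypotheses of \cref{lem:Approx-V-by-V-infty-Conv-Rates-2}.
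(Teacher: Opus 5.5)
Your proposal is correct and follows the paper's own argument: the quantitative bounds of \cref{lem:Approx-V-by-V-infty-Conv-Rates-2} on a dense subspace, combined with uniform-in-$\Lambda$ operator bounds (via $\Vert\varphi_t^\Lambda\Vert_\infty\leq C$), extend the convergence to all of $L^2$. Your choice of $C_c^\infty$ as the dense class, which handles the extra $\Vert f\Vert_\infty$ term in the $\varphi$-estimate, is a detail the paper leaves implicit.
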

Consequently, in the infinite-volume limit, all functions in $L^2(\BR^3)\setminus\{0\}$  are not in the condensate and thus represent excitations. This is because the condensate becomes a constant phase (see \eqref{eq:condensate-infinite-volume-1}) and therefore no longer belongs to $L^2(\BR^3)$. This can also be seen from the strong convergence of the projection $Q_t^\Lambda$ to the identity. 

To prove the convergence of $\cV_t^\Lambda$ to $\cV_t^\infty$ we need the following lemma.

\begin{lemma}\label{lem:V-inf-Groenwall-Harmonic-Osc}
Let $\cV^\infty_t$ be from \eqref{eq:def-A-infty}. 
Set $h:= -\Delta + y^2 +1$ then $\cV_t^\infty \big(D(h)\oplus JD(h)\big) = D(h)\oplus JD(h)$, and for all $T\geq0$ there exists a constant $C>0$ such that for all $F\in D(h)\oplus JD(h)$ 
\begin{align}
	\sup_{t\in [-T,T]} \big\Vert \big(h\oplus Jh J^* \big) \cV_t^\infty F \big\Vert_{L^2\oplus JL^2} \leq C \big\Vert \big( h\oplus Jh J^*\big) F\big \Vert_{L^2\oplus JL^2}\,. \label{eq:Harmonic-oc-Bog-infty-bound}
	\end{align}
\end{lemma}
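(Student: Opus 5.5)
The lemma has two parts: invariance of $D(h)\oplus JD(h)$ under $\cV_t^\infty$, and the uniform bound \eqref{eq:Harmonic-oc-Bog-infty-bound}. I would prove the bound by a Grönwall argument applied to $\Vert (h\oplus JhJ^*)\cV_t^\infty F\Vert^2$. The structure to exploit is that $\cA^\infty = \big(-\tfrac{\Delta}{2}\oplus -J(-\tfrac{\Delta}{2})J^*\big) + \cK$. Here $\cK$ collects the bounded parts $K_1^\infty = V\ast\cdot$ and $K_2^\infty J = V\ast\overline{\,\cdot\,}$; these are convolution operators, possibly composed with complex conjugation. The operator $h$ is built from $-\Delta$ and $y^2$.

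\emph{Commutator with $h$.} The free part $-\Delta/2$ commutes with $-\Delta$ and its commutator with $y^2$ is first order: $[-\tfrac{\Delta}{2},y^2]=-(3+2y\cdot\nabla)$. This commutator is not bounded relative to $h^{1/2}$ alone, but $\Vert y\cdot\nabla f\Vert\le C\Vert hf\Vert$. Since we estimate $\Vert hG\Vert^2$, we only need $|\mathrm{Re}\langle hG,[h,\cA^\infty]G\rangle|\lesssim \Vert hG\Vert^2$, i.e. $\Vert [h,\cA^\infty]G\Vert\lesssim\Vert hG\Vert$.

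For the convolution parts, $[-\Delta,V\ast]=0$, and conjugation commutes with $\Delta$ and $y^2$. This leaves $[y^2,V\ast]f=(y^2V)\ast f-2(yV)\ast(yf)$, using $y^2=(y-z)^2+2(y-z)\cdot z+z^2$ inside the integral. By Young's inequality and the moment bounds $\Vert|y|^kV\Vert_1\le C$ from \cref{Assumption:Initial-datum-and-potential}, this is bounded by $C\Vert (1+|y|)f\Vert\le C\Vert hf\Vert$. The diagonal blocks $\pm J(\cdot)J^*$ are handled identically.

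\emph{Energy identity.} Rather than differentiate the norm directly, I would derive the evolution equation for $G_t=(h\oplus JhJ^*)\cV_t^\infty F$:
\[
\rmi\partial_t G_t=\cA^\infty G_t+[h\oplus JhJ^*,\cA^\infty]\,\cV_t^\infty F .
\]
Since $\cA^\infty$ is not self-adjoint on $L^2\oplus JL^2$ but only $S$-self-adjoint, the clean route is Duhamel: $G_t=\cV_t^\infty G_0-\rmi\int_0^t\cV_{t-s}^\infty$-type propagator applied to the commutator term. Using the uniform bound $\Vert\cV_t^\infty\Vert_{\rm op}\le C_T$, which follows from boundedness of $\cK$ via Grönwall against the unitary free part, we get
\[
\Vert G_t\Vert\le C_T\Vert G_0\Vert+C_T\int_0^t\Vert G_s\Vert\,ds ,
\]
and Grönwall closes the bound. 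For the operator bound on $\cV_t^\infty$, I would write the interaction picture against $\rme^{\rmi t\Delta/2}\oplus J\rme^{-\rmi t\Delta/2}J^*$.

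\emph{Rigour.} The main obstacle is justifying the Duhamel formula, since $G_t$ is not a priori in the domain. I would regularize with $h_\epsilon=h(1+\epsilon h)^{-1}$, which is bounded. The commutator bounds above hold uniformly in $\epsilon$, because $[h_\epsilon,B]=(1+\epsilon h)^{-1}[h,B](1+\epsilon h)^{-1}$ and $\Vert (1+\epsilon h)^{-1}\Vert\le1$ while $(1+\epsilon h)^{-1}$ commutes with $h$. This yields $\Vert h_\epsilon\cV_t^\infty F\Vert\le C\Vert hF\Vert$ uniformly in $\epsilon$. Letting $\epsilon\to0$ gives $\cV_t^\infty F\in D(h)\oplus JD(h)$ together with the bound. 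The reverse inclusion follows by applying the same argument to $(\cV_t^\infty)^{-1}$, i.e. the backward evolution ($t\in[-T,T]$ is symmetric), which gives equality of the domains.
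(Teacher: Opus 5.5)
Your proof is correct, but it is organized differently from the paper's.

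\textbf{The paper's route.} It differentiates $\Vert (h\oplus JhJ^*)\cV_t^\infty F\Vert^2$ directly. Because $\cA^\infty$ is only $S$-self-adjoint, it uses $(\cA^\infty)^*=S\cA^\infty S$ to split the derivative into a commutator term and a term involving $\cA^\infty-S\cA^\infty S$, which is twice the off-diagonal $K_2^\infty$ block. That second term requires the extra estimate $\Vert hK_2^\infty\psi\Vert\le C\Vert h\psi\Vert$. Domain invariance is then read off at the end from $(\cV_t^\infty)^{-1}=\cV_{-t}^\infty$.

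\textbf{Your route.} You use Duhamel with the propagator instead. The non-self-adjointness is absorbed into $\Vert\cV_t^\infty\Vert_{\rm op}\le C_T$, which you get from the interaction picture since the $K$-part is bounded. After that, the only structural input is the relative bound $\Vert[h\oplus JhJ^*,\cA^\infty]G\Vert\le C\Vert(h\oplus JhJ^*)G\Vert$. Your commutator computations ($[-\Delta/2,y^2]$, $[y^2,V\ast]$, conjugation commuting with $\Delta$ and $y^2$) coincide with the paper's.

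\textbf{What the regularization buys.} The $h_\epsilon=h(1+\epsilon h)^{-1}$ step is a genuine gain. The paper's differentiation presupposes $\cV_t^\infty F\in D(h)\oplus JD(h)$. Your uniform-in-$\epsilon$ Grönwall bound, combined with monotone convergence in the spectral representation, actually proves this.

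\textbf{Domain checks you pass over quickly.} First, $h_\epsilon$ must map $D(\cA^\infty)=H^2\oplus JH^2$ into itself; this holds since $D(h)\subset H^2$. Second, the identity $[h_\epsilon,\cA^\infty]=(1+\epsilon h)^{-1}[h,\cA^\infty](1+\epsilon h)^{-1}$ must hold on $D(\cA^\infty)$, with $[h,\cA^\infty]$ understood as the $h$-bounded extension from Schwartz functions. This is cleanest to verify weakly against Schwartz test functions.

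\textbf{A harmless slip.} From your own decomposition of $y^2$ one gets $[y^2,V\ast]f=(|y|^2V)\ast f+2\sum_i (y_iV)\ast(y_if)$, with a plus sign. The resulting bound is unaffected.
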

\begin{proof}[Proof of \cref{lem:V-inf-Groenwall-Harmonic-Osc}]
	Let $\psi\in \cS(\BR^3)$. Since $K_1^\infty\psi =V\ast \psi$ and $K_2^\infty J \psi = V\ast \psi^* \rme^{-2\rmi t \left( \int V -\mu^\infty\right)}$ one can verify via Fourier transform that
	\begin{align}
	\Vert [y^2,K_1^\infty]\psi \Vert_2 &\leq C \Vert (y^2 +1)V\Vert_1 \Vert (y^2+1)^{1/2} \psi \Vert_2 \,, \label{eq:aux-eq-88} \\
	\Vert [y^2,K_2^\infty J]\psi \Vert_2 &\leq C \Vert (y^2 +1)V\Vert_1 \Vert (y^2+1)^{1/2} \psi \Vert_2 \,, \label{eq:aux-eq-89} \\
		\Vert [y^2,-\Delta]\psi \Vert_2 &\leq C \Vert (1 + y \nabla) \psi\Vert_2 \overset{\eqref{eq:aux-eq-67}}{\leq} C \Vert h \psi \Vert_2 \,, \label{eq:aux-eq-90} \\
		\Vert h K_2^\infty \psi \Vert_2 &\leq C \Vert (1+y^2)V\Vert_1 \Vert h\psi\Vert_2 \,. \label{eq:aux-eq-91}
	\end{align}
Since $\cA^\infty$ is translation-invariant it commutes with the Laplacian. Therefore, we have
	\begin{align*}
		\left\Vert \left[ h\oplus Jh J^*, \cA^\infty \right] \cV_t^\infty F \right\Vert &= \left\Vert \left[ y^2\oplus Jy^2J^* , \cA^\infty \right] \cV_t^\infty F \right\Vert \\
		&\leq C  \left\Vert ( h\oplus Jh J^*) \cV_t^\infty F \right\Vert \,.
	\end{align*}
We use $(\cA^\infty)^*=S\cA^\infty S$ to conclude
	\begin{align}
		&\pm\partial_t \Vert (h\oplus Jh J^*) \cV_t^\infty F \Vert^2 			=\mp\mathrm{Im}\Big\{ \pscal{ (h\oplus Jh J^*)^2 \cV_t^\infty F, \cA^\infty \cV_t^\infty F }\nn \\
		&\qquad\qquad\qquad\qquad\qquad\qquad\quad\quad\ - \pscal{\cA^\infty \cV_t^\infty F , (h\oplus Jh J^*)^2 \cV_t^\infty F } \Big\} \nn\\
		&= \mp\mathrm{Im}\Big\{ \pscal{  \cV_t^\infty F, \big[(h\oplus Jh J^*)^2,\cA^\infty\big] \cV_t^\infty F } \label{eq:aux-eq-86} \\
		&\qquad\quad\quad + \pscal{ \cV_t^\infty F , \big(\cA^\infty- S\cA^\infty S\big)(h\oplus Jh J^*)^2 \cV_t^\infty F } \Big\} \,. \label{eq:aux-eq-87}
	\end{align}
	First, we estimate \eqref{eq:aux-eq-86} using $(\cA^\infty)^*=S \cA^\infty S$:
	\begin{align*}
		\eqref{eq:aux-eq-86} &\leq \Vert  h\oplus Jh J^* \cdot \cV_t^\infty F \Vert \left\Vert \left[  h\oplus Jh J^* , \cA^\infty \right] \cV_t^\infty F \right\Vert \\
		&\quad +  \left\Vert S\left[  h\oplus Jh J^* , \cA^\infty \right]S \cV_t^\infty F \right\Vert \Vert (h\oplus Jh J^*) \cV_t^\infty F \Vert \\
		&\leq C  \left\Vert ( h\oplus Jh J^*) \cV_t^\infty F \right\Vert^2 \,.
	\end{align*}
 The second term \eqref{eq:aux-eq-87} is controlled in the same way using \eqref{eq:aux-eq-91}
	\begin{align*} 
		\eqref{eq:aux-eq-87}&\leq \left\Vert (h\oplus Jh J^*) \begin{pmatrix}
		 	0 & -2K_2^\infty \\
		 	2(K_2^\infty)^* & 0 
		\end{pmatrix}  \cV_t^\infty  F\right\Vert \Vert (h\oplus Jh J^*) \cV_t^\infty  F\Vert \nn \\
		&\leq \Vert (h\oplus Jh J^*) \cV_t^\infty  F\Vert^2 \,.
	\end{align*}
	The claim follows from Grönwall's Lemma and the estimates above. Moreover, since $(\cV_t^\infty)^{-1}=(\cV_{-t}^\infty)$ (see \eqref{eq:-V-infty-Explicit}), the bound \eqref{eq:Harmonic-oc-Bog-infty-bound} also holds for the inverse dynamics. Consequently, $\cV_t^\infty \big( D(h)\oplus JD(h)\big) = D(h)\oplus JD(h)$.	
\end{proof}

\cref{lem:Approx-V-by-V-infty-Conv-Rates-2} and \cref{lem:V-inf-Groenwall-Harmonic-Osc} allow us to conclude the convergence of $\rme^{\rmi t \left( \int V -\mu^\infty\right) S}\cV_t^\Lambda$ to $\cV_t^\infty$.

\begin{lemma} \label{lem:Approx-V-by-V-infty-Conv-Rates}
Assume the conditions of \cref{lem:Approx-V-by-V-infty-Conv-Rates-2}  and set $h=-\Delta+y^2+1$. Then for all $T\geq0$, there exists a constant $C>0$ such that for all $\Lambda\geq1$, $F\in D(h)\oplus JD(h)$ and $-T\leq t\leq T$
\begin{align}
	&\left\Vert \left(\rme^{\rmi t \left( \int V -\mu^\infty\right) S}\cV_t^\Lambda -  \cV_t^\infty\right) F\right\Vert + \left\Vert \left( (\cV_t^\Lambda)^{-1} \rme^{-\rmi t \left( \int V -\mu^\infty\right) S} - (\cV_t^\infty )^{-1} \right) F\right\Vert \nn \\
	 &\leq  C \Lambda^{-\gamma} \big\Vert \big(h\oplus J hJ^*\big) F \big\Vert\,. \label{eq:V-to-V-infty-Conv-Rate} 
\end{align}
\end{lemma}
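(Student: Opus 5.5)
We compare the two propagators through a Duhamel formula. We set $\nu:=\int V-\mu^\infty$ and $\widetilde{\cV}_t^\Lambda:=\rme^{\rmi t\nu S}\cV_t^\Lambda$. Then
\begin{align*}
\rmi\partial_t \widetilde{\cV}_t^\Lambda = \widetilde{\cA}^\Lambda(t)\widetilde{\cV}_t^\Lambda\,, \qquad \widetilde{\cA}^\Lambda(t):=\rme^{\rmi t\nu S}\big(\cA^\Lambda(t)-\nu S\big)\rme^{-\rmi t\nu S}\,.
\end{align*}
The diagonal entries of $\widetilde{\cA}^\Lambda(t)$ are $\pm$-conjugated versions of $h_t^\Lambda+K_1^\Lambda(t)-\nu$, and the off-diagonal entries carry $\rme^{\pm 2\rmi t\nu}K_2^\Lambda(t)$. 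Writing
\begin{align*}
\widetilde{\cV}_t^\Lambda-\cV_t^\infty = -\rmi\int_0^t \widetilde{\cV}^\Lambda(t,s)\big(\widetilde{\cA}^\Lambda(s)-\cA^\infty\big)\cV_s^\infty\, ds\,,
\end{align*}
we use the uniform bound $\Vert\cV^\Lambda(t,s)\Vert_{\rm op}\leq C$ (cf. \eqref{eq:Bog-Map-bounded}, obtained by Grönwall with $\Vert K_i^\Lambda\Vert_{\rm op}\leq C$). This reduces everything to the estimate
\[\big\Vert(\widetilde{\cA}^\Lambda(s)-\cA^\infty)G\big\Vert\leq C\Lambda^{-\gamma}\Vert (h\oplus JhJ^*)G\Vert\,,\]
which we then apply to $G=\cV_s^\infty F$. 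By \cref{lem:V-inf-Groenwall-Harmonic-Osc}, $\Vert (h\oplus JhJ^*)\cV_s^\infty F\Vert\leq C\Vert (h\oplus JhJ^*)F\Vert$ uniformly for $|s|\leq T$, which closes the bound for the first term.

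For the generator difference we go entry by entry. The $K_1$ and $K_2$ entries follow directly from \eqref{eq:ApproxV-by-V-infty-6} and \eqref{eq:ApproxV-by-V-infty-6.1}, since $\Vert(1+y^2)f\Vert_2\leq \Vert hf\Vert_2$. The diagonal entry $h_t^\Lambda-\nu+\frac{\Delta}{2}=V\ast|\varphi_t^\Lambda|^2-\int V-(\mu_t^\Lambda-\mu^\infty)$ is handled by \eqref{eq:ApproxV-by-V-infty-1}, which holds in the flat region after weighting with $1/\tl$, and by \eqref{eq:ApproxV-by-V-infty-3}. The one subtlety is that $h_t^\Lambda$ is not sandwiched by $Q_t^\Lambda$, whereas $K_i^\Lambda$ are, and $\cV_t^\Lambda$ acts on the full space. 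The Laplacians cancel exactly between the two generators, so no unbounded remainder appears. We only need the resulting error rates, $\Lambda^{-s}$, $\Lambda^{-2(1/3-s)}$, $\Lambda^{-1/6}$ and $\Lambda^{-\delta}$, to be bounded by $\Lambda^{-\gamma}$, and this is exactly the choice of $\gamma$ in \eqref{eq:def-gamma}.

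For the inverse term we use the symplectic identity $(\cV_t^\Lambda)^{-1}=S(\cV_t^\Lambda)^*S$ together with $\rme^{-\rmi t\nu S}=S\rme^{-\rmi t\nu S}S$. This gives
\[(\cV_t^\Lambda)^{-1}\rme^{-\rmi t\nu S}=S(\widetilde{\cV}_t^\Lambda)^*S\,, \qquad (\cV_t^\infty)^{-1}=S(\cV_t^\infty)^*S\,.\]
The adjoints satisfy the backward equations $\rmi\partial_t(\widetilde{\cV}_t^\Lambda)^{-1}=-(\widetilde{\cV}_t^\Lambda)^{-1}\widetilde{\cA}^\Lambda(t)$, and similarly for $\cV_t^\infty$, so the Duhamel formula becomes
\[(\widetilde{\cV}_t^\Lambda)^{-1}-(\cV_t^\infty)^{-1}=\rmi\int_0^t(\widetilde{\cV}_s^\Lambda)^{-1}\big(\widetilde{\cA}^\Lambda(s)-\cA^\infty\big)(\cV_s^\infty)^{-1}\cV_t^\infty\,ds\,.\]
Here $(\cV_s^\infty)^{-1}\cV_t^\infty=\cV^\infty_{t-s}$, since $\cA^\infty$ is time-independent in the appropriate frame. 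Hence \cref{lem:V-inf-Groenwall-Harmonic-Osc} again controls $h$ applied to $\cV^\infty_{t-s}F$, and the same generator estimate finishes the argument.

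We expect the main obstacle to be justifying the Duhamel formula rigorously on $D(h)\oplus JD(h)$, in particular differentiability of $s\mapsto\widetilde{\cV}^\Lambda(t,s)\cV_s^\infty F$. Our first approach is to work with $F$ in a Schwartz-type core and use the invariance of $D(h)$ under $\cV_s^\infty$ from \cref{lem:V-inf-Groenwall-Harmonic-Osc}, then extend by density. A secondary concern is that the time-dependent phases in the generator must cancel exactly: $\mu_t^\Lambda$ enters both $h_t^\Lambda$ and $\nu$, so only the difference $\mu_t^\Lambda-\mu^\infty$ survives, and this is controlled by \eqref{eq:ApproxV-by-V-infty-3}.
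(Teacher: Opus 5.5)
Your proposal is correct and is essentially the paper's argument. The paper runs a Grönwall estimate on $\Vert(1-(\cV_t^\Lambda)^{-1}\rme^{-\rmi t\nu S}\cV_t^\infty)F\Vert^2$, which is the differential form of your interaction-picture Duhamel formula. It uses the same three ingredients as you: the uniform bound on $\cV_t^\Lambda$, the generator-difference bounds from \cref{lem:Approx-V-by-V-infty-Conv-Rates-2}, and \cref{lem:V-inf-Groenwall-Harmonic-Osc}. It then obtains both terms from that single estimate, by multiplying with $\rme^{\rmi t\nu S}\cV_t^\Lambda$ for the first and substituting $F\mapsto(\cV_t^\infty)^{-1}F$ for the second.

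One small slip: in your inverse Duhamel formula the last factor should be $\cV_s^\infty(\cV_t^\infty)^{-1}=\cV_{s-t}^\infty$, not $(\cV_s^\infty)^{-1}\cV_t^\infty$. This is harmless, since \cref{lem:V-inf-Groenwall-Harmonic-Osc} also covers $|s-t|\leq T$.
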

\begin{proof}[Proof of \cref{lem:Approx-V-by-V-infty-Conv-Rates}]
Recall that the Bogoliubov map $\cV_t^\Lambda$ satisfies the evolution equation $\rmi \partial_t \cV_t^\Lambda = \cA^\Lambda(t) \cV_t^\Lambda$,  $\cV_0^\Lambda= I$ (see \eqref{eq:V-Matrix-Representation}).
We remark that
		\begin{align}
			\Vert \cV_t^\Lambda \Vert_{\cL(L^2,L^2)}\leq C \,. \label{eq:Bog-Map-bounded}
		\end{align}
	 To prove this, write
	\begin{align}
		\cV_t^\Lambda=:\begin{pmatrix}
	U_t^\Lambda & J^*V_t^\Lambda J^* \\
	V_t^\Lambda & JU_t^\Lambda J^* 
\end{pmatrix} \,.
	\end{align}
Then (similar to \cite[Lemma~4.9]{BPPS22}) it follows from a Grönwall argument with the use of Duhamel and the time evolution of $\cV_t$ in \eqref{eq:V-Matrix-Representation} that
\begin{align*}
	\Vert U_t^\Lambda\Vert_{\text{op}}+ \Vert V_t^\Lambda\Vert_{\text{op}} \leq 1 +\int_0^t \Vert K_2^\Lambda(\tau)\Vert_{\rm op} \big( \Vert U_\tau^\Lambda\Vert_{\rm op}+ \Vert V_\tau^\Lambda\Vert_{\rm op} \big) d\tau \,.
\end{align*}
Since $\Vert K_2^\Lambda(\tau)\Vert_{\rm op}\leq C$ it follows immediately that $\Vert \cV_t^\Lambda \Vert_{\rm op}\leq C$. Indeed, the bound on $K_2$ follows directly from its definition \eqref{eq:K2-Def-2}:
\begin{align}
		 \Vert K_2^\Lambda(t) J \psi \Vert_2 &=\Vert Q_t^\Lambda\tilde{K}_2^\Lambda(t) J Q_t^\Lambda\psi \Vert_2 \leq \Vert \varphi_t^\Lambda V\ast (\varphi_t (Q_t^\Lambda\psi)^*)\Vert_2 \nn \\
		 &\leq \Vert \varphi_t^\Lambda\Vert_\infty \Vert V\Vert_2 \Vert \varphi_t^\Lambda\Vert_\infty \Vert Q_t^\Lambda\psi\Vert_2 \leq C \Vert\psi\Vert_2\,. \label{eq:K2-bound}
	\end{align}
Next, let $F\in D(h)\oplus JD(h)$ and set $\nu=\left(\int V -\mu^\infty\right)$. 
Then
	\begin{align}
		&\pm\partial_t\Vert (1 - (\cV_t^\Lambda)^{-1} \rme^{-\rmi t \nu S} \cV_t^\infty)F \Vert^2  \nn \\
		& =  \pm 2\mathrm{Im} \big\langle (1 - (\cV_t^\Lambda)^{-1} \rme^{-\rmi t \nu S} \cV_t^\infty)F  , (\cV_t^\Lambda)^{-1} (\rme^{\rmi t \nu S}\cA^\Lambda(t)\rme^{-\rmi t \nu S}  \nn \\
		&\qquad\qquad\qquad\qquad\qquad\qquad\qquad\qquad\qquad\quad -\nu S - \cA^\infty) \cV_t^\infty F  \big\rangle   \nn \\
		&\leq \Vert  (1 - (\cV_t^\Lambda)^{-1} \rme^{-\rmi t \nu S} \cV_t^\infty)F\Vert^2 \label{eq:aux-eq-104}  \\
		&\quad + C \Vert (\rme^{\rmi t \nu S}\cA^\Lambda(t)\rme^{-\rmi t \nu S} -\nu S - \cA^\infty)  \cV_t^\infty F\Vert^2 \,. \label{eq:aux-eq-103}
	\end{align}
	The second term \eqref{eq:aux-eq-103} can be estimated using \cref{lem:Approx-V-by-V-infty-Conv-Rates-2} together with the definitions \eqref{eq:A(t)-Def} and \eqref{eq:def-A-infty} of $\cA^\Lambda(s)$ and $\cA^\infty$:
	\begin{align}
		\eqref{eq:aux-eq-103}  \leq  C  \Lambda^{-2\gamma}   \Vert \big((1+y^2)\oplus J(1+y^2)J\big)\cV_t^\infty F\Vert^2 \,. \label{eq:aux-eq-143}
	\end{align}
	
	Since $(1+y^2)\leq h$ and the uniform boundedness of $\cV_t^\infty $ with respect to the graph norm of $h\oplus JhJ^* $ (see \cref{lem:V-inf-Groenwall-Harmonic-Osc}), it follows that
	\begin{align}
		\eqref{eq:aux-eq-143} \leq C \Lambda^{-2\gamma}  \Vert \big(h\oplus JhJ^* \big)F\Vert^2 \,.\label{eq:aux-eq-106}
\end{align}		
 Thus by Grönwall's Lemma using the estimates \eqref{eq:aux-eq-104} and \eqref{eq:aux-eq-106}, we have for all $F\in D(h)\oplus JD(h)$
 \begin{align}
 	\Vert (1 - (\cV_t^\Lambda)^{-1} \rme^{-\rmi t \nu S} \cV_t^\infty)F \Vert\leq  C \Lambda^{-\gamma}  \Vert \big(h\oplus JhJ^* \big)F\Vert \,. \label{eq:aux-eq-107}
 \end{align}
 The claim then follows from the uniform boundedness of $\cV_t^\infty $ with respect to $h\oplus JhJ^* $, and the fact that $(\cV_t^\infty)^{-1}=\cV_{-t}^\infty$.
\end{proof}

From \cref{lem:Approx-V-by-V-infty-Conv-Rates} we conclude strong convergence of $\cV_t^\Lambda$ on the whole $L^2(\BR^3)\oplus JL^2(\BR^3)$.

\begin{cor}\label{cor:Approx-V-by-V-infty} 
Assume the conditions of \cref{lem:Approx-V-by-V-infty-Conv-Rates-2}.
  Then we have in the limit $\Lambda\to\infty$ that
\begin{align}
	\rme^{\rmi t \left( \int V -\mu^\infty\right) S}\cV_t^\Lambda &\to \cV_t^\infty  \,, \label{eq:V-to-V-infty} \\
	(\cV_t^\Lambda)^{-1} \rme^{-\rmi t \left( \int V -\mu^\infty\right) S} &\to (\cV_t^\infty )^{-1} \label{eq:V-to-V-infty-2}
\end{align}
strongly as operators on $L^2(\BR^3)\oplus JL^2(\BR^3)$.

\end{cor}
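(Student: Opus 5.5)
The corollary is a standard density-plus-uniform-boundedness argument on top of the rate estimate in \cref{lem:Approx-V-by-V-infty-Conv-Rates}. We first check that all four operator families in \eqref{eq:V-to-V-infty} and \eqref{eq:V-to-V-infty-2} are bounded uniformly in $\Lambda$ and in $t\in[-T,T]$.

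\textbf{Uniform bounds.}
\begin{itemize}
\item For $\cV_t^\Lambda$ this is \eqref{eq:Bog-Map-bounded}. The phase $\rme^{\pm\rmi t\nu S}$, with $\nu=\int V-\mu^\infty$, is unitary.
\item For the inverse we use the symplectic identity $(\cV_t^\Lambda)^{-1}=S(\cV_t^\Lambda)^*S$. This gives $\Vert(\cV_t^\Lambda)^{-1}\Vert_{\rm op}=\Vert\cV_t^\Lambda\Vert_{\rm op}\le C$.
\item For the limiting maps, the same identity holds for $\cV_t^\infty$. Alternatively, the explicit form \eqref{eq:-V-infty-Explicit} with the multipliers $L(t)$ and $M(t)$ shows boundedness directly. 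Since $\widehat V\ge0$, we have $|c/\omega|$ and $|b/\omega|$ bounded only away from $p=0$, while $\sin(\omega t)/\omega\le |t|$, so $|c\sin(\omega t)/\omega|\le |c|\,|t|$ is bounded.
\item Simplest of all: $\cV_t^\infty$ solves a linear ODE with bounded-plus-skew generator, and a Grönwall argument exactly like the one for \eqref{eq:Bog-Map-bounded} applies, using $\Vert K_2^\infty\Vert_{\rm op}\le\Vert V\Vert_1$. The inverse is $\cV_{-t}^\infty$.
\end{itemize}

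\textbf{Density.} Fix $F\in L^2\oplus JL^2$ and $\varepsilon>0$. Choose $G\in D(h)\oplus JD(h)$ with $\Vert F-G\Vert<\varepsilon$; this is possible because $\cS(\BR^3)\subset D(h)$ is dense. Then
\[
\big\Vert(\rme^{\rmi t\nu S}\cV_t^\Lambda-\cV_t^\infty)F\big\Vert\le 2C\varepsilon+C\Lambda^{-\gamma}\big\Vert(h\oplus JhJ^*)G\big\Vert,
\]
where the second term comes from \eqref{eq:V-to-V-infty-Conv-Rate}. Since $\gamma>0$, letting $\Lambda\to\infty$ and then $\varepsilon\to0$ gives \eqref{eq:V-to-V-infty}. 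The same argument, using the second term of \eqref{eq:V-to-V-infty-Conv-Rate}, gives \eqref{eq:V-to-V-infty-2}. The convergence is even uniform in $t\in[-T,T]$.

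\textbf{Main obstacle.} The only point needing care is the uniform bound on $\cV_t^\infty$ and its inverse. I would settle it via the Grönwall/Duhamel bound on the block components, mirroring the proof of \eqref{eq:Bog-Map-bounded}, together with $(\cV_t^\infty)^{-1}=\cV_{-t}^\infty$.
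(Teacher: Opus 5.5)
Your proposal is correct and is essentially the paper's argument: the paper deduces the corollary in one line from the rate estimate \eqref{eq:V-to-V-infty-Conv-Rate} on the dense subspace $D(h)\oplus JD(h)$. The density step implicitly uses the uniform bound \eqref{eq:Bog-Map-bounded} together with $(\cV_t^\Lambda)^{-1}=S(\cV_t^\Lambda)^*S$, exactly as you make explicit.

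One remark: your explicit-multiplier bullet relies on $\widehat V\ge0$, which is not among the hypotheses of \cref{lem:Approx-V-by-V-infty-Conv-Rates-2}, so the Grönwall/symplectic route you ultimately choose is the right one. Also, since $\cV_t^\infty$ does not depend on $\Lambda$, only its boundedness for fixed $t$ is needed, so this is not really an obstacle.
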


The following lemma describes the convergence of $\cV_t^\Lambda$ when applied to the state appearing in the creation and annihilation operator of the finite-volume Bogoliubov–Fröhlich Hamiltonian (see \eqref{eq:HBF-Ham}).

\begin{lemma} \label{lem:Approx-V-by-V-infty-on-specific-state} 
Assume the conditions of \cref{lem:Approx-V-by-V-infty-Conv-Rates-2} and let $\gamma$ be given by \eqref{eq:def-gamma}.  
\begin{itemize}
	\item[i)]  Let $\beta\in \BN_0^3$ and assume \cref{Assumption:Initial-datum-and-potential}$_{|\beta|}$. Then we have that there exists a constant $C>0$ such that for all volumes $\Lambda\geq1$
	\begin{align}
		&\sup_{x\in \BR^3}(1+x^2)^{-1/4} \Big\Vert \Big(  (\cV_t^\Lambda)^{-1} \left( Q_t^\Lambda (\partial^\beta W_x) \varphi_t^\Lambda \oplus J Q_t^\Lambda (\partial^\beta W_x) \varphi_t^\Lambda\right) \nn \\
		&\qquad\qquad\qquad\qquad - (\cV_t^\infty)^{-1} \left(\partial^\beta W_x \oplus J \partial^\beta W_x\right)  \Big) \Big\Vert_{L^2\oplus JL^2}  \nn \\
		&\leq C\Lambda^{-\gamma/4} \Big(\Vert (1+y^2 -\Delta)^{1/4} \partial^\beta W\Vert_2  + \Vert \partial^\beta W\Vert_\infty\Big) \,.  \label{eq:V-to-V-infty-convergence-rate-on-x-dependent-state} 
		\end{align}
\end{itemize}
 \item[ii)] Assume \cref{Assumption:Initial-datum-and-potential}$_{2}$ and \cref{con:Initial condition derivative condensate}$_{4}$. Then we have that there exists a constant $C>0$ such that for all volumes $\Lambda\geq1$
	\begin{align}
		&\sup_{x\in \BR^3}  (1+x^2)^{-1/4}\Big\Vert \ \partial_t\Big(  (\cV_t^\Lambda)^{-1} \left( Q_t^\Lambda  W_x \varphi_t^\Lambda \oplus J Q_t^\Lambda  W_x \varphi_t^\Lambda\right) \nn \\
		&\qquad\qquad\qquad\qquad  - (\cV_t^\infty)^{-1} \left( W_x \oplus J  W_x\right)  \Big) \Big\Vert_{L^2\oplus JL^2}  \nn \\
		&\leq C\Lambda^{-\gamma/4} \sum_{|\beta|\leq2}\Big(\Vert (1+y^2 -\Delta)^{1/4} \partial^\beta W\Vert_2  + \Vert \partial^\beta W\Vert_\infty\Big) \,. \label{eq:V-to-V-infty-convergence-rate-on-x-dependent-state-time-deriv}
	\end{align}
\end{lemma}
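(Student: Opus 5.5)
We split the difference with the intermediate quantities $F^{\Lambda,\infty}_x := (\cV_t^\infty)^{-1}\big(Q_t^\Lambda\, \partial^\beta W_x\,\varphi_t^\Lambda \oplus J Q_t^\Lambda\, \partial^\beta W_x\,\varphi_t^\Lambda\big)$, after inserting the phases $\rme^{\pm\rmi t\nu S}$, $\nu=\int V-\mu^\infty$. Write $g^\Lambda_x := Q_t^\Lambda\, \partial^\beta W_x\,\varphi_t^\Lambda$ and $g^\infty_x := \rme^{-\rmi t\nu}\partial^\beta W_x$.

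The difference then decomposes as
\[
\big[(\cV_t^\Lambda)^{-1}\rme^{-\rmi t\nu S}-(\cV_t^\infty)^{-1}\big]\rme^{\rmi t\nu S}(g^\Lambda_x\oplus Jg^\Lambda_x)
+(\cV_t^\infty)^{-1}\big[\rme^{\rmi t\nu S}(g^\Lambda_x\oplus Jg^\Lambda_x)-(\partial^\beta W_x\oplus J\partial^\beta W_x)\big].
\]
The phase $\rme^{\rmi t\nu S}$ exactly cancels the phase of $g^\infty_x$ in both components, since $S=\mathrm{diag}(1,-1)$ and $J$ conjugates. For the second term we use $\Vert(\cV_t^\infty)^{-1}\Vert_{\rm op}\le C$, which follows from the explicit form \eqref{eq:-V-infty-Explicit}: $c/\omega$ and $b/\omega$ are bounded away from $p=0$, and near $p=0$ the combination $\sin(\omega t)/\omega$ is bounded by $|t|$. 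It then remains to bound $\Vert g^\Lambda_x - g^\infty_x\Vert_2$. We write $Q_t^\Lambda - 1$ and $\varphi_t^\Lambda - \rme^{-\rmi t\nu}$ and apply \eqref{eq:ApproxV-by-V-infty-2} and \eqref{eq:ApproxV-by-V-infty-4} with $f=\partial^\beta W_x$. The weight produced is $\Vert(1+y^2)\partial^\beta W_x\Vert_2\le C(1+x^2)\Vert(1+y^2)\partial^\beta W\Vert_2$, which is too much growth in $x$ and also the wrong norm compared with the right-hand side of the statement.

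This mismatch is the main obstacle. To get only $(1+x^2)^{1/4}$ growth and the norm $\Vert(1+y^2-\Delta)^{1/4}\partial^\beta W\Vert_2$, we interpolate. On one side there is the trivial uniform bound $\Vert g^\Lambda_x-g^\infty_x\Vert\le C\Vert\partial^\beta W\Vert_2$. On the other side we redo the estimates of \cref{lem:Approx-V-by-V-infty-Conv-Rates-2} with the localizing weight $\tl$ replaced by a weight of order $(1+y^2)^{1/4}$. Concretely, we split $\partial^\beta W_x$ into the part with $|y|\le R$ and its tail. The tail is controlled by $\Vert(1+y^2)^{1/4}\partial^\beta W\Vert_2 (1+x^2)^{1/4}R^{-1/2}$. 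The near part gains $\Lambda^{-\gamma}$ times polynomial factors in $R$, and we choose $R=\Lambda^{\gamma/2}$ or similar. This trade-off is presumably what degrades the rate from $\Lambda^{-\gamma}$ to $\Lambda^{-\gamma/4}$. The $\Vert\partial^\beta W\Vert_\infty$ term on the right-hand side accounts for the $\Lambda^{-1/6}\Vert f\Vert_\infty$ contribution in \eqref{eq:ApproxV-by-V-infty-4}.

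For the first term we use \cref{lem:Approx-V-by-V-infty-Conv-Rates}, which needs the vector in $D(h)\oplus JD(h)$ with $h=-\Delta+y^2+1$. Applied directly to $g^\Lambda_x$ this again produces $(1+x^2)$ growth and requires derivatives of $\varphi_t^\Lambda$. We therefore interpolate here too. The operator $(\cV_t^\Lambda)^{-1}\rme^{-\rmi t\nu S}-(\cV_t^\infty)^{-1}$ is bounded by $C$ uniformly, by \eqref{eq:Bog-Map-bounded}, and by $C\Lambda^{-\gamma}$ from $D(h)$. Complex interpolation between these gives a bound $C\Lambda^{-\gamma/4}$ from $D(h^{1/4})$. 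Finally, $\Vert h^{1/4}g^\Lambda_x\Vert_2\le C(1+x^2)^{1/4}\Vert(1+y^2-\Delta)^{1/4}\partial^\beta W\Vert_2$, using the bounds on $\varphi_t^\Lambda$ and its derivative from \cref{cor:derivative-varphi-L2-norm-estimate} and the fact that $Q_t^\Lambda-1$ is rank one with smoothing range.

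Part ii) is handled by differentiating in $t$. The derivative of $(\cV_t)^{-1}$ is $\rmi(\cV_t)^{-1}\cA(t)$, and the derivatives of $Q_t^\Lambda$ and $\varphi_t^\Lambda$ are controlled by \eqref{eq:ApproxV-by-V-infty-7} and \eqref{eq:ApproxV-by-V-infty-8}. The generator $\cA$ contains $-\Delta/2$, which costs two derivatives of $W$; these are supplied by \cref{Assumption:Initial-datum-and-potential}$_2$. The difference $\cA^\Lambda-\cA^\infty$ is then bounded with \eqref{eq:ApproxV-by-V-infty-6} and \eqref{eq:ApproxV-by-V-infty-6.1}, and the same interpolation as in part i) applies.
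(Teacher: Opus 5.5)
\emph{Verdict: genuine but easily repaired gap.} Your first term applies the operator difference to the $\Lambda$-dependent vector $g^\Lambda_x$, which need not lie in $D(h^{1/4})$. Reversing the splitting, as the paper does, repairs it.

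\textbf{What works.} The vector-difference part is fine. Cutting off at $R=\Lambda^{\gamma/2}$ and using $(1+y^2)\le(1+R^2)^{3/4}(1+y^2)^{1/4}$ on $|y|\le R$ balances $\Lambda^{-\gamma}R^{3/2}$ against $R^{-1/2}$. This gives both the rate $\Lambda^{-\gamma/4}$ and the weight $(1+x^2)^{1/4}$; it is a hands-on version of the Heinz--Kato interpolation the paper uses. The interpolated operator bound $\Vert[(\cV_t^\Lambda)^{-1}\rme^{-\rmi t\nu S}-(\cV_t^\infty)^{-1}]F\Vert\le C\Lambda^{-\gamma/4}\Vert(h\oplus JhJ^*)^{1/4}F\Vert$ is also correct.

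\textbf{The gap.} In your splitting this bound is applied to $F=\rme^{\rmi t\nu S}(g^\Lambda_x\oplus Jg^\Lambda_x)$. You therefore need $\Vert h^{1/4}g^\Lambda_x\Vert_2\le C(1+x^2)^{1/4}\Vert(1+y^2-\Delta)^{1/4}\partial^\beta W\Vert_2$, and this need not hold under the hypotheses.
\begin{itemize}
\item $g^\Lambda_x$ contains the rank-one piece $-\Lambda^{-1}\langle\varphi_t^\Lambda,\partial^\beta W_x\varphi_t^\Lambda\rangle\varphi_t^\Lambda$.
\item By operator monotonicity, $\Vert h^{1/4}\varphi_t^\Lambda\Vert_2\ge\Vert(1+y^2)^{1/4}\varphi_t^\Lambda\Vert_2$. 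This is a spatial moment of a condensate spread over volume $\Lambda$.
\item No decay of $\varphi_0^\Lambda$ or $\eta$ is assumed, so $g^\Lambda_x$ need not lie in $D(h^{1/4})$ at all.
\item Smoothness of the range vector handles the $-\Delta$ part of $h$, not the $y^2$ part. Even with a moment assumption, you would get condensate moments and an $L^1$-type norm of $\partial^\beta W$, not the right-hand side of \eqref{eq:V-to-V-infty-convergence-rate-on-x-dependent-state}.
\item Bounding multiplication by $\varphi_t^\Lambda$ on $D(h^{1/4})$ also needs control of $\nabla\varphi_t^\Lambda$. \cref{cor:derivative-varphi-L2-norm-estimate} with $|\beta|=1$ requires \cref{con:Initial condition derivative condensate}$_3$, which part i) does not assume, and it only gives $2\wedge\infty$ control.
\end{itemize}

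\textbf{The repair for part i).} Split the other way round, as the paper does:
\[
\big[(\cV_t^\Lambda)^{-1}\rme^{-\rmi t\nu S}-(\cV_t^\infty)^{-1}\big]\big(\partial^\beta W_x\oplus J\partial^\beta W_x\big)
+(\cV_t^\Lambda)^{-1}\rme^{-\rmi t\nu S}\big[\rme^{\rmi t\nu S}\big(g^\Lambda_x\oplus Jg^\Lambda_x\big)-\big(\partial^\beta W_x\oplus J\partial^\beta W_x\big)\big].
\]
The interpolated operator bound now acts only on the $\Lambda$-independent vector. For it, $(1+x^2)^{-1/4}\Vert h^{1/4}\partial^\beta W_x\Vert_2\le C\Vert h^{1/4}\partial^\beta W\Vert_2$ by translation and interpolation. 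The vector difference is paired with $(\cV_t^\Lambda)^{-1}=S(\cV_t^\Lambda)^*S$, which is uniformly bounded by \eqref{eq:Bog-Map-bounded}. Your cutoff argument then closes part i).

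\textbf{Part ii).} As sketched it has the same defect and the same cure. After differentiating, apply the operator difference to $(\cA^\infty+\nu S)(W_x\oplus JW_x)$. This vector is $\Lambda$-independent and reduces to Laplacian and phase terms of $W_x$, because the $K_1^\infty$ and $K_2^\infty$ contributions cancel for real $W$. The $\Lambda$-dependent vectors then appear only inside plain $L^2$-norms, where the rank-one piece is harmless. These are the state difference hit by the conjugated generator and the time derivatives of $Q_t^\Lambda$ and $\varphi_t^\Lambda$.
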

\begin{proof}
Let $x\in \BR^3$ be fixed. We start with the proof of \eqref{eq:V-to-V-infty-convergence-rate-on-x-dependent-state}. We have
\begin{align}
	&  (1+x^2)^{-1/4}\Big\Vert  (\cV_t^\Lambda)^{-1}  \rme^{-\rmi\nu S}\rme^{\rmi\nu S}\left( Q_t^\Lambda (\partial^\beta W_x) \varphi_t^\Lambda \oplus J Q_t^\Lambda (\partial^\beta W_x) \varphi_t^\Lambda\right) \nn \\
		&\qquad\qquad\qquad - (\cV_t^\infty)^{-1} \left(\partial^\beta W_x \oplus J \partial^\beta W_x\right)   \Big\Vert \nn \\
		&\leq  (1+x^2)^{-1/4} \Big\Vert  \left((\cV_t^\Lambda)^{-1}\rme^{-\rmi\nu S} - (\cV_t^\infty)^{-1}\right)  \left(\partial^\beta W_x \oplus J \partial^\beta W_x\right) \Big\Vert  \label{eq:aux-eq-114} \\
		&+ C  (1+x^2)^{-1/4}\Big\Vert \rme^{\rmi\nu S} \left( Q_t^\Lambda (\partial^\beta W_x) \varphi_t^\Lambda \oplus J Q_t^\Lambda (\partial^\beta W_x) \varphi_t^\Lambda\right)  \nn \\\
		&\qquad\qquad\qquad\qquad  -  \partial^\beta W_x \oplus J \partial^\beta W_x   \Big\Vert \,. \label{eq:aux-eq-115}
\end{align}
The first term is estimated with \cref{lem:Approx-V-by-V-infty-Conv-Rates} and the Heinz-Kato interpolation theorem \cite[Chapter~IX, Proposition~9]{ReSi75ii}
\begin{align}
	\eqref{eq:aux-eq-114}\leq C   \Lambda^{-\gamma/4}  (1+x^2)^{-1/4} \big\Vert \big(h\oplus J hJ^*\big)^{1/4}  \left(\partial^\beta W_x \oplus J \partial^\beta W_x\right)  \big\Vert \,, \label{eq:aux-eq-116}
\end{align}
where  $h=1+y^2-\Delta$. Similarly, for the second term, in analogy with \cref{lem:Approx-V-by-V-infty-Conv-Rates-2}, using the Heinz-Kato theorem, we get 
\begin{align}
	\eqref{eq:aux-eq-115}\leq C \Lambda^{-\gamma/4} \left( (1+x^2)^{-1/4}\big\Vert (1+y^2)^{1/4} \partial^\beta W_x\big\Vert_2 + \Vert  \partial^\beta W\Vert_\infty \right) \,. \label{eq:aux-eq-117}
\end{align}
Using $W_x(y)=W(x-y)$ and a simple substitution argument, we get $(1+x^2)^{-1}\Vert h  \partial^\beta W_x\Vert_2\leq C \Vert h \partial^\beta W\Vert_2$. Then by $(1+x^2)^{-1}\Vert h  \partial^\beta W_x\Vert_2= (1+x^2)^{-1}\Vert T_x^* h  T_x \partial^\beta W\Vert_2$, where $T_x$ denotes the translation operator by $x$, and the Heinz-Kato interpolation theorem
	\begin{align}
		(1+x^2)^{-1/4}\Vert h^{1/4}  \partial^\beta W_x\Vert_2 \leq C \Vert h^{1/4} \partial^\beta W\Vert_2 \,, \label{eq:x-independent-estimate-W}
	\end{align}
	which gives an $x$-independent estimate.
	Combining \eqref{eq:aux-eq-116}, \eqref{eq:aux-eq-117} and \eqref{eq:x-independent-estimate-W}  proves \eqref{eq:V-to-V-infty-convergence-rate-on-x-dependent-state}. Note that the constants $C$ appearing above are all $x$-independent.
	
	Next, we prove \eqref{eq:V-to-V-infty-convergence-rate-on-x-dependent-state-time-deriv}. Using the evolution equations $\rmi\partial_t(\cV_t^\infty)^{-1}F = (\cV_t^\infty)^{-1} (-\cA^\infty)F$ and $\rmi\partial_t(\cV_t^\Lambda)^{-1}F = (\cV_t^\Lambda)^{-1} (-\cA^\Lambda(t))F$, we have
	\begin{align}
		& (1+x^2)^{-1/4}\Big\Vert  \partial_t\Big(  (\cV_t^\Lambda)^{-1} \left( Q_t^\Lambda  W_x \varphi_t^\Lambda \oplus J Q_t^\Lambda  W_x \varphi_t^\Lambda\right) \nn \\
		&\qquad\qquad\qquad  - (\cV_t^\infty)^{-1} \left( W_x \oplus J  W_x\right)  \Big) \Big\Vert_{L^2\oplus JL^2} \nn \\
		&\leq (1+x^2)^{-1/4} \Big\Vert  (\cV_t^\Lambda)^{-1} \big(-\cA^\Lambda(t)\big)  \left( Q_t^\Lambda  W_x \varphi_t^\Lambda \oplus J Q_t^\Lambda  W_x \varphi_t^\Lambda\right) \nn \\
		&\qquad\qquad\qquad - (\cV_t^\infty)^{-1}\big(-\cA^\infty- \nu S\big) \left( W_x \oplus J  W_x\right)  \Big\Vert \label{eq:V-derivative-approx}  \\
		&\quad  + (1+x^2)^{-1/4} \Big\Vert  (\cV_t^\Lambda)^{-1}\rmi \partial_t \left( Q_t^\Lambda  W_x \varphi_t^\Lambda \oplus J Q_t^\Lambda  W_x \varphi_t^\Lambda\right)
		\nn \\
		& \qquad\qquad\qquad\qquad - (\cV_t^\infty)^{-1} \nu S \left( W_x \oplus J  W_x\right)   \Big\Vert \,. \label{eq:state-derivative-approx} 
	\end{align}
	The first term, \eqref{eq:V-derivative-approx}, is bounded using  $\Vert \cV_t^\Lambda\Vert_{\rm op}\leq C$ (see \eqref{eq:Bog-Map-bounded}):
	\begin{align}
		\eqref{eq:V-derivative-approx}\leq  (1+x^2)^{-1/4}\big\Vert  \big( (\cV_t^\Lambda)^{-1} \rme^{-\rmi t \nu S} - (\cV_t^\infty)^{-1} \big) \big(-\cA^\infty- \nu S\big)  \nn\\
		\times\left( W_x \oplus J  W_x\right) \big\Vert \label{eq:aux-eq-118}  \\
		+ C (1+x^2)^{-1/4}\Big\Vert \Big( \rme^{\rmi t \nu S}\cA^\Lambda(t)\rme^{-\rmi t \nu S} \rme^{\rmi t \nu S}   \left( Q_t^\Lambda  W_x \varphi_t^\Lambda \oplus J Q_t^\Lambda  W_x \varphi_t^\Lambda\right) \nn \\
		\ - \big(\cA^\infty+ \nu S\big) \left( W_x \oplus J  W_x\right)  \Big) \Big\Vert \,, \label{eq:aux-eq-119}
	\end{align}
	where \eqref{eq:aux-eq-118} is estimated with the help of \cref{lem:Approx-V-by-V-infty-Conv-Rates} and the Heinz-Kato theorem 
	\begin{align}
		&\eqref{eq:aux-eq-118} \leq C\Lambda^{-\gamma/4} (1+x^2)^{-1/4} \big\Vert (h\oplus JhJ^*)^{1/4}  \big(\cA^\infty + \nu S\big)   \left( W_x \oplus J  W_x\right) \big\Vert  \,.  \label{eq:aux-eq-120}
	\end{align}
	 We see from the definition of $\cA^\infty$ that $\cA^\infty (W_x\oplus JW_x) =  -\Delta W_x\oplus J\Delta W_x$, where the $K_1^\infty$ and $K_2^\infty$ contributions cancel, since  $W_x$ is real valued. Using this and an argument similar to \eqref{eq:x-independent-estimate-W}, we get
	\begin{align}
	 \eqref{eq:aux-eq-120} &\leq C\Lambda^{-\gamma/4} \big\Vert (h\oplus JhJ^*)^{1/4}  (-\Delta+\nu)  W_x\oplus J(\Delta -\nu)W_x \big\Vert \nn \\
	& \leq C\Lambda^{-\gamma/4} \big(\Vert h^{1/4} (-\Delta) W \Vert_2 + \Vert h^{1/4} W\Vert_2 \big) \,.  \label{eq:aux-eq-69}
	\end{align}
	We continue with the \eqref{eq:aux-eq-119} estimate
		\begin{align}
		\eqref{eq:aux-eq-119}&\leq  (1+x^2)^{-1/4} \Big\Vert  \big(\rme^{\rmi t \nu S}\cA^\Lambda(t)\rme^{-\rmi t \nu S} -\cA^\infty- \nu S \big)  \left( W_x \oplus J  W_x\right)  \Big\Vert  \label{eq:aux-eq-121}  \\
		&\quad + (1+x^2)^{-1/4}\Big\Vert  \rme^{\rmi t \nu S}\cA^\Lambda(t)\rme^{-\rmi t \nu S} \Big( Q_t^\Lambda  W_x \varphi_t^\Lambda \rme^{\rmi t \nu } \oplus J Q_t^\Lambda  W_x \varphi_t^\Lambda  \rme^{\rmi t \nu } \nn \\
		&\qquad\qquad\qquad\qquad\qquad\qquad\qquad\qquad\qquad - W_x \oplus J  W_x \Big) \Big\Vert\,.  \label{eq:aux-eq-122}
	\end{align}
	The bound for \eqref{eq:aux-eq-121} follows from \cref{lem:Approx-V-by-V-infty-Conv-Rates-2}, the Heinz-Kato theorem, and \eqref{eq:x-independent-estimate-W}
	\begin{align}
		\eqref{eq:aux-eq-121} &\leq C \Lambda^{-\gamma/4} (1+x^2)^{-1/4} \big\Vert \big((1+y^2)\oplus (1+y^2)\big)^{1/4} (W_x \oplus J  W_x) \big\Vert  \nn \\
		&\leq C \Vert (1+y^2)^{1/4} W\Vert_2 \,. \label{eq:aux-eq-123}
	\end{align}
	For the last term, \eqref{eq:aux-eq-122}, using the fact that all operators, except the Laplacian, appearing in the definition of $\cA^\Lambda(t)$ (see \eqref{eq:A(t)-Def}) are uniformly bounded in $\Lambda$, we first observe that
	\begin{align}
		\eqref{eq:aux-eq-122}\leq C  (1+x^2)^{-1/4}\left\Vert 
	(-\Delta +1 )\left(Q_t^\Lambda  W_x \varphi_t^\Lambda \rme^{\rmi t \nu } - W_x \right)    \right\Vert \,. \label{eq:aux-eq-140}
	\end{align}
	By separating the terms where all derivatives act on $W_x$ from those where they act on $\varphi_t^\Lambda$ or $Q_t^\Lambda$, we get
	\begin{align}
		\eqref{eq:aux-eq-140} &\leq C \sum_{|\beta|\in\{0, 2\}} (1+x^2)^{-1/4}\Big\Vert   Q_t^\Lambda  (\partial^\beta W_x) \varphi_t^\Lambda\rme^{\rmi t \nu } -\partial^\beta W_x  \Big\Vert \nn \\
		&\quad +C \Lambda^{-1/3} \sum_{|\beta|\leq1} (1+x^2)^{-1/4} \Big\Vert \partial^\beta W_x\Big\Vert_2 \,. \label{eq:aux-eq-124}
	\end{align}
	
	Applying \cref{lem:Approx-V-by-V-infty-Conv-Rates-2} together with the Heinz-Kato theorem and again using \eqref{eq:x-independent-estimate-W}, yields
	\begin{align}
		\eqref{eq:aux-eq-124} \leq C \Lambda^{-\gamma/4} \sum_{|\beta|\leq2} \big\Vert   (1+y^2)^{1/4} \partial^\beta W\big\Vert_2  + C\Lambda^{-1/6} \sum_{|\beta|\leq2} \Vert \partial^\beta W\Vert_\infty \,.\label{eq:aux-eq-125}
	\end{align}
	Collecting all contributions, we conclude
	\begin{align}
		\eqref{eq:V-derivative-approx} &\leq \eqref{eq:aux-eq-118} +\eqref{eq:aux-eq-119} \leq \eqref{eq:aux-eq-69} +\eqref{eq:aux-eq-121} +\eqref{eq:aux-eq-122} \nn \\
		&\leq \eqref{eq:aux-eq-69} + \eqref{eq:aux-eq-123}	+\eqref{eq:aux-eq-125} \nn \\
		&\leq C \Lambda^{-\gamma/4} \sum_{|\beta|\leq2} \left( \Vert h^{1/4} \partial^\beta W\Vert_2 + \Vert \partial^\beta W\Vert_\infty \right)  \,. \label{eq:aux-eq-141}
	\end{align}
	The second estimate required for \eqref{eq:V-to-V-infty-convergence-rate-on-x-dependent-state-time-deriv} concerns \eqref{eq:state-derivative-approx}, namely, the case where the time derivative acts on the state. 
	This bound follows by a straightforward application of \cref{lem:Approx-V-by-V-infty-Conv-Rates-2} and \cref{lem:Approx-V-by-V-infty-Conv-Rates}, together with the Heinz–Kato theorem:
	\begin{align}
		\eqref{eq:state-derivative-approx}\leq C \Lambda^{-\gamma/4} \big(\Vert h^{1/4} W\Vert_2 + \Vert W\Vert_\infty\big) \,. \label{eq:aux-eq-126}
	\end{align}
	Combining \eqref{eq:aux-eq-141} with \eqref{eq:aux-eq-126}, we conclude the claimed convergence rate in \eqref{eq:V-to-V-infty-convergence-rate-on-x-dependent-state-time-deriv}.
\end{proof}

We now use the convergence established in \cref{lem:Approx-V-by-V-infty-on-specific-state} to prove uniform boundedness with respect to both $x$ and $\Lambda$. In fact, these uniform bounds remain valid even after multiplying with a Bogoliubov map $\cZ_0^\Lambda$. The estimates provided in the following lemma play a crucial role in the tracer localization argument (see \cref{lem:Conditons-LNS15-Theorem-8}).

\begin{lemma}\label{lem:uniform-boundedness-F-Lambda}
Let $\beta\in \BN_0^3$ and $\gamma$ be given by \eqref{eq:def-gamma}.
	Assume the conditions of \cref{lem:Approx-V-by-V-infty-Conv-Rates-2}, \cref{con:Initial condition derivative condensate}$_{4}$ and \cref{Assumption:Initial-datum-and-potential}$_{\max\{|\beta|,2\}}$.
		For all densities $\rho\geq1$ and volumes $\Lambda\geq1$ let $\cZ_0^\Lambda\in \cL(L^2\oplus JL^2)$ be a unitarily implementable Bogoliubov map 	such that $\exists C>0$ and $0<\epsilon\leq  \gamma/4$ with $\forall
 	\Lambda,\rho\geq1$
 	\begin{gather}
 		\Vert \widehat{\cZ_0^\Lambda}\cT^{-1} (\tau \oplus J\tau J^*)\Vert_{\cL(L^2\oplus JL^2)} \leq C \,, \quad \Vert \cZ_0^\Lambda \Vert_{\cL(L^2\oplus JL^2)}\leq C\Lambda^{\epsilon} \,. \label{eq:Z-Lambda-regularized-bound} 
 	\end{gather}	 
	  Then for all times $T\geq0$, there exists a $C>0$ such that for all $\rho,\Lambda\geq1$, $x\in \BR^3$ and $-T\leq t\leq T$
		\begin{align}
			(1+x^2)^{-1/4}\left\Vert \cZ_0^\Lambda (\cV_t^\Lambda)^{-1} \left( Q_t^\Lambda (\partial^\beta W_x) \varphi_t^\Lambda \oplus J Q_t^\Lambda (\partial^\beta W_x) \varphi_t^\Lambda \right) \right\Vert_{L^2\oplus JL^2} &\leq C \,, \label{eq:uniform-bound-Z-Lambda-applied-to-state} \\
			(1+x^2)^{-1/4}\left\Vert \partial_t\left( \cZ_0^\Lambda (\cV_t^\Lambda)^{-1} \left( Q_t^\Lambda  W_x \varphi_t^\Lambda \oplus J Q_t^\Lambda  W_x \varphi_t^\Lambda \right)\right) \right\Vert_{L^2\oplus JL^2} &\leq C \,. \label{eq:uniform-bound-time-derivative-Z-Lambda-applied-to-state}
		\end{align}
\end{lemma}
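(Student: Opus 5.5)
The plan is to split $\cZ_0^\Lambda (\cV_t^\Lambda)^{-1} F_x^\Lambda$, with $F_x^\Lambda := Q_t^\Lambda (\partial^\beta W_x)\varphi_t^\Lambda \oplus J Q_t^\Lambda(\partial^\beta W_x)\varphi_t^\Lambda$, into a limiting piece and an error piece,
\begin{align*}
\cZ_0^\Lambda (\cV_t^\Lambda)^{-1} F_x^\Lambda = \cZ_0^\Lambda (\cV_t^\infty)^{-1}\big(\partial^\beta W_x \oplus J\partial^\beta W_x\big) + \cZ_0^\Lambda\Big( (\cV_t^\Lambda)^{-1} F_x^\Lambda - (\cV_t^\infty)^{-1}\big(\partial^\beta W_x \oplus J\partial^\beta W_x\big)\Big).
\end{align*}
The two pieces need different tools. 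The limiting piece needs the regularized bound in \eqref{eq:Z-Lambda-regularized-bound}, which is uniform in $\Lambda$. The error piece only needs the crude growth $\Vert\cZ_0^\Lambda\Vert_{\rm op}\le C\Lambda^\epsilon$.

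For the error piece, I would use \cref{lem:Approx-V-by-V-infty-on-specific-state}i). It gives a bound of order $C\Lambda^{-\gamma/4}$ after weighting by $(1+x^2)^{-1/4}$. Since $\epsilon\le\gamma/4$, the product $\Lambda^{\epsilon}\Lambda^{-\gamma/4}$ is at most $1$. The prefactor $\Vert h^{1/4}\partial^\beta W\Vert_2+\Vert\partial^\beta W\Vert_\infty$ is finite by \cref{Assumption:Initial-datum-and-potential}$_{|\beta|}$.

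For the limiting piece, I would pass to Fourier space and use the diagonalization \eqref{eq:Diagonal-V-infty}. This gives $(\cV_t^\infty)^{-1}=\cF^{-1}\cT^{-1}\big(e^{\rmi t\omega}\oplus Je^{\rmi t\omega}J^*\big)\cT\,\cF$, up to the $\cF\oplus J\cF J^*$ conjugation. Computing $\cT(\widehat{\partial^\beta W_x}\oplus J\widehat{\partial^\beta W_x})$ as in \eqref{eq:aux-eq-110} uses that $W$ is real and even, so that $R\cC\widehat{W_x}=\widehat{W_x}$. This yields $\tau\widehat{\partial^\beta W_x}\oplus J\tau\widehat{\partial^\beta W_x}$, because the $\tau^{-1}$ contributions cancel. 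Hence the term equals $\widehat{\cZ_0^\Lambda}\cT^{-1}(\tau\oplus J\tau J^*)$ applied to $e^{\rmi t\omega}\widehat{\partial^\beta W_x}\oplus J e^{\rmi t\omega}\widehat{\partial^\beta W_x}$ (modulo $\cC R$ bookkeeping). Its norm is therefore at most $C\Vert\partial^\beta W\Vert_2$, uniformly in $x$ and $t$. This proves \eqref{eq:uniform-bound-Z-Lambda-applied-to-state}.

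For the time derivative \eqref{eq:uniform-bound-time-derivative-Z-Lambda-applied-to-state}, the same split applies to $\partial_t$ of the vector. The error piece is handled by \cref{lem:Approx-V-by-V-infty-on-specific-state}ii) together with $\Lambda^\epsilon\Lambda^{-\gamma/4}\le1$; this uses \cref{con:Initial condition derivative condensate}$_4$ and \cref{Assumption:Initial-datum-and-potential}$_2$. The limiting piece is $\partial_t$ of $e^{\rmi t\omega}$, which produces $\rmi\omega\,\tau\widehat{W}$ in each component. Boundedness of $\widehat{\cZ_0^\Lambda}\cT^{-1}(\tau\oplus J\tau J^*)$ then gives the bound $C\Vert\omega\widehat W\Vert_2$. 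Since $\omega(p)\le C(p^2+|p|)$, this is finite for $W\in H^2$.

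The main obstacle I expect is the exact algebra of $\cT$ acting on $\widehat{W_x}\oplus J\widehat{W_x}$, where one must verify that the $\tau^{-1}$ components cancel. Here $\cC$ and $R$ act on $\widehat{W_x}(p)=e^{\rmi px}\widehat W(p)$, and the phases $e^{\rmi px}$ must match so that the real/even symmetry applies. Only then does $\tau$, rather than the infrared-singular $\tau^{-1}$, appear, so that \eqref{eq:Z-Lambda-regularized-bound} can be used. A secondary point is that the phase $e^{\rmi t\nu S}$ between $\cV_t^\Lambda$ and $\cV_t^\infty$ is unitary and is already absorbed by \cref{lem:Approx-V-by-V-infty-on-specific-state}.
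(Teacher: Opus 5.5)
Your proposal is correct and follows the paper's proof. It uses the same split into the $\cV_t^\infty$ piece and the error piece. The error is controlled by \cref{lem:Approx-V-by-V-infty-on-specific-state} together with $\Lambda^{\epsilon-\gamma/4}\le 1$. The limiting piece is controlled by the fact that $\cT$ acts as $\tau\oplus J\tau J^*$ on vectors of the form $f\oplus J\cC R f$; realness of $W$ already gives this symmetry, so evenness is not needed. The differences are cosmetic: you work with the diagonalized form $\cT^{-1}(\rme^{\rmi t\omega}\oplus J\rme^{\rmi t\omega}J^*)\cT$ and differentiate $\rme^{\rmi t\omega}$, whereas the paper uses the explicit entries $L(t),M(t)$ and the generator $\cA^\infty$. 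Both give the same bounds, and yours avoids the harmless $(1+|t|)$ factor.
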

\begin{proof}
	We begin with the proof of \eqref{eq:uniform-bound-Z-Lambda-applied-to-state}. We have
	\begin{align}
		&\eqref{eq:uniform-bound-Z-Lambda-applied-to-state}\leq \left\Vert \cZ_0^\Lambda (\cV_t^\infty)^{-1}   (\partial^\beta W_x\oplus J\partial^\beta W_x) \right\Vert \label{eq:aux-eq-127}  \\
		&+  (1+x^2)^{-1/4} \Vert \cZ_0^\Lambda\Vert_{\rm op}\Big\Vert   (\cV_t^\Lambda)^{-1} \left( Q_t^\Lambda (\partial^\beta W_x) \varphi_t^\Lambda \oplus J Q_t^\Lambda (\partial^\beta W_x) \varphi_t^\Lambda \right) \nn \\
		&\qquad\qquad\qquad\qquad\qquad  - (\cV_t^\infty)^{-1}  (\partial^\beta W_x\oplus J\partial^\beta W_x) \Big\Vert \,. \label{eq:aux-eq-128}
	\end{align}
	We first estimate \eqref{eq:aux-eq-127}. For this, we  prove for $f\in L^2(\BR^3)$ 
	\begin{align}
		\Vert \widehat{\cZ_0^\Lambda} \widehat{(\cV_t^\infty)^{-1}} (f \oplus J\cC R f) \Vert_{L^2\oplus JL^2} \leq C \Vert f\Vert_2 \,. \label{eq:Z-Lambda-V-infty-bound}
	\end{align}
	Recall the explicit form of the Bogoliubov map $\cV_t^\infty$ in \eqref{eq:-V-infty-Explicit}:
	\begin{gather}
	\widehat{\cV_t^\infty} =   \begin{pmatrix}
	L(t) &  M(t)^* \cC R J^*\\
	J\cC R M(t) & J L(t)J^*
\end{pmatrix} \label{eq:-V-infty-Explicit-1} \,, \\
		L(t)  = \cos (\omega t ) -\rmi \frac{\tfrac{p^2}{2}+ (2\pi)^{3/2}\widehat{V}}{\omega} \sin(\omega t)\,, \quad
	M(t) = -\rmi \frac{(2\pi)^{3/2}\widehat{V}}{\omega} \sin(\omega t) \,. \nn
\end{gather}
We remark that from \eqref{eq:Z-Lambda-regularized-bound}, we can conclude
	\begin{align}
		\Vert  \widehat{\cZ_0^\Lambda} (f \oplus J\cC R f) \Vert_{\cL(L^2\oplus JL^2)} \leq \Vert  \widehat{\cZ_0^\Lambda}\cT^{-1} (\tau\oplus J\tau J^*)\Vert \Vert (f \oplus J\cC R f)\Vert \label{eq:Z-Lambda-symmetric-bound}
	\end{align}
	  by inserting  $\cT^{-1}\cT$.
Now, since $\widehat{(\cV_t^\infty)^{-1}}= \widehat{\cV_{-t}^\infty}$,  using \eqref{eq:Z-Lambda-symmetric-bound} and \eqref{eq:Z-Lambda-regularized-bound}, we get
\begin{align}
	&\eqref{eq:Z-Lambda-V-infty-bound} = \left\Vert \widehat{\cZ_0^\Lambda} \begin{pmatrix}
		\big(L(-t) + M^*(-t)\big) f \\
		J\cC R \big( M(-t) + L^*(-t)\big)f 
	\end{pmatrix} \right\Vert   \nn \\
	&\leq   
	\left\Vert 
	\widehat{\cZ_0^\Lambda}\begin{pmatrix}
	\cos(\omega t) f \\
	J\cC R \cos(\omega t) f
	\end{pmatrix} 
	\right\Vert 
	 + 
	  \left\Vert \widehat{\cZ_0^\Lambda}
	 \begin{pmatrix}
	 	\rmi \tfrac{p^2}{2\omega} \sin(\omega t) f \\
	 	J \cC R (-\rmi ) \tfrac{p^2}{2\omega} \sin(\omega t) f
	 \end{pmatrix}
	 \right\Vert \nn\\
	 &\leq  C  
	\left\Vert 
	f
	\right\Vert _2
	 + C
	  \left\Vert \begin{pmatrix}
	  \tau^{-1} & \\
	  0&J\tau^{-1} J^*
\end{pmatrix}	   \cT
	 \begin{pmatrix}
	 	\rmi \tfrac{p^2}{2\omega} \sin(\omega t) f \\
	 	J \cC R (-\rmi ) \tfrac{p^2}{2\omega} \sin(\omega t) f
	 \end{pmatrix}
	 \right\Vert \,. \label{eq:aux-eq-129}
\end{align}
Applying the definition of $\cT$ (see \eqref{def:T-infty-volume}) to the second term in \eqref{eq:aux-eq-129}, we find
	\begin{align}
		&\left\Vert \begin{pmatrix}
	  \tau^{-1}  & \\
	  0&J\tau^{-1} J^*
\end{pmatrix}	   \cT
	 \begin{pmatrix}
	 	\rmi \tfrac{p^2}{2\omega} \sin(\omega t) f \\
	 	J \cC R (-\rmi ) \tfrac{p^2}{2\omega} \sin(\omega t) f
	 \end{pmatrix}
	 \right\Vert \nn \\
	 & = C \Vert \tau^{-2} \tfrac{p^2}{2\omega} \sin(\omega t) f\Vert_2 
	 \leq C\Vert f\Vert_2 \,.  \label{eq:aux-eq-130}
	\end{align}
	From \eqref{eq:aux-eq-129} and \eqref{eq:aux-eq-130} we conclude \eqref{eq:Z-Lambda-V-infty-bound}, and hence obtain the bound $\eqref{eq:aux-eq-127}\leq C$. Together with  the estimate $\eqref{eq:aux-eq-128}\leq C$, obtained using \cref{lem:Approx-V-by-V-infty-on-specific-state}, this yields \eqref{eq:uniform-bound-Z-Lambda-applied-to-state}.
	
	The estimate of \eqref{eq:uniform-bound-time-derivative-Z-Lambda-applied-to-state} follows analogously using \cref{lem:Approx-V-by-V-infty-on-specific-state}, \eqref{eq:Z-Lambda-regularized-bound} and $\epsilon\leq \gamma/4$:
	\begin{align}
			& (1+x^2)^{-1/4}\left\Vert \partial_t\left( \cZ_0^\Lambda (\cV_t^\Lambda)^{-1} \left( Q_t^\Lambda  W_x \varphi_t^\Lambda \oplus J Q_t^\Lambda  W_x \varphi_t^\Lambda \right)\right) \right\Vert_{L^2\oplus JL^2} \nn \\
			&\leq C \Lambda^{\epsilon-\gamma/4} +  \left\Vert \cZ_0^\Lambda (\cV_t^\infty)^{-1} (-\cA^\infty)(W_x\oplus JW_x) \right\Vert \nn \\
			&\leq C + \left\Vert  \cZ_0^\Lambda (\cV_t^\infty)^{-1} \big( \tfrac{p^2}{2} W_x\oplus J-\tfrac{p^2}{2}W_x\big) \right\Vert \label{eq:aux-eq-131}
	\end{align}
	The second term in \eqref{eq:aux-eq-131} can now be estimated using \eqref{eq:-V-infty-Explicit-1}, \eqref{eq:Z-Lambda-regularized-bound} and \eqref{eq:Z-Lambda-symmetric-bound}:
	\begin{align}
		\eqref{eq:aux-eq-131} 
	 &\leq  C + C\left\Vert 
	\tau^{-2}p^2\cos(\omega t) W_x
	\right\Vert 
	 +  C\left\Vert
	 	\rmi \tfrac{p^2}{2\omega}\left( \tfrac{p^2}{2} + (2\pi)^{3/2}\widehat{V}\right) \sin(\omega t) W_x 
	 \right\Vert \nn\\
	 &\leq C +C\Vert W\Vert_{H^2} (1+|t|)(1 + \Vert \widehat{V}\Vert_\infty) \,, \label{eq:aux-eq-132}
	\end{align}
	 where in the last term we estimated small and large momenta separately. Now, \eqref{eq:uniform-bound-time-derivative-Z-Lambda-applied-to-state} follows from \eqref{eq:aux-eq-132}.
\end{proof}

\section{An Explicit Example of $\cZ_0^\Lambda$} \label{sec:Construction-of-Z0}
In this section, we construct a unitarily implementable Bogoliubov map $\cZ_0^\Lambda$ to approximate $\cZ_0^\infty=\widecheck{\cT}$ in the sense of \cref{con:Infinite-volume-Bog-maps}. For the reader's convenience we repeat the definition of $\cT$:
 \begin{align}
  	\cT= \frac{1}{2} 
  	\begin{pmatrix}
  		\tau +\tau^{-1} & (\tau -\tau^{-1}) R \cC  J^*\\
  		J \cC  R (\tau -\tau^{-1}) & J(\tau +\tau^{-1})   J^*
  	\end{pmatrix}  	
  	 \,, \label{def:T-infty-volume-1}
\end{align}
with $\cC\psi=\psi^*$, $R\psi(p)=\psi(-p)$, $\tau=\frac{1}{(1+T)^{1/4}}$. The operator $T$ is given by $T= |p|^{-1} (2\pi)^{3/2}\widehat{V}(p) |p|^{-1}$, where we assume $\widehat{V}(0)\geq0$.
The main difficulty in constructing $\cZ_0^\Lambda$ arises from the fact that, in general, $\cT$ is neither a bounded operator nor unitarily implementable, whereas $\cZ_0^\Lambda$ must satisfy both properties.
To overcome this issue, we introduce an infrared cutoff by replacing $|p|^{-1}$ in the definition of $T$ with $(p^2 + \Lambda^{-\epsilon})^{-1/2}$ for some $\epsilon > 0$.
 The resulting operator is bounded but still not unitarily implementable.
 To ensure unitary implementability, we further replace $(2\pi)^{3/2} \widehat{V}(p)$ by its finite-volume approximation
 $\cK_1^\Lambda:= \tilde{K}_1^\Lambda \rme^{2\rmi t \nu}$, $\nu=\int V -\mu^\infty$. 

The operator obtained in this way is then dressed with $\widehat{Q_0^\Lambda}$ so as to leave the excitation space invariant. More precisely, our approximation of $\tau$ is defined by 
\begin{align}
	\tau^{\Lambda}&= (1+T^{\Lambda})^{-\frac{1}{4}} \,, \\
	T^\Lambda &= \widehat{Q_0^{\Lambda}} \left(p^2+\Lambda^{-\epsilon}\right)^{-1/2} \widehat{\cK_1^{\Lambda}} \left(p^2+\Lambda^{-\epsilon}\right)^{-1/2} \widehat{Q_0^{\Lambda}} \,. \label{eq:A-Definition}
\end{align}
And we set
\begin{align}
	\widehat{\cZ_0^\Lambda}= 
	\begin{pmatrix}
	\tau^{\Lambda} + (\tau^{\Lambda})^{-1}  & \left(\tau^{\Lambda} - (\tau^{\Lambda})^{-1} \right)\cC R J^* \\
	 J\left(\tau^{\Lambda} - (\tau^{\Lambda})^{-1}\right)\cC R & J \left(\tau^{\Lambda} + (\tau^{\Lambda})^{-1}\right) J^*
\end{pmatrix}	
\,. \label{eq:def-Z-Lambda}
\end{align}
In order for $\cZ_0^\Lambda$ to define a Bogoliubov map, it is necessary to assume that the initial condensate is real-valued.
In the following, we verify that this choice of $\cZ_0^\Lambda$ satisfies \cref{con:Infinite-volume-Bog-maps} (see \cref{lem:Z-Lambda-Properties} for a precise statement).

\begin{rem}
	On the Torus, an infrared regularization of the operator diagonalizing the Bogoliubov dynamics is not necessary, as the ground state of the Laplacian (corresponding to the condensate) is separated by a gap scaling like $\Lambda^{-2/3}$, which naturally provides an infrared cutoff. The cutoff introduced here, $\Lambda^{-\epsilon}$ with $\epsilon>0$ sufficiently small, would then correspond to allowing only excitations with momenta that are not too small in the initial data. 
\end{rem}

The following lemma allows us to control the difference of $\tau^\Lambda$ and $\tau$ coming from the regularized momentum terms $(p^2+\Lambda^{-\epsilon})^{-1/2}$ and $\widehat{Q_0^\Lambda}$.
\begin{lemma}\label{lem:Q-and-Momentum-properties}
	Let $\widehat{V}\geq0$ and assume \cref{con:Initial condition}.
	\begin{itemize}
		\item[i)] Then $ \widehat{Q_0^\Lambda}D(|p|)\subset D(|p|)$ , $\widehat{\tilde{K}}_1^\Lambda D(|p|)\subset D(|p|)$ and $\exists C>0$ such that $\forall \Lambda\geq1$ and $f\in D(|p|)$
		\begin{align}
		\Vert \widehat{Q_0^\Lambda} |p| - |p| \widehat{Q_0^\Lambda} \Vert_{\cL(L^2)}  +\Vert \widehat{\tilde{K}}_1^\Lambda |p| - |p| \widehat{\tilde{K}}_1^\Lambda \Vert_{\cL(L^2)}  \leq C \Lambda^{-1/3} \,.\label{eq:Q-and-K-abs-p-commutator-estimate}
		\end{align}
		
		\item[ii)]  Then we have for $\psi \in D(|p|)$, $2/3>\epsilon>0$ that
		\begin{align}
			\Big\Vert \Big( \frac{1}{(p^2+\Lambda^{-\epsilon})^{1/2}} - \frac{1}{|p|} \Big) |p|\psi \Big\Vert_2 &\leq \frac{1}{2\Lambda^{\epsilon/2}}  \Vert \psi\Vert_2\,, \label{eq:p-convergence}\\
			\Big\Vert \Big( \frac{1}{(p^2+\Lambda^{-\epsilon})^{1/2}} \widehat{Q_0^\Lambda} - \frac{1}{|p|} \Big) |p|\psi \Big\Vert_2 &\to 0\,, \ \Lambda\to \infty\,.  \label{eq:Q-dependent-p-convergence}
		\end{align}
	\end{itemize}
\end{lemma}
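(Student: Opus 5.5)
For \emph{i)} we use the explicit rank-one structure of $Q_0^\Lambda$ and the factorized kernel of $\tilde K_1^\Lambda$. Write $u_0=\varphi_0^\Lambda/\Lambda^{1/2}$, so that $\widehat{Q_0^\Lambda}=1-|\widehat{u_0}\rangle\langle \widehat{u_0}|$. Since $\varphi_0^\Lambda\in H^\infty$, we have $|p|\widehat{u_0}\in L^2$, which gives the invariance $\widehat{Q_0^\Lambda}D(|p|)\subset D(|p|)$. The commutator is
\[
[\widehat{Q_0^\Lambda},|p|]=|\widehat{u_0}\rangle\langle |p|\widehat{u_0}|-| |p|\widehat{u_0}\rangle\langle \widehat{u_0}|,
\]
and hence its operator norm is at most $2\Vert u_0\Vert_2\Vert\nabla u_0\Vert_2=2\Lambda^{-1/2}\Vert\nabla\varphi_0^\Lambda\Vert_2\le C\Lambda^{-1/3}$ by \cref{con:Initial condition}.

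For $\tilde K_1^\Lambda$ (taken at $t=0$), note that $\tilde K_1^\Lambda=\varphi_0^\Lambda\,(V\ast\,\cdot\,)\,(\varphi_0^\Lambda)^*$ as a composition of multiplication and convolution operators. In Fourier space the convolution $V\ast$ is multiplication by $(2\pi)^{3/2}\widehat V$, which commutes with $|p|$ and has norm $\le C$. The multiplication by $\varphi_0^\Lambda$ becomes convolution with $(2\pi)^{-3/2}\widehat{\varphi_0^\Lambda}$. Its commutator with $|p|$ is an integral operator with kernel bounded by $\big||p|-|q|\big|\,|\widehat{\varphi_0^\Lambda}(p-q)|\le |p-q|\,|\widehat{\varphi_0^\Lambda}(p-q)|$. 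By Young's (Schur's) inequality its norm is at most $C\Vert\widehat{\nabla\varphi_0^\Lambda}\Vert_1\le C\Lambda^{-1/3}$. The same bound holds for $(\varphi_0^\Lambda)^*$. The Leibniz rule for commutators, together with $\Vert\varphi_0^\Lambda\Vert_\infty\le\Vert\widehat{\varphi_0^\Lambda}\Vert_1\le C$, then gives \eqref{eq:Q-and-K-abs-p-commutator-estimate}, and the same computation gives the domain invariance for $\widehat{\tilde K}_1^\Lambda$.

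For \eqref{eq:p-convergence}, set $\delta=\Lambda^{-\epsilon}$ and $s(p)=(p^2+\delta)^{1/2}$. The operator in question is multiplication by
\[
g(p)=\frac{|p|}{s}-1=-\frac{\delta}{s\,(s+|p|)} .
\]
We bound this multiplier using $s\ge\delta^{1/2}$ and $s+|p|\ge 2|p|$, which gives $|g(p)|\le \delta^{1/2}/(2|p|)$. Combined with $|g|\le1$, this produces the factor $\tfrac12\Lambda^{-\epsilon/2}$. I expect this to be the main (elementary but delicate) point: in the infrared region the multiplier is only controlled by $\min\{1,\delta^{1/2}/(2|p|)\}$. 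I would therefore carry out the estimate pointwise in $p$ and then integrate against $|\psi|^2$.

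For \eqref{eq:Q-dependent-p-convergence}, decompose
\[
\Big(s^{-1}\widehat{Q_0^\Lambda}-|p|^{-1}\Big)|p|\psi
= s^{-1}[\widehat{Q_0^\Lambda},|p|]\psi+\frac{|p|}{s}\big(\widehat{Q_0^\Lambda}-1\big)\psi+g\,\psi .
\]
The first term is bounded by $\Vert s^{-1}\Vert_\infty\cdot C\Lambda^{-1/3}\Vert\psi\Vert_2\le C\Lambda^{\epsilon/2-1/3}\Vert\psi\Vert_2$ by part i), which tends to $0$ precisely because $\epsilon<2/3$. The second term is bounded by $\Vert(\widehat{Q_0^\Lambda}-1)\psi\Vert_2$ since $|p|/s\le1$, and this tends to $0$ by the strong convergence $Q_0^\Lambda\to1$ of \cref{cor:Strong-convergence-infinite-volume}. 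The third term tends to $0$ by dominated convergence: $|g|\le1$, and $g(p)\to0$ for every $p\ne0$.
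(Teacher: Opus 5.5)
Your part i) and your proof of \eqref{eq:Q-dependent-p-convergence} follow the paper. The rank-one computation for $\widehat{Q_0^\Lambda}$ is identical. Your Leibniz rule for $\tilde K_1^\Lambda=\varphi_0^\Lambda\,(V\ast\cdot)\,(\varphi_0^\Lambda)^*$ is the operator-level form of the paper's splitting $|r_i-p_i|\le|r_i-q_i|+|q_i-p_i|$ inside the explicit kernel. Your three-term decomposition is exactly \eqref{eq:aux-eq-137}. A small aside: both you and the paper take $\widehat{Q_0^\Lambda}\to1$ strongly from \cref{cor:Strong-convergence-infinite-volume}, whose hypotheses exceed those of this lemma. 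Under \cref{con:Initial condition} alone it follows directly from $\Lambda^{-1/2}|\langle\varphi_0^\Lambda,\psi\rangle|\le C\Lambda^{-1/2}\Vert\psi\Vert_1$ for $\psi\in L^1\cap L^2$, density, and $\Vert 1-Q_0^\Lambda\Vert\le1$.

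The genuine gap is \eqref{eq:p-convergence}. Your pointwise bound $|g(p)|\le\min\{1,\Lambda^{-\epsilon/2}/(2|p|)\}$ does not give $\Vert g\psi\Vert_2\le\frac12\Lambda^{-\epsilon/2}\Vert\psi\Vert_2$. Integrating it against $|\psi|^2$ only yields $\int\min\{1,\Lambda^{-\epsilon}/(4p^2)\}|\psi(p)|^2\,dp$, which is not small for $\psi$ concentrated at small momenta.

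This step cannot be repaired, because the inequality is false. Your $g$ is continuous with $g(0)=-1$, so the multiplication operator has norm $\Vert g\Vert_\infty=1$. Concretely, take $\psi\in D(|p|)$ supported in $\{|p|\le\Lambda^{-\epsilon/2}/10\}$. There $|g|=1-|p|(p^2+\Lambda^{-\epsilon})^{-1/2}\ge 9/10$, so $\Vert g\psi\Vert_2\ge\frac{9}{10}\Vert\psi\Vert_2>\frac12\Lambda^{-\epsilon/2}\Vert\psi\Vert_2$ for every $\Lambda\ge1$. The paper only says this estimate is ``proven analogously'', so it offers no way around this either.

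Your pointwise bound does prove two correct replacements:
\begin{itemize}
\item[(a)] A rate with an infrared weight: $\Vert g\psi\Vert_2\le\frac12\Lambda^{-\epsilon/2}\Vert|p|^{-1}\psi\Vert_2$ for $\psi\in D(|p|^{-1})$. Equivalently, since $|p|\,|g|\le\Lambda^{-\epsilon}/(2(p^2+\Lambda^{-\epsilon})^{1/2})\le\frac12\Lambda^{-\epsilon/2}$, the stated bound holds with $|p|^2\psi$ in place of $|p|\psi$ on the left.
\item[(b)] Strong convergence: by dominated convergence, $g\psi\to0$ in $L^2$ for every $\psi\in L^2$.
\end{itemize}
Only (b) is used later in the paper: for \eqref{eq:aux-eq-94} in \cref{lem:Tau-Lambda-Convergence}, and for the third term of your decomposition of \eqref{eq:Q-dependent-p-convergence}, where you already argue this way. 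So \eqref{eq:p-convergence} should be replaced by one of these true statements rather than proved as written.
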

\begin{proof}[Proof of \cref{lem:Q-and-Momentum-properties}]

For the proof of the first part let $f\in D(|p|)$. Then we have with $\widehat{Q_0^\Lambda}=1 - \Lambda^{-1}\widehat{\varphi_0^\Lambda} \big\langle \widehat{\varphi_0^\Lambda}, \,.\,\big\rangle$ and $\big\Vert |p|\widehat{\varphi_0^\Lambda}\big\Vert_2^2 = \sum_{i=1}^3 \big\Vert p_i\widehat{\varphi_0^\Lambda}\big\Vert_2^2$  that
\begin{align}
	\left\Vert \left( \widehat{Q_0^\Lambda} |p| - |p| \widehat{Q_0^\Lambda} \right) f \right\Vert \leq 2 \Lambda^{-1/2} \big\Vert |p|\widehat{\varphi_0^\Lambda}\big\Vert_2 \Vert f \Vert_2 \leq C \Lambda^{-1/3}\Vert f \Vert_2\,,
\end{align}
which proves the estimate of $\widehat{Q_0^\Lambda}$ in \eqref{eq:Q-and-K-abs-p-commutator-estimate}. Now, since $\tilde{K}_1^\Lambda f= \varphi_0^\Lambda V\ast ((\varphi_0^\Lambda)^* f)$ we have $\widehat{\tilde{K}}_1^\Lambda f = \int K(p,r) f(r) d r$ with $ K(p,r)= (2\pi)^{-3/2} \int \widehat{\varphi_0^\Lambda}(p-q) \widehat{V}(q) \widehat{(\varphi_0^\Lambda)^*}(q-r) d q$. Then
\begin{align}
	&\Vert \big(\widehat{\tilde{K}}_1^\Lambda |p| - |p| \widehat{\tilde{K}}_1^\Lambda\big) f \Vert_2^2 = \int d p \left\vert \int d r (|r| -|p|) K(p,r) f(r) \right\vert^2 \nn \\
	&\leq  \int d p  \left(\int d r  \sum_{i=1}^3 \frac{|r_i-p_i|}{(2\pi)^{3/2}}  \left\vert  \int d q \widehat{\varphi_0^\Lambda}(p-q) \widehat{V}(q) \widehat{(\varphi_0^\Lambda)^*}(q-r) f(r) \right\vert \right)^2 \,. \label{eq:aux-eq-138}
\end{align}
Splitting $|r_i-p_i| \leq |r_i-q_i| +|q_i-p_i|$ gives
\begin{align}
	\eqref{eq:aux-eq-138}^{1/2}&\leq C  \sum_{i=1}^3 \Big\{\left\Vert  \vert  \widehat{\partial_i\varphi_0^\Lambda} \vert \ast \left( |\widehat{V} | \cdot |  \widehat{(\varphi_0^\Lambda)^*}| \ast |f| \right) \right\Vert_2 \nn\\
	&\quad+  \left\Vert  \vert  \widehat{\varphi_0^\Lambda} \vert \ast \left( |\widehat{V}| \cdot |  \widehat{(\partial_i\varphi_0^\Lambda)^*}| \ast |f| \right) \right\Vert_2\Big\} \nn \\
	&\leq C  \sum_{i=1}^3 \Vert \widehat{\partial_i\varphi_0^\Lambda}\Vert_1 \Vert \widehat{\varphi_0^\Lambda}\Vert_1 \Vert \widehat{V}\Vert_\infty \Vert f\Vert_2 \,, \nn 
\end{align}
which proves \eqref{eq:Q-and-K-abs-p-commutator-estimate} using \cref{con:Initial condition}.

The second part follows from
\begin{align}
	&\Big\Vert \big( (p^2+\Lambda^{-\epsilon})^{-1/2} \widehat{Q_0^\Lambda} - |p|^{-1} \big)|p|\psi \Big\Vert_2 \nn \\
	&\leq 
	\left\Vert  (p^2+\Lambda^{-\epsilon})^{-1/2} \big[\widehat{Q_0^\Lambda}, |p|\big]  \psi \right\Vert + \Big\Vert \big( (p^2+\Lambda^{-\epsilon})^{-1/2}|p| \widehat{Q_0^\Lambda} - 1 \big)\psi \Big\Vert_2 \nn \\
	&\leq  C \Lambda^{\epsilon/2- 1/3} \Vert \psi\Vert_2 	
	+\Vert ((p^2+\Lambda^{\epsilon} )^{-1/2}|p| -1 ) \psi \Vert_2
 + \Vert  (\widehat{Q_0^\Lambda} -1) \psi \Vert_2
	 \,, \label{eq:aux-eq-137}
\end{align}
where we used \eqref{eq:Q-and-K-abs-p-commutator-estimate} and $(p^2+\Lambda^{-\epsilon})^{-1/2} |p|\leq1$. We conclude the convergence 
 \eqref{eq:Q-dependent-p-convergence} from \eqref{eq:aux-eq-137} and \cref{cor:Strong-convergence-infinite-volume}. The estimate \eqref{eq:p-convergence} is proven analogously.
\end{proof}

We show that, in an appropriate sense, $\tau^{\Lambda}$ and its inverse converge to $\tau$ and $\tau^{-1}$, respectively. In contrast to $(\tau^{\Lambda})^{-1}$, the limit $\tau^{-1}$ is unbounded, so to obtain convergence we regularize both with $\tau^2$.

\begin{lemma}[Convergence of $\tau^{\Lambda}$] \label{lem:Tau-Lambda-Convergence}
Let $\widehat{V}\geq0$, $\widehat{V}(0)>0$ and $0<\epsilon<1/3$. Assume the conditions of \cref{lem:Approx-V-by-V-infty-Conv-Rates-2} and \cref{con:Initial condition derivative condensate}$_{2}$. Then we have that
	\begin{align}
		(\tau^{\Lambda})^{-1}\tau^2 \to \tau  \,,\quad 
		\tau^\Lambda \to \tau \,, \quad \text{as } \Lambda\to \infty
	\end{align}
	strongly as operators on $L^2(\BR^3)$.
\end{lemma}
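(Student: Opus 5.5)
The plan is to first establish strong resolvent convergence $T^\Lambda \to T$ in a suitable sense, and then pass to the functions $\tau^\Lambda=(1+T^\Lambda)^{-1/4}$ and $(\tau^\Lambda)^{-1}\tau^2$ by functional calculus. Since $\widehat V\geq 0$, the operator $\cK_1^\Lambda$ is positive: $\tilde K_1^\Lambda = \varphi_0^\Lambda V\ast((\varphi_0^\Lambda)^*\,\cdot\,)$ has positive-type kernel, and the condensate is real at $t=0$. Hence $T^\Lambda\geq 0$ is bounded for each $\Lambda$, and $T\geq 0$ is a self-adjoint multiplication operator, unbounded only near $p=0$ because $\widehat V(0)>0$. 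In particular $\tau^\Lambda,\tau\leq 1$ are uniformly bounded.

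\textbf{Step 1: convergence of $T^\Lambda$.} For $f$ in the dense set $|p|D(|p|)$, say $f=|p|^2\psi$ with $\psi$ Schwartz and vanishing near $0$, I write
\[
T^\Lambda f - Tf = \widehat{Q_0^\Lambda}(p^2+\Lambda^{-\epsilon})^{-1/2}\widehat{\cK_1^\Lambda}\,(p^2+\Lambda^{-\epsilon})^{-1/2}\widehat{Q_0^\Lambda}|p|\,|p|\psi - |p|^{-1}(2\pi)^{3/2}\widehat V|p|\psi .
\]
I telescope this expression. The right factor converges by \eqref{eq:Q-dependent-p-convergence}. For the middle factor, $\widehat{\cK_1^\Lambda}\to(2\pi)^{3/2}\widehat V$ strongly by \cref{cor:Strong-convergence-infinite-volume}, and this operator is uniformly bounded. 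To handle the left factor $(p^2+\Lambda^{-\epsilon})^{-1/2}$, which is not uniformly bounded, I commute $|p|$ through $\widehat{\cK_1^\Lambda}$ using \eqref{eq:Q-and-K-abs-p-commutator-estimate}; this produces errors of order $\Lambda^{\epsilon/2-1/3}\to0$, since $\epsilon<1/3<2/3$. This yields $T^\Lambda f\to Tf$ on a core of $T$.

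\textbf{Step 2: strong resolvent convergence.} The operators $T^\Lambda$ are self-adjoint and nonnegative, and the set above is a core for $T$. By the Trotter--Kato type criterion (\cite[Theorem VIII.25]{ReSi80}), Step 1 gives $T^\Lambda\to T$ in the strong resolvent sense. Then for every bounded continuous $g$ on $[0,\infty)$ we get $g(T^\Lambda)\to g(T)$ strongly (\cite[Theorem VIII.20]{ReSi80}). Applying this with $g(\lambda)=(1+\lambda)^{-1/4}$ gives $\tau^\Lambda\to\tau$.

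\textbf{Step 3: the second limit, which I expect to be the main obstacle.} Here $(\tau^\Lambda)^{-1}=(1+T^\Lambda)^{1/4}$ is unbounded uniformly in $\Lambda$, and $\tau^2$ is a function of $T$, not of $T^\Lambda$, so functional calculus does not apply directly. I would write
\[
(\tau^\Lambda)^{-1}\tau^2 = (1+T^\Lambda)^{1/4}(1+T^\Lambda)^{-1/2}(1+T^\Lambda)^{1/2}(1+T)^{-1/2},
\]
and show that $B_\Lambda:=(1+T^\Lambda)^{1/2}(1+T)^{-1/2}$ is uniformly bounded and converges strongly to $1$.

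For uniform boundedness I would use a quadratic form comparison $T^\Lambda\leq C(1+T)$. This follows from $\widehat{\cK_1^\Lambda}\leq \Vert\varphi_0^\Lambda\Vert_\infty^2 (2\pi)^{3/2}\widehat V \ast$-type bounds: in position space, $\langle g,\tilde K_1 g\rangle=\langle \varphi g, V\ast(\varphi g)\rangle\leq\Vert\widehat V\Vert_\infty\Vert g\Vert^2$. The infrared part needs care, however, because of the $(p^2+\Lambda^{-\epsilon})^{-1/2}$ weights. I would control it with $|p|^{-1}$ and the commutator estimate again, and this is exactly where the uniform bound is most delicate.

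With uniform boundedness in hand, strong convergence of $B_\Lambda$ follows on the dense range of $(1+T)^{1/2}$ restricted to the core, by Step 1 and the square-root functional calculus. Finally, $(1+T^\Lambda)^{-1/4}\to\tau$ strongly by Step 2, so the product converges strongly to $\tau\cdot 1=\tau$.
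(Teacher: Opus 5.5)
Your argument is correct. Step 1 is essentially the paper's first step: the paper proves $\|(T^\Lambda-T)|p|^2\psi\|\to0$ for $\psi\in D(|p|^2)$, using the same telescoping and commuting one $|p|$ to the left. After that the two routes diverge.

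\emph{The paper's route.} It never uses strong resolvent convergence. It writes $(1+T^\Lambda)^{1/4}-(1+T)^{1/4}$ through the integral representation $\int_0^\infty\lambda^{-1/4}(\dots)\,d\lambda$. The resolvent identity reduces the integrand to $(T^\Lambda-T)(1+\lambda(1+T))^{-1}\tau^2\psi$, and dominated convergence applies using the explicit bounds $\||p|^{-1}(1+\lambda(1+T))^{-1}\|\le C(\lambda^{-1/2}+1)$ and $\|T^\Lambda|p|^2\|\le C$. Uniform boundedness of $(\tau^\Lambda)^{-1}\tau^2$ comes from $\|(1+T^\Lambda)(1+T)^{-2}\|\le C$ together with operator monotonicity of the fourth root. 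Only then is $\tau^\Lambda\to\tau$ deduced, via $(\tau^\Lambda-\tau)\tau=\tau^\Lambda(\tau^{-1}-(\tau^\Lambda)^{-1})\tau^2$ and density of $\mathrm{Ran}\,\tau$.

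\emph{Your route.} You reverse the order. Reed--Simon VIII.25 and VIII.20 give $\tau^\Lambda\to\tau$ immediately, applied along any sequence $\Lambda_n\to\infty$. The common core can be $C_c^\infty(\mathbb{R}^3\setminus\{0\})$ in momentum space: it is a core for the multiplication operator $T$, and trivially for the bounded $T^\Lambda$. You then reduce the unbounded limit to $B_\Lambda=(1+T^\Lambda)^{1/2}(1+T)^{-1/2}$. Its uniform bound is the first-order form inequality $T^\Lambda\le C(1+T)$. This inequality does hold:
\[
\langle g,T^\Lambda g\rangle\le\|\widehat{\cK_1^\Lambda}\|\,\|(p^2+\Lambda^{-\epsilon})^{-1/2}\widehat{Q_0^\Lambda}g\|^2\le C\||p|^{-1}g\|^2\le C\langle g,(1+T)g\rangle ,
\]
using the commutator bound for $\widehat{Q_0^\Lambda}$ and $\widehat V\ge c>0$ near $0$. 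It is simpler than the paper's second-order comparison. To make the convergence of $B_\Lambda$ on $(1+T)^{1/2}D$ precise, write $(1+T^\Lambda)^{1/2}u=(1+T^\Lambda)^{-1/2}(1+T^\Lambda)u$, then combine Step 1 with Step 2 for $g(\lambda)=(1+\lambda)^{-1/2}$.

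\emph{Comparison.} Your route is shorter and rests on standard abstract results. The paper's is self-contained and would give rates if Step 1 were made quantitative.

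\emph{Two small corrections.} Positivity of $\tilde K_1^\Lambda$ does not need a real condensate, which this lemma does not assume. Also, commuting $|p|$ through $\widehat{Q_0^\Lambda}$ between the two regularized factors costs $\Lambda^{\epsilon-1/3}$, not $\Lambda^{\epsilon/2-1/3}$; this is why $\epsilon<1/3$ is needed.
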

\begin{proof}[Proof of \cref{lem:Tau-Lambda-Convergence}]
We begin by proving that for $\psi \in D(|p|^2)$ one has
\begin{align}
	\big\Vert \big( T^\Lambda -T \big) |p|^2 \psi \big\Vert_2 \to 0 \,, \ \Lambda\to \infty \,. \label{eq:ALambda-A-convergence}
\end{align}
Using the definitions of $T^\Lambda$ and $T$, we obtain
\begin{align}
	&\big\Vert \big( T^\Lambda -T \big) |p|^2 \psi \big\Vert_2 \nn\\
	&\left\Vert \left( \widehat{Q_0^{\Lambda}} \left(p^2+\Lambda^{-\epsilon}\right)^{-1/2} \widehat{\cK_1^{\Lambda}} \left(p^2+\Lambda^{-\epsilon}\right)^{-1/2} \widehat{Q_0^{\Lambda}} - |p|^{-1} (2\pi)^{3/2} \widehat{V} |p|^{-1} \right) |p|^2 \psi \right\Vert \nn\\
	&\leq 
	\left\Vert \left(\widehat{Q_0^{\Lambda}} - 1 \right)  (2\pi)^{3/2} \widehat{V}    \psi \right\Vert \label{eq:aux-eq-92}\\
	& + \left\Vert \left( \left(p^2+\Lambda^{-\epsilon}\right)^{-1/2} \widehat{\cK_1^{\Lambda}} \left(p^2+\Lambda^{-\epsilon}\right)^{-1/2} \widehat{Q_0^{\Lambda}} - |p|^{-1} (2\pi)^{3/2} \widehat{V} |p|^{-1} \right)|p|^2 \psi \right\Vert \,. \label{eq:aux-eq-93}
	\end{align}
	The term in \eqref{eq:aux-eq-92} converges to zero as $\Lambda \to \infty$ by \cref{cor:Strong-convergence-infinite-volume}.
Next, we estimate \eqref{eq:aux-eq-93} by
\begin{align}
	&\eqref{eq:aux-eq-93}\leq 
	\left\Vert \left( \left(p^2+\Lambda^{-\epsilon}\right)^{-1/2} - |p|^{-1} \right) |p| \widehat{V} \psi \right\Vert  \label{eq:aux-eq-94} \\
	& + \left\Vert  \left(p^2+\Lambda^{-\epsilon}\right)^{-1/2}  \left( \widehat{\cK_1^{\Lambda}} \left(p^2+\Lambda^{-\epsilon}\right)^{-1/2} \widehat{Q_0^{\Lambda}} - (2\pi)^{3/2} \widehat{V} |p|^{-1} \right) |p|^2 \psi \right\Vert \,. \label{eq:aux-eq-95}
\end{align}
The contribution in \eqref{eq:aux-eq-94} converges to zero by \cref{lem:Q-and-Momentum-properties}. To treat \eqref{eq:aux-eq-95}, we commute $|p|$ through the operator to compensate the left factor $(p^2+\Lambda^{-\epsilon})^{-1/2}$:
\begin{align}
	&\eqref{eq:aux-eq-95} \leq   
	\left\Vert  \left(p^2+\Lambda^{-\epsilon}\right)^{-1/2}   \widehat{\cK_1^{\Lambda}} \left(p^2+\Lambda^{-\epsilon}\right)^{-1/2} \left[\widehat{Q_0^{\Lambda}},|p| \right] |p| \psi \right\Vert  \label{eq:aux-eq-96} \\
	& +\left\Vert   \left(p^2+\Lambda^{-\epsilon}\right)^{-1/2}  \big[\widehat{\cK_1^{\Lambda}},|p|\big] \left(p^2+\Lambda^{-\epsilon}\right)^{-1/2} |p| \widehat{Q_0^{\Lambda}}  |p| \psi \right\Vert \label{eq:aux-eq-98} \\
	&+ \left\Vert  \left(p^2+\Lambda^{-\epsilon}\right)^{-\frac{1}{2}} |p|\right\Vert_\infty \left\Vert \left( \widehat{\cK_1^{\Lambda}} \left(p^2+\Lambda^{-\epsilon}\right)^{-1/2} \widehat{Q_0^{\Lambda}} - (2\pi)^{3/2} \widehat{V} |p|^{-1} \right) |p| \psi \right\Vert \,. \label{eq:aux-eq-99}	
\end{align}
By \cref{lem:Q-and-Momentum-properties}, the commutators satisfy $\Vert[\widehat{Q_0^{\Lambda}},|p| ]\Vert+ \Vert [\widehat{\cK_1^{\Lambda}},|p|]\Vert\leq C\Lambda^{-1/3}$
and therefore $\eqref{eq:aux-eq-96}\leq C\Lambda^{\epsilon -1/3} \to 0$ and $\eqref{eq:aux-eq-98} \to 0$ for $\epsilon<1/3$.
Finally, we bound \eqref{eq:aux-eq-99} by
\begin{align}
	\eqref{eq:aux-eq-99}&\leq 
	\Vert \widehat{\cK_1^{\Lambda}}\Vert_{\rm op}  
	 \left\Vert \left(  \left(p^2+\Lambda^{-\epsilon}\right)^{-1/2} \widehat{Q_0^{\Lambda}} - |p|^{-1} \right) |p| \psi \right\Vert \label{eq:aux-eq-100}	\\
	 &\quad + \left\Vert \left( \widehat{\cK_1^{\Lambda}}  - (2\pi)^{3/2} \widehat{V} \right)|p|^{-1} |p| \psi \right\Vert \,,\label{eq:aux-eq-101}	
\end{align}
where $\eqref{eq:aux-eq-100}\to0$ by \cref{lem:Q-and-Momentum-properties}, and $\eqref{eq:aux-eq-101}\to0$ by \cref{cor:Strong-convergence-infinite-volume}. Collecting all estimates, we conclude \eqref{eq:ALambda-A-convergence}.

We proceed with the proof of the strong convergence  of $(\tau^{\Lambda})^{-1}\tau^2$ to $\tau$. Let $\psi \in D(T^{1/2})$. Then $\tau^2\psi\in D(T)$, and  by the functional calculus we obtain
\begin{align}
	&\left\Vert \left( (\tau^{\Lambda})^{-1} -\tau^{-1}\right)\tau^2 \psi \right\Vert = \left\Vert \left( (1+T^\Lambda)^{1/4} -(1+T)^{1/4}\right)\tau^2 \psi \right\Vert \nn \\
	&=  \left\Vert \frac{\sin(\pi/4)}{\pi} \int_0^\infty \lambda^{-1/4} \left( \frac{1+T^\Lambda}{1+ \lambda (1+T^\Lambda)} - \frac{1+T}{1 +\lambda (1+T)}\right)\tau^2 \psi \,d\lambda \right\Vert\,. \label{eq:aux-eq-102}	
	\end{align}
	Using the identity $B/(1+\lambda B)= \lambda^{-1} - \lambda^{-1}(1+\lambda B)^{-1}$, we rewrite this as
	\begin{align}
	\eqref{eq:aux-eq-102} &=C \left\Vert  \int_0^\infty \lambda^{-5/4} \left( \frac{1}{1+ \lambda (1+T)} - \frac{1}{1 +\lambda (1+T^\Lambda)}\right)\tau^2 \psi \,d\lambda \right\Vert \nn \\
	&= C \left\Vert  \int_0^\infty \lambda^{-1/4} \frac{1}{1 +\lambda (1+T^\Lambda)} \left( T^\Lambda - T \right) \frac{1}{1+ \lambda (1+T)} \tau^2 \psi \,d\lambda \right\Vert \nn \\
	&\leq  C   \int_0^\infty  \frac{\lambda^{-1/4}}{1 +\lambda }  \left\Vert\left( T^\Lambda - T \right) \frac{1}{1+ \lambda (1+T)} \tau^2 \psi  \right\Vert d\lambda \,. \label{eq:Inverse-tauLambda-tau-Integral-estimate}
\end{align}
We prove that \eqref{eq:Inverse-tauLambda-tau-Integral-estimate} converges to zero as $\Lambda \to \infty$ by the dominated convergence theorem. First, we show pointwise convergence of the integrand. In fact,
\begin{align}
	&\left\Vert\left( T^\Lambda - T \right) \frac{1}{1+ \lambda (1+T)} \tau^2 \psi \right\Vert 
	=\left\Vert\left( T^\Lambda - T \right) |p|^2 \frac{|p|^{-1}}{1+ \lambda (1+T)} \frac{|p|^{-1}}{(1+T)^{1/2}} \psi \right\Vert \label{eq:Convergence-Integrand-ALambda-A}
\end{align}
and we have the bounds $\Vert\frac{|p|^{-1}}{1+ \lambda (1+T)}\Vert_{\rm op}\leq C(\lambda^{-1/2}+1)$ and $\Vert\frac{|p|^{-1}}{(1+T)^{1/2}}\Vert_{\rm op}\leq C$. This follows since for small momenta $p$ we have the bound $1/(p^2+\lambda\widehat{V}(p))^{1/2}\leq 2/(\lambda \widehat{V}(0))^{1/2}$, which holds since $\widehat{V}(0)>0$ and $\widehat{V}$ is continuous. The representation in \eqref{eq:Convergence-Integrand-ALambda-A}, together with the previously established convergence of $T^\Lambda$ to $T$ in \eqref{eq:ALambda-A-convergence}, implies that the integrand converges to zero.
Moreover, by an argument analogous to the proof of the convergence of $T^\Lambda$ to $T$, one shows that $\Vert T^\Lambda |p|^2\Vert_{\rm op} + \Vert T |p|^2\Vert_{\rm op}\leq C$. Consequently,
\begin{align}
	\frac{\lambda^{-1/4}}{1 +\lambda }  \left\Vert\left( T^\Lambda - T \right) \frac{1}{1+ \lambda (1+T)} \tau^2 \psi  \right\Vert \leq C  \frac{\lambda^{-1/4}(\lambda^{-1/2}+1)}{1 +\lambda }  \in L^1((0,\infty)_\lambda, \BR) \,, \label{eq:Bound-Dominated-Convergence-tauLambda-tau}
\end{align}
which provides an integrable dominating function. The dominated convergence theorem and \eqref{eq:Inverse-tauLambda-tau-Integral-estimate} therefore yield
$\Vert \left((\tau^{\Lambda})^{-1} - \tau^{-1} \right) \tau^2 \psi \Vert \to 0$ for all $\psi \in D(T^{1/2})$. To extend this convergence to the whole space $L^2$, it remains to show that $(\tau^{\Lambda})^{-1}\tau^2$ is uniformly bounded.
 By an argument analogous to that used in the proof of \eqref{eq:ALambda-A-convergence}, one obtains for all $\phi\in L^2$
\begin{align}
	\Vert (1 +T^\Lambda) \frac{1}{(1+T)^{2}} \phi\Vert \leq C \Vert \phi\Vert 
\end{align}
and thus $\Vert (1 +T^\Lambda)  \phi\Vert \leq C \Vert (1+T)^{2}\phi\Vert$.
By operator monotonicity of the fourth root, this implies the uniform bound $\Vert (\tau^{\Lambda})^{-1}\tau^2\Vert_{\rm op}\leq C$.
Combining this with the bound $ \Vert \tau\Vert_{\rm op}\leq 1$, we conclude that $\Vert \left((\tau^{\Lambda})^{-1} - \tau^{-1} \right) \tau^2 \psi \Vert \to 0$ for all $\psi \in L^2$. 

Next, for $\psi \in L^2$ we compute
\begin{align*}
	\Vert (\tau^\Lambda  - \tau) \tau \psi\Vert = \Vert \tau^{\Lambda} \left( \tau^{-1} - (\tau^\Lambda)^{-1} \right) \tau^2 \psi \Vert 
	\leq \Vert  \left( \tau^{-1} - (\tau^\Lambda)^{-1} \right) \tau^2 \psi \Vert \to 0\,.
\end{align*} 
Since $\tau$ is self-adjoint and injective, its range is dense in $L^2$. Together with the uniform bound $\Vert \tau^{\Lambda}\Vert_{\rm op} + \Vert \tau \Vert_{\rm op}\leq 2 $, this implies that $\tau^\Lambda$ converges strongly to $\tau$ on all of $L^2$.

\end{proof}

The following lemma shows that $\cZ_0^\Lambda$ defined in \eqref{eq:def-Z-Lambda} satisfies \cref{con:Infinite-volume-Bog-maps}.

\begin{lemma} \label{lem:Z-Lambda-Properties}
Let $\epsilon>0$. Let $\widehat{V}\geq0$, $\widehat{V}(0)>0$ and $0<\epsilon<1/3$. Assume the conditions of \cref{lem:Approx-V-by-V-infty-Conv-Rates-2} and \cref{con:Initial condition derivative condensate}$_{2}$. Assume $\varphi_0(y)\in\BR$ for all $y\in\BR^3$.
Then $\tau^{\Lambda}$ commutes with $\cC R$, and $\cZ_0^\Lambda$ is a unitarily implementable Bogoliubov map with
	\begin{align}
		\Vert \cZ_0^\Lambda\Vert_{\rm op} &\leq C \Lambda^{\epsilon/4} \,, \label{eq:Z-OP-bound} \\
		\Vert \cZ_0^\Lambda(\cZ_0^\Lambda)^* -1 \Vert_{\rm HS} &\leq C \Lambda^{1/2 +\epsilon} \,. \label{eq:Z-HS-bound}
	\end{align}
For all $f,g\in L^2$ we have the following convergence
 \begin{align}
 	\widehat{\cZ_0^\Lambda} \cT^{-1} \left(\tau f \oplus J\tau g\right) \to \tau f \oplus J\tau g\,. \label{eq:Z-convergence-to-T-infty-constructive}
 \end{align}
The commutator of $\cZ_0^\Lambda$  with translations $T_x$ converges to zero: 
 \begin{align}
	 	\sup_{x\in\BR^3} \frac{1}{(1+x^2)^{1/2}}\Big\Vert  \left[ \cZ_0^\Lambda,T_x \oplus J T_x J^*\right] F\Big\Vert_2 \to 0\,, \quad \text{as } \Lambda\to \infty \label{eq:Z-commutator-with-translations}
	 \end{align} for all $F\in L^2\oplus JL^2$.
For the corresponding unitary we have the following invariances 
\begin{align}
	U_{\cZ_0^\Lambda}^*\cF\left( \{\varphi_0^\Lambda\}^{\perp} \right)&\subset \cF\left( \{\varphi_0^\Lambda\}^{\perp} \right) \,, \label{eq:Z-Excitation-space-Invariance} \\
	U_{\cZ_0^\Lambda}^* Q(\rmd \Gamma (-\Delta+1))&\subset Q(\rmd \Gamma (-\Delta+1))\,. \label{eq:Z-dGamma-Delta-Invariance}
\end{align}
\end{lemma}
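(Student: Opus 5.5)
I would verify the claims of \cref{lem:Z-Lambda-Properties} in turn, working in Fourier space throughout.

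\textbf{Commutation with $\cC R$.} Since $\varphi_0^\Lambda$ is real and $V$ is even and real, $\widehat{\varphi_0^\Lambda}$ and $\widehat V$ satisfy $\cC R \widehat f=\widehat f$. Hence $\widehat{Q_0^\Lambda}$ and $\widehat{\cK_1^\Lambda}$ commute with the antiunitary $\cC R$. Here I take the phase in $\cK_1^\Lambda$ at $t=0$, where it is trivial; a phase would have to be removed separately. The multiplication operator $(p^2+\Lambda^{-\epsilon})^{-1/2}$ also commutes with $\cC R$. Therefore $T^\Lambda$ commutes with $\cC R$, and since $T^\Lambda$ is bounded, self-adjoint and $\geq0$ (because $\widehat V\geq0$ makes $\widehat{\cK_1^\Lambda}\geq0$), so does its real functional calculus $\tau^\Lambda$.

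\textbf{Bogoliubov structure.} Set $D=\tau^\Lambda\cC R$. With $c=\frac12(\tau^\Lambda+(\tau^\Lambda)^{-1})$ and $b$ built from $\frac12(\tau^\Lambda-(\tau^\Lambda)^{-1})\cC R$, the symplectic relations reduce to $c^2-bb^*=1$ and $cb=bc$ type identities. These follow from $(\tau+\tau^{-1})^2-(\tau-\tau^{-1})^2=4$ together with the commutation just proved, exactly as for $\cT$. I read \eqref{eq:def-Z-Lambda} with the prefactor $\tfrac12$.

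\textbf{Bounds.} For \eqref{eq:Z-OP-bound}, note $0\le T^\Lambda\le C\Lambda^{\epsilon}$, since $\Vert(p^2+\Lambda^{-\epsilon})^{-1/2}\Vert\le\Lambda^{\epsilon/2}$. Thus $\Vert(\tau^\Lambda)^{-1}\Vert\le C\Lambda^{\epsilon/4}$ and $\Vert\tau^\Lambda\Vert\le1$.

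For Shale's criterion and \eqref{eq:Z-HS-bound}, it suffices to bound $\Vert\tau^\Lambda-(\tau^\Lambda)^{-1}\Vert_{\rm HS}$. Since $|x^{-1/4}-x^{1/4}|\le C_\Lambda |x-1|$ on the spectrum $[1,1+C\Lambda^\epsilon]$, this reduces to $\Vert T^\Lambda\Vert_{\rm HS}$ up to powers $\Lambda^{\epsilon/4}$. Moreover, $\Vert T^\Lambda\Vert_{\rm HS}\le\Lambda^{\epsilon}\Vert\tilde K_1^\Lambda\Vert_{\rm HS}\le\Lambda^{\epsilon}\Vert\varphi_0^\Lambda\Vert_\infty\Vert\varphi_0^\Lambda\Vert_2\Vert V\Vert_2\le C\Lambda^{1/2+\epsilon}$. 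The factor $\Lambda^{\epsilon/4}$ should be absorbed by bounding more carefully, for example via $|x^{-1/4}-x^{1/4}|\le C x^{1/4}|x-1|/x$, or simply by accepting a slightly larger $\epsilon$. Then $\cZ\cZ^*-1$ is a polynomial in these quantities, and the HS bound follows.

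\textbf{Convergence \eqref{eq:Z-convergence-to-T-infty-constructive}.} I would compute $\cT^{-1}(\tau f\oplus J\tau g)$ explicitly. By the symplectic identity $\cT^{-1}=S\cT^*S$, so it consists of combinations $\frac12(\tau+\tau^{-1})\tau f$ and $\frac12(\tau-\tau^{-1})\tau\,\cC R g$, that is, of $\tau^2 f$ and $f$ (bounded). Applying $\widehat{\cZ_0^\Lambda}$ produces terms $\tau^\Lambda(\cdot)$ and $(\tau^\Lambda)^{-1}\tau^2(\cdot)$, together with $\tau^\Lambda\tau^{-1}\cdot\tau^2=\tau^\Lambda\tau$. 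By \cref{lem:Tau-Lambda-Convergence}, all of these converge strongly to the corresponding $\tau$-expressions, and the limit recombines to $\cT\cT^{-1}(\tau f\oplus J\tau g)$.

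\textbf{Translations \eqref{eq:Z-commutator-with-translations}.} In Fourier space $T_x$ is multiplication by $e^{-ipx}$, which commutes with functions of $p$. Hence $[T^\Lambda,e^{-ipx}]$ comes only from $\widehat{Q_0^\Lambda}$ and $\widehat{\tilde K_1^\Lambda}$, whose commutators with $e^{-ipx}$ are $O(|x|\Lambda^{-1/3})$ in operator norm by the argument of \cref{lem:Q-and-Momentum-properties}. (Their commutators with $p$ are bounded by $\Vert\widehat{\nabla\varphi_0^\Lambda}\Vert_1$.) I would then transfer this to $\tau^\Lambda$ and $(\tau^\Lambda)^{-1}$ via the resolvent integral representation \eqref{eq:aux-eq-102}, losing at most $\Lambda^{c\epsilon}$. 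Dividing by $(1+x^2)^{1/2}$ gives $C\Lambda^{c\epsilon-1/3}\to0$ for small $\epsilon$. The $\cC R$ part requires care, since $\cC R$ conjugates $e^{-ipx}$ to itself.

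\textbf{Invariances.} By construction, $T^\Lambda$ vanishes on $\widehat{\varphi_0^\Lambda}$ and maps into $\{\widehat{\varphi_0^\Lambda}\}^\perp$. Hence $\tau^\Lambda-1$ acts only on the orthogonal complement, and together with the reality of $\varphi_0$ this shows that $\cZ_0^\Lambda$ preserves $\{\varphi_0\}^\perp\oplus J\{\varphi_0\}^\perp$ and acts as the identity on the condensate mode. The implementer therefore factorizes, giving \eqref{eq:Z-Excitation-space-Invariance}. For \eqref{eq:Z-dGamma-Delta-Invariance}, it suffices that the off-diagonal part satisfies $|p|b\in{\rm HS}$ and that $c$ preserves $D(|p|)$. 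The commutator estimates \eqref{eq:Q-and-K-abs-p-commutator-estimate}, together with the Hilbert–Schmidt property of $|p|\tilde K_1^\Lambda$ (from $\varphi_0\in H^1$), give this by a standard argument.

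The main obstacle is the transfer of the commutator bounds through the nonlinear functions $(1+T^\Lambda)^{\pm1/4}$ uniformly in $\Lambda$; the resolvent integral is the tool I would try first.
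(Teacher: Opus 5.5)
Your proposal is correct and follows essentially the same route as the paper: functional calculus for $T^\Lambda$, Shale's criterion via $\Vert T^\Lambda\Vert_{\mathrm{HS}}\le\Lambda^{\epsilon}\Vert\tilde K_1^\Lambda\Vert_{\mathrm{HS}}$, the resolvent integral representation to transfer the $O(\Lambda^{-1/3})$ commutator bounds (with $|p|$ and with translations) from $\widehat{Q_0^\Lambda}$ and $\widehat{\tilde K_1^\Lambda}$ to $(1+T^\Lambda)^{\pm1/4}$, and the explicit product $\widehat{\cZ_0^\Lambda}\cT^{-1}(\tau\oplus J\tau J^*)$ combined with \cref{lem:Tau-Lambda-Convergence}. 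Two small sharpenings: the HS bound loses no factor $\Lambda^{\epsilon/4}$, since $x^{1/4}-x^{-1/4}=x^{-1/4}(x-1)/(x^{1/2}+1)\le\frac12(x-1)$ for $x\ge1$ (the paper writes $1-(1+T^\Lambda)^{1/2}=-\frac12\int_0^1T^\Lambda(1+tT^\Lambda)^{-1/2}\,dt$); and in the convergence step, the terms $(\tau^\Lambda)^{-1}f$ coming from the diagonal and off-diagonal blocks cancel exactly, which is why only $\tau^\Lambda$ and $(\tau^\Lambda)^{-1}\tau^2$ remain, as in the paper's displayed identity.
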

\begin{proof}[Proof of \cref{lem:Z-Lambda-Properties}]
Since $\varphi_0^\Lambda(y)\in\mathbb{R}$, for all $y\in\BR^3$, $\tau^\Lambda$ commutes with $\mathcal{C}R$, and hence $\mathcal{Z}_0^\Lambda$ is a Bogoliubov map.
 Implementability follows from the Hilbert–Schmidt bound \eqref{eq:Z-HS-bound}, which we now verify. Using the functional calculus and an orthonormal basis $\{e_k\}$ of $L^2$, we obtain
\begin{align}
	& \sum_{k=0}^\infty \Vert\big(\tau^{\Lambda} - (\tau^{\Lambda})^{-1}\big) e_k\Vert^2 = \sum_{k=0}^\infty \big\Vert\frac{1}{(1+T^\Lambda)^{1/4} } \big( 1 - (1+T^\Lambda)^{1/2} \big) e_k \big\Vert^2 \nn \\
	&=\sum_{k=0}^\infty \Big\Vert\frac{1}{(1+T^\Lambda)^{1/4} } \frac{1}{2} \int_0^1 \frac{T^\Lambda}{(1+t T^\Lambda)^{1/2} }e_k d t \Big\Vert^2 \leq 1/2\Vert T^\Lambda \Vert_{\rm HS} \,. \label{eq:Z-Off-diagonal-estimate}
\end{align} 
By the definition of $T^\Lambda$ in \eqref{eq:A-Definition} and \eqref{eq:K1-Def-2},
\begin{align*}
	\Vert T^\Lambda \Vert_{\rm HS}\leq \Lambda^{\epsilon} \big\Vert \widehat{\cK_1^\Lambda}\big\Vert_{\rm HS}\leq C\Lambda^{1/2+\epsilon} \,,
\end{align*}
and thus $\Vert \tau^{\Lambda} - (\tau^{\Lambda})^{-1}\Vert_{\rm HS}\leq C\Lambda^{1/2+\epsilon}$, proving \eqref{eq:Z-HS-bound}. Moreover, since $\big\Vert \widehat{\cK_1^\Lambda}\big\Vert_{\rm op}\leq C$, we have $\Vert (1+T^\Lambda) \Vert_{\rm op}\leq C \Lambda^{\epsilon}$. Taking the fourth root gives us $\Vert(\tau^{\Lambda})^{-1}\Vert_{\rm op}\leq \Lambda^{\epsilon/4}$, yielding \eqref{eq:Z-OP-bound}.

Next, we prove \eqref{eq:Z-Excitation-space-Invariance}. Since $T^\Lambda \widehat{P_0^\Lambda}=0$, where $P_0^\Lambda= 1/\Lambda |\varphi_0^\Lambda\rangle \langle \varphi_0^\Lambda|$,  it is easy to see that $(\tau^{\Lambda})^{\pm1}\widehat{P_0^\Lambda}=\widehat{P_0^\Lambda}$. Thus 
\begin{align}
		a(\varphi_0^\Lambda)U_{\cZ_0^\Lambda}^* \Omega=U_{\cZ_0^\Lambda}^* a\big(\varphi_0^\Lambda\big)\Omega =0 \,, \label{eq:Z-invariance-on-Vac}
	\end{align}
	so $U_{\cZ_0^\Lambda}^* \Omega\in \cF(\{\varphi_0^\Lambda\}^{\perp} )$. Now \eqref{eq:Z-Excitation-space-Invariance} follows from $[(\tau^{\Lambda})^{\pm1}, \widehat{Q_0^\Lambda}]=0$, the density of $\bigcup_{n=0}^\infty\{A(F_1)\cdot A(F_n)\Omega \,|\, F_j\in \{\varphi_0\}^{\perp} \oplus J \{\varphi_0\}^{\perp} \}$ in $\cF(\{\varphi_0^\Lambda\}^{\perp} )$ and the continuity of $U_{\cZ_0^\Lambda}^*$ in $\cF(L^2)$.

Next, we discuss \eqref{eq:Z-dGamma-Delta-Invariance}. For an orthonormal basis $\{e_k\}\subset H^1$, we compute using \eqref{eq:Bog-creation-annihilation}
\begin{align}
	\pscal{U_{\cZ_0^\Lambda}^*\psi, \rmd \Gamma (-\Delta) U_{\cZ_0^\Lambda}^*\psi} &=   \sum_{k} \Vert a((-\Delta)^{1/2} e_k) U_{\cZ_0^\Lambda}^*\psi \Vert^2  \nn \\
	 &=   \sum_{k} \Vert A\big( \cZ_0^\Lambda((-\Delta)^{1/2} e_k \oplus J0)\big) \psi \Vert^2 \,. \label{eq:aux-eq-133}
\end{align}
Using the definition of $\cZ_0^\Lambda$, \eqref{eq:def-Z-Lambda}, this splits into the diagonal term \eqref{eq:aux-eq-142} and the off-diagonal term \eqref{eq:aux-eq-134}: 
\begin{align}
	\eqref{eq:aux-eq-133}&\leq C  \sum_{k} \Vert a\big( \cF^{-1}(\tau^\Lambda+(\tau^\Lambda)^{-1}) |p| \widehat{e_k} \big) \psi \Vert^2  \label{eq:aux-eq-142} \\
	&\quad + 	\sum_{k} \Vert a^*\big( \cF^{-1}(\tau^\Lambda - (\tau^\Lambda)^{-1})  \cC R |p| \widehat{e_k} \big) \psi \Vert^2  \,.
	 \label{eq:aux-eq-134}
\end{align}
Since $(\tau^\Lambda - (\tau^\Lambda)^{-1})    |p|$ is Hilbert-Schmidt, which follows analogously to \eqref{eq:Z-Off-diagonal-estimate} using the commutator bound \eqref{eq:Q-and-K-abs-p-commutator-estimate}, the second term is bounded by
\begin{align}
	\eqref{eq:aux-eq-134}\leq \Vert (\tau^\Lambda - (\tau^\Lambda)^{-1})   |p| \Vert_{\rm HS} \Vert (\cN+1)\psi\Vert<\infty \,. \label{eq:aux-eq-139}
\end{align}
The first term \eqref{eq:aux-eq-142} can be rewritten as 
\begin{align}
	\eqref{eq:aux-eq-142}&= \pscal{ \widehat{\psi} , \rmd\Gamma \big(  (\tau^\Lambda+(\tau^\Lambda)^{-1}) |p|^2 (\tau^\Lambda+(\tau^\Lambda)^{-1}) \big) \widehat{\psi} } \nn \\
	&= \sum_{n=0}^\infty \sum_{k=1}^n \left\Vert |p|_k (\tau^\Lambda+(\tau^\Lambda)^{-1})_k  \psi^{(n)} \right\Vert_2^2 \,.\label{eq:aux-eq-135}
\end{align}
But with the integral representations for  $0<\alpha<1$, $\phi\in \sH$
	\begin{align}
		(1+T^\Lambda)^\alpha \phi &= \frac{\sin(\alpha\pi)}{\pi}  \int_0^\infty \frac{t^{-\alpha} (1+T^\Lambda)}{1+ t (1+T^\Lambda) } \phi \,d t \,, \label{eq:Integral-rep-fractional-op} \\
		 (1+T^\Lambda)^{-\alpha} \phi &= \frac{\sin(\alpha\pi)}{\pi}  \int_0^\infty \frac{t^{-\alpha} }{t+  (1+T^\Lambda) } \phi \,d t \label{eq:Integral-rep-fractional-inverse-op}
	\end{align}
 and \cref{lem:Q-and-Momentum-properties} it is easily shown that $\Vert |p| (\tau^\Lambda)^{\pm1} \phi \Vert_2 \leq C_\Lambda \Vert \phi\Vert + C \Vert |p| \phi \Vert$ for all $\phi\in H^1$, and thus 
\begin{align}
	\eqref{eq:aux-eq-135} \leq C_\Lambda\Vert (\cN+1)^{1/2}\psi\Vert^2 + C\pscal{\psi ,\rmd\Gamma(-\Delta) \psi } <\infty \,. \label{eq:aux-eq-136}
\end{align}
Now, we conclude \eqref{eq:Z-dGamma-Delta-Invariance} from \eqref{eq:aux-eq-139} and \eqref{eq:aux-eq-136}.

Next we prove the convergence of the commutator of $\cZ_0^\Lambda$ with translation $T_x$ in \eqref{eq:Z-commutator-with-translations}. We start by showing that for all $x\in \BR^3$ and $f\in L^2$:
\begin{align}
	\big\Vert \big[ Q_0^\Lambda , T_x \big] f\big\Vert_2 \leq \Lambda^{-1/3} |x| \Vert f\Vert_2 \,, \label{eq:Commutator-Q-0-with-translations}\\
		\big\Vert \big[ \tilde{K}_1^\Lambda , T_x \big] f\big\Vert_2 \leq \Lambda^{-1/3} |x| \Vert f\Vert_2 \,. \label{eq:Commutator-K-1-with-translations}
\end{align}
To this end, using the definition of $Q_0^\Lambda$, we obtain
\begin{align}
	&\eqref{eq:Commutator-Q-0-with-translations} = \Lambda^{-1} \Vert \varphi_0^\Lambda - T_x \varphi_0^\Lambda\Vert_2 |\pscal{ \varphi_0^\Lambda, T_x f}| + \Lambda^{-1} \Vert T_x \varphi_0^\Lambda\Vert_2 |\pscal{\varphi_0^\Lambda, (T_x-1)f}| \nn \\
	&\leq 2 \Lambda^{-1} \Vert f\Vert_2 \Vert (T_x-1)\varphi_0^\Lambda \Vert_2 \leq 2 \Lambda^{-1} \Vert f\Vert_2 \big\Vert \sin(px/2)\widehat{\varphi_0^\Lambda} \big\Vert_2  \nn \\
	&\leq  \Lambda^{-1}|x| \Vert f\Vert_2 \big\Vert |p|^2\widehat{\varphi_0^\Lambda} \big\Vert_2^{1/2} \big\Vert \widehat{\varphi_0^\Lambda} \big\Vert_2^{1/2} \leq C \Vert f\Vert_2 \Lambda^{-1/3} |x| \,,
\end{align}
which proves \eqref{eq:Commutator-Q-0-with-translations}. For the commutator with $\tilde{K}_1^\Lambda(0)$ we use its definition \eqref{eq:K1-Def-2} and the mean-value theorem to get
\begin{align}
	&\eqref{eq:Commutator-K-1-with-translations} = \Vert \tilde{K}_1^\Lambda T_x f - T_x \tilde{K}_1^\Lambda f \Vert_2 \nn \\
	& = \left\Vert \varphi_0^\Lambda V\ast \big(( T_x\varphi_0^\Lambda)^* T_xf\big)  - T_x\varphi_0^\Lambda   V\ast \big( (\varphi_0^\Lambda)^* T_xf \big) \right\Vert \nn \\
	&\leq 2\Vert (T_x-1)\varphi_0^\Lambda \Vert_\infty \Vert V\Vert_1 \Vert \varphi_0^\Lambda\Vert_\infty \Vert f\Vert_2 \nn \\
	&\leq C \Vert f\Vert_2 \Vert \nabla \varphi_0^\Lambda\Vert_\infty |x|\leq C \Vert f\Vert_2 \Lambda^{-1/3} |x| \,,
\end{align}
which proves \eqref{eq:Commutator-K-1-with-translations}.
With this at hand it is readily checked that 
\begin{align}
	\left\Vert \big[T^\Lambda, \widehat{T_x}\big] f \right\Vert \leq C\Lambda^{\epsilon-1/3} \Vert f\Vert_2 |x|
\end{align}
and with the use of the integral formulas \eqref{eq:Integral-rep-fractional-op} and \eqref{eq:Integral-rep-fractional-inverse-op} also 
\begin{align}
	\left\Vert \big[ (1+T^\Lambda)^{\pm1/4}, \widehat{T_x} \big] f\right\Vert_2 \leq C\Lambda^{\epsilon-1/3} \Vert f\Vert_2 |x| \,.
\end{align}
Thus it follows for all $x\in\BR^3$ and $F\in L^2\oplus JL^2$ that
\begin{align}
	\sup_{x\in\BR^3} \frac{1}{(1+x^2)^{1/2}} \Big\Vert \left[ \cZ_0^\Lambda,T_x \oplus J T_x J^*\right] F\Big\Vert \leq C\Lambda^{\epsilon-1/3} \Vert F\Vert
\end{align}
and then also \eqref{eq:Z-commutator-with-translations}.

It remains to prove the convergence of $\widehat{\cZ_0^\Lambda}$ to $\cT$ in the sense of \eqref{eq:Z-convergence-to-T-infty-constructive}. This follows from the identity
\begin{align}
	&\widehat{\cZ_0^\Lambda}\cT^{-1} (\tau \oplus J\tau J^*)  \nn \\
	&= \frac{1}{2} \begin{pmatrix}
		(\tau^{\Lambda})^{-1}\tau^2 + \tau^\Lambda & \left(  \tau^\Lambda - (\tau^{\Lambda})^{-1}\tau^2 \right) \cC R J^* \\
		J \cC R\left(  \tau^\Lambda - (\tau^{\Lambda})^{-1}\tau^2 \right) & J\left( (\tau^{\Lambda})^{-1}\tau^2 + \tau^\Lambda \right) J^*
	\end{pmatrix} 
\end{align}
together with the convergence of $\tau^\Lambda$ to $\tau$ established in \cref{lem:Tau-Lambda-Convergence}.
\end{proof}


\bibliography{Literature}

\end{document}